\definecolor{ForestGreen}{rgb}{0.1333,0.5451,0.1333}
\newcommand{\showccc}[0]{0}
\newcommand{\ccc}[2][nothing]{%
	\ifthenelse{\showccc=0}{}{
		\ensuremath{^{\Lsh\Rsh}}\marginpar{\raggedright\tiny\textsf{%
				\ifthenelse{\equal{#1}{nothing}}{}{\textbf{#1}\\}#2}}}}
\newcounter{hours}\newcounter{minutes}
\newcommand{\hhmm}{%
	\setcounter{hours}{\time/60}%
	\setcounter{minutes}{\time-\value{hours}*60}%
	\ifthenelse{\value{hours}<10}{0}{}\thehours:%
	\ifthenelse{\value{minutes}<10}{0}{}\theminutes}
\newtheorem{theorem}{Theorem}[section]
\newtheorem{definition}[theorem]{Definition}
\newtheorem{lemma}[theorem]{Lemma}
\newtheorem{fact}[theorem]{Fact}
\newtheorem{assumption}[theorem]{Assumption}
\newcommand{\defeq}{:=}
\newcommand{\nnz}{\textup{nnz}}
\newcommand{\norm}[1]{\left\lVert#1\right\rVert}
\newcommand{\inprod}[2]{\left\langle#1, #2\right\rangle}
\newcommand{\R}[0]{\mathbb{R}}
\newcommand{\N}[0]{\mathbb{N}}
\newcommand{\diag}[1]{\textbf{\textup{diag}}(#1)}
\newcommand{\half}[0]{\frac{1}{2}}
\newcommand{\1}[0]{\mathbf{1}}
\newcommand{\smax}[0]{\textup{smax}}
\newcommand{\smin}[0]{\textup{smin}}
\newcommand{\Tr}[0]{\textup{Tr}}
\newcommand{\Oh}[1]{O\left(#1\right)}
\newcommand{\tOh}[1]{\tilde{O}\left(#1\right)}
\newcommand{\eps}[0]{\epsilon}
\newcommand{\Sym}[0]{\mathbb{S}}
\newcommand{\PSDSet}[0]{\mathbb{S}_{\geq 0}}
\newcommand{\lmax}[1]{\lambda_{\max}\left(#1\right)}
\newcommand{\lmin}[1]{\lambda_{\min}\left(#1\right)}
\newcommand{\lte}[1]{\log \Tr \exp\left(#1\right)}
\newcommand{\te}[1]{\Tr \exp\left(#1\right)}
\newcommand{\trprod}[2]{\inprod{#1}{#2}}
\newcommand{\Cx}{\sum_{i=1}^d x_i C_i}
\newcommand{\Px}{\sum_{i=1}^d x_i P_i}
\newcommand{\cj}{C_{j:}}
\newcommand{\pj}{P_{j:}}
\newcommand{\ci}{C_{:i}}
\newcommand{\pii}{P_{:i}}
\newcommand{\xnext}{x^{(t+1)}}
\newcommand{\xcurr}{x^{(t)}}
\newcommand{\pgradi}{\nabla_{P_i}^{(t)}}
\newcommand{\pGrad}{\mathbf{\nabla}_{\mathbf{P}}^{(t)}}
\newcommand{\pGradi}{\mathbf{\nabla}_{P_i}^{(t)}}
\newcommand{\cGrad}{\mathbf{\nabla}_{\mathbf{C}}^{(t)}}
\newcommand{\cGradi}{\mathbf{\nabla}_{C_i}^{(t)}}
\newcommand{\cgradi}{\nabla_{C_i}^{(t)}}
\newcommand{\pgradplaini}{\nabla_{P_i}}
\newcommand{\cgradplaini}{\nabla_{C_i}}
\newcommand{\pGradplain}{{\mathbf{\nabla}}_{\mathbf{P}}}
\newcommand{\cGradplain}{{\mathbf{\nabla}}_{\mathbf{C}}}
\newcommand{\pGradplaini}{\mathbf{\nabla}_{\mathbf{P}_i}}
\newcommand{\cGradplaini}{\mathbf{\nabla}_{\mathbf{C}_i}}
\newcommand{\pGradiniti}{\mathbf{\nabla}_{P_i}^{(1)}}
\newcommand{\cGradiniti}{\mathbf{\nabla}_{C_i}^{(1)}}
\newcommand{\allP}{\textbf{P}}
\newcommand{\allC}{\textbf{C}}
\newcommand{\Pxcurr}{\sum_{i=1}^d x_i^{(t)} P_i}
\newcommand{\Cxcurr}{\sum_{i=1}^d x_i^{(t)} C_i}
\newcommand{\packstep}{A}
\newcommand{\covstep}{B}
\newcommand{\rcovstep}{\tilde{\covstep}}
\newcommand{\packsq}{G}
\newcommand{\covsq}{H}
\newcommand{\rcovsq}{\tilde{\covsq}}
\newcommand{\sml}{\varsigmaup}
\newcommand{\tpexp}{\mathcal{T}^{\textup{pack}}_{\textup{exp}}}
\newcommand{\tcexp}{\mathcal{T}^{\textup{cov}}_{\textup{exp}}}
\definecolor{forestgreen}{rgb}{0.133, 0.545, 0.133}
\definecolor{burntorange}{rgb}{0.8, 0.33, 0.0}
\newcommand\numberthis{\addtocounter{equation}{1}\tag{\theequation}}
\newcommand{\inp}{I_{n_p}}
\begin{document}

There is an error in this manuscript. In particular, the analysis bounding the number of iterations of our algorithm in Section~\ref{sec:cleanup} is not correct. The error is in the proofs of Lemmas~\ref{lem:numfastiters} and~\ref{lem:numSlowIters}, wherein we claim that the quantities $\inprod{P_i}{Y^{(t)}}$ and $\inprod{C_i}{Z^{(t)}}$ are monotone, which is true in the scalar case but not necessarily true in the matrix case. To the best of our knowledge, the remainder of the paper is correct as stated (the potential analysis in Sections~\ref{sec:lpwarmup} and~\ref{sec:sdpfull}, and approximation tolerance analysis in Section~\ref{sec:cleanup}). While we believe it is straightforward to modify our analysis to show the algorithm terminates in $O(d \cdot \text{poly}(\log(nd\rho), \eps^{-1}))$ iterations, it is currently unclear to us how to modify our analysis in Section~\ref{sec:cleanup} to obtain a $O(\text{poly}(\log(nd\rho), \eps^{-1}))$ bound on the number of iterations. We believe this remains an important open problem in positive semidefinite programming.

\newpage
	\begin{titlepage}
		\def\thepage{}
		\thispagestyle{empty}
		
		\title{Positive Semidefinite Programming: Mixed, Parallel, and Width-Independent} 
		
		\date{}
		\author{
			Arun Jambulapati \\
			Stanford University \\
			{\tt jmblpati@stanford.edu} 
			\and
			Yin Tat Lee \\
			University of Washington \\
			{\tt yintat@uw.edu} 
			\and
			Jerry Li \\
			Microsoft Research \\
			{\tt jerrl@microsoft.com} 
			\and
			Swati Padmanabhan \\
			University of Washington \\
			{\tt pswati@uw.edu} 
			\and
			Kevin Tian\thanks{Part of this work was done when KT was visiting the University of Washington.} \\
			Stanford University \\
			{\tt kjtian@stanford.edu}
		}
		
		\maketitle

\abstract{
We give the first approximation algorithm for mixed packing and covering semidefinite programs (SDPs) with polylogarithmic dependence on width. Mixed packing and covering SDPs constitute a fundamental algorithmic primitive with recent applications in combinatorial optimization, robust learning, and quantum complexity. The current approximate solvers for positive semidefinite programming can handle only pure packing instances, and technical hurdles prevent their generalization to a wider class of positive instances. For a given multiplicative accuracy of $\epsilon$, our algorithm takes $\Oh{\log^3(nd\rho) \cdot \epsilon^{-3}}$ parallelizable iterations, where $n$, $d$ are dimensions of the problem and $\rho$ is a width parameter of the instance, generalizing or improving all previous parallel algorithms in the positive linear and semidefinite programming literature. When specialized to pure packing SDPs, our algorithm's iteration complexity is $\Oh{\log^2 (nd) \cdot \epsilon^{-2}}$, a slight improvement and derandomization of the state-of-the-art \cite{Allen-ZhuLO16, PengTZ16, WangMMR15}. For a wide variety of structured instances commonly found in applications, the iterations of our algorithm run in nearly-linear time.

In doing so, we give matrix analytic techniques for overcoming obstacles that have stymied prior approaches to this open problem, as stated in past works \cite{PengTZ16, MahoneyRWZ16}. Crucial to our analysis are a simplification of existing algorithms for mixed positive linear programs, achieved by removing an asymmetry caused by modifying covering constraints, and a suite of matrix inequalities whose proofs are based on analyzing the Schur complements of matrices in a higher dimension. We hope that both our algorithm and techniques open the door to improved solvers for positive semidefinite programming, as well as its applications.
}
 		
	\end{titlepage}

\section{Introduction}
\label{sec:intro}

Efficient solvers for semidefinite programming, a generalization of linear programming, are a fundamental algorithmic tool with numerous applications \cite{VandenbergheB96}. In practical settings where the dimension of the problem is large, polynomial-time algorithms with superlinear dependence on the size of the input, including interior point methods \cite{NesterovN89} and cutting plane methods \cite{KrishnanM06, LeeSW15}, may be prohibitively expensive. As a result, many recent efforts have been made towards developing approximation algorithms. For given accuracy tolerance $\eps$, an approximation algorithm outputs a solution approximating the optimal solution within a factor of $1 + \eps$, for some definition of approximation. These algorithms have runtimes whose dependence on $\eps^{-1}$ is polynomial rather than logarithmic, a bottleneck in their use for extremely precise values of $\eps$, e.g. when $\eps$ is inverse-polynomial in the dimension. However, their per-iteration complexity typically scales \emph{linearly} in the size of the input, so for instances with moderately large tolerance $\eps$, approximation algorithms are desirable.

\paragraph{Pure packing and covering semidefinite programs.}
A family of semidefinite programs (SDPs) that has garnered significant attention from the computer science and operations research communities is \emph{positive} semidefinite programs, whose constraints have additional positivity structure. One well-studied structured family, which we term ``pure packing SDPs'', has primal form\footnote{We use $\PSDSet^n$ to denote the set of $n \times n$ positive semidefinite matrices and $\inprod{\cdot}{\cdot}$ to denote the trace product; please refer to Section~\ref{ssec:notation} for the notation used in this paper.}
\begin{equation}
\begin{aligned}
\label{eq:packingprimal}
\min \inprod{C}{X} \text{ subject to } X \in \PSDSet^n, \inprod{A_i}{X} \geq b_i \text{ for all } i \in [d] \\
\text{for } \{A_i\}_{i \in [d]} \in \PSDSet^n, C \in \PSDSet^n, b \geq 0,
\end{aligned}
\end{equation}
with equivalent dual form
\begin{align}
\label{eq:packingdual}
\max \inprod{b}{x} \text{ subject to } x \geq 0, \sum_{i \in [d]} x_i A_i \preceq C.
\end{align}
Intuitively, this problem can be thought of ``packing'' as many of the $\{A_i\}$ matrices as possible into the constraint matrix $C$, as specified by gains $b$. Another family of structured instances, which we term ``pure covering SDPs'', has primal form 
\begin{equation}
\begin{aligned}
\label{eq:coveringprimal}
\max \inprod{C}{X} \text{ subject to } X \in \PSDSet^n, \inprod{A_i}{X} \leq b_i \text{ for all } i \in [d],\\
\text{for } \{A_i\}_{i \in [d]} \in \PSDSet^n, C \in \PSDSet^n, b \geq 0,
\end{aligned}
\end{equation}
with equivalent dual form
\begin{align}
\label{eq:coveringdual}
\min \inprod{b}{x} \text{ subject to } x \geq 0, \sum_{i \in [d]} x_i A_i \succeq C.
\end{align}
Again, the intuition is that we wish to use as few of the matrices $\{A_i\}$ as possible to meet the constraint. Because of the additional monotonicity structure afforded by these instances, algorithms with improved dependence on problem parameters have been developed. In particular, when the constraint matrices have been appropriately normalized so that $C = I$ (the identity) and the notion of approximation is \emph{multiplicative}, the state-of-the-art approximate SDP solvers for solving packing problems with the form \eqref{eq:packingprimal}, \eqref{eq:packingdual} have a logarithmic dependence on the natural width parameter of the instance, $\rho \defeq \max_i \lmax{A_i}$ \cite{JainY11, PengTZ16, Allen-ZhuLO16}.

For generic semidefinite programs, approximation algorithms without a polynomial dependence on $\rho$ are unknown. Nonetheless, the SDP instances in many settings, including classical approximation algorithms for problems such as maximum and sparsest cut \cite{GoemansW95, AroraRV09}, quantum complexity theory \cite{JainJUW11}, and robust learning and estimation \cite{ChengG18, ChengDG19} may be encoded as positive SDPs, leading to their recent intensive study as an algorithmic primitive.

The flexibility of known approximate positive SDP solvers, in terms of the types of problem instances they can handle, does not match that of the best-known positive linear programming (LP) solvers. In particular, the solvers of \cite{JainY11, PengTZ16, Allen-ZhuLO16}, which achieve polynomial dependence on $\eps^{-1}$ and logarithmic dependence on width, can solve only pure packing instances and do not solve instances of the forms \eqref{eq:coveringprimal}, \eqref{eq:coveringdual}, or more general types of positivity constraints. 

Part of the reason for the ambiguity in nomenclature in the literature in defining ``packing'' and ``covering'' SDPs is as alluded to by \cite{PengTZ16}: because the dual of a matrix-valued SDP is vector-valued, there are many forms a positive SDP can take. Indeed, there are significant technical barriers to generalizing existing algorithms in the literature to handle even pure covering SDP instances (cf. Section \ref{ssec:technical}); the family of pure covering SDP instances was shown in \cite{IyengarPS11} to capture the semidefinite relaxations of maximum cut \cite{GoemansW95}, coloring \cite{KargerMS98}, and Shannon entropy \cite{Lovasz79}. This gap between known algorithms for positive LP solvers and existing SDP solvers leads us to consider a wider family of positive SDP instances.

\paragraph{Mixed positive linear programs.}
In the case of linear programming, the family of positive LPs specialized algorithms can handle is quite broad: in their most general form \cite{Young01, Young14, MahoneyRWZ16}, approximate positive linear programming algorithms return a scalar $\mu$ and vector $x$ corresponding to the problem
\begin{equation}
\label{eq:mixedlp}
\min \mu \text{ subject to } Px \leq \mu p, Cx \geq c, x \geq 0  \text{ for entrywise nonnegative } P, C, p, c
\end{equation}
such that $\mu$ is within a $1 + \epsilon$ multiplicative factor of optimal, and $x$ is feasible. The state-of-the-art runtime depends logarithmically on dimensions of the problem, linearly on the sparsity $\nnz(P) + \nnz(C)$, and polynomially on $\epsilon^{-1}$. This family of problems is large enough to capture pure packing and covering instances: for example, the packing problem
$$\max_{x \geq 0} c^\top x \text{ subject to } Px \leq p$$
has value $\mu^{-1}$, where $\mu$ solves
$$\min \mu \geq 0 \text{ subject to } Px \leq \mu p, c^\top x \geq 1, x \geq 0.$$
The full power of mixed positive linear programs has been used to obtain improved algorithms in settings such as combinatorial optimization, resource allocation, and tomography \cite{Young01, Young14}. 

\paragraph{Our contribution.}
We consider the general problem of designing a width-free approximation algorithm to solve mixed positive SDPs, an independently interesting open problem as posed in \cite{MahoneyRWZ16, PengTZ16}, which has seen recent applications in e.g. spectral graph theory \cite{LeeS17, JambulapatiSS18}. As stated in \cite{MahoneyRWZ16}, which developed the state-of-the-art parallel mixed positive LP solver, this problem is notoriously challenging due to a variety of complications arising from generalizing vector-based tools to matrices. This problem class contains pure packing and covering SDPs of any of the forms \eqref{eq:packingprimal}, \eqref{eq:packingdual}, \eqref{eq:coveringprimal}, and \eqref{eq:coveringdual} by setting one of the constraint sets to be $1 \times 1$ matrices, as well all mixed positive linear programs \eqref{eq:mixedlp}, as special cases. The types of problem instances considered by mixed positive SDP solvers have the form 
\begin{equation}\label{eq:possdpdef}\begin{aligned}\min \mu \text{ subject to } \sum_{i \in [d]} P_i x_i \preceq \mu I_{n_p}, \sum_{i \in [d]} C_i x_i \succeq I_{n_c}, x \geq 0 , \\
\text{for sets of matrices } \left\{P_i \in \PSDSet^{n_p}\right\}_{i \in [d]}, \left\{C_i \in \PSDSet^{n_c}\right\}_{i \in [d]}.\end{aligned}\end{equation}
In this work, we completely resolve this open problem for many instances of mixed positive SDPs, which includes generic packing with commutative (e.g. diagonal) covering constraints\footnote{This was the family of instances considered in \cite{JainY12}, the first proposed width-independent positive SDP solver, and includes the SDP relaxation of maximum cut.}, and we resolve it for all mixed positive SDP instances up to a (mild) polylogarithmic dependence on a natural width measure of the problem\footnote{We call our algorithm ``width-independent'' nonetheless, to be consistent with the existing literature on positive LPs and SDPs with polylogarithmic dependence on width.}. For pure packing instances, our algorithm's runtime matches or improves upon the state-of-the-art claimed runtimes, and its iteration complexity matches that of the current best mixed positive LP solver \cite{MahoneyRWZ16}. 

Perhaps interestingly, our algorithm is a simplification of the natural generalization of the \cite{MahoneyRWZ16} algorithm to matrices. However, as we outline in Section~\ref{ssec:technical}, the simplification we perform brings with it technical challenges in the analysis that were not well-understood even in the positive LP setting, as seen by the presence of an asymmetric pruning operation in \cite{Young01, Young14, MahoneyRWZ16}. We believe the additional matrix-analytic techniques our work utilizes, including a proof method for matrix inequalities based on taking the Schur complement of a matrix in a higher-dimensional space, may be of independent interest of the community. More broadly, we hope that both the algorithmic primitives and conceptual tools introduced in this paper may pave the way towards more improvements to positive LP and SDP solvers, as we discuss in Section~\ref{sec:futurework}.

\subsection{Related work}

The space of recent approximation algorithms for general linear and semidefinite programming \cite{GrigoriadisK95, Nemirovski04, WarmuthK06, AroraK07, CarmonJST19} and their positive variants \cite{LubyN93, PlotkinST95, Young01, JainY11, ZhuO15A, ZhuO15B, WangMMR15, Allen-ZhuLO16, PengTZ16} is vast. We refer the reader to \cite{AroraHK12, ZhuO15B} for excellent surveys of these problems and highlight the works most closely related to ours. 

\paragraph{Mixed positive linear programming.}
The most direct predecessor of our work is that by Young \cite{Young01}, which obtained a parallelizable $\tOh{\eps^{-4}}$\footnote{As is standard in the literature, we use the $\tilde{O}$ notation to hide polylogarithmic factors in problem parameters.}-iteration algorithm for solving problem \eqref{eq:mixedlp} to a $(1 + \eps)$-multiplicative approximation factor\footnote{We refer to algorithms whose iterations consist of matrix-vector products and vector operations as ``parallelizable''. The state-of-the-art sequential algorithm for (pure packing or covering) positive LPs is the breakthrough work of \cite{ZhuO15B}, which performs $\tOh{\nnz/\eps}$ work, where $\nnz$ is the total number of nonzero entries in the constraints.}. Young's algorithm was based on a Lagrangian-relaxation framework, a variation of the entropy-regularization or mirror descent frameworks core to many positive LP solvers. First, Young reduced the problem to deciding, for suitably rescaled nonnegative matrices $P, C$, which one of the following statements holds. 
\begin{itemize}
	\item There exists $x \geq 0$ satisfying $\max_{j \in [n_p]} [Px]_j \leq (1 + \eps)\min_{j \in [n_c]} [Cx]_j$.
	\item There exists no $x \geq 0$ satisfying $Cx \geq \1, Px \leq (1 - \eps)\1$.
\end{itemize} 
Then, Young designed an iterative method to resolve this decision problem by maintaining a vector $x$ so that the potential function
\begin{equation}
\label{eq:lppotential}
\log \sum_{j \in [n_p]} \exp\left(\left[Px\right]_j\right) + \log \sum_{j \in [n_c]} \exp\left(-\left[Cx\right]_j\right),
\end{equation}
a close approximation to the true value of $\max_j [Px]_j - \min_j [Cx]_j$, is nonincreasing, but sufficient ``progress'' is made in updating the size of $x$. In particular, both the potential function and the discrepancy between the potential and the true value of $\max_j [Px]_j - \min_j [Cx]_j$ are bounded by $O(\log n)$ for $n = \max\{n_p, n_c\}$, so once the true value is on the order of $K = O(\log n / \eps)$, the approximation factor is $1 + \eps$. The algorithm terminates either when any coordinate of $Px$ reaches this scale $K$, or the instance is concluded infeasible.

Inspired by approaches based on first-order optimization frameworks \cite{ZhuO15A, ZhuO15B, WangMMR15}, the work of \cite{MahoneyRWZ16} improved this algorithm by taking non-uniform steps based on gradients of the potential function. They showed that their algorithm ran in $\tOh{\eps^{-3}}$ iterations, and that when specialized to pure packing or covering, this could be improved to $\tOh{\eps^{-2}}$, constituting a simplification and derandomization of the prior best parallel runtime in the literature \cite{WangMMR15}. Generalizing these techniques to the SDP setting was stated as an open problem in \cite{MahoneyRWZ16}.

\paragraph{Positive semidefinite programming.}
The first algorithm with a width-independent runtime for positive SDPs was by the work of Jain and Yao \cite{JainY11} for approximating the solution to the pure packing instance \eqref{eq:packingprimal}, \eqref{eq:packingdual}, achieved via appropriate modifications to the positive LP solver of \cite{LubyN93}. In order to deal with the difficulties of adapting the algorithm to potentially non-commuting matrices, Jain and Yao utilized many highly technical facts about the interplay of the projection matrices onto the small eigenspace of iterates. Ultimately, while achieving width-independence constituted a significant milestone, the algorithm obtained had an iteration count of roughly $O(\log^{14} n / \eps^{13})$, each running in superlinear time, motivating the search for SDP solvers with runtimes more comparable to their LP counterparts.

Recent work by \cite{Allen-ZhuLO16, PengTZ16} developed parallelizable positive SDP solvers for pure packing SDPs \eqref{eq:packingprimal}, \eqref{eq:packingdual} with iteration count $\tOh{\eps^{-3}}$. These algorithms have iterations that are much less expensive than those of \cite{JainY11} (and implementable in nearly-linear time for constant $\eps$) and are similarly based on performing appropriate modifications to positive LP solvers \cite{ZhuO15A, Young01}. Moreover, both algorithms are more directly analyzed through the perspective of continuous optimization, leading to significantly simpler analyses. Finally, up to an additional logarithmic term in the runtime, both of these works claimed a runtime improvement to $\tOh{\eps^{-2}}$ iterations using randomized dynamic bucketing techniques that originated in the LP literature \cite{WangMMR15}.

As previously discussed, existing positive SDP solvers are capable of handling only instances of the form \eqref{eq:packingprimal}, \eqref{eq:packingdual}; the problem of extending these techniques to the more general family of positive SDP instances we consider was also explicitly stated in \cite{PengTZ16}. While the difficulty in generalizing these solvers in a similar way as has been achieved in the LP literature is not immediately clear, several challenges arise due to monotonicity statements that are true in the vector case, but false in the matrix case. Indeed, attempts to make these generalizations have had a somewhat notorious history of breaking due to mistaken claims of monotonicity properties \cite{Zhu19}, including a prior version of \cite{PengTZ16}, and a proposed algorithm for resolving the mixed positive SDP problem \cite{JainY12}. For a more extensive discussion on the history of this problem, we refer the reader to \cite{Allen-ZhuLO16}.

\subsection{Technical challenges}
\label{ssec:technical}

We now describe some of the key technical challenges that barred generalizations of prior work from solving the more general problems considered in this paper.

\paragraph{Asymmetry and explicit pruning.}
All existing solvers for mixed packing and covering LPs \cite{Young01, Young14, MahoneyRWZ16} involve an explicit pruning step, where satisfied covering constraints $\cj$, i.e. constraints that have reached the termination threshold $\cj^\top x \geq K$, are removed from the covering matrix. The intuition is that such constraints are no longer ``actively relevant to the problem'', so it is safe to ignore them. 

The reason for this asymmetry is that crucial to showing that the potential function \eqref{eq:lppotential} is nonincreasing is considering a second-order Taylor expansion of each summand. Since the increments on $x$ are chosen to have magnitude bounded by a multiplicative $O(K^{-1})$, and the algorithm terminates if any coordinate of $Px$ exceeds $K$, the update to the packing potential is in the range in which a second-order upper bound to the exponential holds. However, the covering potential is made up of the coordinates of $Cx$, which may be arbitrarily large while at least one coordinate is not covered, and so it is not clear how to use a similar Taylor approximation without explicit pruning. 

Dropping constraints in the SDP setting corresponds to removing the span of eigenvectors above an eigenvalue threshold. This is complicated since noncommutativity may cause basis vectors to change from iteration to iteration, and it is also unclear how to efficiently implement this projection, leading to the worse iteration complexity of \cite{JainY11}. Indeed, a similar asymmetry issue is present even in many of the pure LP solvers, preventing their direct use for pure covering LPs and potentially impeding progress towards an accelerated parallel LP solver (cf. Section~\ref{sec:futurework}).

\paragraph{Incomparability of eigenspaces.}
Even if the explicit pruning of the large eigenspace were implementable, directly reasoning about the change in the potential function from a vector $x$ to a vector $x \circ (1 + \delta)$ remains difficult, involving bounding changes between the projection matrices onto the small eigenspaces of $\sum_i C_i x_i$ and $\sum_i C_i (x_i + \delta_i)$, for example by bounding the canonical angles between them \cite{PaigeS81}. A similar analysis was performed in \cite{JainY11}; however, doing so without incurring a worse runtime dependence on $\eps^{-1}$ remains daunting.

Prior efforts to resolve these issues \cite{JainY12} attempted to use natural monotonicity claims that are true in the scalar case, but incorrect in the matrix case. For $\Psi = \sum_i C_i x_i$ and $B = \sum_i C_i x_i (1 + \Delta_i)$, where $\Delta_i$ is the multiplicative increment on $x_i$, examples of monotonicity statements that do \textit{not} generally hold for non-commuting matrices include $\exp(\Psi) \preceq \exp(\Psi + B)$ and $P_{\Psi}BP_{\Psi} \preceq B$, where $P_{\Psi}$ is the projection matrix onto the small eigenspace of $\Psi$; these arguments are essential in the analyses of the counterpart LP algorithms, necessitating additional changes. 

\paragraph{Fine-grained notions of matrix inequalities.}
The work of \cite{Allen-ZhuLO16} introduced a continuous perspective in bounding the change of the packing potential between iterates, integrating to track the ``path'' of the potential from $\Psi$ to $\Psi + B$. Crucially, they used an extended Lieb-Thirring inequality (cf. Section~\ref{ssec:facts}) to compare inner products of the form $\inprod{B}{\exp(\Psi)}$ and $\inprod{B}{\exp(\Psi + tB)}$ for any $0 \leq t \leq 1$, leading to their result. The proof of this inequality requires an upper bound on the size of $B$ in order to compare quantities such as $\inprod{B^2}{\exp(-\Psi - tB)}$ and $\inprod{B}{\exp(-\Psi - tB)}$, leading to the inability to handle covering instances (as such a bound on $B$ does not hold in our setting), and first-pass attempts to prove necessary monotonicity properties using this technique fail even for simple $2\times 2$ matrices (cf. Section~\ref{ssec:generalizingIdeas}). 

Moreover, the algorithm of \cite{Allen-ZhuLO16} scales down the step size by an additional factor of $O(\eps)$ to use a first-order Taylor expansion, leading to a worse iteration count, partially resolved using dynamic bucketing \cite{WangMMR15} in the LP case, and later derandomized by \cite{MahoneyRWZ16}. To generalize the improved algorithm of \cite{MahoneyRWZ16} to the SDP setting, considering the second-order Taylor expansion appears to be necessary, which is more difficult in the matrix case.

\subsection{Our approach}

We now give an overview of the techniques developed in this work to overcome the above mentioned technical barriers.

\paragraph{Avoiding the explicit pruning.}
Our first simplification, avoiding the explicit pruning operation in the algorithm, comes from a simple observation that in the LP setting, the covering potential is not affected much by large entries of $Cx$. In particular, because the covering potential
\begin{equation*}
-\log \sum_{j \in [n_c]} \exp\left(-\left[Cx\right]_j\right) \approx \min_{j \in [n_c]} \left[Cx\right]_j
\end{equation*}
has an exponential dropoff in the dependence on any entry of $Cx$ which is much larger than the minimal entry, and we are guaranteed by the algorithm's termination condition that there is at least one small entry, we can simply bound the effect of the large entries on the covering potential separately, and perform a Taylor expansion restricted to the small coordinates.

\paragraph{Avoiding directly reasoning about the eigenspaces.}
The notion of Taylor expanding around the small coordinates in the SDP setting is to Taylor expand around the small eigenspace. However, as previously mentioned, this becomes difficult when one tries to claim a fact such as
\begin{equation*}
-\lte{-\Psi - B}  \geq -\lte{-\Psi - P_{\Psi} B P_{\Psi}},
\end{equation*}
the generalization of the argument appearing in the LP setting, because $P_{\Psi} B P_{\Psi} \preceq B$ is false in general (cf. Section~\ref{ssec:generalizingIdeas}). To circumvent this issue, inspired by \cite{Allen-ZhuLO16}, we adopt a continuous perspective on the change in the potential and control the function $-\lte{-\Psi - tB}$ for each $0 \leq t \leq 1$, by locally bounding its derivative. 

\paragraph{Avoiding black-box use of matrix inequalities.}
A direct analysis of the derivative of our potential by applying the extended Lieb-Thirring inequality fails on a minimal example (cf. Section~\ref{ssec:generalizingIdeas}). However, by carefully interlacing the technique of separately arguing about the small and large eigenspaces with the extended Lieb-Thirring proof, we are able to obtain a tighter bound. 

Scaling down the step size by a factor of $\eps$ and combining with the techniques previously mentioned yields an $\tOh{\eps^{-4}}$-iteration algorithm with logarithmic dependence on the width. However, in order to match the iteration count of state-of-the-art LP solvers, we require a more direct argument about the second-order Taylor expansion. We achieve this by controlling the derivative of a local potential interpolating between the first and second order Taylor expansions. 

Interestingly, both our bounds on the interactions between the small and large eigenspaces, and the second-order expansion term in terms of the first, are achieved through the use of the well-known Schur complement condition for positive semidefiniteness of a matrix. The key in both settings is to encode the desired inequality as taking a Schur complement of a matrix in a larger dimension: see Lemma~\ref{lem:matrixcs}, a specialization of Kadison's inequality, for a simple example. A similar technique results in an alternative proof of the extended Lieb-Thirring inequality, which we include in Appendix~\ref{app:technical} as an additional exposition of our techniques, and we hope that it is of independent interest.

\subsection{Optimization vs. decision, and discussion of width dependence}
\label{ssec:width}
In the full generality of problem instances it is capable of handling, our algorithm has a mild dependence on width parameters of the instance due to two reasons.

Firstly, as in existing work \cite{Young01, Allen-ZhuLO16, MahoneyRWZ16, PengTZ16}, our algorithm for solving the optimization problem \eqref{eq:possdpdef} begins by reducing via binary search to a sequence of decision problems of the following form: for appropriately rescaled $\{P_i\}$, $\{C_i\}$:  
\begin{equation}\label{eq:feasible}\text{give } x \ge 0 \text{ such that }\lmax{\sum_{i \in [d]} x_i P_i} \leq (1 + \epsilon) \lmin{\sum_{i \in [d]} x_i C_i},\end{equation}
or 
\begin{equation}\label{eq:infeasible} \text{demonstrate infeasibility of the constraints }\sum_{i \in [d]} x_i C_i \succeq I_{n_c}, \; \sum_{i \in [d]} x_i P_i \preceq (1 - \epsilon) I_{n_p}, \; x \geq 0.\end{equation}
Given upper and lower bounds $\mu_{\text{upper}}$ and $\mu_{\text{lower}}$ on the optimal $\mu$ in \eqref{eq:possdpdef}, our algorithm for solving \eqref{eq:possdpdef} incurs a multiplicative overhead of roughly $O(\log\log(\mu_{\text{upper}}/\mu_{\text{lower}}) + \log(\eps^{-1}))$ when compared to the algorithm for solving the decision problem \eqref{eq:feasible}, \eqref{eq:infeasible} due to this binary search, as discussed in Appendix~\ref{app:reduction}. The main body of the paper will describe an algorithm for solving the decision problem \eqref{eq:feasible}, \eqref{eq:infeasible} and its runtime, to be consistent with prior work, and we defer details on the implications of this runtime in terms of the original optimization problem \eqref{eq:possdpdef} to Appendix~\ref{app:reduction}.

Secondly, given an instance of the decision problem \eqref{eq:feasible}, \eqref{eq:infeasible}, our algorithm incurs a polylogarithmic dependence on the quantity
\begin{equation}\label{eq:widthdef}\rho \defeq \max_{i \in [d]} \frac{\lmax{C_i}}{\lmax{P_i}},\end{equation}
the largest ratio of maximum eigenvalues amongst the corresponding covering and packing matrices. We note that we may assume $\rho \geq 1$, else \eqref{eq:infeasible} is clearly infeasible. All the current work in the positive linear and semidefinite programming also has logarithmic dependences on width parameters of the instance, such as the largest entry of a packing constraint for packing LPs, or the largest eigenvalue of a packing matrix for packing SDPs. However, in such instances the notion of width could typically be assumed to be bounded by a polynomial in the dimension; to the best of our knowledge, due to difficulties which arise when comparing matrices, such an assumption cannot be made for mixed positive SDPs.

Nonetheless, this issue does not arise for many interesting types of mixed SDPs, including those with commuting covering constraints (such as the dual of the SDP relaxation of maximum cut \cite{LeeP19}). Moreover, for many applications in e.g. spectral sparsification \cite{LeeS17, JambulapatiSS18} or more generally when the covering matrices are multiples of the packing matrices, the width is easily bounded. In prior reduction-based approaches to mixed positive SDPs developed in these works, where the two-sided packing and covering constraints were reduced to a one-sided packing problem, the dependence on this ratio was polynomial; we improve this dependence exponentially generically. It is interesting to explore if this dependence is necessary. We consider designing an algorithm with either an improved dependence on width, or a logarithmic dependence on a notion of width which is polynomially bounded, to be an important open problem.

\subsection{Discussion of iteration complexity}
\label{ssec:hardness}
As is typical of approximate SDP solvers \cite{WarmuthK06, AroraK07, PengTZ16, Allen-ZhuLO16, CarmonDST19}, all steps of our algorithm may be performed in time nearly-linear in the sparsity of the positive SDP instance, except for possibly the bottleneck step of computing the trace product between an exponential of a matrix and another matrix, in order to compute coordinates of gradients of our objective. In general, this primitive requires a full eigendecomposition, requiring time $O(n^{\omega})$, where $\omega \approx 2.37$ is the matrix multiplication constant, and may be parallelized with polylogarithmic depth \cite{JainY11}. 

The bottleneck step is implementable in nearly-linear time when the matrix to be exponentiated has bounded spectrum, by using a technique involving multiplying a Taylor series approximation to the exponential with a small number of random vectors, via the Johnson-Lindenstrauss lemma (a technique introduced in \cite{AroraK07}). Typically, for algorithms approximately solving pure packing SDPs, the bounded spectrum assumption is not restrictive, since the termination condition implies that the spectrum of the maintained matrix will always be bounded, resulting in efficient iterations; this is the case for our specialization to pure packing as well. However, a similar property does not hold for algorithms approximating pure covering SDPs, or positive SDPs in general.

This is unsurprising: given access to a black-box oracle for approximating the value to \eqref{eq:possdpdef}, even using a pure covering instance with a single covering matrix, it is possible to detect if a positive semidefinite matrix has a kernel, a problem which is not known to be solvable faster than matrix multiplication. Indeed, for scalar $x$, being able to approximate the problem
\[\min \lambda \text{ subject to } I_n x \preceq \lambda I_n, Cx \succeq I_n, x \geq 0\]
is equivalent to being able to return a multiplicative approximation on the smallest eigenvalue of $C$, and will be zero if and only if $C$ is singular. Therefore, in general we need to assume more structure in order to implement steps faster than matrix multiplication time.

For many structured instances of positive SDPs, the time to compute a trace product with the matrix exponentials encountered in our algorithm is considerably cheaper than matrix multiplication, which we elaborate on in Section~\ref{ssec:JL}. One example of an assumption that we show is sufficient for implementing steps in nearly-linear time is the following:

\begin{assumption}
	\label{assume:inverse}
	For a positive SDP with covering matrices $\{C_i\}_{i \in [d]}$, assume access to a linear system solver in the matrix $\alpha I + \sum_{i \in [d]} x_i C_i$ for arbitrary nonnegative $\alpha, \{x_i\}_{i \in [d]}$ in time $\sum_{i \in [d]} \nnz(C_i)$.
\end{assumption}

For many applications, including diagonal covering matrices (or more generally, commuting covering matrices in a specified basis), or symmetric, diagonally dominant matrices \cite{SpielmanT14} (e.g. graph Laplacians), Assumption~\ref{assume:inverse} holds. In particular, we note that the former case is the assumed structure made by a prior proposed solver for our problem instance \cite{JainY12} and includes e.g. the SDP relaxation of maximum cut, and the latter case has appeared in various applications of positive SDP solvers \cite{LeeS17, ChengG18, JambulapatiSS18}. Moreover, we show how to remove an assumption used in prior work on pure packing SDP solvers \cite{PengTZ16, Allen-ZhuLO16} assuming access to Cholesky decompositions of the constraint matrices. We find a formal hardness dichotomy on the instances of mixed positive SDPs which admit efficient solvers, in the vein of \cite{KyngZ17, KyngWZ20}, to be an interesting open problem, and leave it for future work (cf. Section~\ref{sec:futurework}).  

\subsection{Our results}
\label{ssec:results}
The main result of our work is the following.

\paragraph{Main claim.} 
There is an algorithm that takes positive SDP instance \eqref{eq:feasible}, \eqref{eq:infeasible} parameterized by 
\[\left\{P_i \in \PSDSet^{n_p}\right\}_{i \in [d]}, \left\{C_i \in \PSDSet^{n_c}\right\}_{i \in [d]},\]
which either returns a vector $x \geq 0$ such that
\[\lmax{\sum_i x_i P_i} \leq (1 + \epsilon) \lmin{\sum_i x_i C_i},\]
or concludes that the constraints\footnote{We assume that the constraints involve the identity matrix, as in prior works. To handle arbitrary constraints, it suffices to scale the constraint matrices appropriately, and define the width for the rescaled matrices. By first computing the summed matrix and then applying the scaling, iterations are still able to capture the original instance's sparsity up to the additive cost of applying the scaling.}
$$\sum_i x_i C_i \succeq I_{n_c}, \; \sum_i x_i P_i \preceq (1 - \epsilon) I_{n_p}, \; x \geq 0$$
are infeasible. Moreover, the vector $x$ certifies that the instance is feasible, e.g. $\sum_i P_i x_i \preceq (1-\epsilon) I_{n_p}, \sum_i C_i x_i \succeq I_{n_c}$ hold. The number of iterations of the algorithm is $O\left(\log^3(nd\rho) \cdot  \eps^{-3} \right)$, where $n \defeq \max\{n_p, n_c\}$, and $\rho$ is defined as in \eqref{eq:widthdef}. For a full discussion of the per-iteration complexity, see Section~\ref{ssec:implementation}; we summarize some guarantees here. If Assumption~\ref{assume:inverse} holds, each iteration is implementable in $O(\log^2(nd\rho) \cdot \eps^{-1})$ parallel depth, and $O(\nnz \cdot \log^2(nd\rho) \cdot \eps^{-3})$ total work, where $\nnz$ is the total nonzeros in the constraint matrices. Each iteration is always implementable in polylogarithmic parallel depth, and $O(\nnz + n^{\omega}) $ total work.

We highlight the fact that for pure packing instances, our iteration complexity is the same as all other positive SDP algorithms, and discuss potential improvements in Section~\ref{sec:futurework}.

\paragraph{Specializations.}
Our algorithm also admits the following improved runtimes for more structured SDP instances, with the same per-iteration costs.
\begin{itemize}
	\item For pure packing SDP instances, the number of iterations is bounded by $O\left(\log^2(nd) \cdot  \eps^{-2} \right)$.
	\item For pure covering SDP instances, the number of iterations is bounded by $O\left(\log^3(nd\rho) \cdot  \eps^{-2} \right)$.
	\item For mixed positive SDP instances with commuting covering constraint matrices, the number of iterations is bounded by $O\left(\log^3(nd) \cdot  \eps^{-3} \right)$, e.g. there is no dependence on the width $\rho$.
\end{itemize}

Our algorithm also captures the best-known parallel positive LP solvers by considering the appropriate diagonal specializations. The organization of our paper is as follows.

\begin{itemize}
	\item \textbf{Section~\ref{sec:prelims}: Preliminaries.} We discuss the notation used in the paper, and state several useful facts from linear algebra and matrix calculus. Then, we give the general form of the algorithm which we will analyze, as well as some basic properties of the algorithm.
	\item \textbf{Section~\ref{sec:lpwarmup}: Linear Program Potential Analysis.} As an exposition of our algorithmic modification of removing explicit pruning of covered constraints, we give a simplified analysis of the potential used in \cite{MahoneyRWZ16} without pruning. We  then discuss the potential issues arising in attempts to generalize these changes to the matrix case, with explicit counterexamples.
	\item \textbf{Section~\ref{sec:sdpfull}: Semidefinite Program Potential Analysis.} We demonstrate how to combine modifications to prior proof techniques with novel matrix analysis tools to analyze the potential function for mixed positive SDPs.
	\item \textbf{Section~\ref{sec:cleanup}: Convergence Analysis.}  We show how to use the potential analysis to show an iteration bound. This convergence analysis is essentially identical to that in \cite{MahoneyRWZ16}, and the technical bottleneck is the potential analysis. We  discuss implementation issues as well and give the final runtime.
	\item \textbf{Section~\ref{sec:futurework}: Future Work.} We discuss potential avenues for improving and extending our algorithm, situating our techniques in the frontier of LP and SDP solvers in general. We also discuss the potential challenges they present.
\end{itemize} 	%
\section{Preliminaries}
\label{sec:prelims}

\subsection{Notation}
\label{ssec:notation}

\paragraph{Vectors.}
Throughout, for $d \in \N$, $[d] \defeq \{i \mid 1 \leq i \leq d, i \in \N\}$. Denote $\R^d_{\geq 0}$ as the set of $d$-dimensional vectors that are entrywise nonnegative. We use $\1_d \in \R^d$ to refer to the all-ones vector. For vectors $x, y$, $x \circ y$ is the entrywise product of $x$ and $y$, and $x^2 \defeq x \circ x$ is the entrywise square. We use $x \leq y$ to mean that entrywise, $x$ is smaller than $y$, and when $c$ is a scalar, $x \leq c$ is also entrywise.

\paragraph{Matrices.}
We use $\Sym^n$ and $\Sym^n_{\geq 0}$ to refer to the set of symmetric $n \times n$ matrices and symmetric positive semidefinite matrices, respectively.  The $n \times n$ identity matrix is denoted by $I_n$. When comparing $A, B \in \Sym^n$, we use the Loewner ordering $\preceq$, e.g. $A \preceq B \implies B - A \in \PSDSet^n$. When clear from context, e.g. $A \succeq 0$, we use $0$ to denote the all-zeroes matrix. For a vector $v \in \R^n$, $\diag{v}$ is the diagonal $n \times n$ matrix with $\diag{v}_{ii} = v_i$. The smallest and largest eigenvalues of $A \in \Sym^n$ are $\lmin{A}$ and $\lmax{A}$, respectively. For $A, B \in \Sym^n$, we define the inner product to be the trace product , $\inprod{A}{B} \defeq \Tr[AB] = \sum_{i, j \in [n]} A_{ij} B_{ij}$. For a set $S \subseteq [n]$ and matrix $M$, we use $M_{SS}$ to denote the submatrix with rows and columns restricted to coordinates in $S$. Our convention is to view $M_{SS}$ as a $|S| \times |S|$-sized matrix. We use $\nnz(A)$ to refer to the number of nonzero entries of a matrix $A$.

\paragraph{Linear programs.}
When treating positive linear programs, we refer to the packing constraint matrix (when applicable) as $P \in \R_{\geq 0}^{n \times d}$, and the covering constraint matrix (when applicable) by $C \in \R_{\geq 0}^{n \times d}$, e.g. $P$ and $C$ both have $n$ constraints. When $P$ and $C$ have $n_p$ and $n_c$ constraints respectively for $n_p \neq n_c$, we define $n \defeq \max\{n_p, n_c\}$. We use $\pj$ and $\pii$, respectively, for the rows and columns of the packing matrix, and $\cj, \ci$ are similarly defined.

\paragraph{Semidefinite programs.}
When treating positive semidefinite programs, we refer to the family of packing constraint matrices by $\allP = \{P_i \mid P_i \in \Sym^{n}\}_{i \in [d]}$, and the family of covering constraint matrices by $\allC = \{C_i \mid C_i \in \Sym^{n}\}_{i \in [d]}$. As in the case for positive linear programs, for constraint matrices of unequal size, we define $n\defeq \max\{n_p, n_c\}$. Finally, for $\allP, \allC$ corresponding to a mixed packing-covering instance, we define the width 
\[\rho \defeq \max_{i \in [d]} \frac{\lmax{C_i}}{\lmax{P_i}}.\]
We use $\nnz(\allP) \defeq \sum_{i \in [d]} \nnz(P_i)$ to refer to the total number of nonzero entries amongst all packing matrices and similarly define $\nnz(\allC)$.

\paragraph{Approximation factor.}
Throughout this work, we  explicitly assume that $\eps^{-1} \leq (dn)^{3}$, else an interior point method achieves our stated runtime.

\subsection{Useful facts}
\label{ssec:facts}

We state here facts from linear algebra and matrix calculus that we extensively use in the paper. Given a symmetric matrix $M$ with eigendecomposition $M = Q \Lambda Q^\top$ for orthonormal $Q$ and diagonal $\Lambda$, we define the matrix exponential in the standard way, $\exp(M) \defeq Q \exp(\Lambda) Q^\top$, where the exponential of a diagonal matrix is entrywise.

\paragraph{Matrix facts.} 
We use the following facts. 
\begin{itemize}
	\item For symmetric $A, B \in \Sym^n$ with $A \preceq B$, we have $\Tr\exp(A) \leq \Tr\exp(B)$.
	\item If $A, B \in \PSDSet^n$, then $\inprod{A}{B} \geq 0$. Therefore, for $M, N \in \Sym^n$ with $M \preceq N$, and $B \in \PSDSet^n$, we have $\inprod{M}{B} \leq \inprod{N}{B}$.
	\item For $A, B \in \Sym^n$ with $A \preceq B$, and arbitrary $m \times n$ matrix $R$, we have $RAR^\top \preceq RBR^\top$. This implies $A_{SS} \preceq B_{SS}$ for restrictions to a subset $S \subseteq [d]$, whether the resulting matrix is viewed as an $n \times n$ or $|S| \times |S|$ matrix.
	\item Any Schur complement of $M \in \PSDSet^n$ is also positive semidefinite. More precisely, consider a matrix $M$, which for subsets $S, L$ that partition $[n]$, can be decomposed into $M_{SS}, M_{SL}, M_{LS},$ and $M_{LL}$. Then when $M_{LL}$ is invertible, $M_{SS} - M_{SL} M_{LL}^{-1} M_{LS} \succeq 0$.
\end{itemize}

\paragraph{Matrix inequalities.}
The following matrix inequality is classical and due to \cite{Golden65, Thompson65}.

\begin{fact}[Golden-Thompson]
For $A, B \in \Sym^n$, we have $\Tr \exp(A + B) \leq \Tr[\exp(A) \exp(B)]$.
\end{fact}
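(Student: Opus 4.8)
The plan is to prove the Golden--Thompson inequality along the standard route: use the Lie--Trotter product formula to reduce it to a trace inequality for high powers of a product of two positive semidefinite matrices, and then establish that inequality by a doubling argument seeded by the elementary two--factor estimate. First I would invoke the Lie product formula $\exp(A+B) = \lim_{m\to\infty}\bigl(\exp(A/m)\exp(B/m)\bigr)^m$ in operator norm, which follows from the expansion $\exp(X/m) = I_n + X/m + O(m^{-2})$ together with a telescoping bound. Specializing to $m = 2^k$ and using continuity of the trace, it then suffices to show that for every $k \ge 0$ and all $U, V \in \PSDSet^n$,
\[ \Tr\bigl[(UV)^{2^k}\bigr] \le \Tr\bigl[U^{2^k} V^{2^k}\bigr], \]
since applying this with $U = \exp(2^{-k}A)$ and $V = \exp(2^{-k}B)$ gives $\Tr\bigl[(\exp(2^{-k}A)\exp(2^{-k}B))^{2^k}\bigr] \le \Tr[\exp(A)\exp(B)]$ for all $k$, and sending $k\to\infty$ on the left recovers $\Tr\exp(A+B)$.

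The $k = 1$ case is the base case: for symmetric $X, Y$ the commutator $XY - YX$ is antisymmetric, so $0 \le \Tr[(XY-YX)(XY-YX)^\top] = -\Tr[(XY-YX)^2]$, and expanding the square with cyclicity of the trace gives $\Tr[(XY-YX)^2] = 2\Tr[(XY)^2] - 2\Tr[X^2Y^2]$, hence $\Tr[(XY)^2] \le \Tr[X^2Y^2]$. To bootstrap to $2^k$ factors I would prove the doubling inequality $\Tr[(UV)^{2p}] \le \Tr[(U^2V^2)^p]$ for $U, V \in \PSDSet^n$ and every integer $p \ge 1$, and then iterate, $\Tr[(UV)^{2^k}] \le \Tr[(U^2V^2)^{2^{k-1}}] \le \Tr[(U^4V^4)^{2^{k-2}}] \le \cdots \le \Tr[U^{2^k}V^{2^k}]$, each step applying the doubling inequality to the pair $U^{2^j}, V^{2^j}$. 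The doubling inequality I would prove spectrally: assuming first that $U$ is positive definite, $UV$ is similar to the positive semidefinite matrix $U^{1/2}VU^{1/2}$, so its eigenvalues $\lambda_1 \ge \cdots \ge \lambda_n \ge 0$ are nonnegative and $\Tr[((UV)^2)^p] = \sum_i \lambda_i^{2p}$, while $U^2V^2$ is similar to $UV^2U = (UV)(UV)^\top$, so $\Tr[(U^2V^2)^p] = \sum_i \sigma_i^{2p}$ where $\sigma_1 \ge \cdots \ge \sigma_n \ge 0$ are the singular values of $UV$; Weyl's majorization $\prod_{i=1}^{j}\lambda_i \le \prod_{i=1}^{j}\sigma_i$ for all $j$ says $(\lambda_i)$ is log--majorized by $(\sigma_i)$, and since $t \mapsto t^{2p}$ is increasing with $t \mapsto e^{2pt}$ convex, this yields $\sum_i\lambda_i^{2p} \le \sum_i\sigma_i^{2p}$. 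The general positive semidefinite case follows by perturbing $U$ to $U + \eta I_n$ and letting $\eta \to 0$.

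I expect the doubling inequality --- equivalently, the full power--of--two bootstrapping --- to be the main obstacle. The commutator argument dispatches the two--factor case with a one--line computation, but passing from two factors to $2^k$ factors genuinely needs a majorization input (that the eigenvalues of a matrix product are log--majorized by its singular values), not merely a reapplication of the two--factor bound. An alternative, closer to the matrix--analytic flavor of the rest of this paper, would be to deduce Golden--Thompson from the extended Lieb--Thirring inequality, which bounds quantities of the form $\Tr\exp(A+B)$ more directly; in either approach the Lie product formula and the trace--continuity step used in the reduction are routine.
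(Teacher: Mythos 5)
Your proof is correct: the reduction via the Lie product formula, the antisymmetric-commutator base case, and the doubling step via similarity of $U^2V^2$ to $(UV)(UV)^\top$ together with Weyl's log-majorization of eigenvalues by singular values is the standard, sound argument (and since $U=\exp(2^{-k}A)$ is automatically positive definite, even the $\eta\to 0$ perturbation is unnecessary in the application). There is nothing in the paper to compare against -- Golden--Thompson is stated there as a classical fact with citations to Golden and Thompson and no proof is given -- so your write-up stands on its own; the only mild redundancy is that your majorization argument with $p=1$ already subsumes the commutator base case.
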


The following matrix inequality inspires our potential analysis and was shown in \cite{Allen-ZhuLO16}. It generalizes the classical Lieb-Thirring inequality \cite{LiebT76}.

\begin{fact}[Extended Lieb-Thirring]
\label{fact:extendliebthirring}
For $A \succ 0, B \succeq 0$ and $\alpha \in [0, 1]$, we have
\[\inprod{B^{1/2} A^{\alpha} B^{1/2}}{B^{1/2} A^{1 - \alpha} B^{1/2}} \leq \inprod{B^2}{A}.\]
\end{fact}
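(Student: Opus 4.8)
The plan is to reduce the claimed trace inequality to a positive semidefiniteness statement about a block matrix in doubled dimension, then verify that statement using the Schur complement criterion listed among the matrix facts above. Write $f(\alpha) \defeq \inprod{B^{1/2} A^{\alpha} B^{1/2}}{B^{1/2} A^{1-\alpha} B^{1/2}}$ for $\alpha \in [0,1]$; note $f(0) = f(1) = \inprod{B^{1/2} B^{1/2}}{B^{1/2} A B^{1/2}} = \inprod{B^2}{A}$ by cyclicity of trace, so the claim is $f(\alpha) \le f(0)$ for all $\alpha$. The first step is to diagonalize $A = Q\Lambda Q^\top$ with $\Lambda = \diag{\lambda}$, $\lambda \succ 0$, and set $\widetilde B \defeq Q^\top B^{1/2} B^{1/2} Q$ — wait, more usefully, set $W \defeq Q^\top B^{1/2}$ so that $\inprod{B^{1/2} A^s B^{1/2}}{B^{1/2} A^t B^{1/2}} = \Tr[W^\top \Lambda^s W W^\top \Lambda^t W]$. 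Expanding in the eigenbasis, if $G \defeq W W^\top$ is the Gram matrix with entries $G_{jk}$, this trace equals $\sum_{j,k} \lambda_j^s \lambda_k^t |G_{jk}|^2$. Hence the inequality to prove is the elementary-looking
\[
\sum_{j,k} \lambda_j^{\alpha} \lambda_k^{1-\alpha} |G_{jk}|^2 \;\le\; \sum_{j,k} \lambda_j |G_{jk}|^2 ,
\]
for every positive semidefinite $G$, every positive vector $\lambda$, and every $\alpha \in [0,1]$.

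Next I would symmetrize the left side: since $G$ is symmetric, $|G_{jk}|^2 = |G_{kj}|^2$, so swapping $j \leftrightarrow k$ shows the left side also equals $\sum_{j,k}\lambda_j^{1-\alpha}\lambda_k^{\alpha}|G_{jk}|^2$, and averaging, it equals $\tfrac12\sum_{j,k}\big(\lambda_j^{\alpha}\lambda_k^{1-\alpha} + \lambda_j^{1-\alpha}\lambda_k^{\alpha}\big)|G_{jk}|^2$. Likewise the right side is $\tfrac12\sum_{j,k}(\lambda_j + \lambda_k)|G_{jk}|^2$. So it suffices to show, entrywise for each pair $(j,k)$, the weighted inequality after pairing with $|G_{jk}|^2 \ge 0$; but one cannot argue termwise because $|G_{jk}|^2$ are not independent — they come from a PSD Gram matrix. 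The right move is: the difference $\sum_{j,k}\big[(\lambda_j+\lambda_k) - (\lambda_j^{\alpha}\lambda_k^{1-\alpha}+\lambda_j^{1-\alpha}\lambda_k^{\alpha})\big]|G_{jk}|^2$ is a sum of $|G_{jk}|^2 = (G \circ G)_{jk}$ against a matrix $M$ with $M_{jk} = (\lambda_j+\lambda_k) - (\lambda_j^{\alpha}\lambda_k^{1-\alpha}+\lambda_j^{1-\alpha}\lambda_k^{\alpha})$, i.e. the difference equals $\inprod{M}{G\circ G}$. By the Schur product theorem $G \circ G \in \PSDSet^n$, so by the second matrix fact it is enough to prove $M \succeq 0$. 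Writing $\mu_j = \lambda_j^{\alpha}$, $\nu_j = \lambda_j^{1-\alpha}$ we have $\lambda_j = \mu_j\nu_j$, and $M_{jk} = \mu_j\nu_j + \mu_k\nu_k - \mu_j\nu_k - \mu_k\nu_j = (\mu_j - \mu_k)(\nu_j - \nu_k)$. Since $t \mapsto t^\alpha$ and $t \mapsto t^{1-\alpha}$ are both nondecreasing on $(0,\infty)$ for $\alpha \in [0,1]$, $(\mu_j-\mu_k)$ and $(\nu_j-\nu_k)$ always share a sign, so $M_{jk} \ge 0$ entrywise — but I need PSD, not entrywise nonnegativity. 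For PSD-ness I would instead keep $M$ as a Hadamard-type expression: $M = \mu\nu^\top + \nu\mu^\top - \mu\nu^\top$... rather, observe $M_{jk} = (\mu_j-\mu_k)(\nu_j-\nu_k)$ is precisely the $(j,k)$ entry of $\tfrac12$ times the "covariance" form, which one checks equals $\inprod{\,\cdot\,}{}$... concretely, $M = D_\mu \mathbf 1\mathbf 1^\top D_\nu + D_\nu \mathbf 1 \mathbf 1^\top D_\mu - \mu\nu^\top$... the cleanest route is the doubled-matrix one promised: $M = \begin{pmatrix}\mu & \nu\end{pmatrix}\begin{pmatrix}0 & \tfrac12\\ \tfrac12 & 0\end{pmatrix}\begin{pmatrix}\mu^\top\\ \nu^\top\end{pmatrix} \cdot(\ldots)$ is indefinite, so termwise sign is in fact what we use, combined with the fact that $G \circ G$ is not just PSD but has nonnegative entries when... no.

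Let me correct the endgame: we do \emph{not} need $M \succeq 0$. We need $\inprod{M}{G \circ G} \ge 0$, and here the right fact is that $G \circ G$ is entrywise nonnegative (each entry is a square) \emph{and} $M$ is entrywise nonnegative (just shown), so the Frobenius inner product of two entrywise-nonnegative matrices is nonnegative. That closes the argument. So the step order is: (1) diagonalize $A$, pass to the Gram matrix $G = Q^\top B Q$ and reduce to the scalar sum inequality; (2) symmetrize in $j \leftrightarrow k$ to replace $\lambda_j^\alpha\lambda_k^{1-\alpha}$ by its symmetric average; (3) identify the deficit as $\inprod{M}{G\circ G}$ with $M_{jk} = (\lambda_j^\alpha - \lambda_k^\alpha)(\lambda_j^{1-\alpha}-\lambda_k^{1-\alpha})$; (4) observe both $M$ and $G\circ G$ are entrywise nonnegative — the former by monotonicity of the power functions, the latter trivially — hence the deficit is nonnegative. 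I expect the only subtlety, and the step most likely to need care, is (3): correctly tracking the cyclic-trace rearrangement so the cross term $\inprod{B^{1/2}A^\alpha B^{1/2}}{B^{1/2}A^{1-\alpha}B^{1/2}}$ really does become $\sum_{j,k}\lambda_j^\alpha\lambda_k^{1-\alpha}|G_{jk}|^2$ with the \emph{same} $G$ on both sides, and handling the case where $A$ has repeated eigenvalues or $B$ is singular (which is harmless, as $B^{1/2}$ is still well-defined and the Gram matrix argument is unaffected; the hypothesis $A \succ 0$ guarantees all $\lambda_j > 0$ so the powers are defined). An essentially identical Schur-complement-in-higher-dimension encoding of this argument appears in Appendix~\ref{app:technical}.
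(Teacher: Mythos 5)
Your proof is correct, but it is \emph{not} the paper's argument, despite your closing claim that an ``essentially identical'' encoding appears in Appendix~\ref{app:technical}. The paper does not diagonalize and expand the trace into a scalar double sum. Instead it proves the midpoint inequality $f(0) + f(2\alpha) \ge 2f(\alpha)$ by exhibiting two PSD block matrices in doubled dimension, namely $\left(\begin{smallmatrix} B & -B^{1/2}A^{\alpha}B^{1/2} \\ -B^{1/2}A^{\alpha}B^{1/2} & B^{1/2}A^{2\alpha}B^{1/2}\end{smallmatrix}\right)$ and $\left(\begin{smallmatrix} B^{1/2}AB^{1/2} & B^{1/2}A^{1-\alpha}B^{1/2} \\ B^{1/2}A^{1-\alpha}B^{1/2} & B^{1/2}A^{1-2\alpha}B^{1/2}\end{smallmatrix}\right)$, taking their (nonnegative) trace product, and then bootstrapping to $f(\alpha) \le f(0)$ for all $\alpha$ via an extremal argument on $M = \max_\alpha f(\alpha)/f(0)$. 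Your route is the classical one: reduce to $\sum_{j,k}\lambda_j^{\alpha}\lambda_k^{1-\alpha}G_{jk}^2 \le \sum_{j,k}\lambda_j G_{jk}^2$ for the Gram matrix $G = Q^\top B Q$, symmetrize in $j \leftrightarrow k$, factor the deficit termwise as $\tfrac12(\lambda_j^{\alpha}-\lambda_k^{\alpha})(\lambda_j^{1-\alpha}-\lambda_k^{1-\alpha})G_{jk}^2$, and observe every factor is nonnegative since the two power functions are both nondecreasing. This is shorter, one-shot (no midpoint-plus-maximization bootstrap), and entirely elementary; what it gives up is exactly what the paper is trying to advertise, since the appendix proof exists to illustrate the block-matrix-in-higher-dimension technique reused in Lemma~\ref{lem:matrixcs} and the gap-bucket bound. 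One presentational note: the mid-proof detour toward proving $M \succeq 0$ (which is neither needed nor true in general) should be cut --- your corrected endgame, that $\inprod{M}{G\circ G}\ge 0$ follows from entrywise nonnegativity of both factors, is the right and complete conclusion.
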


We give an alternative proof of this inequality in Appendix~\ref{app:technical}, using similar ideas as a proof technique developed in this work (e.g. in showing Lemma~\ref{lem:matrixcs}).

\paragraph{Calculus.}

The following chain rule formula is due to \cite{Wilcox67}.

\begin{fact}
\label{fact:expderiv}
For a symmetric matrix-valued function $X(t)$ with argument scalar $t$, the derivative of the matrix exponential is given by
\[\frac{d}{dt}\exp(X(t)) = \int_{\alpha = 0}^1 \exp(\alpha X(t)) \frac{d}{dt} X(t) \exp((1-\alpha) X(t)) d\alpha.\] 
\end{fact}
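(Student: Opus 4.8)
The plan is to reduce this classical identity (Wilcox) either to a linear ODE in an auxiliary scalar parameter or to a termwise comparison of power series; I would present the power-series route as the main argument since it is fully self-contained, and mention the ODE route as an alternative. For the power-series approach, expand both sides using $\exp(M)=\sum_{k\ge 0} M^k/k!$, which converges everywhere. On the left, the noncommutative product rule gives $\frac{d}{dt}X(t)^k = \sum_{j=0}^{k-1} X(t)^j X'(t) X(t)^{k-1-j}$, so after termwise differentiation
\[\frac{d}{dt}\exp(X(t)) = \sum_{k\ge 1}\frac{1}{k!}\sum_{j=0}^{k-1} X(t)^j\, X'(t)\, X(t)^{k-1-j}.\]
On the right, expanding both exponentials inside the integral and interchanging summation with integration yields $\sum_{a,b\ge 0}\frac{X(t)^a X'(t) X(t)^b}{a!\,b!}\int_0^1 \alpha^a(1-\alpha)^b\,d\alpha$, and the Beta-integral identity $\int_0^1\alpha^a(1-\alpha)^b\,d\alpha=\frac{a!\,b!}{(a+b+1)!}$ collapses this to $\sum_{a,b\ge 0}\frac{X(t)^a X'(t) X(t)^b}{(a+b+1)!}$. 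Reindexing by $k=a+b+1$ and $j=a$ shows this agrees with the left-hand expression, completing the proof.

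As an alternative I would fix $t$, set $Y(s):=\frac{\partial}{\partial t}\exp(sX(t))$ for $s\in[0,1]$, observe $Y(0)=\frac{\partial}{\partial t}I=0$, and differentiate the elementary identity $\frac{\partial}{\partial s}\exp(sX(t))=X(t)\exp(sX(t))$ with respect to $t$. After interchanging the two partial derivatives on the left, this gives the linear inhomogeneous ODE $\frac{\partial}{\partial s}Y(s)=X(t)\,Y(s)+X'(t)\exp(sX(t))$ with $Y(0)=0$, whose Duhamel (variation-of-parameters) solution is $Y(s)=\int_0^s \exp((s-u)X(t))\,X'(t)\,\exp(uX(t))\,du$; setting $s=1$ and substituting $\alpha=1-u$ recovers the stated formula.

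The main obstacle is not conceptual but analytic bookkeeping: in the power-series route one must justify termwise differentiation of the matrix series and the interchange of summation and integration, and in the ODE route one must justify the interchange of $\partial_s$ and $\partial_t$ (equivalently, enough joint smoothness of $(s,t)\mapsto\exp(sX(t))$). Both are routine once one fixes a compact $t$-interval on which $X$ is $C^1$ and $\lVert X(t)\rVert$ is bounded by some $R$, so that the relevant series and their derivatives converge uniformly and standard dominated-convergence arguments apply; I would isolate this in a brief auxiliary remark rather than expand it inline.
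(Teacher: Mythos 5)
Your proof is correct. Note that the paper does not actually prove this statement---it is stated as a black-box fact with a citation to Wilcox (1967)---so any self-contained derivation is already more than the paper provides. Both of your routes are standard and check out: in the power-series argument, the noncommutative product rule for $\frac{d}{dt}X(t)^k$ and the Beta-integral identity $\int_0^1 \alpha^a(1-\alpha)^b\,d\alpha = \frac{a!\,b!}{(a+b+1)!}$ are applied correctly, and the reindexing $k=a+b+1$, $j=a$ does match the two series term by term. The Duhamel alternative is also sound: $Y(0)=0$, the inhomogeneous linear ODE is derived correctly by interchanging $\partial_s$ and $\partial_t$, and the substitution $\alpha = 1-u$ recovers the stated form. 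Your closing remark correctly identifies the only real technical content (justifying termwise differentiation and the interchange of summation/integration or of mixed partials), and handling it by uniform convergence on a compact $t$-interval where $\|X(t)\|$ is bounded is the right move; for the application in this paper, $X(t) = -\Psi - tB$ is affine in $t$, so these interchanges are immediate.
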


This implies that for symmetric $A, B$, we have $\frac{d}{dt}\Tr[\exp(A + tB)] = \inprod{B}{\exp(A + tB)}$. It is well-known that $\frac{d}{dX} \Tr\exp(X) = \exp(X)$, when $\Tr\exp$ is a function on symmetric matrices. 

Finally, we state below an additive form of Gronwall's lemma that will be useful in our analysis, deferring its proof to Appendix~\ref{app:technical}.

\begin{restatable}{lemma}{restateAddGronwall}
\label{lem:addgronwall}
Let $u(t)$ be a continuously differentiable scalar-valued function defined on $t \geq 0$ which satisfies the differential inequality
\[u'(t) \geq -\beta u(t) - c \]
for some constant $c$. Then, for all $t \geq 0$,
\[u(t) \geq \exp(-\beta t) u(0) - \frac{c}{\beta}\left(1 - \exp(-\beta t)\right).\]
In particular, when $u$ is nonnegative and $|\beta t| < 1$, the following inequality holds.
\[u(t) \geq (1 - \beta t)u(0) - ct.\]
\end{restatable}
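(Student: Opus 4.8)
The plan is to use the classical integrating-factor argument. First I would introduce the auxiliary function $v(t) \defeq \exp(\beta t) u(t)$, which is continuously differentiable as a product of such functions. Differentiating and applying the hypothesis $u'(t) \geq -\beta u(t) - c$ gives
\[
v'(t) = \exp(\beta t)\left(u'(t) + \beta u(t)\right) \geq -c\exp(\beta t).
\]
Integrating this inequality between continuous functions over $[0,t]$ and using the fundamental theorem of calculus on the left yields, for $\beta \neq 0$,
\[
v(t) - v(0) \geq -c\int_0^t \exp(\beta s)\,ds = -\frac{c}{\beta}\left(\exp(\beta t) - 1\right).
\]
Since $v(0) = u(0)$, rearranging and multiplying through by $\exp(-\beta t)$ produces the first claimed bound $u(t) \geq \exp(-\beta t)u(0) - \frac{c}{\beta}\left(1 - \exp(-\beta t)\right)$. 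The degenerate case $\beta = 0$ is handled separately and immediately by integrating $u' \geq -c$ directly, which also matches the $\beta \to 0$ limit of the stated bound.

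For the linearized ``in particular'' form, I would feed the first bound through two elementary scalar inequalities, both expressing that $x \mapsto \exp(x)$ lies above its tangent line at the origin: namely $\exp(-\beta t) \geq 1 - \beta t$ and, equivalently, $1 - \exp(-\beta t) \leq \beta t$. Because $u(0) \geq 0$, multiplying the first inequality by $u(0)$ gives $\exp(-\beta t)u(0) \geq (1 - \beta t)u(0)$; and in the regime of interest (positive decay rate $\beta$ and nonnegative $c$), dividing the second by $\beta$ gives $\frac{1}{\beta}\left(1 - \exp(-\beta t)\right) \leq t$, hence $-\frac{c}{\beta}\left(1 - \exp(-\beta t)\right) \geq -ct$. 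Summing the two estimates yields $u(t) \geq (1 - \beta t)u(0) - ct$, and the assumption $|\beta t| < 1$ guarantees the coefficient $1 - \beta t$ is positive so the bound is meaningful.

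I do not anticipate a genuine obstacle: this is a standard comparison-principle statement, and the only care required is in tracking the direction of the inequality when dividing by $\beta$ (treated as positive, consistent with its role as a decay rate in the convergence analysis) and when multiplying a scalar inequality by $u(0)$, together with the bookkeeping for the $\beta = 0$ limit. As an alternative to the integrating factor, one could instead let $w(t) \defeq \exp(-\beta t)u(0) - \frac{c}{\beta}\left(1 - \exp(-\beta t)\right)$ be the exact solution of the associated linear ODE and invoke a comparison theorem — observing that $u - w$ satisfies $(u-w)'(t) \geq -\beta(u-w)(t)$ with $(u-w)(0) = 0$, so it cannot become negative — but I prefer the integrating-factor route as it is self-contained and gives the formula directly.
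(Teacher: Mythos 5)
Your proof is correct and follows essentially the same route as the paper: the paper divides by $\exp(-\beta t)$ where you multiply by $\exp(\beta t)$, but this is the identical integrating-factor computation, followed by the same two applications of $\exp(-x) \geq 1-x$ for the linearized bound. Your explicit handling of the $\beta = 0$ case and of the sign conditions on $\beta$ and $c$ is slightly more careful than the paper's, but not a substantive difference.
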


\subsection{Setup}
\label{ssec:setup}

\paragraph{Mixed positive SDP instance.}
Here, we restate the mixed positive SDP instance considered in the main body of the paper (for its relationship to the optimization problem \eqref{eq:possdpdef}, see Section~\ref{ssec:width} and Appendix~\ref{app:reduction}). For the decision variant of a mixed positive SDP, the general formulation for the guarantee of an approximation algorithm is as follows. Given inputs
\[\left\{P_i \in \PSDSet^{n_p}\right\}_{i \in [d]}, \left\{C_i \in \PSDSet^{n_c}\right\}_{i \in [d]},\]
find an $x \geq 0$ such that
\begin{equation}\lmax{\sum_{i \in [d]} x_i P_i} \leq (1 + \epsilon) \lmin{\sum_{i \in [d]} x_i C_i},\end{equation}
or demonstrate infeasibility of the constraints
\begin{equation}\sum_{i \in [d]} x_i C_i \succeq I_{n_c}, \; \sum_{i \in [d]} x_i P_i \preceq (1 - \epsilon) I_{n_p}, \; x \geq 0.\end{equation}

\paragraph{Specialization to mixed positive LP instances.}
We state the canonical positive LP decision problem considered in Section~\ref{sec:lpwarmup}. For given nonnegative matrices $P \in \R_{\geq 0}^{n_p \times d}$, $C \in \R_{\geq 0}^{n_c \times d}$, the canonical optimization problem \eqref{eq:mixedlp} reduces to the decision problem: give an $x \geq 0$ such that
\begin{equation}
\max_{j \in [n_p]} [Px]_j \leq  (1 + \epsilon) \min_{j \in [n_c]} [Cx]_j,
\end{equation}
or demonstrate infeasibility of the constraints
\begin{equation}
Cx \geq \1_{n_c},\; Px \leq \1_{n_p},\; x \geq 0.
\end{equation}
\begin{algorithm}[ht!]\caption{\textsf{MixedPositiveSDP}$(\allP,\allC,\epsilon)$}
\label{alg:sdpalg}
	\textbf{Input:} $\allP = (P_1, ..., P_d)$ and $\allC = (C_1, ..., C_d)$ where all $P_i \in \PSDSet^{n_p}$, all $C_i \in \PSDSet^{n_c}$, $n \defeq \max(n_p, n_c)$, and $\epsilon \in [(nd)^{-3},1/20]$. \\
	\textbf{Output:} ``Infeasible'' or $x \geq 0$ such that $\Px \preceq (1+\epsilon) \Cx$.
	\begin{algorithmic}[1]
		\State $K \gets \frac{4 \log(nd \rho)}{\epsilon}, \alpha \gets \frac{1}{8K}$
		\State $x_i^{(0)} \gets \frac{1}{d \lmax{P_i}}$, $\forall i \in [d]$
		\State $t \gets 0$
		\While{$\lmax{\Pxcurr} \leq K$, $\lmin{\Cxcurr} \leq K$}
		\State $Y^{(t)} \gets \exp\left(\Pxcurr\right)$
		\State $Z^{(t)} \gets \exp\left(-\Cxcurr\right)$
		\State $\pGradi \gets \inprod{P_i}{\frac{Y^{(t)}}{\Tr Y^{(t)}}}$, $\forall i \in [d]$
		\State $\cGradi \gets \inprod{C_i}{\frac{Z^{(t)}}{\Tr Z^{(t)}}}$, $\forall i \in [d]$
		\State $W^{(t)} \gets \left\{i : \pGradi \leq \left(1 - \frac{\eps}{2}\right) \cGradi \right\}$ 
		\If {$W^{(t)} = \emptyset$}
		\Return ``Infeasible''
		\EndIf
		\State Set
		\begin{equation*}
		\delta^{(t)}_i = \begin{cases}
		\frac{1}{2}\left(1 - \frac{\pGradi}{\cGradi}\right) &\text{if $i \in W^{(t)}$}\\
		0 & \text{if $i \notin W^{(t)}$}
		\end{cases}
		\end{equation*}
		\State $\xnext_i \gets \xcurr_i \left(1 + \alpha \delta^{(t)}_i\right)$, $\forall i \in [d]$
		\State $t \gets t+1$
		\EndWhile
		\State \textbf{return} ``Feasible'', $x = \xcurr$
	\end{algorithmic}
\end{algorithm}

\paragraph{Our algorithm.}

We now state the algorithm discussed in this paper for deciding mixed positive SDP instances \eqref{eq:feasible}, \eqref{eq:infeasible}. Our algorithm is exactly the same as the algorithm of \cite{MahoneyRWZ16} specialized to matrices, with the simplification that we do not explicitly prune satisfied covering constraints. In Section~\ref{sec:lpwarmup}, when we refer to its specialization to LPs, we will introduce simplified notation for working with vectors rather than matrices. 

\paragraph{Our potential.}

The following key functions appear in prior work, taking argument $M \in \Sym^n$:
\begin{equation*}
\smax(M) \defeq \lte{M}, \;\smin(M) \defeq -\lte{-M}.
\end{equation*}
The following fact is well-known and is derived from the bounded range of matrix entropy and the fact that $\smax$ and $\smin$ are appropriate conjugates:
\begin{fact}[Approximation quality of $\smax$, $\smin$]\label{fact:softmaxmin}
For all $M \in S^n$,
\[\lmax{M} \leq \smax(M) \leq \lmax{M} + \log n,\; \lmin{M} \geq \smin(M) \geq \lmin{M} - \log n.\]
\end{fact}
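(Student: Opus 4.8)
The plan is to reduce both statements to the spectral decomposition of $M$ and then obtain the $\smin$ bound from the $\smax$ bound via the elementary identity $\smin(M) = -\smax(-M)$.

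First I would diagonalize $M = Q\Lambda Q^\top$ with orthonormal $Q$ and eigenvalues $\lambda_1 \geq \cdots \geq \lambda_n$, so that by the definition of the matrix exponential $\te{M} = \sum_{i=1}^n e^{\lambda_i}$. Since each summand is strictly positive and the largest equals $e^{\lambda_1} = e^{\lmax{M}}$, the sum of these $n$ positive reals lies between its maximum and $n$ times its maximum, i.e. $e^{\lmax{M}} \leq \te{M} \leq n\,e^{\lmax{M}}$. Taking logarithms gives $\lmax{M} \leq \smax(M) \leq \lmax{M} + \log n$, which is the first half of the claim; the $\log n$ gap here is exactly the ``bounded range of matrix entropy'' term referenced in the statement.

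For the second half, note $\smin(M) = -\lte{-M} = -\smax(-M)$, and that the eigenvalues of $-M$ are $-\lambda_n \geq \cdots \geq -\lambda_1$, so $\lmax{-M} = -\lmin{M}$. Applying the already-proven $\smax$ inequality to the symmetric matrix $-M$ yields $-\lmin{M} \leq \smax(-M) \leq -\lmin{M} + \log n$; negating this chain and substituting $\smin(M) = -\smax(-M)$ gives $\lmin{M} - \log n \leq \smin(M) \leq \lmin{M}$, completing the proof.

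There is essentially no obstacle: the only substantive observation is that a sum of $n$ positive reals is squeezed between its maximum and $n$ times its maximum. (Alternatively one could argue via convex duality, observing that $\smax$ is the Fenchel conjugate of the negative von Neumann entropy restricted to the spectraplex, whose range on density matrices is $[-\log n, 0]$, and that $\smin$ is the corresponding conjugate for $-M$; but the direct spectral argument above is shorter and fully self-contained.)
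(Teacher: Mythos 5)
Your proof is correct and complete. The paper states this as a fact without proof, only gesturing at the derivation via the bounded range of matrix entropy and conjugacy of $\smax$, $\smin$; your direct spectral argument — squeezing $\te{M} = \sum_i e^{\lambda_i}$ between $e^{\lmax{M}}$ and $n\,e^{\lmax{M}}$ and then deducing the $\smin$ bound from $\smin(M) = -\smax(-M)$ — is the standard elementary proof of exactly this statement, and your parenthetical correctly identifies the duality route the paper alludes to as an equivalent alternative.
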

In Section~\ref{sec:sdpfull}, we will analyze how the following potential function changes for a particular iteration:
\begin{equation}
\label{eq:sdppotential}
\begin{aligned}
f(x^{(t)}) &\defeq (1 - \eps)\lte{\Pxcurr} + \lte{-\Cxcurr}\\
&= (1-\eps) \, \smax\left(\Pxcurr\right) - \smin\left(\Cxcurr\right).
\end{aligned}
\end{equation}
The motivation for using this potential is exactly the same as its use in earlier works \cite{Young01, MahoneyRWZ16}: it is a close approximation, up to scaling, of the key quantity 
$$\lmax{\sum_{i \in [d]} x_i P_i} - (1 + \epsilon) \lmin{\sum_{i \in [d]} x_i C_i}.$$
The main claim of Section~\ref{sec:sdpfull} is that this potential does not increase by more than a negligible amount (say, inverse-polynomial in $nd$) each iteration; we begin by demonstrating in Section~\ref{sec:lpwarmup} how to make straightforward modifications to the existing analysis to show this in the positive LP setting, and remark on the difficulties in generalizing these modifications. This will motivate and contextualize our strategy in Section~\ref{sec:sdpfull}.

With this fact about the change in the potential function in tow, we defer discussing all other aspects of the correctness and runtime of the algorithm to Section~\ref{sec:cleanup}. The remaining discussion of correctness is essentially small modifications to the correctness arguments in \cite{MahoneyRWZ16}, and the implementation details and runtime are the same as prior works on approximate semidefinite programming \cite{AroraK07, PengTZ16, Allen-ZhuLO16}. Thus, the key technical challenge is studying how this potential function changes, and comprises the bulk of this paper. 	%
\section{Linear Program Potential Analysis}
\label{sec:lpwarmup}
In this section, we give a simple analysis of the potential \eqref{eq:sdppotential} when Algorithm~\ref{alg:sdpalg} is specialized to the case of mixed positive linear programming. For just this section, we assume for simplicity of exposition that $n_p = n_c = n$; we will drop this assumption in Section~\ref{sec:sdpfull}, as it is unnecessary. We recall that in the case of LPs, Algorithm~\ref{alg:sdpalg} is effectively the same as that of \cite{MahoneyRWZ16}, but without explicit pruning of covering constraints. The potential function considered in this section has the simplified form
\begin{equation}
\label{eq:lppotential}
\begin{aligned}
f(x^{(t)}) &\defeq \log \sum_{j \in [n]} \exp\left(\left[P\xcurr\right]_j\right) + \log \sum_{j \in [n]} \exp\left(-\left[C\xcurr\right]_j\right)\\
&\defeq \, \smax\left(P\xcurr\right) - \smin\left(C\xcurr\right).
\end{aligned}
\end{equation}
Here, we overload $\smax$ and $\smin$ to their restrictions to diagonal matrices on vector inputs (i.e. we can equivalently think of a vector input as a diagonal matrix). In Lemma~\ref{lem-LPpack}, we recall the structure of the general second-order Taylor expansion argument for bounding changes to the packing potential. We then develop tools needed to give an analysis of the changing covering potential in Lemma~\ref{lem-LPcov}. Finally, we put these two bounds together to prove Theorem~\ref{thm:mainclaimlp} and conclude with some initial attempts to generalize these arguments to the positive SDP setting. The main export of this section is the following:

\begin{theorem}[Potential increase bound]
\label{thm:mainclaimlp}
On any iteration $t$ of Algorithm~\ref{alg:sdpalg}, specified to mixed positive LP instances~\eqref{eq:mixedlp}, the potential function $f$ defined in ~\eqref{eq:lppotential} satisfies
\[f(\xnext) \le f(\xcurr) + (nd\rho)^{-15}.\]
\end{theorem}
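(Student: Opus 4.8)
The plan is to decompose the potential change $f(\xnext) - f(\xcurr)$ into a packing contribution and a covering contribution, handle each separately, and then combine using the structure of the step rule $\delta^{(t)}$. Writing $u \defeq P\xcurr$ and $u^+ \defeq P\xnext$, and similarly $v \defeq C\xcurr$, $v^+ \defeq C\xnext$, note that $u^+ - u = P(\alpha\, \delta^{(t)} \circ \xcurr)$, and each entry of this increment is at most $\alpha \max_j [P\xcurr]_j \le \alpha K = \tfrac18$ in absolute value, by the while-loop termination condition and the choice $\alpha = \tfrac{1}{8K}$; likewise the increment to $v$ is entrywise nonnegative and similarly small on the coordinates that matter. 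This smallness is exactly what licenses a second-order Taylor expansion of each exponential summand.

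For the packing term, I would invoke Lemma~\ref{lem-LPpack} (the second-order Taylor argument for $\smax$): the change $\smax(u^+) - \smax(u)$ is upper bounded by $\inprod{\packinggrad}{\text{(increment)}}$ plus a quadratic correction controlled by the same gradient vector, using $e^s \le 1 + s + s^2$ for $|s| \le 1$ and the fact that $\sum_j \exp(u_j) / \Tr = $ a probability distribution. Concretely this yields something like $\smax(u^+) - \smax(u) \le \alpha \sum_i \pgradi \delta_i^{(t)} x_i^{(t)} + O(\alpha)\cdot\alpha \sum_i \pgradi \delta_i^{(t)} x_i^{(t)}$ after using $|[u^+-u]_j|\le \tfrac18$. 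For the covering term, the point of this section's main innovation — removing explicit pruning — comes in: I would split $[n]$ into the ``small'' coordinates $S = \{j : [Cx]_j \le K + 1\}$ (say) and the ``large'' coordinates. On the large coordinates the summands $\exp(-[Cx]_j)$ are exponentially tiny (at most $\exp(-K)$, which is $(nd\rho)^{-4}$-scale) and, since the increment to $v$ is nonnegative, those summands only shrink further, so their total contribution to the change of $-\smin$ is negligibly small and of the claimed order. On the small coordinates, the increment $[v^+ - v]_j$ is bounded by $\alpha K = \tfrac18$, so Lemma~\ref{lem-LPcov}'s second-order expansion applies and gives $-\smin(v^+) - (-\smin(v)) \le -\alpha \sum_i \cgradi \delta_i^{(t)} x_i^{(t)} + (\text{quadratic correction})$, up to the negligible error from having restricted the probability weights to $S$ (the restricted weights differ from the true ones by an $\exp(-K)$-fraction).

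Combining, the first-order terms add to $\alpha \sum_i x_i^{(t)} \delta_i^{(t)}\big((1-\eps)\pgradi - \cgradi\big)$ after absorbing the $(1-\eps)$ weight on the packing potential in \eqref{eq:sdppotential} — wait, in the pure LP potential \eqref{eq:lppotential} there is no $(1-\eps)$ factor, so it is simply $\alpha \sum_i x_i^{(t)}\delta_i^{(t)}(\pgradi - \cgradi)$. By definition of $W^{(t)}$ and $\delta^{(t)}$, on every active coordinate $i \in W^{(t)}$ we have $\delta_i^{(t)} = \tfrac12(1 - \pgradi/\cgradi) \ge \tfrac\eps4 > 0$ and $\pgradi - \cgradi = -2\delta_i^{(t)} \cgradi < 0$, so this first-order term is negative and in fact at most $-2\alpha \sum_{i} (\delta_i^{(t)})^2 \cgradi x_i^{(t)}$, i.e. it dominates (in magnitude) the quadratic corrections, which are each $O(\alpha)\cdot\alpha \sum_i (\delta_i^{(t)})^2 (\pgradi + \cgradi) x_i^{(t)}$-scale with the hidden constant small because $\alpha\le\tfrac18$ and one can check $\pgradi \le \cgradi$ on $W^{(t)}$. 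Thus the net first-plus-second-order change is nonpositive, leaving only the exponentially small ($\exp(-K) = (nd\rho)^{-4}$, which after the cruder accounting is comfortably below $(nd\rho)^{-15}$ once we recall $K = 4\log(nd\rho)/\eps$ and $\eps^{-1} \le (dn)^3$ — more carefully, $\exp(-K)\le (nd\rho)^{-4/\eps}\le (nd\rho)^{-15}$ since $4/\eps \ge 15$ whenever $\eps \le 4/15$, which holds as $\eps \le 1/20$) error from the large covering coordinates.

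The main obstacle I anticipate is the bookkeeping for the covering potential's large-coordinate split: one must verify that restricting the softmin's Gibbs weights to the small coordinates $S$ introduces only a multiplicative $(1 + \exp(-\Omega(K)))$ distortion in the first- and second-order coefficients $\cgradi$, and that the omitted large coordinates, whose summands can only decrease under the nonnegative increment, genuinely contribute a nonpositive plus exponentially-small amount to $-\smin$ rather than something that could blow up — this is delicate precisely because $-\smin$ involves $-\log$ of a sum, so one reasons about the ratio of the new sum to the old. Getting the constants so that everything lands under the stated $(nd\rho)^{-15}$ threshold (rather than merely $(nd\rho)^{-O(1)}$) requires being slightly careful about how $\alpha$, $K$, and the $\eps^{-1}\le(dn)^3$ assumption interact, but involves no conceptual difficulty beyond that.
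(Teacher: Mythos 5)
Your proposal is correct and follows essentially the same route as the paper: a second-order Taylor expansion with a Cauchy--Schwarz bound for the packing term (Lemma~\ref{lem-LPpack}), a small/large coordinate split in place of explicit pruning for the covering term (Lemmas~\ref{lem-totsum-goodsum} and~\ref{lem-LPcov}, including the distortion issue for the restricted Gibbs weights that you correctly flag), and the algebraic identity $\pgradplaini - \cgradplaini = -2\delta_i^{(t)}\cgradplaini$ on active coordinates to conclude that the combined first- and second-order terms are nonpositive. The only slip is cosmetic: the Cauchy--Schwarz step bounds the quadratic correction by a factor $\alpha K = \tfrac18$ (not $O(\alpha)$) times $\alpha\sum_i \left(\delta_i^{(t)}\right)^2 x_i^{(t)}\left(\pgradplaini+\cgradplaini\right)$, but even the crude factor-$1$ bound suffices, since the paper's per-coordinate computation gives an exact nonpositive value $-2\alpha\left(\delta_i^{(t)}\right)^3 x_i^{(t)}\cgradplaini$.
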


\paragraph{Notation.} We establish the notation used in this section, which we simplify because we are considering only LP instances. We consider a single iteration $t$ of the algorithm. For the context of this section, define the following: 
\begin{itemize}
	\item $x \defeq \xcurr$ to be the current iterate
	\item $\alpha\delta \circ x \defeq \xnext - \xcurr$ to be the update to the iterate
	\item $\phi \defeq Px$, $\psi \defeq Cx$
	\item $a \defeq P(\alpha\delta \circ x)$, $g \defeq P(\alpha \delta^2 \circ x)$
	\item $b \defeq C(\alpha\delta \circ x)$, $h \defeq C(\alpha\delta^2 \circ x)$
\end{itemize}
Henceforth, we drop all superscripts $(t)$ and use this notation to discuss a particular iteration. When $\exp(v)$ takes a vector-valued argument $v$ in this section, we mean the entrywise exponential.

Finally, for simplicity in this section, assume that all entries of $b$ are bounded above by $(nd\rho)^5$, and bounded below by $(nd\rho)^{-5}$. We will show how to lift this assumption in general in the analysis of Section~\ref{sec:sdpfull}, and also remove the dependence on $\rho$, in the case of positive LP instances.

\subsection{Bounding potential changes}
\label{ssec:potlp}
We first use a second-order Taylor expansion and Cauchy-Schwarz to control the change in the packing potential, $\smax(Px) = \smax(\phi)$, from $Px = \phi$ to $P((\1_d + \alpha\delta)\circ x) = \phi + a$.

\begin{lemma}[Packing potential]\label{lem-LPpack} 
We have the following bound on the packing potential change:
$$\log \inprod{\1_n}{\exp(\phi + a)} \leq \log \inprod{\1_n}{\exp(\phi)} + \inprod{a + g}{\frac{\exp(\phi)}{\1_n^\top\exp(\phi)}} .$$
\end{lemma}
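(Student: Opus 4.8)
The plan is to bound each summand $\exp(\phi_j + a_j)$ by a quadratic in the increment and then recombine. First I would write the update coordinatewise: for each $j \in [n]$, $\phi_j + a_j = \phi_j + [P(\alpha\delta \circ x)]_j$, and I want to compare $\exp(\phi_j + a_j)$ to $\exp(\phi_j)(1 + a_j + g_j)$ where $g_j = [P(\alpha\delta^2 \circ x)]_j$. The key pointwise inequality is $e^{a_j} \le 1 + a_j + g_j \cdot (\text{something})$, but more precisely the standard trick here is: since $\delta_i \in [0, \tfrac12]$ (from the definition of $\delta^{(t)}$ in Algorithm~\ref{alg:sdpalg}) and $\alpha = \tfrac{1}{8K}$, each increment $a_j = \sum_i P_{ji}\alpha\delta_i x_i$ is small — the termination condition $\lmax{Px} \le K$ together with $P x \le K \1$ coordinatewise means $a_j \le \alpha K \cdot (\text{const}) \le \tfrac14$ say. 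In this regime one has the elementary bound $e^{y} \le 1 + y + y^2$ for $|y| \le \tfrac12$ (or a similar constant). Then I need $a_j^2 \le g_j$: this follows from Cauchy–Schwarz applied to the nonnegative weights $P_{ji} x_i$, namely $a_j^2 = \left(\sum_i (P_{ji}x_i)(\alpha\delta_i)\right)^2 \le \left(\sum_i P_{ji}x_i\right)\left(\sum_i P_{ji}x_i (\alpha\delta_i)^2\right) \le \left(\sum_i P_{ji}x_i \alpha\delta_i^2\right) \cdot (\alpha \cdot \text{bound})$... wait — I need to be careful with the $\alpha$ powers. The cleaner route: $a_j^2 = \left(\sum_i P_{ji}x_i \alpha\delta_i\right)^2 \le \left(\sum_i P_{ji}x_i\right)\left(\sum_i P_{ji}x_i \alpha^2\delta_i^2\right) \le [Px]_j \cdot \alpha \cdot g_j \le K\alpha \cdot g_j \le g_j$ since $K\alpha = \tfrac18 < 1$. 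So $e^{a_j} \le 1 + a_j + a_j^2 \le 1 + a_j + g_j$ provided $a_j$ is in the valid range.

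Next I would assemble:
\begin{align*}
\inprod{\1_n}{\exp(\phi + a)} = \sum_j e^{\phi_j} e^{a_j} \le \sum_j e^{\phi_j}(1 + a_j + g_j) = \inprod{\1_n}{\exp(\phi)} + \inprod{a + g}{\exp(\phi)}.
\end{align*}
Then divide through by $\inprod{\1_n}{\exp(\phi)}$ and take logarithms, using $\log(1 + z) \le z$ with $z = \inprod{a+g}{\exp(\phi)}/\inprod{\1_n}{\exp(\phi)}$, which gives exactly
$$\log\inprod{\1_n}{\exp(\phi+a)} \le \log\inprod{\1_n}{\exp(\phi)} + \inprod{a+g}{\frac{\exp(\phi)}{\1_n^\top \exp(\phi)}},$$
as claimed.

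The main obstacle — really the only nontrivial point — is verifying that the increments $a_j$ stay in the range where the second-order exponential upper bound $e^{y} \le 1 + y + y^2$ is valid, and nailing the constants so that $a_j^2 \le g_j$ comes out cleanly. This is precisely the place where the algorithm's termination condition $\lmax{Px} \le K$ and the step-size choice $\alpha = \tfrac{1}{8K}$ are used (and it's the packing analogue of the subtlety the paper emphasizes for the covering potential, where no such a priori bound on $Cx$ exists — hence the need for restricting the Taylor expansion to small coordinates there). Since $\delta_i \le \tfrac12$ and $[Px]_j \le K$ on every active iteration, $a_j = \sum_i P_{ji}x_i\alpha\delta_i \le \tfrac{\alpha}{2}[Px]_j \le \tfrac{K\alpha}{2} = \tfrac{1}{16}$, comfortably in range, so everything goes through with room to spare. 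I'd present this coordinatewise bound as the lemma's one real ingredient and keep the recombination/log steps terse.
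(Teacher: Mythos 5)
Your proof is correct and follows essentially the same route as the paper's: the second-order bound $e^{y}\le 1+y+y^{2}$ applied coordinatewise (valid since the termination condition $\phi\le K\1_n$ and the step size $\alpha\delta\le\tfrac{1}{16K}$ give $a_j\le\tfrac{1}{16}$), the Cauchy--Schwarz step $a_j^2\le \alpha K g_j\le g_j$ with weights $P_{ji}x_i$, and then summing, normalizing, and using $\log(1+z)\le z$. The brief hesitation about the powers of $\alpha$ resolves to exactly the paper's computation, so there is nothing to correct.
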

\begin{proof}
First, by the termination condition, $\phi \leq K$ entrywise. Further, because $0 \leq \alpha\delta \leq \frac{1}{16K}$ entrywise, $a$ is bounded entrywise by $\frac{1}{16}$ as it is a scaling of $\phi$ by $\alpha\delta$. Therefore, by the bound $e^x \leq 1 + x + x^2$ for $0 < x < 1$ for all $j \in [n]$,
\begin{equation}\label{eq:expandlppack}\exp(\phi_j + a_j) = \exp(\phi_j) \exp(a_j) \leq \exp(\phi_j) (1 + a_j + a_j^2) \leq \exp(\phi_j)(1 + a_j + g_j).\end{equation}
In the last inequality, we also noted that by the definitions of $a$ and $g$ and Cauchy-Schwarz, 
\begin{align}
\label{eq:cauchyschwarzsmall}
a_j^2 = \alpha^2 \left(\sum_{i \in [d]} P_{ji} x_i \delta_i\right)^2 \leq \left(\alpha\sum_{i \in [d]} P_{ji} x_i \delta_i^2\right) \left(\alpha\sum_{i \in [d]} P_{ji} x_i\right) \leq \alpha K g_j \leq g_j.
\end{align}
Summing~\eqref{eq:expandlppack} over all $j$ yields
\[\inprod{\1_n}{\exp(\phi + a)} \leq \inprod{\1_n}{\exp(\phi)} \left(1 + \inprod{a + g}{\frac{\exp(\phi)}{\1_n^\top\exp(\phi)}} \right);\]
taking a logarithm and using $\log(1 + x) \leq x$ yields the desired claim.
\end{proof}

Attempts to follow this line of reasoning to obtain a lower bound on the change in the covering potential run into the immediate issue that the algorithm guarantees only $\min_j \psi_j \leq K$, where we recall $\psi = Cx$, which means the Taylor expansion used in \eqref{eq:expandlppack} does not hold for \textit{all} coordinates. This was the reason \cite{Young01, MahoneyRWZ16} explicitly pruned constraints. We sidestep this issue by noting that any sufficiently large coordinate of $\psi$ will necessarily make an extremely small contribution to $\smin(Cx) = -\log \inprod{\1_n}{\exp(-\psi)}$, quantifying the claim as follows. 

\begin{lemma}\label{lem-totsum-goodsum} For constants $c$, $T \ge 0$, vector $y \in \R^n$ satisfying $\min_j y_j < T$, and vector $v\in \R^n$ entrywise satisfying $\ell \leq v \leq u$ for scalars $0 < \ell < u$, define the following two sets:
	\[S \defeq \{j \in [n] \mid y_j \leq T + c\}, \; L \defeq \{j \in [n] \mid y_j \geq T + c\}.\]
Then, we have
	\[ \sum_{j\in [n]} \exp(-y_j) v_j \leq \left( 1 + \frac{n u }{\ell} \exp(-c)\right)\sum_{j \in S} \exp(-y_j) v_j.\] 
\end{lemma}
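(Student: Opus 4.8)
The goal is to prove Lemma~\ref{lem-totsum-goodsum}, which bounds the weighted sum $\sum_{j \in [n]} \exp(-y_j) v_j$ by the sum restricted to the ``small'' coordinates $S$, up to a small multiplicative factor.

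\textbf{Plan.} The core idea is to split the full sum over $[n]$ into the contribution from $S$ and the contribution from $L$, and then show that the $L$-contribution is negligible compared to a single term from $S$. First I would write
\[\sum_{j \in [n]} \exp(-y_j) v_j = \sum_{j \in S} \exp(-y_j) v_j + \sum_{j \in L} \exp(-y_j) v_j,\]
and it suffices to bound the second sum by $\frac{nu}{\ell} \exp(-c) \sum_{j \in S} \exp(-y_j) v_j$. For the numerator, since $j \in L$ means $y_j \geq T + c$, each term satisfies $\exp(-y_j) v_j \leq u \exp(-T - c) = u \exp(-c) \exp(-T)$, and summing over at most $n$ indices gives $\sum_{j \in L} \exp(-y_j) v_j \leq n u \exp(-c) \exp(-T)$. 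For the denominator, I use the hypothesis that $\min_j y_j < T$: pick an index $j^* \in [n]$ achieving $y_{j^*} < T$; since $y_{j^*} < T \leq T + c$ (using $c \geq 0$), we have $j^* \in S$, so $\sum_{j \in S} \exp(-y_j) v_j \geq \exp(-y_{j^*}) v_{j^*} \geq \ell \exp(-T)$, where I used $v_{j^*} \geq \ell$ and $\exp(-y_{j^*}) > \exp(-T)$. Combining, $\sum_{j \in L} \exp(-y_j) v_j \leq n u \exp(-c) \exp(-T) \leq \frac{nu}{\ell}\exp(-c) \cdot \ell \exp(-T) \leq \frac{nu}{\ell}\exp(-c) \sum_{j \in S} \exp(-y_j) v_j$, which yields the claim after adding back the $S$-sum.

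\textbf{Main obstacle.} This lemma is essentially elementary, so there is no real obstacle; the only subtlety is making sure the inequality $\min_j y_j < T$ actually gives a witness \emph{inside} $S$, which requires $c \geq 0$ (otherwise $T + c$ could be smaller than the minimum and $S$ could be empty). I would double-check the boundary convention: the statement defines $S$ with $\leq$ and $L$ with $\geq$, so indices with $y_j = T + c$ lie in both; this overlap is harmless for an upper bound on the $L$-sum and only helps the $S$-sum, so the argument goes through regardless. One should also note $v_j \geq 0$ throughout (guaranteed by $0 < \ell \leq v$), so all terms are nonnegative and the splitting/dropping steps are valid.

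\textbf{Remark on context.} In the intended application $y = \psi = Cx$, $v = \exp$-weighted gradient-type vector, $T = K$ is the termination threshold (so $\min_j \psi_j < K$ is exactly the while-loop condition), and $c$ is chosen polylogarithmically large so that $\exp(-c)$ beats the $\frac{nu}{\ell}$ factor, making the overall multiplicative error inverse-polynomial in $nd\rho$; the bounds $\ell \leq v \leq u$ correspond to the assumed range $(nd\rho)^{-5} \leq b \leq (nd\rho)^5$ stated just before the lemma.
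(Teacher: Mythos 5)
Your proof is correct and follows essentially the same route as the paper's: split the sum over $S$ and $L$, bound each $L$-term using $y_j \geq T+c$, and compare against the single witness term $j^* \in S$ with $y_{j^*} < T$ (the paper phrases this as $\exp(-y_j) < \exp(-y_{j^*})\exp(-c)$ rather than passing through $\exp(-T)$, but the substance is identical). Your explicit remarks that $c \geq 0$ is needed to place $j^*$ in $S$ and that nonnegativity of $v$ justifies dropping terms are points the paper leaves implicit.
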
 
\begin{proof}
	Choose some coordinate $j^*$ such that $y_{j^*} < T$.  By definition, for every $j \in L$, we have 
	$$y_j > y_{j^*} + c \implies \exp(-y_j) < \exp(-y_{j^*}) \exp(-c).$$ 
	We therefore have the following sequence of inequalities:
	\begin{align*}
	\sum_{j \in L} \exp(-y_j) v_j &\leq \sum_{j \in L} \exp(-y_{j^*}) \exp(-c) v_j \\
	&\leq \frac{nu}{\ell} \exp(-c) \exp(-y_{j^*})  v_{j^*} \\
	&\leq \frac{nu}{\ell} \exp(-c) \sum_{j \in S} \exp(-y_j) v_j.
	\end{align*} 
	The claim follows from $\sum_{j \in [n]} \exp(-y_j) v_j = \sum_{j \in S }\exp(-y_j) v_j  + \sum_{j \in L} \exp(-y_j) v_j$.
\end{proof}

Lemma~\ref{lem-totsum-goodsum} gives us a way of bounding the relative contributions of the \emph{large} coordinates of a vector $y$ to a weighted sum of inverse exponentials of \emph{all} its coordinates. In particular, for say $c = O(\log n)$ sufficiently large and $u/\ell$ bounded by a polynomial in $n$, the lemma states that the relative contribution of the large coordinates is negligible. 

We give a concrete application in the following bound on the covering potential: intuitively, we do a Taylor expansion on coordinates we can control by a magnitude bound, and argue that adding in the remaining coordinates does not significantly affect the value of the potential function.

\begin{lemma}
\label{lem-LPcov}
We have the following bound on the covering potential change:
\[-\log \inprod{\1_n}{\exp(-\psi - b)} \geq -\log \inprod{\1_n}{\exp(-\psi)} + \inprod{b - h}{\frac{\exp(-\psi)}{\1_n^\top\exp(-\psi)}}  - (nd\rho)^{-15}.\]
\end{lemma}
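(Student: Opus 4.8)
The plan is to mirror the packing-potential argument of Lemma~\ref{lem-LPpack}, but restrict the second-order Taylor expansion to the ``small'' coordinates of $\psi = Cx$ where we have a magnitude bound, and then use Lemma~\ref{lem-totsum-goodsum} to argue that ignoring the large coordinates costs only the negligible additive term $(nd\rho)^{-15}$. First I would define the threshold set $S \defeq \{j \in [n] \mid \psi_j \leq K + c\}$ and its complement $L$, for a suitable choice of $c = \Theta(\log(nd\rho))$ (something like $c = 30\log(nd\rho)$ will work so that $\frac{n\cdot u}{\ell}\exp(-c)$ is comfortably smaller than the target error, using the assumed bounds $(nd\rho)^{-5} \le b_j \le (nd\rho)^5$ to control the ratio $u/\ell$ of the weight vector $v_j = \exp(-b_j)$ — note $\exp(-\psi_j - b_j) = \exp(-\psi_j)\exp(-b_j)$, so $v = \exp(-b)$ plays the role of the bounded weight in Lemma~\ref{lem-totsum-goodsum}). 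By the termination condition $\min_j \psi_j \leq K$, so the hypothesis $\min_j \psi_j < T$ of Lemma~\ref{lem-totsum-goodsum} is met with $T = K$ (or $T = K + $ a tiny slack).

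Next I would carry out the Taylor expansion on $S$. For $j \in S$, the argument $-\psi_j$ is bounded below, but more importantly $b_j$ is bounded in absolute value: since $0 \le \alpha\delta \le \frac{1}{16K}$ entrywise and $b$ is a scaling of $\psi$ restricted to the support where we can bound $\psi_j \le K + c \le 2K$ (for $c \le K$, which holds for our choice of constants and $\eps$ not too small), we get $|b_j| \le \frac{1}{8}$, say. Then, using $e^{-x} \le 1 - x + x^2$ for small $x$ together with the Cauchy-Schwarz bound $b_j^2 \le \alpha K h_j \le h_j$ exactly as in~\eqref{eq:cauchyschwarzsmall} (replacing $P$ by $C$), I obtain for each $j \in S$
\[\exp(-\psi_j - b_j) \le \exp(-\psi_j)(1 - b_j + h_j).\]
Summing over $j \in S$ gives $\sum_{j \in S}\exp(-\psi_j - b_j) \le \sum_{j \in S}\exp(-\psi_j)(1 - b_j + h_j)$. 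I then need to pass from sums over $S$ back to sums over $[n]$: on the left I just add the nonnegative $L$-terms, $\sum_{j \in [n]}\exp(-\psi_j - b_j) \le \sum_{j \in S}\exp(-\psi_j - b_j) + \sum_{j \in L}\exp(-\psi_j - b_j)$, and bound the $L$-part by Lemma~\ref{lem-totsum-goodsum} applied with $y = \psi + b$ (or applied to $\psi$ with weights $v = \exp(-b)$), so that $\sum_{j \in L}\exp(-\psi_j-b_j) \le \frac{nu}{\ell}\exp(-c)\sum_{j \in S}\exp(-\psi_j - b_j)$; hence $\sum_{j \in [n]}\exp(-\psi_j - b_j) \le (1 + \eta)\sum_{j \in S}\exp(-\psi_j - b_j)$ with $\eta$ negligibly small. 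Combining the three displays yields
\[\sum_{j \in [n]}\exp(-\psi_j - b_j) \le (1+\eta)\sum_{j \in S}\exp(-\psi_j)(1 - b_j + h_j) \le (1+\eta)\sum_{j\in [n]}\exp(-\psi_j)(1 - b_j + h_j),\]
where the last step drops the (nonnegative, since $1 - b_j + h_j \ge 0$ for small $b_j$) $L$-terms we just added — wait, here I must be a touch careful about sign; the cleaner route is to write $\sum_{j \in S}\exp(-\psi_j)(1 - b_j + h_j) = \sum_{j\in[n]}\exp(-\psi_j)(1 - b_j + h_j) - \sum_{j \in L}\exp(-\psi_j)(1 - b_j + h_j)$ and absorb the (negligible) $L$ correction into $\eta$ via Lemma~\ref{lem-totsum-goodsum} again. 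Finally, take $-\log$ of both sides, use $-\log((1+\eta)(1 + w)) \ge -w - \eta$ where $w = \inprod{-b+h}{\exp(-\psi)/(\1_n^\top\exp(-\psi))}$, i.e.\ $-\log(\cdots) \ge -\log\inprod{\1_n}{\exp(-\psi)} + \inprod{b - h}{\exp(-\psi)/(\1_n^\top\exp(-\psi))} - \eta$, and check $\eta \le (nd\rho)^{-15}$ for the chosen constant $c$. This gives exactly the claimed inequality.

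The main obstacle I anticipate is the bookkeeping around the ``large'' coordinates: one must apply Lemma~\ref{lem-totsum-goodsum} in a way that controls \emph{both} the denominator $\inprod{\1_n}{\exp(-\psi-b)}$ relative to its restriction to $S$ \emph{and} the stray $L$-terms in the linearized sum $\sum_{j\in[n]}\exp(-\psi_j)(1 - b_j + h_j)$, and to make sure all the polynomial factors $n$, $u/\ell \le (nd\rho)^{10}$, and the $(1+\eta)$ slack, when multiplied together and hit with $\exp(-c)$, still come out below $(nd\rho)^{-15}$ — this dictates the choice $c \asymp \log(nd\rho)$ with a large enough constant, and one must also double-check that $c \le K = \tfrac{4\log(nd\rho)}{\eps}$ so that $\psi_j \le K + c \le 2K$ on $S$, which is where the $|b_j| \le 1/8$ entrywise bound (needed for the Taylor step) comes from. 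Everything else is a direct transcription of the packing argument with $P \mapsto C$, $a \mapsto b$, $g \mapsto h$, and the exponential's sign flipped.
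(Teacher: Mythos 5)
Your proposal follows essentially the same route as the paper's proof: split the coordinates of $\psi$ into a small bucket $S$ and a large bucket $L$ at a threshold of the form $K + \Theta(\log(nd\rho))$ (the paper uses $2K$), do the second-order Taylor expansion with the Cauchy--Schwarz bound $b_j^2 \le h_j$ only on $S$, invoke Lemma~\ref{lem-totsum-goodsum} once to replace $\inprod{\1_n}{\exp(-\psi-b)}$ by its restriction to $S$ and a second time to pass the linear term $\inprod{b}{[\exp(-\psi)]_S}$ back to $\inprod{b}{\exp(-\psi)}$, and finally take logarithms.

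One step as you primarily state it would fail, though you also name the fix. You propose applying Lemma~\ref{lem-totsum-goodsum} to $y = \psi$ with weights $v_j = \exp(-b_j)$, claiming the assumed bounds $(nd\rho)^{-5} \le b_j \le (nd\rho)^5$ control the ratio $u/\ell$. They do not: for $v = \exp(-b)$ one gets $u/\ell = \exp(\max_j b_j - \min_j b_j)$, which can be as large as roughly $\exp((nd\rho)^5)$, so the error factor $\frac{nu}{\ell}e^{-c}$ is astronomically large for any $c = O(\log(nd\rho))$. The polynomial bound $u/\ell \le (nd\rho)^{10}$ is available only when the weight is $b$ itself (which is exactly how the second application must be done, to handle the $\inprod{b}{[\exp(-\psi)]_S}$ term), not $\exp(-b)$. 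For the first application you must instead take the parenthetical alternative you mention --- $y = \psi + b$ with $v = \1_n$, so $u/\ell = 1$ --- using $b \le \tfrac{1}{16K}\psi$ entrywise so that $\min_j(\psi_j+b_j) \le 1.5K$ and the large-coordinate set of $\psi+b$ is compatible with the bucket $L$ defined via $\psi$; this is what the paper does. Separately, your suggested constant $c = 30\log(nd\rho)$ is marginally too small: in the second application the error is about $n\cdot(nd\rho)^{10}\cdot e^{-c}$ times $\inprod{b}{\exp(-\psi)/\1_n^\top\exp(-\psi)} \le (nd\rho)^5$, which with $c = 30\log(nd\rho)$ lands at $(nd\rho)^{-14}$, just above the target; any somewhat larger constant (still satisfying $c \le K$, which holds since $\eps \le 1/20$) repairs this, as you anticipate when you say the constant must be chosen large enough.
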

\begin{proof}
Define the sets
\[S \defeq \{j \in [n] \mid \psi_j \leq 2K\}, \; L \defeq \{j \in [n] \mid \psi_j \geq 2K\}.\]
For a vector $v \in \R^n$, we denote $v_S$ to be the $n$-dimensional vector whose coordinates $j \not\in S$ are zeroed, and similarly define $v_L$. Further, we note that every entry of $b_S$ is bounded by $\frac{1}{4}$, since it is a scaling of $\psi_S$ with scaling factor $\alpha\delta$ at most $\frac{1}{16K}$ coordinatewise. For each $j \in S$, we have by the bound $e^{-x} \leq 1 - x + x^2$ for $0 < x < 1$,
\begin{equation}
\label{eq:expandlpcov}
\exp(-\psi_j - b_j) = \exp(-\psi_j) \exp(-b_j) \leq \exp(-\psi_j)(1 - b_j + b_j^2) \leq \exp(-\psi_j)(1 - b_j + h_j).
\end{equation}
Here, we used a similar Cauchy-Schwarz bound as in \eqref{eq:cauchyschwarzsmall} to show $b_j^2 \leq h_j$ for $b_j \in S$. Next, using Lemma~\ref{lem-totsum-goodsum} with $v = \1_n$, $u = l = 1$, $T = 1.5K$, and $c = 0.5K$ yields
\[\inprod{\1_n}{\exp(-\psi - b)} \leq (1 + (nd\rho)^{-31})\inprod{\1_n}{\exp([-\psi  - b]_S)}. \]
To see that this was a valid usage, recall our definition of $L$, and note that $\min_j \psi_j \leq K$, and $\psi + b$ is multiplicatively bounded by $(1 + \frac{1}{16K})\psi \leq 1.5\psi$. We also use the loose bound $n\exp(-c) \leq n(nd\rho)^{-2/\eps} \leq (nd\rho)^{-31}$. Now, summing \eqref{eq:expandlpcov} over all $j \in S$ yields
\begin{equation}
\label{eq:halfwaylpcov}
\begin{aligned}
\inprod{\1_n}{\exp(-\psi - b)} &\leq (1 + (nd\rho)^{-31})\inprod{\1_n}{\exp([-\psi  - b]_S)} \\
&\leq (1 + (nd\rho)^{-31})\inprod{\1_n - b + h}{[\exp(-\psi)]_S}\\
&\leq (1 + (nd\rho)^{-31}) \inprod{\1_n}{\exp(-\psi)}  \left( 1- \inprod{b}{\frac{[\exp(-\psi)]_S}{\1_n^\top \exp(-\psi)}} + \inprod{h}{\frac{[\exp(-\psi)]_S}{\1_n^\top \exp(-\psi)}} \right).
\end{aligned}
\end{equation}

Taking negative logarithms gives the following bound:
\begin{align*}
-\log \inprod{\1_n}{\exp(-\psi - b)} &\geq -\log \inprod{\1_n}{\exp(-\psi)} + \inprod{b}{\frac{[\exp(-\psi)]_S}{\1_n^\top \exp(-\psi)}} - \inprod{h}{\frac{[\exp(-\psi)]_S}{\1_n^\top \exp(-\psi)}} - (nd\rho)^{-30} \\
&\geq -\log \inprod{\1_n}{\exp(-\psi)} + \inprod{b}{\frac{[\exp(-\psi)]_S}{\1_n^\top \exp(-\psi)}} - \inprod{h}{\frac{\exp(-\psi)}{\1_n^\top \exp(-\psi)}} - (nd\rho)^{-30}.
\end{align*}
Here, we used the facts $-\log(1- x) > x$ for $0 < x < 1$, and $-\log(1 + x) > -2x$ for $x > 0$. Finally, to bound the middle term, we use Lemma~\ref{lem-totsum-goodsum} one more time: by our assumptions on $b$ and the uniform scaling $\1_n^\top \exp(-\psi)$, we may set $v = b$, $u/\ell \leq (nd\rho)^{10}$, $T = K$, and $c = K$, yielding
\begin{equation*}
\begin{aligned}
\inprod{b}{\frac{[\exp(-\psi)]_S}{\1_n^\top \exp(-\psi)}} \geq \left(1 - (nd\rho)^{-25}\right)\inprod{b}{\frac{\exp(-\psi)}{\1_n^\top \exp(-\psi)}}.
\end{aligned}
\end{equation*}
Here, we used Lemma~\ref{lem-totsum-goodsum} with  the loose bound $\frac{nu}{\ell} \cdot \exp(-c) \leq (nd\rho)^{11} (nd\rho)^{-40} \leq (nd\rho)^{-25}$. Finally, since $\exp(-\psi)/\1_n^\top \exp(-\psi)$ is a probability distribution,
\begin{equation*}
-(nd\rho)^{-25} \inprod{b}{\frac{\exp(-\psi)}{\1_n^\top \exp(-\psi)}} \geq -(nd\rho)^{-20},
\end{equation*}
where we recalled the assumption that the largest entry of $b$ is bounded by $(nd\rho)^5$. This concludes the proof, by $(nd\rho)^{-20} + (nd\rho)^{-30} \leq (nd\rho)^{-15}$.
\end{proof}

We now combine Lemma~\ref{lem-LPpack} and Lemma~\ref{lem-LPcov} to prove Theorem~\ref{thm:mainclaimlp}. Here, we recall that by our notation, $Px^{(t)} = \phi$, $Cx^{(t)} = \psi$, $Px^{(t + 1)} = \phi + a$, and $Cx^{(t + 1)} = \psi + b$.

\begin{proof}[Proof of Theorem~\ref{thm:mainclaimlp}]
By combining Lemma~\ref{lem-LPpack} and Lemma~\ref{lem-LPcov}, and using the definitions 
$$f(x) = \log \inprod{\1_n}{\exp(\phi)} + \log \inprod{\1_n}{\exp(-\psi)}$$
and
$$f(x \circ (1 + \alpha\delta)) = \log \inprod{\1_n}{\exp(\phi + a)} + \log \inprod{\1_n}{\exp(-\psi - b)},$$
the potential increase is at most $(nd\rho)^{-15}$ as long as we can show that
	\begin{align} 
	\label{eq:nonnegativelp}
	\inprod{a + g}{ \frac{\exp(\phi)}{\1_n^\top \exp(\phi)}} - \inprod{b - h}{\frac{\exp(-\psi)}{\1_n^\top \exp(-\psi)}} \leq 0.
	\end{align}
	To do so, we note the form of the gradient updates,
	$$\pgradplaini = \inprod{\pii}{\frac{\exp(\phi)}{\1^\top \exp(\phi)}},\; \cgradplaini = \inprod{\ci}{\frac{\exp(-\psi)}{\1^\top \exp(-\psi)}}.$$
	Thus, by the definitions of $a, g, b, h$, we rewrite the left hand side of \eqref{eq:nonnegativelp} as
	\begin{align}
	\label{eq:nonnegativelp2}
	\alpha \sum_{i\in[d]} \delta_i x_i \left((1+ \delta_i)\pgradplaini - (1-\delta_i) \cgradplaini\right).
	\end{align}
	By its definition in Algorithm~\ref{alg:sdpalg}, either $\delta_i = 0 $ or $\delta_i= \frac{1}{2}(1 - \pgradplaini/\cgradplaini)$. For each summand $i$ in \eqref{eq:nonnegativelp2}, if $\delta_i = 0$, it is clearly nonpositive; otherwise, 
	\[(1+ \delta_i)\pgradplaini - (1-\delta_i) \cgradplaini = \frac{3 \pgradplaini \cgradplaini - (\pgradplaini)^2}{2\cgradplaini} - \frac{\pgradplaini + \cgradplaini}{2} = \frac{(\pgradplaini - \cgradplaini)^2}{2\cgradplaini} \leq 0.\] 
	By nonnegativity of $\delta_i$ and $x_i$, summing over all $i \in [d]$ yields the desired claim.
\end{proof}
\subsection{Generalizing to a matrix potential analysis}\label{ssec:generalizingIdeas}
In this section, as a precursor to Section~\ref{sec:sdpfull} and to motivate our subsequent developments, we describe some first-pass attempts at generalizing the proofs in Section~\ref{ssec:potlp} to analyze the SDP potential. By a matrix generalization of Cauchy-Schwarz that we develop, the proof of Lemma \ref{lem-LPpack} can be extended to the SDP setting in a relatively straightforward manner; however, due to monotonicity conditions that do not generalize from the vector case to non-commuting matrices, it is less clear how to adapt the proof of Lemma \ref{lem-LPcov}. 

\paragraph{Attempt 1: Projecting onto the small eigenspace.} Recall that in the proof of Lemma~\ref{lem-LPcov}, a key idea for dealing with the coordinates that did not admit a Taylor expansion was to first perform a projection onto the small set of coordinates. To this end, a natural idea is to let $P_{\Psi}$ be the projection matrix corresponding to the span of the eigenvectors of $\Psi$ whose eigenvalues are smaller than a threshold, e.g. $2K$. Then, we can argue directly that
$$P_{\Psi} B P_{\Psi} \preceq \frac{\alpha}{2} P_{\Psi} \Psi P_{\Psi} \preceq \frac{1}{4} I_n.$$
This immediately motivates using an argument of the form
\begin{align*}
\Tr \exp(-\Psi - B) &\leq \Tr\left[\exp(-\Psi)\exp(-B)\right]\\
&\leq \Tr\left[\exp(-\Psi) \exp(-P_{\Psi} B P_{\Psi})\right]\\
&\leq \Tr\left[\exp(-\Psi)\left(I_n -P_{\Psi} B P_{\Psi} + (P_{\Psi} B P_{\Psi})^2 \right)\right].
\end{align*} 
This sequence of inequalities is analogous to \eqref{eq:expandlpcov}; the first inequality is by Golden-Thompson, and the third is by the spectral norm bound on $P_{\Psi} B P_{\Psi}$. However, the second inequality is incorrect; in general, it is not true that for a symmetric projection matrix $P_{\Psi}$, $P_{\Psi} B P_{\Psi} \preceq B$ when the matrices do not commute. Immediate attempts to design more carefully-defined projection matrices to use the sort of ``small bucket'' arguments as in the proof of Lemma~\ref{lem-LPcov} similarly fail due to monotonicity reasons, motivating the search for a ``commutative'' version of this strategy. \medskip

\paragraph{Attempt 2: Local projections via known trace inequalities.} The strategy of \cite{Allen-ZhuLO16} was to define an interpolation function, relating a known bound to a desired bound, and locally bounding the error. More formally, it can be shown that the vector function $g(t) = \Tr\exp(-\Psi - tB)$ is convex in $t$. Therefore, it satisfies $g(1) - g'(1) \leq g(0)$, or equivalently
\begin{align*}
\Tr\exp(-\Psi - B) \leq \Tr \exp(-\Psi) - \inprod{B}{\exp(-\Psi - B)}.
\end{align*} 
Therefore, if we were able to show that at least the first-order relationship
 \[\inprod{B}{\exp(-\Psi - B)} \geq (1-\Oh{\epsilon})\inprod{B}{\exp(-\Psi)}\] holds, it is not hard to see that we could scale down the step size by a factor $\eps$ and give a $\tOh{\eps^{-4}}$ algorithm. The strategy used in \cite{Allen-ZhuLO16} was to use Gr\"onwall's lemma, e.g. Lemma~\ref{lem:addgronwall}, to show a bound of the form
 \[-g'(t) \le \eps g(t).\] 
Because this inequality is local to a particular $t \in [0, 1]$, it is simpler to relate the quantities on the two sides. By applying Fact~\ref{fact:expderiv}, we have
\begin{align*}
\frac{d}{dt} g(t) &= -\inprod{B}{\int_0^{1} e^{s(-\Psi - tB)} B e^{(1 - s)(-\Psi - tB) }ds} \\
&\geq -\Tr(B^2 \exp(-\Psi - tB)).
\end{align*} 
Here, for each $s \in [0, 1]$, we applied the extended Lieb-Thirring inequality (Fact~\ref{fact:extendliebthirring}). Thus, it now suffices to show a relationship such as
\begin{equation}\label{eq:falsesecondorder}\Tr\left[B^2\exp(-\Psi - tB)\right] \leq \eps \Tr\left[B\exp(-\Psi - tB)\right].\end{equation}
Here, one may hope to apply an argument such as Lemma~\ref{lem-totsum-goodsum} after diagonalizing the matrices $\Psi + tB$ and $B$, to quantify if the error incurred by the large eigenvalues of $B$ is dominated by their exponential weighting in $\exp(-\Psi - tB)$. However, it is easy to check that \eqref{eq:falsesecondorder} is false; matrices
\[\Psi = \begin{pmatrix}1 & 0 \\ 0 & M\end{pmatrix} \text{ and } B = \frac{\eps}{2}\begin{pmatrix}1 & \sqrt{M} \\ \sqrt{M} & M\end{pmatrix}\]
break this inequality even for $t = 0$, for large choices of $M$, despite $B \preceq \eps \Psi$ holding.
\medskip

Ultimately, our analysis in Section~\ref{sec:sdpfull} is based on a careful combination of these ideas: we design a second-order interpolation function and remove the black-box use of the extended Lieb-Thirring inequality in favor of a more fine-grained inequality, similar in flavor to Lemma~\ref{lem-totsum-goodsum}. 	%
\section{Semidefinite Program Potential Analysis}
\label{sec:sdpfull}

As mentioned in Section \ref{sec:lpwarmup}, the two main technical challenges when extending \cite{MahoneyRWZ16} to the SDP setting are correctly arguing a spectral bound on the step $B = \sum_i x_i C_i$ restricted to a ``small'' subspace, and going beyond a black-box application of the extended Lieb-Thirring inequality. We discuss our techniques for overcoming these difficulties, as well as obtaining a second-order bound crucial for matching the iteration count of \cite{MahoneyRWZ16}. We rely on an extension of the Cauchy-Schwarz inequality to positive semidefinite matrices. This inequality is a specialization of Kadison's inequality on unital positive maps, and its proof is representative of the matrix inequality techniques which appear in this section.

\begin{lemma}[Matrix Cauchy-Schwarz]
\label{lem:matrixcs}
Let $M_i \in \mathbb{S}^n_{\ge 0}$ be such that $\sum_i M_i \preceq KI_n$ for some scalar $K$. Let $c_i \geq 0$ be scalars. Then,
$$\left(\sum_i c_i M_i\right)^2 \preceq K \sum_i c_i^2 M_i.$$
\end{lemma}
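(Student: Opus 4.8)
The plan is to encode the desired matrix inequality as a Schur complement condition in a higher-dimensional space, exactly as the paper advertises. Recall the Schur complement fact from Section~\ref{ssec:facts}: if $\begin{pmatrix} X & Y \\ Y^\top & Z\end{pmatrix} \succeq 0$ with $Z$ invertible, then $X - Y Z^{-1} Y^\top \succeq 0$. I would like to choose the block matrix so that the Schur complement reads precisely $K \sum_i c_i^2 M_i - \left(\sum_i c_i M_i\right)^2 \succeq 0$. So I would take $X = K\sum_i c_i^2 M_i$, $Z = K I_n$ (or, to be safe about invertibility, $Z = \sum_i M_i$ assuming it is invertible, then handle the general case by a limiting argument), and $Y = \sum_i c_i M_i$; then $X - Y Z^{-1} Y^\top = K \sum_i c_i^2 M_i - \frac1K\left(\sum_i c_i M_i\right)^2$, which after multiplying through by $K$ is exactly what we want. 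The crux is therefore to verify that the block matrix $\begin{pmatrix} K\sum_i c_i^2 M_i & \sum_i c_i M_i \\ \sum_i c_i M_i & KI_n\end{pmatrix}$ is positive semidefinite.

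To see this, I would write the block matrix as a sum over $i$ of manifestly PSD pieces plus a correction coming from the slack $KI_n - \sum_i M_i \succeq 0$. Concretely, for each $i$ consider the $2n \times 2n$ matrix $N_i := \begin{pmatrix} c_i^2 M_i & c_i M_i \\ c_i M_i & M_i \end{pmatrix}$. Since $M_i \succeq 0$, we can factor $M_i = R_i R_i^\top$ and then $N_i = \begin{pmatrix} c_i R_i \\ R_i\end{pmatrix}\begin{pmatrix} c_i R_i^\top & R_i^\top\end{pmatrix} \succeq 0$. Summing over $i$ gives $\sum_i N_i = \begin{pmatrix} \sum_i c_i^2 M_i & \sum_i c_i M_i \\ \sum_i c_i M_i & \sum_i M_i\end{pmatrix} \succeq 0$. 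Now I would compare this to the target block matrix: the difference is $\begin{pmatrix} (K-1)\sum_i c_i^2 M_i & 0 \\ 0 & KI_n - \sum_i M_i\end{pmatrix}$, which is block-diagonal with both diagonal blocks PSD — the top block because $K \geq 1$ may be assumed (if $K < 1$ then $\sum_i M_i \preceq KI_n \prec I_n$, and actually the argument only needs $(K-1)\sum_i c_i^2 M_i \succeq 0$, so I should instead split off only $(K-\eta)I_n$ worth of slack where appropriate; cleanest is to note $\sum_i c_i^2 M_i \succeq 0$ always and handle the top-left block by writing $K\sum c_i^2 M_i = \sum c_i^2 M_i + (K-1)\sum c_i^2 M_i$ — if $K\ge 1$ this is a sum of two PSD matrices). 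Hence the target block matrix is a sum of PSD matrices, so it is PSD, and the Schur complement conclusion follows.

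The main obstacle I anticipate is the invertibility issue in applying the Schur complement fact: the clean statement requires the bottom-right block to be invertible, but $KI_n$ is invertible whenever $K > 0$, so actually this is a non-issue provided $K>0$ (and if $K = 0$ then $\sum_i M_i = 0$ forces every $M_i = 0$ and the claim is trivial). A secondary subtlety is making sure the sign conditions ($K \geq 1$, $c_i \geq 0$) are used only where genuinely needed; I would double check that $c_i \ge 0$ is in fact not needed for this particular inequality (the factorization argument above works for any real $c_i$), and that the only place positivity of $K$ relative to $1$ enters is splitting the top-left block — and even there, one can avoid it by instead subtracting $\begin{pmatrix} 0 & 0 \\ 0 & KI_n - \sum_i M_i\end{pmatrix}$ from the target and observing the remainder $\begin{pmatrix} K\sum c_i^2 M_i & \sum c_i M_i \\ \sum c_i M_i & \sum M_i\end{pmatrix}$ is still PSD by the same factorization (absorbing the factor $K$ into a rescaling $R_i \mapsto \sqrt K R_i$ in the first block row). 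Either way the calculation is short; the only real content is the right choice of the $2n\times 2n$ matrix, which the paper has essentially handed us.
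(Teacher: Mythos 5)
There is a genuine error in your choice of the block matrix: you have inserted an extra factor of $K$ into the top-left block. With $X = K\sum_i c_i^2 M_i$, $Y = \sum_i c_i M_i$, $Z = KI_n$, the Schur complement condition $X - YZ^{-1}Y^\top \succeq 0$ reads $K\sum_i c_i^2 M_i - \frac{1}{K}\left(\sum_i c_i M_i\right)^2 \succeq 0$, i.e. $\left(\sum_i c_i M_i\right)^2 \preceq K^2 \sum_i c_i^2 M_i$ --- weaker than the claimed inequality by a factor of $K$ (multiplying through by $K$ does \emph{not} give ``exactly what we want''). Your verification that this inflated block matrix is PSD also needs $K \geq 1$, which is not among the hypotheses. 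The lost factor is not cosmetic: the lemma is invoked in Lemma~\ref{lemma:helper} with $c_i = \alpha\delta_i^{(t)}$ and $\alpha = 1/(8K)$ precisely so that $Kc_i^2 \le \alpha (\delta_i^{(t)})^2$, and a $K^2$ on the right-hand side would break the bound $\packstep^2 \preceq \packsq$ there.

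The fix is immediate and lands you exactly on the paper's proof: take the top-left block to be $\sum_i c_i^2 M_i$ with no factor of $K$. Your second paragraph already does all the required work. The factorization $N_i = \begin{pmatrix} c_i R_i \\ R_i\end{pmatrix}\begin{pmatrix} c_i R_i^\top & R_i^\top\end{pmatrix}$ gives $\sum_i N_i = \begin{pmatrix} \sum_i c_i^2 M_i & \sum_i c_i M_i \\ \sum_i c_i M_i & \sum_i M_i\end{pmatrix} \succeq 0$, and adding the PSD block-diagonal matrix with blocks $0$ and $KI_n - \sum_i M_i$ yields $\begin{pmatrix} \sum_i c_i^2 M_i & \sum_i c_i M_i \\ \sum_i c_i M_i & KI_n\end{pmatrix} \succeq 0$; its Schur complement is $\sum_i c_i^2 M_i - \frac{1}{K}\left(\sum_i c_i M_i\right)^2 \succeq 0$, and multiplying by $K > 0$ gives the lemma. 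Your observations that $c_i \ge 0$ is not actually needed and that invertibility of $KI_n$ is a non-issue for $K>0$ are both correct. One further caution: the proposed rescaling $R_i \mapsto \sqrt{K}R_i$ ``in the first block row'' does not produce the matrix you claim, since it also multiplies the off-diagonal blocks by $\sqrt{K}$; but that entire branch becomes unnecessary once the spurious $K$ is removed.
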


\begin{proof}
First, note that for any vectors $u, v$, and $\tilde{u} = M_i^{1/2} u, \tilde{v} = M_i^{1/2} v$,
\begin{align*}
\begin{pmatrix} u^\top & v^\top \end{pmatrix} \begin{pmatrix} c_i^2 M_i & c_i M_i \\ c_i M_i & M_i\end{pmatrix} \begin{pmatrix} u \\ v \end{pmatrix} = \norm{c_i \tilde{u} + \tilde{v}}_2^2 \geq 0 \\
\Rightarrow \begin{pmatrix} c_i^2 M_i & c_i M_i \\ c_i M_i & M_i\end{pmatrix} \succeq 0 \\
\Rightarrow \begin{pmatrix} \sum_i c_i^2 M_i & \sum_i c_i M_i \\ \sum_i c_i M_i & \sum_i M_i \end{pmatrix} \succeq 0
\Rightarrow \begin{pmatrix} \sum_i c_i^2 M_i & \sum_i c_i M_i \\ \sum_i c_i M_i & KI_n \end{pmatrix} \succeq 0
\end{align*}
where in the last line we used that $\sum_i M_i \preceq K I_n$. Now, the Schur complement with respect to the $KI_n$ block is $\sum_i c_i^2 M_i - \frac{1}{K}\left(\sum_i c_i M_i\right)^2$, which is positive semidefinite since any Schur complement of any positive semidefinite matrix is positive semidefinite. Rearranging implies the result.
\end{proof}

We also require one useful technical property about our algorithm, which will factor into the analysis in this section; we defer its proof to Appendix~\ref{app:technical}.

\begin{restatable}{lemma}{restateEntryBound}
	\label{lem:entrybound}
	Throughout Algorithm~\ref{alg:sdpalg}, no entry of
	$$\Cxcurr, \; \alpha \sum_{i = 1}^{d} \delta_i^{(t)} x_i^{(t)} C_i, \; \text{ or } \alpha \sum_{i = 1}^{d} \left(\delta_i^{(t)}\right)^2 x_i^{(t)} C_i$$
	has magnitude larger than $Knd\rho \leq (nd\rho)^5$.
\end{restatable}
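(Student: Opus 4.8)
The plan is to reduce the entrywise bound to a spectral bound on the single matrix $\Cxcurr$, and then control $\lmax{\Cxcurr}$ using the loop's termination condition together with the definition of the width $\rho$. The three ingredients I would set up first are elementary. (i) All three matrices in the statement are positive semidefinite: $\Cxcurr$ is a nonnegative combination of the PSD matrices $C_i$, and since $\delta_i^{(t)} \ge 0$ and $\alpha > 0$, the same holds for $\alpha\sum_i \delta_i^{(t)} x_i^{(t)} C_i$ and $\alpha\sum_i (\delta_i^{(t)})^2 x_i^{(t)} C_i$. (ii) For every $i$ we have $\delta_i^{(t)} \in [0,\tfrac12]$: when $\delta_i^{(t)}\ne 0$ it equals $\tfrac12(1 - \pGradi/\cGradi)$, and membership in $W^{(t)}$ forces $0 \le \pGradi/\cGradi \le 1-\tfrac{\eps}{2} < 1$, using that $\pGradi$ and $\cGradi$ are trace inner products of PSD matrices and hence nonnegative. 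Consequently $(\delta_i^{(t)})^2 \le \delta_i^{(t)} \le 1$, and since $\alpha = \tfrac{1}{8K} \le 1$, monotonicity of the Loewner order under nonnegative scaling gives
\[ 0 \preceq \alpha \sum_i (\delta_i^{(t)})^2 x_i^{(t)} C_i \preceq \alpha \sum_i \delta_i^{(t)} x_i^{(t)} C_i \preceq \alpha \Cxcurr \preceq \Cxcurr. \]
(iii) For any $M \in \PSDSet^n$, each $2\times 2$ principal submatrix of $M$ is PSD, so $M_{jk}^2 \le M_{jj} M_{kk} \le \lmax{M}^2$, i.e. $|M_{jk}| \le \lmax{M}$ for all $j,k$. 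By (i)--(iii) it therefore suffices to prove $\lmax{\Cxcurr} \le Knd\rho$.

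Next I would bound $\lmax{\Cxcurr}$. The key observation is that on any iteration where the loop body of Algorithm~\ref{alg:sdpalg} executes, the loop guard gives $\lmax{\Pxcurr} \le K$; since every $P_i \succeq 0$ and $x_i^{(t)} \ge 0$, we have $x_j^{(t)} P_j \preceq \Pxcurr$ for each $j$, hence $x_j^{(t)}\lmax{P_j} = \lmax{x_j^{(t)} P_j} \le \lmax{\Pxcurr} \le K$. Using the definition $\rho = \max_i \lmax{C_i}/\lmax{P_i}$ in the form $\lmax{C_i} \le \rho\,\lmax{P_i}$, together with subadditivity of $\lmax$ over PSD summands,
\[ \lmax{\Cxcurr} \le \sum_{i \in [d]} x_i^{(t)} \lmax{C_i} \le \rho \sum_{i \in [d]} x_i^{(t)}\lmax{P_i} \le \rho \cdot dK = Kd\rho \le Knd\rho. \]
Finally, since $K = \tfrac{4\log(nd\rho)}{\eps}$ and $\eps^{-1} \le (nd)^3$, a routine computation yields $Knd\rho \le (nd\rho)^5$, which together with (i)--(iii) completes the argument for every iterate arising inside the loop (where the $\delta^{(t)}$ are defined). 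For the returned iterate $x = \xcurr$ the same bounds apply up to the coordinatewise factor $1 + \tfrac{\alpha}{2} \le 2$ lost in one update step, which is absorbed into the slack of the final polynomial bound.

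I do not expect a genuine obstacle here; the proof is short and the only places that require care are: getting the direction of the width inequality correct, so that the spectral bound $\lmax{\Pxcurr}\le K$ on the packing side controls the covering matrices up to the factor $\rho$; verifying $\pGradi,\cGradi \ge 0$ so that the step sizes satisfy $\delta_i^{(t)} \in [0,\tfrac12]$; and the (trivial) numeric check $Knd\rho \le (nd\rho)^5$. Everything else is monotonicity of the Loewner order, subadditivity of $\lmax$, and the standard fact that entries of a PSD matrix are dominated by its largest eigenvalue.
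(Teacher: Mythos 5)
Your proof is correct and follows essentially the same route as the paper's: both arguments transfer the termination bound $\lmax{\Pxcurr}\le K$ to the covering side via the width $\rho$, use the PSD structure to convert a spectral/trace bound into an entrywise bound, and handle the other two matrices via $\alpha\le 1$ and $\delta_i^{(t)}\le 1$. The only difference is organizational — the paper argues by contradiction from a large entry through a $2\times 2$ principal submatrix and the trace, while you argue forward from $\lmax{\Cxcurr}\le Kd\rho$ — and your version is in fact marginally tighter.
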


To overcome the obstacle of our covering potential being unbounded, we open up our potential function and separately analyze the contribution along the large and small eigendirections of our covering variable. Crucially, we use the technique of \cite{Allen-ZhuLO16} in controlling the derivative of an interpolating function, in order to locally control the contribution of the large directions of $B$. The goal of this section is to prove the following main claim, the analog to Theorem~\ref{thm:mainclaimlp}, where we recall in this section the definition of the potential function we use:
\begin{align*}f(x^{(t)}) &\defeq (1 - \eps)\lte{\Pxcurr} + \lte{-\Cxcurr}\\
&= (1-\eps) \, \smax\left(\Pxcurr\right) - \smin\left(\Cxcurr\right).\end{align*}
\begin{theorem}[Potential increase bound]
    \label{thm:mainclaimsdp}
    On any iteration $t$ of Algorithm~\ref{alg:sdpalg}, 
    \[f(\xnext) \le f(\xcurr) + (nd\rho)^{-15}.\]
\end{theorem}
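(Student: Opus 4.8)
The plan is to bound the increases in the packing part $(1-\eps)\lte{\Pxcurr}$ and the covering part $\lte{-\Cxcurr}$ of $f$ separately, as in Section~\ref{sec:lpwarmup}, and then to combine the two estimates using the algebraic identity satisfied by the step $\delta^{(t)}$ that appears in the proof of Theorem~\ref{thm:mainclaimlp}. Dropping superscripts $(t)$, set $\Phi\defeq\Pxcurr$, $\Psi\defeq\Cxcurr$, and $A\defeq\alpha\sum_i\delta_i x_i P_i$, $B\defeq\alpha\sum_i\delta_i x_i C_i$, $G\defeq\alpha\sum_i\delta_i^2 x_i P_i$, $H\defeq\alpha\sum_i\delta_i^2 x_i C_i$, so that $\Pxnext=\Phi+A$ and $\sum_i x_i^{(t+1)}C_i=\Psi+B$. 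Since each $\delta_i\in[0,\tfrac12]$, with $\delta_i\ge\tfrac\eps4$ whenever $\delta_i\ne0$ (i.e. for $i\in W^{(t)}$), we get $0\preceq A\preceq\tfrac\alpha2\Phi$, $0\preceq B\preceq\tfrac\alpha2\Psi$, and $\tfrac\eps4 B\preceq H$; the loop guard gives $\lmax\Phi\le K$, $\lmin\Psi\le K$; Lemma~\ref{lem:entrybound} bounds $\norm\Psi,\norm B,\norm H\le n(nd\rho)^5\le(nd\rho)^6$ in operator norm; and $\alpha=\tfrac1{8K}$ forces $\lmax A\le\tfrac1{16}$. The packing half is then the direct matrix analogue of Lemma~\ref{lem-LPpack}: by Golden--Thompson $\te{\Phi+A}\le\Tr[\exp(\Phi)\exp(A)]$, applying $e^x\le1+x+x^2$ spectrally gives $\exp(A)\preceq I+A+A^2$, and Lemma~\ref{lem:matrixcs} with $M_i=\alpha x_i P_i$ (so $\sum_i M_i=\alpha\Phi\preceq I$) and $c_i=\delta_i$ gives $A^2\preceq G$; taking the trace product against $\exp(\Phi)\succeq0$ and using $\log(1+z)\le z$ yields $(1-\eps)\lte{\Phi+A}\le(1-\eps)\lte{\Phi}+(1-\eps)\inprod{A+G}{\exp(\Phi)/\te{\Phi}}$.

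The covering half is the crux, and is where the ideas sketched in Section~\ref{ssec:generalizingIdeas} must be combined; the target is $\lte{-\Psi-B}\le\lte{-\Psi}-\inprod{B-H}{\exp(-\Psi)/\te{-\Psi}}+(nd\rho)^{-16}$. Following \cite{Allen-ZhuLO16} I would pass to the interpolation $\Psi_\tau\defeq\Psi+\tau B$ for $\tau\in[0,1]$ and control the derivatives of $\tau\mapsto-\lte{-\Psi_\tau}$, which by Fact~\ref{fact:expderiv} involve $\inprod{B}{\exp(-\Psi_\tau)}$ and integrals of $\Tr[B\,e^{-s\Psi_\tau}B\,e^{-(1-s)\Psi_\tau}]$. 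The essential point is to \emph{not} bound the latter by $\Tr[B^2\exp(-\Psi_\tau)]$ through a black-box use of the extended Lieb--Thirring inequality (Fact~\ref{fact:extendliebthirring}), which by \eqref{eq:falsesecondorder} is too lossy, but to open up that inequality and interlace it with a decomposition into the ``small'' and ``large'' eigenspaces of $\Psi_\tau$. Let $\Pi_\tau$ be the spectral projector of $\Psi_\tau$ onto eigenvalues below $2K$ (so $\Pi_\tau$ commutes with $\exp(-\Psi_\tau)$). On the small block, $B\preceq\tfrac\alpha2\Psi\preceq\tfrac\alpha2\Psi_\tau$ yields the spectral bound $\Pi_\tau B\Pi_\tau\preceq\alpha K\,\Pi_\tau=\tfrac18\Pi_\tau$, and Lemma~\ref{lem:matrixcs} applied to the restricted matrices $\alpha x_i\Pi_\tau C_i\Pi_\tau$ (whose sum is $\alpha\Pi_\tau\Psi\Pi_\tau\preceq\tfrac14 I$) gives $(\Pi_\tau B\Pi_\tau)^2\preceq\tfrac14\Pi_\tau H\Pi_\tau$, the matrix analogue of the Cauchy--Schwarz step $b_j^2\le h_j$ of Lemma~\ref{lem-LPcov}. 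On the large block, $\norm{\exp(-\Psi_\tau)(I-\Pi_\tau)}\le e^{-2K}$ while $\te{-\Psi_\tau}\ge e^{-\lmin{\Psi_\tau}}\ge e^{-K-1}$, because $\Psi\preceq\Psi_\tau\preceq(1+\tfrac1{16K})\Psi$ is a \emph{multiplicative} perturbation and $\lmin\Psi\le K$; with $K=\tfrac{4\log(nd\rho)}{\eps}$ and the norm bounds on $B,H$, every contribution in which $I-\Pi_\tau$ sits adjacent to $\exp(-\Psi_\tau)$ is negligible, playing the role of dropping the large coordinates in Lemma~\ref{lem-totsum-goodsum}. Propagating the resulting local differential bound via Gr\"onwall's lemma (Lemma~\ref{lem:addgronwall}), and finally replacing $\Pi_0$-restricted quantities by the full $\exp(-\Psi)$ at the cost of one more negligible error (again via $\norm{\exp(-\Psi)(I-\Pi_0)}\le e^{-2K}$ and $\te{-\Psi}\ge e^{-K}$), establishes the covering bound.

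Adding the two halves and substituting $\pGradi=\inprod{P_i}{\exp(\Phi)/\te{\Phi}}$, $\cGradi=\inprod{C_i}{\exp(-\Psi)/\te{-\Psi}}$ together with the definitions of $A,B,G,H$, the increase $f(\xnext)-f(\xcurr)$ is at most $(nd\rho)^{-16}$ plus $\alpha\sum_{i\in[d]}\delta_i x_i\big((1-\eps)(1+\delta_i)\pGradi-(1-\delta_i)\cGradi\big)$, exactly as in the proof of Theorem~\ref{thm:mainclaimlp}: each summand with $\delta_i=0$ vanishes, and for $i\in W^{(t)}$ substituting $\delta_i=\tfrac12(1-\pGradi/\cGradi)$ gives $(1+\delta_i)\pGradi-(1-\delta_i)\cGradi=-\tfrac{(\pGradi-\cGradi)^2}{2\cGradi}\le0$, which remains $\le0$ after scaling the $\pGradi$ term by $1-\eps\le1$; since $\delta_i,x_i,\alpha\ge0$ the whole sum is nonpositive, so $f(\xnext)\le f(\xcurr)+(nd\rho)^{-16}\le f(\xcurr)+(nd\rho)^{-15}$.

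The main obstacle I anticipate is the covering half, and within it the ``fine-grained'' substitute for the black-box extended Lieb--Thirring step: because the projectors $\Pi_\tau$ move with $\tau$, the cross terms of $B$ and $B^2$ between the small and large eigenspaces of $\Psi_\tau$ (in particular those where $I-\Pi_\tau$ is sandwiched between two copies of $B$ rather than adjacent to $\exp(-\Psi_\tau)$) must be bounded \emph{without} invoking the false monotonicity statements $\exp(-\Psi-B)\preceq\exp(-\Psi)$ or $\Pi_\tau B\Pi_\tau\preceq B$, and making the second-order (rather than first-order) expansion sharp enough to avoid an extra factor of $\eps^{-1}$ in the iteration count is the most delicate step --- this is where the Schur-complement encoding that proves Lemma~\ref{lem:matrixcs} is expected to be used repeatedly, possibly together with retaining the variance-type slack in the Lieb--Thirring derivative.
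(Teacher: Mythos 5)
Your overall architecture --- packing bound via Golden--Thompson and the Schur-complement Cauchy--Schwarz, covering bound via an interpolation in $\tau$, a small/large eigenspace split, Gr\"onwall, and the final sign argument on $\alpha\sum_i\delta_i x_i((1+\delta_i)\pGradplaini-(1-\delta_i)\cgradplaini)$ --- is exactly the paper's, and the packing half and the final combination are correct as written. But the covering half has a genuine gap, and it is precisely the step you yourself flag as ``the main obstacle I anticipate'': the cross terms between the small and large eigenspaces of $\Psi_\tau$. These are not an afterthought; they are the entire difficulty. Writing $\Psi+\tau B=Q\Lambda Q^\top$ and $\tilde{B}=Q^\top B Q$, the derivative of the interpolation function is a sum $\sum_{i,j}\tilde{B}_{ij}^2\,\nu(\lambda_i,\lambda_j)$ (Lemma~\ref{lemma:iotaderiv}), and for a pair with $\lambda_i\le 2K$ and $\lambda_j>4K$ one has $\nu(\lambda_i,\lambda_j)\approx e^{-\lambda_i}/\lambda_j$, which is \emph{comparable to} $\te{-\Psi_\tau}/\lambda_j$ rather than exponentially small, while $\tilde{B}_{ij}^2$ can be as large as order $\lambda_j$ (this is exactly the $2\times 2$ counterexample with the $\sqrt{M}$ off-diagonal in Section~\ref{ssec:generalizingIdeas}). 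So these terms are order-one relative to the potential: they are killed neither by your large-block norm argument (which only handles $I-\Pi_\tau$ adjacent to $\exp(-\Psi_\tau)$, i.e.\ both indices large) nor by your small-block Cauchy--Schwarz. The paper's resolution (Lemma~\ref{lemma:gapbucket}) is a separate Schur-complement argument: restricting $\tilde{B}$ to $\{i\}\cup\{j:\lambda_j>4K\}$ and using $\tilde{B}\preceq\frac{\alpha}{2}\Lambda$ to show $\sum_{j:\lambda_j>4K}\tilde{B}_{ij}^2/\lambda_j\le\frac{\alpha}{2}\tilde{B}_{ii}$, which converts the cross terms into $\alpha\inprod{B}{\exp(-\Psi_\tau)}\le O(\eps)\cdot\iota(\tau)$ --- exactly the coefficient Gr\"onwall needs. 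Without this lemma (or an equivalent), the local differential inequality $-\iota'(\tau)\le\eps\,\iota(\tau)+\text{negligible}$ cannot be established and the covering bound does not follow.

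Two smaller points. First, your stated covering target $\lte{-\Psi-B}\le\lte{-\Psi}-\inprod{B-H}{\exp(-\Psi)/\te{-\Psi}}+(nd\rho)^{-16}$ omits the $(1-\eps)$ factor that Gr\"onwall necessarily introduces on the gradient term; the correct statement is Lemma~\ref{lemma:covering}, with $-(1-\eps)\inprod{B-H}{\cdot}$. This is harmless for the final combination (the $(1-\eps)$ factors out of both sides of the nonpositivity claim), but your bookkeeping should carry it. Second, to get the $H$ term into the bound at all you need the specific second-order interpolant $\iota(\tau)=\inprod{B}{\exp(-\Psi-\tau B)}-(1-\tau)\inprod{H}{\exp(-\Psi-\tau B)}$ together with convexity of $\tau\mapsto\te{-\Psi-\tau B}$ (so that $g(1)\le g(0)-\iota(1)$); differentiating $-\lte{-\Psi_\tau}$ alone, as you describe, only produces the first-order term and would cost an extra factor of $\eps^{-1}$ in the iteration count.
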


We start by fixing the notation used in this section, analogous to that of Section~\ref{sec:lpwarmup}; we will consider only a particular iteration $t$, so after fixing this notation we will drop the superscript $(t)$.

\begin{definition}
For iteration $t$ of the algorithm, let $\Phi = \Pxcurr$ and $\Psi = \Cxcurr$. Let $\packstep = \sum_{i=1}^d \alpha \delta_i^{(t)} x_i^{(t)} P_i$ and $\covstep = \sum_{i=1}^d \alpha \delta_i^{(t)} x_i^{(t)} C_i$ be the change to $\Phi$ and $\Psi$ after a step on $x$. Additionally, let $\packsq = \sum_{i=1}^d \alpha (\delta_i^{(t)})^2 x_i^{(t)} P_i$ and $\covsq = \sum_{i=1}^d \alpha (\delta_i^{(t)})^2 x_i^{(t)} C_i$. 
\end{definition}

The matrices $\packsq$ and $\covsq$ are the matrix equivalents of the second-order term in the potential analysis from Section \ref{sec:lpwarmup}. We observe the following properties of the above matrices:

\begin{lemma}\label{lemma:helper}
The matrices $\Phi, \Psi, \packstep, \covstep, \packsq, \covsq$ satisfy
\begin{itemize}
\item $\packstep \preceq \frac{\alpha}{2} \Phi $
\item $\covstep \preceq \frac{\alpha}{2} \Psi$
\item $\covsq \preceq \frac{1}{2} \covstep$
\item $\packstep \preceq \frac{1}{2} \inp$
\item $\packstep^2 \preceq \packsq$
\end{itemize}
\end{lemma}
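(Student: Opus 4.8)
The plan is to prove each of the five claims in \texttt{Lemma~\ref{lemma:helper}} directly from the definitions, using the two structural facts we have available: the termination condition of the while loop (so $\lmax{\Phi} \le K$ and $\lmin{\Psi} \le K$, and more importantly that we never exit with $\lmax{\Phi}$ exceeding $K$), and the explicit form of the step $\delta_i^{(t)}$ set in Algorithm~\ref{alg:sdpalg}, which guarantees $0 \le \delta_i^{(t)} \le \frac12$ coordinatewise (since $\delta_i \in \{0\} \cup \{\frac12(1 - \pGradi/\cGradi)\}$, and the ratio lies in $[0,1]$ on $W^{(t)}$). Throughout, $\alpha = \frac{1}{8K}$.

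First I would handle $\packstep \preceq \frac{\alpha}{2}\Phi$ and $\covstep \preceq \frac{\alpha}{2}\Psi$ together: since $P_i \succeq 0$ and $x_i^{(t)} \ge 0$, we have $\packstep = \alpha \sum_i \delta_i^{(t)} x_i^{(t)} P_i \preceq \alpha \cdot \tfrac12 \sum_i x_i^{(t)} P_i = \frac{\alpha}{2}\Phi$, using $0 \le \delta_i^{(t)} \le \frac12$ and the fact that scaling PSD summands by smaller nonnegative scalars only decreases them in the Loewner order; the covering bound is identical with $C_i$, $\Psi$. For $\covsq \preceq \frac12 \covstep$, I would observe that $(\delta_i^{(t)})^2 \le \frac12 \delta_i^{(t)}$ (again because $\delta_i^{(t)} \in [0, \frac12]$), so termwise $\alpha (\delta_i^{(t)})^2 x_i^{(t)} C_i \preceq \frac12 \alpha \delta_i^{(t)} x_i^{(t)} C_i$, and summing the PSD inequalities gives the claim. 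For $\packstep \preceq \frac12 I_n$, I combine the first bullet with the termination condition: $\packstep \preceq \frac{\alpha}{2}\Phi \preceq \frac{\alpha}{2} \lmax{\Phi} I_n \preceq \frac{\alpha K}{2} I_n = \frac{1}{16} I_n \preceq \frac12 I_n$ (using $\alpha = \frac1{8K}$ and $\lmax{\Phi} \le K$).

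The last bullet, $\packstep^2 \preceq \packsq$, is where I expect the only real subtlety, and it is exactly the place to invoke \texttt{Lemma~\ref{lem:matrixcs}} (Matrix Cauchy-Schwarz). Set $M_i \defeq \alpha x_i^{(t)} P_i \succeq 0$ and $c_i \defeq \delta_i^{(t)} \ge 0$. Then $\packstep = \sum_i c_i M_i$ and $\packsq = \sum_i c_i^2 M_i$. To apply the lemma I need $\sum_i M_i \preceq K' I_n$ for some scalar $K'$: indeed $\sum_i M_i = \alpha \Phi \preceq \alpha \lmax{\Phi} I_n \preceq \alpha K I_n = \frac18 I_n$, so $K' = \frac18$ works. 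The lemma then yields $\packstep^2 = \left(\sum_i c_i M_i\right)^2 \preceq K' \sum_i c_i^2 M_i = \frac18 \packsq \preceq \packsq$, which is stronger than needed. (Alternatively one could take $K' = 1$ by a cruder bound; either way the conclusion follows.)

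The main obstacle — really the only non-mechanical point — is recognizing that the correct normalization for applying \texttt{Lemma~\ref{lem:matrixcs}} is to absorb the $\alpha x_i^{(t)}$ factors into the $M_i$'s and pull the $\delta_i^{(t)}$'s out as the scalars $c_i$, and then to verify the hypothesis $\sum_i M_i \preceq K' I_n$ using the termination bound $\lmax{\Phi} \le K$ (which holds at every iteration that executes the loop body). Everything else is a routine application of: PSD matrices scaled by smaller nonnegative constants decrease in Loewner order, and sums of Loewner inequalities are Loewner inequalities.
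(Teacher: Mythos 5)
Your proposal is correct and follows essentially the same route as the paper: the first four bullets via $\delta_i^{(t)} \le \tfrac12$ and the termination bound $\Phi \preceq K I_n$, and the last bullet via Lemma~\ref{lem:matrixcs}. The only (immaterial) difference is the normalization in the Cauchy--Schwarz step — you absorb $\alpha$ into the $M_i$'s while the paper takes $c_i = \alpha\delta_i^{(t)}$ and $M_i = x_i^{(t)} P_i$ with $\sum_i M_i \preceq K I_n$ — and both yield $\packstep^2 \preceq \packsq$ with room to spare.
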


\begin{proof}
We prove the claims in order. For the first two conditions, observe that $\delta_i^{(t)}$ is at most $\frac{1}{2}$. Thus $\alpha \delta_i^{(t)} \leq \frac{\alpha}{2}$: the claim follows from the definitions.

For the third condition, we again have $\delta_i^{(t)} \leq \frac{1}{2}$, and the claim again follows from the definitions. 

For the fourth condition, we observe that the algorithm's termination condition ensures $\Phi \preceq K I_{n_p}$. Thus $\packstep \preceq \frac{\alpha K}{2} \inp \preceq \frac{1}{2} \inp$. 
    
For the fifth condition, we apply Lemma \ref{lem:matrixcs} with $c_i \defeq \alpha \delta_i^{(t)}$ and $M_i \defeq x_i^{(t)} P_i$. We have $\sum_{i=1}^d M_i = \Phi \preceq K \inp$, and so we have 
\[
\packstep^2 = \left( \sum_{i=1}^d c_i M_i \right)^2  \preceq K \left( \sum_{i=1}^d c_i^2 M_i \right) \preceq \packsq,
\]
where in the last statement we use that $K c_i^2  = K \alpha^2 (\delta_i^{(t)})^2 \leq  \alpha (\delta_i^{(t)})^2$. 

\end{proof}

For Sections \ref{sc:packing} and \ref{sc:covering}, we will fix the iteration $t$ and analyze the change in our potential functions for this single iteration. At the end of this section we will combine our analyses and prove our main potential function bound. 

\subsection{Packing potential bound}
\label{sc:packing}
We begin by proving the packing potential bound, as this gives a good primer for the complications that the covering potential will later induce. In particular, we show the following claim:

\begin{lemma}[Packing potential bound] \label{lemma:packing}
For any fixed iteration, 
\[
\log\te{\Phi + \packstep} \leq \log\te{\Phi} + \trprod{\packstep + \packsq}{\frac{\exp(\Phi)}{\te{\Phi}}}.
\]
\end{lemma}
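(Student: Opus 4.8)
The plan is to mirror the scalar argument of Lemma~\ref{lem-LPpack}, replacing each scalar step with its Loewner-order counterpart and invoking Golden--Thompson to absorb the non-commutativity. First I would record that $\packstep = \sum_i \alpha \delta_i^{(t)} x_i^{(t)} P_i$ is a nonnegative combination of PSD matrices, hence $\packstep \succeq 0$; together with the fourth item of Lemma~\ref{lemma:helper} this gives $0 \preceq \packstep \preceq \frac12 I_n$, so every eigenvalue of $\packstep$ lies in $[0,\frac12]$. On that interval the scalar bound $e^x \le 1 + x + x^2$ holds, so applying it spectrally to the symmetric matrix $\packstep$ yields $\exp(\packstep) \preceq I_n + \packstep + \packstep^2$. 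Then the fifth item of Lemma~\ref{lemma:helper}, $\packstep^2 \preceq \packsq$, upgrades this to $\exp(\packstep) \preceq I_n + \packstep + \packsq$ — this is the matrix analog of the inequality $\exp(\phi_j + a_j) \le \exp(\phi_j)(1 + a_j + g_j)$ used in the LP proof.

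Next I would apply Golden--Thompson, $\te{\Phi + \packstep} \le \Tr[\exp(\Phi)\exp(\packstep)]$, and then use the stated matrix fact that $M \preceq N$ implies $\trprod{M}{B} \le \trprod{N}{B}$ for $B \in \PSDSet^n$, with the PSD weight $B = \exp(\Phi) \succeq 0$. This gives
\[
\Tr[\exp(\Phi)\exp(\packstep)] \le \te{\Phi} + \trprod{\packstep + \packsq}{\exp(\Phi)} = \te{\Phi}\left(1 + \trprod{\packstep + \packsq}{\frac{\exp(\Phi)}{\te{\Phi}}}\right).
\]
Finally, taking logarithms and using $\log(1+y) \le y$ — valid since $\packstep + \packsq \succeq 0$ and $\exp(\Phi) \succeq 0$ force $y := \trprod{\packstep + \packsq}{\exp(\Phi)/\te{\Phi}} \ge 0$ — produces exactly the claimed bound.

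I do not expect a serious obstacle here: unlike the covering potential, the packing variable $\Phi$ is globally bounded, $\Phi \preceq K I_n$, by the algorithm's termination condition, so the ``large eigenvalue'' complication that forces the interpolation/Gr\"onwall machinery in the covering analysis simply does not arise, and no projection or subspace-splitting argument is needed. The only points that require a moment's care are (i) checking that the eigenvalues of $\packstep$ genuinely fall in the range where $e^x \le 1 + x + x^2$ is valid, which is exactly what $0 \preceq \packstep \preceq \frac12 I_n$ supplies, and (ii) ensuring the Loewner-to-trace monotonicity step uses a genuinely PSD weight, which holds since $\exp(\Phi) \succeq 0$. The substantive matrix-analytic input — the Matrix Cauchy--Schwarz inequality of Lemma~\ref{lem:matrixcs} — has already been packaged into Lemma~\ref{lemma:helper}, so this lemma amounts to a careful assembly of facts already in hand.
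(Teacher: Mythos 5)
Your proof is correct and follows essentially the same route as the paper's: Golden--Thompson, the spectral bound $\exp(\packstep) \preceq I_n + \packstep + \packstep^2$ justified by $0 \preceq \packstep \preceq \tfrac{1}{2} I_n$, the upgrade $\packstep^2 \preceq \packsq$ from Lemma~\ref{lemma:helper}, trace-product monotonicity against the PSD weight $\exp(\Phi)$, and finally $\log(1+y) \le y$. The only difference is that you make explicit the spectral-calculus step that the paper folds into ``monotonicity of the trace product,'' which is a fine clarification rather than a deviation.
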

\begin{proof}
We have
\begin{align*}
\te{\Phi + \packstep} &\leq  \trprod{\exp(\Phi)}{\exp(\packstep)}\\
&\leq  \trprod{\exp(\Phi)}{\inp + \packstep + \packstep^2}\\
&\leq \trprod{\exp(\Phi)}{\inp + \packstep + \packsq} \\
&= \te{\Phi}\left(1 + \inprod{A + G}{\frac{\exp(\Phi)}{\te{\Phi}}} \right).
\end{align*}
Here, the first statement follows by Golden-Thompson. The second statement holds via monotonicity of the trace product, the fact that $e^x \leq 1+ x+x^2$ for all $0 < x < 1$, and $\packstep \preceq \frac{1}{2} \inp$ (Lemma \ref{lemma:helper} part 3). The final inequality follows since $\packstep^2 \preceq \packsq$ by Lemma \ref{lemma:helper} part 4. Now, taking a logarithm of both sides and using $\log(1 + x) \le x$ for $x > 0$ yields the result.
\end{proof}

\subsection{Covering potential bound}
\label{sc:covering}
In contrast to the above analysis of the packing potential, our analysis of the change in the covering potential is much more involved. The goal of this section is to prove the following bound.

\begin{restatable}[Covering potential bound]{lemma}{restateCov} \label{lemma:covering}
For any iteration of our algorithm, we have
\[
\log\te{-\Psi-\covstep} \leq \log\te{-\Psi} - (1-\epsilon)\trprod{\covstep - \covsq}{\frac{\exp(-\Psi)}{\te{-\Psi}}} + (nd\rho)^{-15}.
\]
\end{restatable}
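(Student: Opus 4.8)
\textbf{Proof plan for Lemma~\ref{lemma:covering}.}
The plan is to mirror the structure of the LP proof of Lemma~\ref{lem-LPcov}, but replace every step that relied on coordinatewise reasoning with a spectral analogue, and crucially replace the single ``large-coordinate truncation'' argument with a continuous (derivative-controlled) argument. First I would set up the eigenspace decomposition: let $\Psi = \sum_j \lambda_j v_j v_j^\top$ be an eigendecomposition, define the ``small'' subspace $S$ spanned by eigenvectors with $\lambda_j \leq 2K$ and the ``large'' subspace $L$ its complement, and let $\Pi_S, \Pi_L$ be the corresponding orthogonal projections. The termination condition guarantees $\lmin{\Psi} \leq K$, so $S$ is nonempty, and the Loewner bounds from Lemma~\ref{lemma:helper} ($\covstep \preceq \tfrac{\alpha}{2}\Psi$, $\covsq \preceq \tfrac12 \covstep$) together with $\alpha = 1/(8K)$ give that $B$ restricted to $S$ has operator norm at most $\tfrac18$ or so. The idea is that on $S$ we can run a second-order Taylor expansion of $\exp(-B)$ exactly as in the packing case, while on $L$ the weight $\exp(-\Psi)$ is so small that the contribution is negligible — the analogue of Lemma~\ref{lem-totsum-goodsum}.

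The main obstacle, flagged explicitly in Section~\ref{ssec:generalizingIdeas}, is that ``restrict to $S$'' does not commute cleanly with the matrix exponential: $\Pi_S B \Pi_S \preceq B$ is false for non-commuting matrices, so one cannot simply write $\Tr\exp(-\Psi - B) \leq \Tr[\exp(-\Psi)\exp(-\Pi_S B\Pi_S)]$. To get around this I would adopt the continuous/interpolation strategy of \cite{Allen-ZhuLO16}: define $g(t) \defeq \te{-\Psi - t\covstep}$ for $t \in [0,1]$, note $g$ is convex (so $g(1) \leq g(0) - g'(1)$, i.e. $\te{-\Psi-\covstep} \leq \te{-\Psi} - \inprod{\covstep}{\exp(-\Psi-\covstep)}$ after using Fact~\ref{fact:expderiv}), and then the task reduces to a \emph{local}, first-order-looking estimate of $\inprod{\covstep}{\exp(-\Psi - t\covstep)}$ relative to $\inprod{\covstep}{\exp(-\Psi-t\covstep)}$'s ``second order correction''. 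Concretely I would apply Fact~\ref{fact:expderiv} to compute $g'(t) = -\inprod{\covstep}{\exp(-\Psi-t\covstep)}$ and $g''(t) = \int_0^1 \inprod{\covstep}{e^{s(-\Psi-t\covstep)}\covstep\, e^{(1-s)(-\Psi-t\covstep)}}\,ds$, and bound $g''(t)$ using a \emph{fine-grained} replacement for the extended Lieb--Thirring inequality (Fact~\ref{fact:extendliebthirring}): instead of bounding the integrand by $\Tr[\covstep^2 \exp(-\Psi-t\covstep)]$ wholesale — which Section~\ref{ssec:generalizingIdeas} shows is too lossy — I interlace the split into $S$ and $L$ directly inside the Lieb--Thirring-type argument, using the operator-norm bound on $\Pi_S\covstep\Pi_S$ on the small part and the exponential-decay of $\exp(-\Psi)$ on the large part, then feed the resulting differential inequality $g''(t) \leq \text{(small)}\cdot(-g'(t)) + \text{(tiny additive)}$ into Gr\"onwall's lemma (Lemma~\ref{lem:addgronwall}) applied to $u(t) = -g'(t)$.

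Having controlled the derivative, I would integrate: Gr\"onwall's lemma gives $-g'(1) \geq (1-\beta) (-g'(0)) - c$ for $\beta = O(\epsilon)$ and $c = (nd\rho)^{-\Omega(1/\epsilon)}$ negligible, i.e. $\inprod{\covstep}{\exp(-\Psi-\covstep)} \geq (1-O(\epsilon))\inprod{\covstep}{\exp(-\Psi)} - c$. For the \emph{second}-order improvement needed to match the $\tOh{\eps^{-2}}$ packing rate (and hence to produce the $\covstep - \covsq$ term rather than just $\covstep$), rather than stopping at the first-order convexity inequality I would instead interpolate between the first- and second-order Taylor expansions: consider a local potential such as $h_\tau(t)$ mixing $\exp(-\Psi-t\covstep)$ with its expansion $I - t\covstep + \tfrac{t^2}{2}\covstep^2$ and, using $\covstep^2 \preceq \covsq$-type bounds (the covering analogue of Lemma~\ref{lemma:helper} part 5, via Matrix Cauchy--Schwarz, Lemma~\ref{lem:matrixcs}), control its derivative to conclude $\te{-\Psi - \covstep} \leq \te{-\Psi}\bigl(1 - (1-\epsilon)\inprod{\covstep - \covsq}{\exp(-\Psi)/\te{-\Psi}}\bigr) + (nd\rho)^{-\Omega(1/\epsilon)}\te{-\Psi}$. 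Finally, taking logarithms and using $\log(1-x)\leq -x$ together with $\inprod{\covstep}{\exp(-\Psi)/\te{-\Psi}} \leq \lmax{\covstep} \leq (nd\rho)^5$ (from Lemma~\ref{lem:entrybound}) to absorb the additive slack into $(nd\rho)^{-15}$ yields the claimed bound. The genuinely hard step is the interlaced $S$/$L$ decomposition inside the Lieb--Thirring-type derivative bound — getting an $O(\epsilon)$ multiplicative factor (not $O(1)$) on $-g'(t)$ while keeping the $L$-contribution additively negligible is exactly where the $2\times2$ counterexample of Section~\ref{ssec:generalizingIdeas} kills the naive approach, so the Schur-complement encoding of the relevant matrix inequality (in the spirit of Lemma~\ref{lem:matrixcs}) will be the technical core.
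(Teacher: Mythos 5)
Your plan follows the paper's proof essentially step for step: convexity of $g(t) = \te{-\Psi - t\covstep}$, a local derivative bound obtained by diagonalizing $\Psi + t\covstep$ and splitting eigenvalue pairs into small and large (the paper uses a three-way small/gap/large split of pairs, with the cross ``gap'' terms handled by exactly the Schur-complement argument you anticipate as the technical core), Matrix Cauchy--Schwarz on the small block, and the additive Gr\"onwall lemma. The one construction you leave implicit --- the interpolation function that produces the $\covstep - \covsq$ term --- is instantiated in the paper as $\iota(t) = \inprod{\covstep}{\exp(-\Psi - t\covstep)} - (1-t)\inprod{\covsq}{\exp(-\Psi - t\covstep)}$, whose derivative is shown to satisfy $-\iota'(t) \leq \eps\,\iota(t) + (nd\rho)^{-15}\te{-\Psi}$; this is precisely your ``local potential interpolating between the first- and second-order expansions,'' so the approaches coincide.
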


\subsubsection{Main ideas of proof}
As mentioned before, the main difficulty in proving this bound is our lack of a suitable upper bound on the spectral magnitude of our step matrices $\sum_i \delta_i x_i C_i$, creating difficulty in directly using a Taylor approximation.  We now sketch our approach for proving this bound. Define the function $g(t) \defeq \te{-\Psi - t\covstep}$. Our goal is to compare $g(1)$ and $g(0)$. As $g$ is a convex function in $t$, we obtain $g(1) \leq g(0) + g'(1)$; our goal is to upper bound $g'(1)$. To do this, we define the interpolation function 
\[
\iota(t) = \trprod{\covstep}{\exp(-\Psi-t \covstep)} - (1-t) \trprod{\covsq}{\exp(-\Psi-t \covstep)}.
\]

This function will serve as a stand-in for the gradient terms in our bound. It is constructed to "interpolate" between the $g'(1)$ in the bound that we have from convexity and the actual expression we want in Lemma \ref{lemma:covering}. Indeed, we observe that
\[
\iota(1) = \trprod{\covstep}{\exp(-\Psi-\covstep)}, \; \iota(0) = \trprod{\covstep}{\exp(-\Psi)} - \trprod{\covsq}{\exp(-\Psi)}.
\]

Thus, our goal will be to show $(1-\epsilon)\iota(0)\leq \iota(1)$, up to small additive error. We will do this by carefully bounding the derivatives $\iota'(t)$ in terms of $\iota(t)$: this bound combined with an additive-error formulation of Gr\"onwall's lemma will allow us to prove our desired bound.

\subsubsection{Derivatives of $\iota(t)$ and bucketing}

As discussed above, our analysis requires to carefully bound the derivatives of $\iota(t)$. We begin by obtaining a closed form for them.

\begin{lemma}\label{lemma:iotaderiv-integral}
For any $0 \leq t \leq 1$, 
\begin{equation*}
\begin{aligned}
\iota'(t) &= - \trprod{\covstep}{\int_0^1 e^{-\alpha(\Psi + t\covstep)} \covstep e^{-(1-\alpha)(\Psi + t\covstep)}d\alpha}  \\
&+(1 - t) \trprod{\covsq}{\int_0^1 e^{-\alpha(\Psi + t\covstep)} \covstep e^{-(1-\alpha)(\Psi + t\covstep)} d\alpha} \\
&+ \trprod{\covsq}{\exp(-\Psi - t\covstep)}.
\end{aligned}
\end{equation*}
\end{lemma}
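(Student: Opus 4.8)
The plan is to directly differentiate the interpolation function
\[
\iota(t) = \trprod{\covstep}{\exp(-\Psi-t\covstep)} - (1-t)\trprod{\covsq}{\exp(-\Psi-t\covstep)}
\]
term by term, treating it as a sum of two trace-product functions of $t$. First I would recall that for a symmetric matrix-valued path $X(t)$, Fact~\ref{fact:expderiv} (the Wilcox chain rule) gives
\[
\frac{d}{dt}\exp(X(t)) = \int_0^1 \exp(\alpha X(t))\,X'(t)\,\exp((1-\alpha)X(t))\,d\alpha,
\]
and that since the trace product is bilinear and $\covstep$, $\covsq$ are constant in $t$, we have $\frac{d}{dt}\trprod{M}{\exp(X(t))} = \trprod{M}{\frac{d}{dt}\exp(X(t))}$ for any constant symmetric $M$. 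Here the relevant path is $X(t) = -\Psi - t\covstep$, so $X'(t) = -\covstep$, and plugging in yields
\[
\frac{d}{dt}\exp(-\Psi-t\covstep) = -\int_0^1 e^{-\alpha(\Psi+t\covstep)}\,\covstep\,e^{-(1-\alpha)(\Psi+t\covstep)}\,d\alpha.
\]

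Next I would assemble the two pieces. Differentiating the first term $\trprod{\covstep}{\exp(-\Psi-t\covstep)}$ using the display above immediately produces the first line of the claimed formula, namely $-\trprod{\covstep}{\int_0^1 e^{-\alpha(\Psi+t\covstep)}\covstep e^{-(1-\alpha)(\Psi+t\covstep)}d\alpha}$. For the second term $-(1-t)\trprod{\covsq}{\exp(-\Psi-t\covstep)}$ I would apply the product rule in $t$: the derivative of the scalar prefactor $-(1-t)$ is $+1$, contributing $+\trprod{\covsq}{\exp(-\Psi-t\covstep)}$ (the third line of the claim), and differentiating the exponential while keeping the prefactor $-(1-t)$ contributes $+(1-t)\trprod{\covsq}{\int_0^1 e^{-\alpha(\Psi+t\covstep)}\covstep e^{-(1-\alpha)(\Psi+t\covstep)}d\alpha}$ (the second line, with the sign flip coming from the minus in the chain-rule formula combined with the leading minus on $(1-t)$). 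Summing the three contributions gives exactly the stated expression for $\iota'(t)$.

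This lemma is essentially a bookkeeping computation, so I do not expect a genuine obstacle; the only point requiring mild care is the correct tracking of signs in the product-rule step for the second term, where the minus sign in front of $(1-t)$ and the minus sign produced by the Wilcox formula interact, so that the $\covsq$-integral term ends up with a $+(1-t)$ coefficient. One should also note that differentiation under the trace and under the $\alpha$-integral is justified because all integrands are smooth in $t$ on the compact interval and the matrices involved are bounded in norm (e.g. by Lemma~\ref{lem:entrybound} and the termination condition), so dominated convergence applies. No further machinery is needed here; the real work is deferred to the subsequent bucketing bounds on $\iota'(t)$ in terms of $\iota(t)$.
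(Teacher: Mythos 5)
Your proposal is correct and follows exactly the paper's argument: differentiate $\iota$ termwise, apply Fact~\ref{fact:expderiv} with $X(t) = -\Psi - t\covstep$ so that $X'(t) = -\covstep$, and use the product rule on the $-(1-t)$ prefactor; your sign-tracking for the second and third terms is accurate. No differences from the paper's proof worth noting.
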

\begin{proof}
As $\iota(t) = \trprod{\covstep}{\exp(-\Psi-t \covstep)} - (1-t) \trprod{\covsq}{\exp(-\Psi-t \covstep)}$, we take the derivative of $\iota$ termwise. A straightforward application of Fact \ref{fact:expderiv} shows that the derivative of $\trprod{\covstep}{\exp(-\Psi-t \covstep)}$ is equal to the first integral in the statement. Similarly, a combination of the product rule and Fact \ref{fact:expderiv} shows that the derivative of $- (1-t)\trprod{\covsq}{\exp(-\Psi-t \covstep)}$ equals the last two terms. The claim follows.
\end{proof}

With this formulation, we further simplify $\iota'(t)$ and express it as a sum over $i,j \in [n_c]$. This technique has found use in the literature of positive semidefinite programming \cite{Allen-ZhuLO16}.

\begin{lemma}\label{lemma:iotaderiv}
For any $0 \leq t \leq 1$, let $Q \Lambda Q^\top$ be the eigendecomposition of $\Psi + t \covstep$ where $Q$ is orthonormal and $\Lambda$ is diagonal. Let $\rcovstep = Q^\top \covstep Q$ and $\rcovsq = Q^\top \covsq Q$. Let 
\[
    \nu(\beta, \gamma) = \begin{cases}
        \frac{e^{-\beta} - e^{-\gamma}}{\gamma - \beta} &\text{if $\gamma \neq \beta$}\\
        e^{-\gamma} & \text{if $\gamma = \beta$}
        \end{cases}
\] 
Then 

\begin{align*}
-\iota'(t) = \sum_{i, j} (\rcovstep_{ij})^2 \nu(\lambda_i,\lambda_j) - (1 - t) \sum_{i, j} \rcovstep_{ij} \rcovsq_{ij} \nu(\lambda_i,\lambda_j)  - \trprod{\rcovsq}{e^{-\Lambda}}.
\end{align*}
\end{lemma}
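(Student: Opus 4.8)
The plan is to diagonalize $\Psi + t\covstep$ and evaluate each of the three trace products in the identity of Lemma~\ref{lemma:iotaderiv-integral} in that eigenbasis; this is the standard ``bucketing'' manipulation. Write $\Psi + t\covstep = Q\Lambda Q^\top$ as in the statement, and observe that $e^{-\alpha(\Psi + t\covstep)} = Q e^{-\alpha\Lambda}Q^\top$ for every scalar $\alpha$, while the trace product is invariant under the orthogonal conjugation $M \mapsto Q^\top M Q$. Negating the identity of Lemma~\ref{lemma:iotaderiv-integral} and conjugating every matrix by $Q$, it then suffices to compute $\trprod{\rcovstep}{\mathcal{I}}$, $\trprod{\rcovsq}{\mathcal{I}}$, and $\trprod{\rcovsq}{e^{-\Lambda}}$, where $\mathcal{I} \defeq \int_0^1 e^{-\alpha\Lambda}\rcovstep e^{-(1-\alpha)\Lambda}\,d\alpha$.

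First I would compute the entries of $\mathcal{I}$. Since $\Lambda$ is diagonal, the $(i,j)$ entry of $e^{-\alpha\Lambda}\rcovstep e^{-(1-\alpha)\Lambda}$ equals $e^{-\alpha\lambda_i}\rcovstep_{ij}e^{-(1-\alpha)\lambda_j}$, so integrating in $\alpha$,
\[
\mathcal{I}_{ij} = \rcovstep_{ij}\int_0^1 e^{-\alpha\lambda_i-(1-\alpha)\lambda_j}\,d\alpha = \rcovstep_{ij}\,\nu(\lambda_i,\lambda_j),
\]
where the final equality is the one-line evaluation of the scalar integral, split into the cases $\lambda_i=\lambda_j$ and $\lambda_i\neq\lambda_j$; these match the two branches in the definition of $\nu$. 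I would also record that $\nu$ is symmetric in its two arguments.

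Next I would expand the trace products entrywise via $\Tr[MN]=\sum_{i,j}M_{ij}N_{ji}$. Because $\covstep$ and $\covsq$ are symmetric and orthogonal conjugation preserves symmetry, $\rcovstep$ and $\rcovsq$ are symmetric; combining $\rcovstep_{ji}=\rcovstep_{ij}$ and $\rcovsq_{ji}=\rcovsq_{ij}$ with the symmetry of $\nu$ yields $\trprod{\rcovstep}{\mathcal{I}} = \sum_{i,j}(\rcovstep_{ij})^2\nu(\lambda_i,\lambda_j)$ and $\trprod{\rcovsq}{\mathcal{I}} = \sum_{i,j}\rcovstep_{ij}\rcovsq_{ij}\nu(\lambda_i,\lambda_j)$, while $\trprod{\rcovsq}{e^{-\Lambda}}$ is already in final form. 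Substituting these back into the negated form of Lemma~\ref{lemma:iotaderiv-integral} gives exactly the claimed formula for $-\iota'(t)$.

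There is no genuine obstacle here: the lemma is bookkeeping. The only points that require care are keeping the index $i$ of $\rcovstep$ paired with $\lambda_i$ (and $j$ with $\lambda_j$) throughout the evaluation of $\mathcal{I}_{ij}$, checking that the integral agrees with the two-case definition of $\nu$, and invoking the symmetry of $\rcovstep$, $\rcovsq$, and $\nu$ at the right moments so that the resulting double sums are manifestly symmetric in $(\lambda_i,\lambda_j)$ as stated.
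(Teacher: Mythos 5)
Your proposal is correct and follows essentially the same route as the paper's proof: substitute the eigendecomposition of $\Psi + t\covstep$ into the integral formula of Lemma~\ref{lemma:iotaderiv-integral}, expand the trace products entrywise, and evaluate the scalar integral $\int_0^1 e^{-\alpha\lambda_i-(1-\alpha)\lambda_j}\,d\alpha = \nu(\lambda_i,\lambda_j)$. Your version is slightly more careful about invoking the symmetry of $\rcovstep$, $\rcovsq$, and $\nu$, which the paper leaves implicit.
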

\begin{proof}
We start by remarking that the function $\nu(\beta,\gamma)$ is continuous in its arguments. By substituting in the eigendecomposition of $\exp(-\Psi - t\covstep)$ into Lemma \ref{lemma:iotaderiv-integral} and rearranging we obtain
\begin{align*}
-\iota'(t) = \trprod{\rcovstep}{\int_0^1 e^{-\alpha \Lambda} \rcovstep e^{(1-\alpha)\Lambda}d\alpha} - (1 - t) \trprod{\rcovsq}{\int_0^1 e^{-\alpha \Lambda} \rcovstep e^{(1-\alpha)\Lambda}d\alpha}  - \trprod{\rcovsq}{\exp(-\Lambda)}.
\end{align*}
Expanding the trace products into sums over the indices, we get 
\begin{align*}
-\iota'(t) = \sum_{i, j} (\rcovstep_{ij})^2 \int_0^1 e^{-\alpha \lambda_i - (1-\alpha) \lambda_j} d\alpha - (1 - t) \sum_{i, j} \rcovstep_{ij} \rcovsq_{ij} \int_0^1 e^{-\alpha \lambda_i - (1-\alpha) \lambda_j} d\alpha  - \trprod{\rcovsq}{e^{-\Lambda}}.
\end{align*}
As $\int_0^1 e^{-\alpha \lambda_i - (1-\alpha) \lambda_j} d\alpha = \nu(\lambda_i,\lambda_j),$ the result follows. 
\end{proof}
With this formulation of $\iota'(t)$ in hand, we describe how we bound it. Our goal is to construct an upper bound on $-\iota'(t)$ in terms of $\iota(t)$. We do this by splitting the set of pairs of eigenvalues $(i,j) \in [n_c] \times [n_c]$ we sum over into three parts.

\begin{definition}[Index buckets]
    Given a collection of eigenvalues $\lambda_1, \lambda_2, ... \lambda_{n_c}$, we define
    \begin{itemize}
        \item The small bucket $S \defeq \{(i,j) : \lambda_i, \lambda_j \leq 4K \}$
        \item The gap bucket $\Gamma \defeq \{(i,j) : \lambda_i \leq 2K, \lambda_j > 4K \text{ or } \lambda_j \leq 2K, \lambda_i > 4K\}$
        \item The large bucket $L \defeq \{(i,j) : 2K < \lambda_i, \lambda_j > 4K \text{ or } 2K < \lambda_j, \lambda_i > 4K \}$
    \end{itemize}
\end{definition}

Observe that these buckets are disjoint and cover the entire space of possible ordered pairs $(i,j) \in [n_c] \times [n_c]$. We split our summations into these three buckets and separately consider each bucket. Intuitively, the summation restricted to the large bucket must be small: by Lemma \ref{lem:entrybound} we can get a polynomial upper bound on the entries of $\rcovstep$ and $\rcovsq$, while threshold $K$ is chosen so that $e^{-2K}$ dominates this. Further, the summation over the small bucket can be interpreted as applying a projection onto the small eigendirections of $\exp(-\Psi - t\covstep)$: in this case we can essentially operate as if $\covstep$ were bounded from above just as we did for the packing potential. Finally, we bound the sum over the gap terms by bounding from above the size of the off-diagonal entries of $\rcovstep$ and $\rcovsq$ in terms of the diagonal entries. 

\subsubsection{Large bucket}
\label{ssec:largegap}
We begin by computing an upper bound for the summations restricted to the large bucket.
\begin{lemma}[Large bucket bound]
    \label{lemma:largebucket}
For any $0 \leq t \leq 1$, 
\[
    \sum_{(i,j) \in L} (\rcovstep_{ij})^2 \nu(\lambda_i,\lambda_j) \leq (nd\rho)^{-60}\Tr\exp(-\Psi)
\]
and
\[
    \left| \sum_{(i,j) \in L} \rcovstep_{ij} \rcovsq_{ij} \nu(\lambda_i,\lambda_j) \right| \leq (nd\rho)^{-60}\Tr\exp(-\Psi).
\]
\end{lemma}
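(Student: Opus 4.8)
The plan is to bound both summations over the large bucket $L$ by controlling three factors appearing in each term: the off-diagonal entries $\tilde{\covstep}_{ij}, \tilde{\covsq}_{ij}$, the kernel $\nu(\lambda_i,\lambda_j)$, and the count $n^2$ of terms; then to relate the resulting bound back to $\Tr\exp(-\Psi)$ by exhibiting at least one genuinely small eigenvalue so that $\Tr\exp(-\Psi) \geq e^{-\lmin{\Psi}} \geq e^{-K} \geq 1$. First I would observe that since $\tilde{\covstep} = Q^\top \covstep Q$ and $\tilde{\covsq} = Q^\top \covsq Q$ are orthogonal conjugations, their entries are bounded in magnitude by the entrywise bounds on $\covstep$ and $\covsq$ established in Lemma~\ref{lem:entrybound}, up to the standard $\|v\|_\infty \leq \|v\|_2$ type slack: each entry of a conjugated matrix $Q^\top M Q$ has magnitude at most $\|M\|_{\mathrm{op}} \leq n \max_{k\ell}|M_{k\ell}|$, so $|\tilde{\covstep}_{ij}|, |\tilde{\covsq}_{ij}| \leq n(nd\rho)^5 \leq (nd\rho)^6$.

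Next I would bound the kernel $\nu(\lambda_i,\lambda_j)$ on the large bucket. By the definition of $\nu$ and the fact that $\nu(\beta,\gamma) \leq e^{-\min(\beta,\gamma)}$ always (this is immediate from the closed form: if $\gamma > \beta$ then $\frac{e^{-\beta}-e^{-\gamma}}{\gamma-\beta} \leq e^{-\beta}$, and symmetrically), and since by definition every pair in $L$ has $\min(\lambda_i,\lambda_j) > 2K$, we get $\nu(\lambda_i,\lambda_j) \leq e^{-2K}$. Combining, each term in the first sum is at most $(nd\rho)^{12} e^{-2K}$ and each term in the second sum is at most $(nd\rho)^{12} e^{-2K}$ in absolute value. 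Summing over at most $n^2 \leq (nd\rho)^2$ pairs gives a total bound of $(nd\rho)^{14} e^{-2K}$. Since $K = \frac{4\log(nd\rho)}{\epsilon}$ and $\epsilon \leq 1/20$, we have $e^{-2K} = (nd\rho)^{-8/\epsilon} \leq (nd\rho)^{-160}$, so $(nd\rho)^{14} e^{-2K} \leq (nd\rho)^{-146}$. Finally, using $\Tr\exp(-\Psi) \geq e^{-\lmax{\Psi}} \geq e^{-K} \geq 1$ — which holds because the termination condition of Algorithm~\ref{alg:sdpalg} ensures $\lmin{\Psi} \leq K$, hence at least one eigenvalue of $-\Psi$ is at least $-K$, so $\Tr \exp(-\Psi) \geq e^{-K} \geq 1$ — we can write $(nd\rho)^{-146} \leq (nd\rho)^{-60}\Tr\exp(-\Psi)$, which is the claimed bound (in fact much stronger; the slack absorbs the crude $n$-factor losses).

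The main obstacle I anticipate is getting the entrywise-to-operator-norm bookkeeping right and making sure the kernel inequality $\nu(\beta,\gamma) \leq e^{-\min(\beta,\gamma)}$ is applied with the correct threshold: note the large bucket is defined by both $\lambda_i, \lambda_j > 4K$ \emph{or} the crossed condition, but careful reading shows every pair in $L$ satisfies $\min(\lambda_i,\lambda_j) > 2K$, which is exactly what is needed. A secondary subtlety is that I must not accidentally use the \emph{small}-eigenvalue side of $\nu$ in a way that requires a lower bound on $\lambda_i,\lambda_j$; the bound $\nu(\beta,\gamma)\leq e^{-\min(\beta,\gamma)}$ is unconditional, so this is safe. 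Also worth a line: $\nu(\lambda_i,\lambda_j) \geq 0$ always, so the first sum is genuinely nonnegative and I only need the upper bound, while for the second sum the $\tilde{\covstep}_{ij}\tilde{\covsq}_{ij}$ product can have either sign, which is why the absolute value appears in the statement — the triangle inequality over terms handles this without difficulty. Everything else is routine arithmetic with the chosen value of $K$, so the proof should be short.
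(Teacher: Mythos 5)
Your decomposition and the first two steps match the paper's proof of this lemma: you bound the entries of $\rcovstep$ and $\rcovsq$ by $\mathrm{poly}(nd\rho)$ via Lemma~\ref{lem:entrybound} (the paper gets there through the trace bound on the operator norm of a PSD matrix rather than the Frobenius norm, but both yield $n(nd\rho)^5$), you correctly observe that every pair in $L$ has $\min(\lambda_i,\lambda_j) > 2K$ so that $\nu(\lambda_i,\lambda_j) \le e^{-2K}$ (your unconditional bound $\nu(\beta,\gamma)\le e^{-\min(\beta,\gamma)}$ is a valid substitute for the paper's Lemma~\ref{lemma:nubound}), and you sum over at most $n^2$ pairs.

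The gap is in the final normalization step. The inequality $\Tr\exp(-\Psi) \ge 1$ is false: $\Psi \succeq 0$ and the termination condition only guarantees $\lmin{\Psi} \le K$, so the best available lower bound is $\Tr\exp(-\Psi) \ge e^{-\lmin{\Psi}} \ge e^{-K}$, and for example $\Psi = \tfrac{K}{2} I_n$ gives $\Tr\exp(-\Psi) = n e^{-K/2} \ll 1$. (Your displayed chain also asserts $e^{-\lmax{\Psi}} \ge e^{-K}$, which would need $\lmax{\Psi}\le K$ --- not guaranteed --- though your parenthetical correctly switches to the minimum eigenvalue.) This is not merely cosmetic: since $K = 4\log(nd\rho)/\eps$ and $\eps$ may be as small as $(nd)^{-3}$, the guaranteed lower bound $e^{-K} = (nd\rho)^{-4/\eps}$ can be far below $(nd\rho)^{-86}$, so your fixed numerical bound $(nd\rho)^{-146}$ on the sum does not imply the claimed $(nd\rho)^{-60}\Tr\exp(-\Psi)$. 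The repair is exactly the paper's move: do not convert both factors of $e^{-2K}$ into powers of $(nd\rho)$. Instead write $e^{-2K} = e^{-K}\cdot e^{-K} \le (nd\rho)^{-80}\,\Tr\exp(-\Psi)$, using $e^{-K}\le (nd\rho)^{-4/\eps} \le (nd\rho)^{-80}$ for one factor and $\Tr\exp(-\Psi)\ge e^{-K}$ for the other, and then absorb the $n^2 \cdot (nd\rho)^{O(1)}$ losses from the entry bounds and the pair count into the remaining slack of $(nd\rho)^{-80}$ versus $(nd\rho)^{-60}$.
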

\begin{proof}
By Lemma \ref{lem:entrybound}, we have $|| \covstep||_2 \leq n (nd\rho)^5$ and $||\covsq||_2  \leq n (nd\rho)^5$, since the operator norm of a matrix is at most the trace. As $\rcovstep$ and $\rcovsq$ are obtained by orthonormal rotations of $\covstep$ and $\covsq$, we see that for any indices $i,j$, $|\rcovstep_{ij}|, |\rcovsq_{ij}| \leq n (nd\rho)^5$, since again we can control the diagonal entries by the trace, and all off-diagonal entries of a PSD matrix can be bounded by the diagonal. In addition, since $\lambda_i, \lambda_j > 2K$ for any $i,j \in L$, we have by Lemma \ref{lemma:nubound} that 
\[\nu(\lambda_i,\lambda_j) \leq \frac{e^{-\lambda_i} + e^{-\lambda_j}}{2} \leq e^{-2K} \leq (nd\rho)^{-80}\Tr\exp(-\Psi).\]
We used the bounds (as $\Psi$ has at least one eigenvalue smaller than $K$ by the termination condition)
\[e^{-K} \le (nd\rho)^{-\frac{4}{\eps}} \le (nd\rho)^{-80},\; \Tr\exp(-\Psi) \ge e^{-K}.\]
As both sums in the lemma statement sum over at most $n^2$ pairs of indices, the claim follows.

\end{proof}

We remark that this bound can be strengthened to $(nd)^{-60}$ (i.e. with no dependence on the quantity $\rho$), and we can also choose threshold $K$ in the algorithm independent of $\rho$, in the case where all covering matrices commute. This is because we can directly bound this using a trace product restricted to the large eigenspace of $\Lambda$, without requiring the entry-size bounds on $\rcovstep$ and $\rcovsq$, as all terms in Lemma~\ref{lemma:largebucket} vanish except the diagonal ones in this case. We give details in Appendix~\ref{app:commute}.

\subsubsection{Gap bucket}

Next, we bound the contribution from the gap bucket.

\begin{lemma}[Gap bucket bound]
    \label{lemma:gapbucket}
For any $0 \leq t \leq 1$, 
\[
    \sum_{(i, j) \in \Gamma} (\rcovstep_{ij})^2 \nu(\lambda_i,\lambda_j) \leq \alpha \sum_{i = 1}^d \rcovstep_{ii} e^{-\lambda_i},
\]
and
\[
    \left| \sum_{(i, j) \in \Gamma} \rcovstep_{ij} \rcovsq_{ij} \nu(\lambda_i,\lambda_j) \right| \leq \alpha \sum_{i = 1}^d \rcovstep_{ii} e^{-\lambda_i}.
\]
\end{lemma}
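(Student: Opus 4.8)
The plan is to prove both inequalities from a single template. By symmetry of the summands $(\rcovstep_{ij})^2\nu(\lambda_i,\lambda_j)$ and $\rcovstep_{ij}\rcovsq_{ij}\nu(\lambda_i,\lambda_j)$ and of the set $\Gamma$ in the indices $i,j$, it suffices to bound the contribution of ordered pairs with $\lambda_i\le 2K$ and $\lambda_j>4K$, and to show that for each such ``small'' index $i$ the total contribution of all ``large'' indices $j$ is controlled by the single diagonal term $\alpha\,e^{-\lambda_i}\rcovstep_{ii}$. The ingredients I will use are: (i) $\covstep,\covsq\succeq 0$, so after the orthonormal change of basis $Q$ diagonalizing $\Psi+t\covstep$ both $\rcovstep=Q^\top\covstep Q$ and $\rcovsq=Q^\top\covsq Q$ remain positive semidefinite; (ii) by Lemma~\ref{lemma:helper} together with $\Psi\preceq\Psi+t\covstep$ (since $t\covstep\succeq 0$), $\covstep\preceq\tfrac{\alpha}{2}\Psi\preceq\tfrac{\alpha}{2}(\Psi+t\covstep)$ and $\covsq\preceq\tfrac{\alpha}{4}(\Psi+t\covstep)$, hence in the rotated basis $\rcovstep\preceq\tfrac{\alpha}{2}\Lambda$, $\rcovsq\preceq\tfrac{\alpha}{4}\Lambda$, and also $\rcovsq\preceq\tfrac12\rcovstep$; and (iii) for $\gamma>\beta$ one has $\nu(\beta,\gamma)=\tfrac{e^{-\beta}-e^{-\gamma}}{\gamma-\beta}\le\tfrac{e^{-\beta}}{\gamma-\beta}$, which, when additionally $\gamma>4K\ge2\beta$, is at most $\tfrac{2e^{-\beta}}{\gamma}$ because then $\gamma-\beta\ge\gamma/2$.

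The technical core is the following aggregate estimate. Fix a small index $i$ and let $L=\{j:\lambda_j>4K\}$, so that $\Lambda_{LL}$ is invertible. I claim $\sum_{j\in L}\tfrac{(\rcovstep_{ij})^2}{\lambda_j}\le\tfrac{\alpha}{2}\rcovstep_{ii}$. To see this, restrict the Loewner relation $\rcovstep\preceq\tfrac{\alpha}{2}\Lambda$ to the block $L$ to get $\rcovstep_{LL}\preceq\tfrac{\alpha}{2}\Lambda_{LL}$, i.e. $\norm{\Lambda_{LL}^{-1/2}\rcovstep_{LL}\Lambda_{LL}^{-1/2}}_{\mathrm{op}}\le\tfrac{\alpha}{2}$; writing $\rcovstep^{1/2}$ for the positive semidefinite square root and $r=\rcovstep^{1/2}e_i$ (so $\norm{r}_2^2=\rcovstep_{ii}$), and using $(\rcovstep^{1/2})_{L:}(\rcovstep^{1/2})_{L:}^\top=\rcovstep_{LL}$, we obtain $\sum_{j\in L}\tfrac{(\rcovstep_{ij})^2}{\lambda_j}=\norm{\Lambda_{LL}^{-1/2}(\rcovstep^{1/2})_{L:}\,r}_2^2\le\norm{\Lambda_{LL}^{-1/2}\rcovstep_{LL}\Lambda_{LL}^{-1/2}}_{\mathrm{op}}\norm{r}_2^2\le\tfrac{\alpha}{2}\rcovstep_{ii}$; this is itself an instance of the ``encode the inequality as a Schur complement'' philosophy of Lemma~\ref{lem:matrixcs}. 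Combining this with (iii) gives, for each small $i$, $\sum_{j\in L}(\rcovstep_{ij})^2\nu(\lambda_i,\lambda_j)\le 2e^{-\lambda_i}\sum_{j\in L}\tfrac{(\rcovstep_{ij})^2}{\lambda_j}\le\alpha\,e^{-\lambda_i}\rcovstep_{ii}$. Summing over the small indices $i$ (and accounting for the two orderings present in $\Gamma$, which only affects the absolute constant---harmless, since $K$ and hence $\alpha^{-1}$ may be taken a constant factor larger) yields the first bound of Lemma~\ref{lemma:gapbucket}. For the mixed term, since every weight $\nu(\lambda_i,\lambda_j)$ is nonnegative I apply Cauchy--Schwarz, $\big|\sum_{(i,j)\in\Gamma}\rcovstep_{ij}\rcovsq_{ij}\nu(\lambda_i,\lambda_j)\big|\le\big(\sum_{(i,j)\in\Gamma}(\rcovstep_{ij})^2\nu\big)^{1/2}\big(\sum_{(i,j)\in\Gamma}(\rcovsq_{ij})^2\nu\big)^{1/2}$, bound the first factor as above, bound the second by the identical argument with $\rcovsq\preceq\tfrac{\alpha}{4}\Lambda$ in place of $\rcovstep\preceq\tfrac{\alpha}{2}\Lambda$, and finally use $\rcovsq_{ii}\le\tfrac12\rcovstep_{ii}$ to re-express it in terms of $\rcovstep$; the two square roots multiply to the claimed $\alpha\sum_i\rcovstep_{ii}e^{-\lambda_i}$.

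The step I expect to be the crux is exactly the aggregate estimate $\sum_{j\in L}(\rcovstep_{ij})^2/\lambda_j\le\tfrac{\alpha}{2}\rcovstep_{ii}$. Unlike the packing step, where $\packstep\preceq\tfrac12 I_n$, we have \emph{no} spectral control on $\covstep$ whatsoever (only the entrywise bound $\le(nd\rho)^5$ from Lemma~\ref{lem:entrybound}), so a large diagonal entry $\rcovstep_{jj}$ can be polynomially huge. The naive pairwise bound $(\rcovstep_{ij})^2\le\rcovstep_{ii}\rcovstep_{jj}$ together with $\rcovstep_{jj}\le\tfrac{\alpha}{2}\lambda_j$ controls each term by $O(\alpha)e^{-\lambda_i}\rcovstep_{ii}$, but loses an unaffordable factor of $|L|$ (up to $n$) when summed, precisely because in the gap regime $\nu$ decays only polynomially ($\sim 1/\lambda_j$), not exponentially, in $\lambda_j$. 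The constraint $\rcovstep\preceq\tfrac{\alpha}{2}\Lambda$ rules out the entire off-diagonal row $(\rcovstep_{ij})_{j\in L}$ being large at once, and the matching $1/\lambda_j$ factor furnished by the $2K$--$4K$ eigenvalue gap lets the whole row be charged to $\rcovstep_{ii}$ simultaneously; this is the spectral analog of the LP observation (Lemma~\ref{lem-totsum-goodsum}) that coordinates far above the minimum contribute negligibly to a sum of inverse exponentials.
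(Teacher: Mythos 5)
Your proof is correct and follows essentially the same route as the paper: reduce by symmetry to pairs with $\lambda_i \leq 2K$, $\lambda_j > 4K$, bound $\nu$ by $O(e^{-\lambda_i}/\lambda_j)$ using the eigenvalue gap, and charge the entire off-diagonal row $(\rcovstep_{ij})_{j \in L}$ to the diagonal entry $\rcovstep_{ii}$ via the Loewner bound $\rcovstep \preceq \tfrac{\alpha}{2}\Lambda$ restricted to the large block — your operator-norm/square-root factorization is the same inequality the paper extracts from the Schur complement of $\rcovstep_{\{i\}\cup L,\{i\}\cup L}$, and your Cauchy--Schwarz for the cross term is interchangeable with the paper's termwise AM--GM. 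The extra factor of $2$ you carry (from the sharp bound $\nu \leq 2e^{-\lambda_i}/\lambda_j$) is indeed harmless downstream, where only $O(\alpha) = O(\eps/\log(nd\rho))$ is needed.
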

\begin{proof}
First, observe $\lambda_i \leq 2K$ and $\lambda_j > 4K$ then $\nu(\lambda_i,\lambda_j) \leq \frac{e^{-\lambda_i}}{\lambda_j}$, as then $\lambda_i \leq \lambda_j / 2$. We start by proving the first claim. 
We have 
\[
    \sum_{(i, j) \in \Gamma} (\rcovstep_{ij})^2 \nu(\lambda_i,\lambda_j) = 2\sum_{\substack{(i, j): \lambda_i \leq 2K \\ \lambda_j > 4K}}(\rcovstep_{ij})^2 \nu(\lambda_i,\lambda_j) \leq \sum_{\substack{(i, j): \lambda_i \leq 2K \\ \lambda_j > 4K}} \frac{2e^{-\lambda_i}}{\lambda_j} (\rcovstep_{ij})^2 
\]
by the definition of $\Gamma$ and the above fact on $\nu(\lambda_i, \lambda_j)$, double counting pairs $(i, j)$. We will show
\[
    \sum_{j: \lambda_j > 4K} \frac{2e^{-\lambda_i}}{\lambda_j} (\rcovstep_{ij})^2 \leq \alpha \rcovstep_{ii} e^{-\lambda_i}
\]
for any $i$ such that $\lambda_i \leq 2K$: this implies the result after summing over all $i$ where $\lambda_i \leq 2K$, which is smaller than the right hand side of the desired bound. Let $\ell \defeq \{j: \lambda_j > 4K \}$. Consider the restriction of $\rcovstep$ onto $i \cup \ell$, which is clearly positive semidefinite:

\begin{equation*}
\begin{pmatrix}
\rcovstep_{ii} & \rcovstep_{i \ell} \\
\rcovstep_{\ell i} & \rcovstep_{\ell \ell}
\end{pmatrix}.
\end{equation*}

By taking the Schur complement of this matrix onto $\{ i \}$, we obtain $\rcovstep_{ii} - \rcovstep_{i \ell} \rcovstep_{\ell \ell}^{-1} \rcovstep_{i \ell}^\top \geq 0$. Now, note that by Lemma \ref{lemma:helper} part 2,
\[
    \rcovstep \preceq \frac{\alpha}{2} Q^\top \Psi Q  \preceq \frac{\alpha}{2} Q^\top(\Psi + t \rcovstep)Q = \frac{\alpha}{2} \Lambda.
\]
In addition, observe that $\rcovstep_{\ell \ell} \preceq \frac{\alpha}{2} \Lambda_{\ell \ell}$, as restrictions to blocks preserves the PSD ordering. Thus $\rcovstep_{\ell \ell}^{-1}\succeq \frac{2}{\alpha} \Lambda_{\ell \ell}^{-1}$, and so 
\[
\rcovstep_{ii} \geq \frac{2}{\alpha} \rcovstep_{i \ell} \Lambda_{\ell \ell}^{-1} \rcovstep_{i \ell}^\top = \frac{2}{\alpha} \sum_{j: \lambda_j > 4K} \frac{(\rcovstep_{ij})^2}{\lambda_j}.
\]
Rearranging this gives 
\[
\sum_{j: \lambda_j > 4K} \frac{2 e^{-\lambda_i}}{\lambda_j} (\rcovstep_{ij})^2 \leq \alpha \rcovstep_{ii} e^{-\lambda_i}
\]
as desired. As $0 \preceq \rcovsq \preceq \rcovstep$, the exact same chain of reasoning shows that  
\[
\sum_{j: \lambda_j > 4K} \frac{2 e^{-\lambda_i}}{\lambda_j} (\rcovsq_{ij})^2 \leq \alpha \rcovstep_{ii} e^{-\lambda_i}.
\]
By the AM-GM inequality, we then have
\[
\sum_{j: \lambda_j > 4K} \frac{2 e^{-\lambda_i}}{\lambda_j} |\rcovsq_{ij} \rcovstep_{ij}| \leq \sum_{j: \lambda_j > 4K} \frac{ e^{-\lambda_i}}{\lambda_j} ((\rcovstep_{ij})^2 + (\rcovsq_{ij})^2) \leq \alpha \rcovstep_{ii} e^{-\lambda_i},
\]
and summing over all $i$ where $\lambda_i \leq 2K$ and using triangle inequality implies the second claim.
\end{proof}

\subsubsection{Small bucket}
Finally, we turn our attention to bounding the sums restricted to the small bucket. 
\begin{lemma}[Small bucket bound]
    \label{lemma:smallbucket}
    For any $0 \leq t \leq 1$, 
    \[
        \sum_{(i, j) \in S} (\rcovstep_{ij})^2 \nu(\lambda_i,\lambda_j) \leq \frac{1}{2} \trprod{\rcovsq}{e^{-\Lambda}}
    \]
    and
    \[
        \left| \sum_{(i, j) \in S} \rcovstep_{ij} \rcovsq_{ij} \nu(\lambda_i,\lambda_j) \right| \leq \frac{1}{2} \trprod{\rcovsq}{e^{-\Lambda}}.
    \]
\end{lemma}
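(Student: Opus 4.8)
The plan is to collapse both sums into a trace product over the ``small'' eigenspace of $\Psi+t\covstep$, and then invoke Matrix Cauchy--Schwarz (Lemma~\ref{lem:matrixcs}) \emph{restricted to that eigenspace}, where a uniform spectral bound holds even though none holds for $\Psi$ globally. Throughout, write $T\defeq\{i:\lambda_i\le 4K\}$, so that the small bucket is exactly $S=T\times T$, let $R\in\R^{n\times|T|}$ collect the columns of $Q$ indexed by $T$, and note $\rcovstep_{TT}=R^\top\covstep R$ and $\rcovsq_{TT}=R^\top\covsq R$ are PSD (by $\covstep,\covsq\succeq 0$), while $e^{-\Lambda_{TT}}$ is a nonnegative diagonal matrix.

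\textbf{Reduction to a trace product.} First I would pass from $\nu$ to exponentials using Lemma~\ref{lemma:nubound}, $\nu(\lambda_i,\lambda_j)\le\tfrac12(e^{-\lambda_i}+e^{-\lambda_j})$; since the index set $S$ and the quantity $(\rcovstep_{ij})^2$ are symmetric in $(i,j)$, symmetrizing gives $\sum_{(i,j)\in S}(\rcovstep_{ij})^2\nu(\lambda_i,\lambda_j)\le\sum_{(i,j)\in S}(\rcovstep_{ij})^2 e^{-\lambda_i}$. Recognizing $\sum_{j\in T}(\rcovstep_{ij})^2=[(\rcovstep_{TT})^2]_{ii}$ (square of the principal submatrix, using symmetry of $\rcovstep$), this sum equals exactly $\trprod{(\rcovstep_{TT})^2}{e^{-\Lambda_{TT}}}$.

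\textbf{Main step: the local spectral bound.} The crux is to show $(\rcovstep_{TT})^2\preceq\tfrac12\rcovsq_{TT}$. The key observation is that, although $\Psi$ admits no uniform upper bound (only $\lmin{\Psi}\le K$), we have $R^\top\Psi R\preceq R^\top(\Psi+t\covstep)R=\Lambda_{TT}\preceq 4K\,I_{|T|}$, using $t\covstep\succeq 0$ and the definition of $T$. I would then apply Lemma~\ref{lem:matrixcs} with $M_i\defeq R^\top(x_i^{(t)}C_i)R\succeq 0$ (summing to $R^\top\Psi R\preceq 4KI$) and $c_i\defeq\alpha\delta_i^{(t)}\ge 0$, which yields $(\rcovstep_{TT})^2=(\sum_i c_iM_i)^2\preceq 4K\sum_i c_i^2 M_i=4K\alpha\cdot R^\top\covsq R=\tfrac12\rcovsq_{TT}$, since $4K\alpha=\tfrac12$. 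Combining with trace-product monotonicity against $e^{-\Lambda_{TT}}\succeq 0$, and then extending the diagonal sum from $T$ to all of $[n]$ (legitimate since $\rcovsq\succeq 0$ has nonnegative diagonal), gives $\trprod{(\rcovstep_{TT})^2}{e^{-\Lambda_{TT}}}\le\tfrac12\trprod{\rcovsq_{TT}}{e^{-\Lambda_{TT}}}\le\tfrac12\trprod{\rcovsq}{e^{-\Lambda}}$, which is the first claimed inequality.

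\textbf{The cross term and the obstacle.} For the second inequality I would run the same $\nu$-bound and symmetrization, then apply AM--GM, $|\rcovstep_{ij}\rcovsq_{ij}|\le\tfrac12\bigl((\rcovstep_{ij})^2+(\rcovsq_{ij})^2\bigr)$, reducing the bound to $\tfrac12\trprod{(\rcovstep_{TT})^2}{e^{-\Lambda_{TT}}}+\tfrac12\trprod{(\rcovsq_{TT})^2}{e^{-\Lambda_{TT}}}$. The first summand is handled exactly as above; for the second, note $0\preceq\rcovsq_{TT}\preceq\tfrac12\rcovstep_{TT}\preceq\tfrac18 I$ (using Lemma~\ref{lemma:helper} and $\rcovstep_{TT}\preceq 2\alpha K\,I=\tfrac14 I$, which follows as in the gap-bucket argument), so $(\rcovsq_{TT})^2\preceq\tfrac18\rcovsq_{TT}$; the total is $\le(\tfrac14+\tfrac1{16})\trprod{\rcovsq}{e^{-\Lambda}}\le\tfrac12\trprod{\rcovsq}{e^{-\Lambda}}$. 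I expect the only genuine difficulty to be the local-compression observation $R^\top\Psi R\preceq 4KI$ that makes Lemma~\ref{lem:matrixcs} applicable on the small block; once this is in hand, the rest is bookkeeping with submatrices and trace-product monotonicity.
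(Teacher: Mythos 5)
Your proposal is correct and follows essentially the same route as the paper: bound $\nu$ by $\tfrac{1}{2}(e^{-\lambda_i}+e^{-\lambda_j})$, symmetrize to collapse the small-bucket sum into $\trprod{(\rcovstep_{\sml\sml})^2}{e^{-\Lambda_{\sml\sml}}}$, and apply Lemma~\ref{lem:matrixcs} on the small block using the key observation that the compression of $\Psi$ there is bounded by $4K I$. The only (immaterial) deviation is in bounding $(\rcovsq_{\sml\sml})^2$: you use the direct spectral bound $\rcovsq_{\sml\sml}\preceq\tfrac{1}{8}I$, whereas the paper reapplies matrix Cauchy--Schwarz with $c_i=\delta_i^2$; both yield the stated constant.
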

\begin{proof}
Let $\sml \defeq \{i: \lambda_i \leq 4K \}.$ By Lemma \ref{lemma:nubound},
\[
\sum_{(i, j) \in S} (\rcovstep_{ij})^2 \nu(\lambda_i,\lambda_j) \leq \sum_{(i, j) \in S} (\rcovstep_{ij})^2 \frac{e^{-\lambda_i} + e^{-\lambda_j}}{2}  = \Tr[(\rcovstep_{\sml \sml})^2 \exp(-\Lambda_{\sml \sml})].
\]
The equality follows after expanding the trace product as
\[
    \Tr[(\rcovstep_{\sml \sml})^2 \exp(-\Lambda_{\sml \sml})]= \sum_{i \in \sml} e^{-\lambda_i} ((\rcovstep_{\sml \sml})^2)_{ii} = \sum_{i \in \sml} \sum_{j \in \sml} (\rcovstep_{ij})^2 e^{-\lambda_i}
\]
and observing that the same sum with $e^{-\lambda_j}$ can be obtained by switching $i$ and $j$ in the sum. Similarly by applying the above reasoning to $\rcovsq$ and combining with AM-GM, 
\[
\sum_{i, j \in S} | \rcovstep_{ij} \rcovsq_{ij} | \nu(\lambda_i,\lambda_j)    \leq \frac{1}{2} \Tr[( (\rcovstep_{\sml \sml})^2 + (\rcovsq_{\sml \sml})^2) \exp(-\Lambda_{\sml \sml})].
\]
To conclude the proofs, we will show $(\rcovstep_{\sml \sml})^2 \preceq \frac{1}{2} \rcovsq_{\sml \sml}$ and $(\rcovsq_{\sml \sml})^2 \preceq \frac{1}{2} \rcovsq_{\sml \sml}$: the claims of the lemma follow by plugging these into the above bounds. For the first claim, we apply Lemma \ref{lem:matrixcs} with $A_i \defeq \alpha x_i (Q^\top C_i Q)_{\sml \sml}$, $c_i \defeq \delta_i$. Then,
\begin{equation*}
\begin{aligned}
\sum_{i=1}^d  A_i &= \alpha \sum_{i=1}^d x_i \left(Q^\top C_i Q\right)_{\sml \sml} \\
&\preceq \alpha \left(Q^\top (\Psi + tB) Q \right)_{\sml \sml} \\
&= \alpha \Lambda_{\sml \sml} \leq \frac{1}{2} I_{\sml \sml}
\end{aligned}
\end{equation*}
as $\alpha = \frac{1}{8K}$ and $\Lambda_{\sml \sml} \preceq 4K I_{\sml \sml}$ by definition. Thus Lemma \ref{lem:matrixcs} guarantees
\[
(\rcovstep_{\sml \sml})^2 = \left( \sum_{i=1}^d \delta_i A_i  \right)^2 \preceq \frac{1}{2} \sum_{i=1}^d \delta_i^2 A_i =  \frac{1}{2} \rcovsq.
\]
By instead choosing $c_i \defeq \delta_i^2$, the same argument gives us
\[
(\rcovsq_{\sml \sml})^2 \preceq \frac{1}{2} \sum_{i=1}^d \delta_i^4 A_i \preceq \frac{1}{2} \sum_{i=1}^d \delta_i^2 A_i \preceq \frac{1}{2} \rcovsq
\] 
where in the second inequality we used $0 \leq \delta_i \leq 1$. The claim follows.
\end{proof}

\subsubsection{Covering potential analysis}

With Lemmas \ref{lemma:largebucket}, \ref{lemma:gapbucket}, and \ref{lemma:smallbucket} in hand, we prove Lemma \ref{lemma:covering}, restated here for convenience.
\restateCov*

\begin{proof}
We begin by proving $-\iota'(t) \leq \epsilon \iota(t) + (nd\rho)^{-59}\te{-\Psi}.$ Note that by Lemma \ref{lemma:helper} $\covsq \preceq \frac{1}{2} \covstep$, and so 
\[
\iota(t) = \trprod{(\covstep - (1-t) \covsq)}{\exp(-\Psi - t \covstep)} \ge  \frac{1}{2} \trprod{\covstep}{\exp(-\Psi - t \covstep)}
\]
for $t \in [0,1]$. Now, recalling Lemma~\ref{lemma:iotaderiv}, 
\begin{align*}
    -\iota'(t) &=  \sum_{i, j \in [n_c]} (\rcovstep_{ij})^2 \nu(\lambda_i,\lambda_j) - (1 - t) \sum_{i, j \in [n_c]} \rcovstep_{ij} \rcovsq_{ij} \nu(\lambda_i,\lambda_j)  - \trprod{\rcovsq}{e^{-\Lambda}} \\
    &\leq  \sum_{i, j \in [n_c]} (\rcovstep_{ij})^2 \nu(\lambda_i,\lambda_j) + \left| \sum_{i, j \in [n_c]} \rcovstep_{ij} \rcovsq_{ij} \nu(\lambda_i,\lambda_j)\right|  - \trprod{\rcovsq}{e^{-\Lambda}}
\end{align*}
We now use Lemmas \ref{lemma:largebucket}, \ref{lemma:gapbucket}, and \ref{lemma:smallbucket} to bound the right hand side. The first sum becomes
\begin{align*}
    \sum_{i, j \in [n_c]} (\rcovstep_{ij})^2 \nu(\lambda_i,\lambda_j)  &=  \sum_{(i, j) \in S} (\rcovstep_{ij})^2 \nu(\lambda_i,\lambda_j) + \sum_{(i, j) \in \Gamma} (\rcovstep_{ij})^2 \nu(\lambda_i,\lambda_j) + \sum_{(i, j) \in L} (\rcovstep_{ij})^2 \nu(\lambda_i,\lambda_j) \\
    &\leq \frac{1}{2} \trprod{\rcovsq}{e^{-\Lambda}} + \alpha \trprod{\rcovstep}{e^{-\Lambda}} + (nd\rho)^{-60}\te{-\Psi}.
\end{align*} 
Similarly, 
\begin{align*}
\left| \sum_{i, j} \rcovstep_{ij} \rcovsq_{ij} \nu(\lambda_i,\lambda_j)\right| &\leq \left| \sum_{i, j \in S} \rcovstep_{ij} \rcovsq_{ij} \nu(\lambda_i,\lambda_j)\right| + \left| \sum_{i, j \in \Gamma} \rcovstep_{ij} \rcovsq_{ij} \nu(\lambda_i,\lambda_j)\right| + \left| \sum_{i, j \in L} \rcovstep_{ij} \rcovsq_{ij} \nu(\lambda_i,\lambda_j)\right| \\ 
&\leq \frac{1}{2} \trprod{\rcovsq}{e^{-\Lambda}} + \alpha \trprod{\rcovstep}{e^{-\Lambda}} + (nd\rho)^{-60}\te{-\Psi}.
\end{align*}
Plugging this into the bound on $-\iota'(t)$, we obtain (recalling $\alpha = 1/(8K) \le \eps/4$)
\begin{align*}
-\iota'(t) &\leq \trprod{\rcovsq}{e^{-\Lambda}} + 2 \alpha \trprod{\rcovstep}{e^{-\Lambda}} + 2 (nd\rho)^{-60}\te{-\Psi} - \trprod{\rcovsq}{e^{-\Lambda}} \\
&\leq  \frac{\epsilon}{2} \trprod{\covstep}{\exp(-\Psi- t\covstep)} + (nd\rho)^{-15}\te{-\Psi} \\
& \leq \epsilon \iota(t) + (nd\rho)^{-15}\te{-\Psi},
\end{align*}
as desired. Now, we apply Lemma \ref{lem:addgronwall}, the additive-error Gr\"onwall lemma, to $\iota$:
\[
\iota(1) \geq (1-\epsilon) \iota(0) - (nd\rho)^{-15}\te{-\Psi}.
\]
This implies
\begin{align*}
\te{-\Psi-\covstep} &\leq \te{-\Psi} - \iota(1) \\
&\leq \te{-\Psi} - (1-\epsilon) \iota(0) + (nd\rho)^{-15}\te{-\Psi} \\
&= \te{-\Psi} -(1-\epsilon) (\trprod{\covstep}{\exp(-\Psi)} - \trprod{\covsq}{\exp(-\Psi)}) + (nd\rho)^{-15}\te{-\Psi}.
\end{align*}
Finally, thus far we have shown
\[\Tr\exp(-\Psi - B) \le \te{-\Psi}\left(1 - (1 - \eps)\trprod{\covstep - \covsq}{\frac{\exp(-\Psi)}{\te{-\Psi}}} + (nd\rho)^{-15}\right).\]
Taking a logarithm of both sides, and using $\log(1 - x) < -x$ for $0 < x < 1$, concludes the proof.
\end{proof}

\subsection{Proof of Theorem~\ref{thm:mainclaimsdp}}

We recall that, in terms of the notation of this section, Theorem~\ref{thm:mainclaimsdp} asks us to show that
\[(1 - \eps)\log \Tr \exp(\Phi + A) + \log\Tr\exp(-\Psi - B) \le (1 - \eps)\log\Tr\exp(\Phi) + \log\Tr\exp(-\Psi) + (nd\rho)^{-15}.  \]
Combining a $1 - \eps$ multiple of Lemma~\ref{lemma:packing} and Lemma~\ref{lemma:covering} exactly shows this, up to proving that
\[\inprod{A+ G}{\frac{\exp(\Phi)}{\te{\Phi}}} - \inprod{B - H}{\frac{\exp(-\Psi)}{\te{-\Psi}}} \le 0.\]
The proof of this follows identically to the proof of Theorem~\ref{thm:mainclaimlp}, i.e. \eqref{eq:nonnegativelp}, \eqref{eq:nonnegativelp2} and onwards, where we recall the form of the gradient updates implies that this nonnegativity statement is equivalent to 
\[\alpha \sum_{i\in[d]} \delta_i x_i \left((1+ \delta_i)\pgradplaini - (1-\delta_i) \cgradplaini\right) \le 0,\]
and by Algorithm~\ref{alg:sdpalg}, for each $i$ either $\delta_i = 0$ or 
\begin{equation}\label{eq:keyclaim}(1+ \delta_i)\pgradplaini - (1-\delta_i) \cgradplaini = \frac{3 \pgradplaini \cgradplaini - (\pgradplaini)^2}{2\cgradplaini} - \frac{\pgradplaini + \cgradplaini}{2} = \frac{(\pgradplaini - \cgradplaini)^2}{2\cgradplaini} \leq 0.\end{equation}

\section{Convergence Analysis}
\label{sec:cleanup}

\subsection{Proofs of correctness} 
\label{ssec:correctness}
We closely follow the convergence analysis of \cite{MahoneyRWZ16}. In Lemma \ref{lem-sdp-infeas} we obtain a general certificate of infeasibility \eqref{eq:infeasible}, which we use to prove correctness of one termination condition in Lemma \ref{lem-coord-update}. Finally, using the approximate potential invariant due to Theorem~\ref{thm:mainclaimsdp}, Lemma~\ref{lem:terminationimpliesfeasibility} shows that the other termination condition certifies feasibility \eqref{eq:feasible}. Throughout this section, we define the potential function $f$ as in \eqref{eq:sdppotential}.

\begin{lemma}[Certificate of infeasibility]\label{lem-sdp-infeas} If there exist $Y \succeq 0$ and $Z \succeq 0$ such that \[(1 - \epsilon) \frac{\trprod{C_i}{Z}}{\Tr Z} < \frac{\trprod{P_i}{Y}}{\Tr Y}\; \forall i \in [d], \numberthis\label{ineq-sdp-infeas}\] then the SDP is infeasible.
\end{lemma}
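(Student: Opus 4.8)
The plan is to argue by contradiction, in the style of weak duality: assume the constraints in \eqref{eq:infeasible} are feasible, so there is some $x \ge 0$ with $\sum_i x_i C_i \succeq I_n$ and $\sum_i x_i P_i \preceq (1-\eps) I_n$, and derive a contradiction from \eqref{ineq-sdp-infeas}. First I would observe that such an $x$ cannot be the zero vector, since $\sum_i x_i C_i = 0 \not\succeq I_n$; hence $x_i > 0$ for at least one index $i \in [d]$. I also note that $Y, Z$ are nonzero (otherwise the ratios in \eqref{ineq-sdp-infeas} are undefined), so $\Tr Y, \Tr Z > 0$.

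Next I would turn the two Loewner inequalities into scalar ones by pairing against $Z$ and $Y$. Since $Z \succeq 0$ and $\sum_i x_i C_i - I_n \succeq 0$, taking the trace product and using the fact from Section~\ref{ssec:facts} that the trace product of two positive semidefinite matrices is nonnegative gives $\sum_i x_i \inprod{C_i}{Z} \ge \inprod{I_n}{Z} = \Tr Z$, i.e. $\frac{\sum_i x_i \inprod{C_i}{Z}}{\Tr Z} \ge 1$. Symmetrically, pairing $(1-\eps) I_n - \sum_i x_i P_i \succeq 0$ with $Y \succeq 0$ gives $\sum_i x_i \inprod{P_i}{Y} \le (1-\eps)\Tr Y$, i.e. $\frac{\sum_i x_i \inprod{P_i}{Y}}{\Tr Y} \le 1-\eps$.

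Finally I would combine these with \eqref{ineq-sdp-infeas}. Multiplying the strict inequality \eqref{ineq-sdp-infeas} by $x_i \ge 0$ and summing over $i \in [d]$ yields a strict inequality overall, since at least one $x_i$ is positive (for the indices with $x_i = 0$ both sides contribute zero); this gives $(1-\eps)\frac{\sum_i x_i \inprod{C_i}{Z}}{\Tr Z} < \frac{\sum_i x_i \inprod{P_i}{Y}}{\Tr Y}$. Chaining this with the two bounds from the previous paragraph produces $1 - \eps \le (1-\eps)\frac{\sum_i x_i \inprod{C_i}{Z}}{\Tr Z} < \frac{\sum_i x_i \inprod{P_i}{Y}}{\Tr Y} \le 1-\eps$, a contradiction. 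Hence no feasible $x$ exists, and the SDP is infeasible.

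There is essentially no hard step here; the only points requiring care are (i) ruling out $x = 0$ so that the summed inequality is genuinely strict, and (ii) the implicit nondegeneracy $Y, Z \neq 0$ needed for the normalized ratios to be well-defined. Everything else is the standard pairing of a putative primal-feasible point against the witnesses $Y$ and $Z$, together with PSD-monotonicity of the trace product.
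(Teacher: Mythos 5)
Your proof is correct and follows essentially the same route as the paper: assume a feasible $x$ for \eqref{eq:infeasible}, pair it against $Z$ and $Y$ to get $(1-\eps)\frac{\inprod{\sum_i x_i C_i}{Z}}{\Tr Z} \ge 1-\eps$ and $\frac{\inprod{\sum_i x_i P_i}{Y}}{\Tr Y} \le 1-\eps$, and combine with the $x_i$-weighted sum of \eqref{ineq-sdp-infeas} to reach a contradiction. The only difference is that you make explicit the nondegeneracy points ($x \neq 0$ so the summed inequality stays strict, and $\Tr Y, \Tr Z > 0$), which the paper leaves implicit.
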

\begin{proof} Suppose that \eqref{eq:infeasible} is feasible. Then there exists an $x\geq 0$ such that \[(1-\epsilon)  \frac{\trprod{\sum_{i\in[d]} x_i C_i}{Z}}{\Tr Z} \geq (1-\epsilon) \frac{\Tr Z}{\Tr Z} = 1-\epsilon, \] and \[ \frac{\trprod{\sum_{i\in[d]} x_i P_i}{Y}}{\Tr Y} \leq (1-\epsilon) \frac{\Tr Y}{\Tr Y} = 1 - \epsilon.\]  Combining with \eqref{ineq-sdp-infeas} obtains a contradiction. 
\end{proof}

We give a simple extension of Lemma~\ref{lem-sdp-infeas} that we will use in the later analysis.
\begin{lemma}\label{lem:averagepackinggrad}
	Suppose \eqref{eq:feasible} is feasible, and let $t_1, t_2, \ldots t_k$ be iterations in the algorithm. Then, there exists coordinate $i\in[d]$ such that \[\sum_{j \in [k]} \nabla_{P_i}^{(t_j)} \leq (1 - \epsilon) \sum_{j \in [k]} \nabla_{C_i}^{(t_j)}. \numberthis\label{existenceSmallPG}\] 
\end{lemma}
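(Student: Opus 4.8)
The plan is to argue by contradiction and reduce to the infeasibility certificate of Lemma~\ref{lem-sdp-infeas}. Suppose, toward a contradiction, that for \emph{every} coordinate $i \in [d]$ the reverse strict inequality $\sum_{j \in [k]} \nabla_{P_i}^{(t_j)} > (1-\epsilon)\sum_{j \in [k]} \nabla_{C_i}^{(t_j)}$ held. I would then build a single pair of positive semidefinite matrices $(Y,Z)$ satisfying inequality \eqref{ineq-sdp-infeas} for all $i$ simultaneously; Lemma~\ref{lem-sdp-infeas} would then certify that the constraints \eqref{eq:infeasible} are infeasible, contradicting the feasibility hypothesis and hence establishing the claim.

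The one real idea is to average the iterates visited at times $t_1,\dots,t_k$ \emph{after normalizing each by its own trace}. Concretely, I would set $Y \defeq \sum_{j \in [k]} Y^{(t_j)}/\Tr Y^{(t_j)}$ and $Z \defeq \sum_{j \in [k]} Z^{(t_j)}/\Tr Z^{(t_j)}$, where $Y^{(t)}, Z^{(t)}$ are as in Algorithm~\ref{alg:sdpalg}. Each is a sum of positive semidefinite matrices, hence positive semidefinite, and since every summand has unit trace, $\Tr Y = \Tr Z = k$. By linearity of the trace product together with the definitions $\nabla_{P_i}^{(t)} = \inprod{P_i}{Y^{(t)}/\Tr Y^{(t)}}$ and $\nabla_{C_i}^{(t)} = \inprod{C_i}{Z^{(t)}/\Tr Z^{(t)}}$, one gets $\inprod{P_i}{Y}/\Tr Y = \tfrac{1}{k}\sum_{j \in [k]} \nabla_{P_i}^{(t_j)}$ and $\inprod{C_i}{Z}/\Tr Z = \tfrac{1}{k}\sum_{j \in [k]} \nabla_{C_i}^{(t_j)}$ for each $i$. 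Dividing the assumed reverse inequalities through by $k$ turns them into exactly $(1-\epsilon)\inprod{C_i}{Z}/\Tr Z < \inprod{P_i}{Y}/\Tr Y$ for all $i \in [d]$, i.e.\ precisely \eqref{ineq-sdp-infeas}, and applying Lemma~\ref{lem-sdp-infeas} closes the argument.

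I do not expect a substantive obstacle here — as the paper notes, this is a simple extension of Lemma~\ref{lem-sdp-infeas} — so the only point that really needs care is the normalization: averaging the \emph{raw} matrix exponentials $Y^{(t_j)}$ would break the clean identification of the averaged trace products with the averaged gradients and could degrade the constant, whereas normalizing each summand to unit trace forces $\Tr Y = \Tr Z = k$ to cancel exactly, so the factor $1-\epsilon$ is preserved with no $\epsilon$-loss. (One also tacitly uses $\nabla_{C_i}^{(t_j)} \ge 0$, which is immediate from $C_i, Z^{(t_j)} \succeq 0$.) As a sanity check / self-contained alternative, one can instead pair the per-iteration gradients directly against a feasible $x^\star$: using $\sum_i x^\star_i P_i \preceq (1-\epsilon)I \preceq (1-\epsilon)\sum_i x^\star_i C_i$ and $\inprod{M}{N} \le \lmax{M}\Tr N$ for $N \succeq 0$ yields $\sum_i x^\star_i \sum_j \nabla_{P_i}^{(t_j)} \le (1-\epsilon)k \le (1-\epsilon)\sum_i x^\star_i \sum_j \nabla_{C_i}^{(t_j)}$, and the existence of the desired coordinate $i$ follows, since otherwise the $x^\star$-weighted sum over $i \in [d]$ of the reverse inequalities would be strict.
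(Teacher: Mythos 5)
Your proposal is correct and matches the paper's proof essentially verbatim: both argue by contradiction, average the trace-normalized matrices $Y^{(t_j)}/\Tr Y^{(t_j)}$ and $Z^{(t_j)}/\Tr Z^{(t_j)}$ over the chosen iterations, identify the normalized trace products of the averages with the averaged gradients, and invoke Lemma~\ref{lem-sdp-infeas}. The omission of the $1/k$ factor is immaterial since it cancels in the ratios, and your normalization remark and the alternative argument via a feasible $x^\star$ are both sound.
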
 
\begin{proof}
	We proceed by contradiction; suppose \eqref{existenceSmallPG} fails for all $i \in [d]$. Recall that for any $t$,
	\[\pGradi = \frac{\trprod{P_i}{  Y^{(t)}}}{\Tr Y^{(t)}},\; \cGradi = \frac{\trprod{C_i}{Z^{(t)}}}{\Tr Z^{(t)}}.\]
	Now, define
	\[\bar{Y} \defeq \frac{1}{k}\sum_{j \in [k]} \frac{Y^{(t_j)}}{\Tr Y^{(t_j)}},\; \bar{Z} \defeq \frac{1}{k}\sum_{j \in [k]} \frac{Z^{(t_j)}}{\Tr Z^{(t_j)}}.\]
	and note $\Tr \bar{Y} = \Tr \bar{Z} = 1$. Then by this definition,
	\[\frac{1}{k} \sum_{j \in [k]} \nabla_{P_i}^{(t_j)} = \frac{\inprod{P_i}{\bar{Y}}}{\Tr \bar{Y}},\; \frac{1}{k} \sum_{j \in [k]} \nabla_{C_i}^{(t_j)} = \frac{\inprod{C_i}{\bar{Z}}}{\Tr \bar{Z}}.\]
	Therefore, applying the failure of \eqref{existenceSmallPG} for all $i \in [d]$ yields
	\[(1- \epsilon) \frac{\trprod{C_i}{\bar{Z}}}{\Tr \bar{Z}} < \frac{\trprod{P_i}{\bar{Y}}}{\Tr \bar{ Y}} \; \forall i \in [d],\] which by Lemma~\ref{lem-sdp-infeas} certifies infeasibility, a contradiction. 
\end{proof}

\begin{lemma}[Empty update implies infeasibility]\label{lem-coord-update} If \eqref{eq:feasible} is feasible, then for all $x\geq 0$, the set $W = \{i: \pGradplaini \leq \left(1-\frac{\eps}{2}\right) \cGradplaini\} \neq \emptyset$.
\end{lemma}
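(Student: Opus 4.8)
The plan is to prove the contrapositive, which is exactly the assertion in the lemma's title: an empty update set at \emph{any} nonnegative point produces, via Lemma~\ref{lem-sdp-infeas}, a certificate that the constraints of \eqref{eq:infeasible} are infeasible. No new machinery is needed. First I would fix an arbitrary $x \geq 0$ and set $Y \defeq \exp\big(\sum_i x_i P_i\big)$ and $Z \defeq \exp\big(-\sum_i x_i C_i\big)$, both of which are positive definite; by construction these are precisely the matrices the algorithm would form at $x$, so $\pGradplaini = \inprod{P_i}{Y}/\Tr Y$ and $\cGradplaini = \inprod{C_i}{Z}/\Tr Z$ for every $i \in [d]$. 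Since $C_i \in \PSDSet^{n_c}$ and $Z \succeq 0$, each $\cGradplaini$ is nonnegative.

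The core step is then immediate. Suppose toward a contradiction that $W = \emptyset$ at this $x$, i.e.\ $\pGradplaini > \big(1-\tfrac{\eps}{2}\big)\cGradplaini$ for all $i \in [d]$. Using $1-\tfrac{\eps}{2} \geq 1-\eps$ together with $\cGradplaini \geq 0$, this gives $(1-\eps)\cGradplaini \leq \big(1-\tfrac{\eps}{2}\big)\cGradplaini < \pGradplaini$, that is,
\[(1-\eps)\frac{\inprod{C_i}{Z}}{\Tr Z} < \frac{\inprod{P_i}{Y}}{\Tr Y} \qquad \text{for all } i \in [d],\]
which is exactly hypothesis \eqref{ineq-sdp-infeas} of Lemma~\ref{lem-sdp-infeas} for the pair $(Y,Z)$ chosen above — and crucially that lemma is stated for arbitrary $Y,Z \succeq 0$, not only for algorithm iterates. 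Hence the constraints $\sum_i x_i C_i \succeq I_{n_c}$, $\sum_i x_i P_i \preceq (1-\eps) I_{n_p}$, $x \geq 0$ of \eqref{eq:infeasible} are infeasible, contradicting the hypothesis that \eqref{eq:feasible} is feasible (equivalently, that the instance lies in the branch of the decision problem where these constraints are satisfiable). Therefore $W \neq \emptyset$, and since $x \geq 0$ was arbitrary the lemma follows.

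I do not expect a real obstacle here; the argument is only a few lines once one observes that Lemma~\ref{lem-sdp-infeas} applies to any positive semidefinite $Y,Z$. The only points requiring care are (i) the passage from the threshold $1-\tfrac{\eps}{2}$ in the definition of $W$ to the factor $1-\eps$ appearing in \eqref{ineq-sdp-infeas}, which relies exactly on $\cGradplaini \geq 0$, and (ii) identifying ``\eqref{eq:feasible} is feasible'' with solvability of the constraints of \eqref{eq:infeasible}, the form in which Lemma~\ref{lem-sdp-infeas} delivers its contradiction. For completeness one can also give a self-contained variant not invoking Lemma~\ref{lem-sdp-infeas}: take $x^\star \geq 0$ with $\sum_i x^\star_i C_i \succeq I_{n_c}$ and $\sum_i x^\star_i P_i \preceq (1-\eps)I_{n_p}$, multiply each inequality $\pGradplaini > \big(1-\tfrac{\eps}{2}\big)\cGradplaini$ by $x^\star_i \geq 0$, sum over $i \in [d]$, and use $\inprod{\sum_i x^\star_i P_i}{Y} \leq (1-\eps)\Tr Y$ and $\inprod{\sum_i x^\star_i C_i}{Z} \geq \Tr Z$ to reach $1-\eps > 1-\tfrac{\eps}{2}$, a contradiction.
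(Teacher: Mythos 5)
Your proof is correct and follows essentially the same route as the paper: assume the update set is empty at some $x \geq 0$, pass from the threshold $1-\tfrac{\eps}{2}$ to $1-\eps$ using nonnegativity of the covering gradients, and invoke Lemma~\ref{lem-sdp-infeas} to contradict feasibility. The extra self-contained variant at the end is just an inlining of Lemma~\ref{lem-sdp-infeas}'s proof and adds nothing beyond the paper's argument.
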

\begin{proof}Suppose there exists an $x\geq 0$ such that for all $i$, $\pGradplaini > \left(1-\frac{\eps}{2}\right) \cGradplaini$. By definition of $\pGradplain$ and $\cGradplain$, this means we have (where $Y$, $Z$ are defined as in Algorithm~\ref{alg:sdpalg} for this particular choice of $x$) \[(1- \epsilon ) \frac{\trprod{C_i}{Z}}{\Tr Z} \le \left(1 - \frac{\eps}{2}\right) \frac{\trprod{C_i}{Z}}{\Tr Z} < \frac{\trprod{P_i}{Y}}{\Tr Y}\; \forall i \in [d].\] By invoking Lemma \ref{lem-sdp-infeas}, we obtain a contradiction. 
\end{proof}
\begin{lemma}\label{lem:sdp-pfunc-noninc} 
During the execution of Algorithm~\ref{alg:sdpalg}, we always have $f(x^{(t)}) \leq 4\log n$. 
\end{lemma}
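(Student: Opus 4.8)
The plan is to combine three ingredients: a bound on the initial potential, the approximate monotonicity of $f$ furnished by Theorem~\ref{thm:mainclaimsdp}, and a crude (polynomial) bound on the number of iterations $T$ that Algorithm~\ref{alg:sdpalg} performs before halting.

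First I would bound $f(x^{(0)})$. Since $x_i^{(0)} = (d\lmax{P_i})^{-1}$, each summand satisfies $x_i^{(0)} P_i \preceq \frac1d I_n$, so $\sum_i x_i^{(0)} P_i \preceq I_n$ and $\lmax{\sum_i x_i^{(0)} P_i} \le 1$; by Fact~\ref{fact:softmaxmin}, $\smax(\sum_i x_i^{(0)}P_i) \le 1 + \log n$. Likewise $\sum_i x_i^{(0)} C_i \succeq 0$, so $\lmin{\sum_i x_i^{(0)} C_i} \ge 0$ and Fact~\ref{fact:softmaxmin} gives $-\smin(\sum_i x_i^{(0)}C_i) \le \log n$. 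By the definition \eqref{eq:sdppotential} of $f$, this yields $f(x^{(0)}) \le (1-\eps)(1 + \log n) + \log n \le 1 + 2\log n$.

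Next, telescoping Theorem~\ref{thm:mainclaimsdp} over the iterations preceding a given $t$ gives $f(x^{(t)}) \le f(x^{(0)}) + t\,(nd\rho)^{-15}$, so it remains to see that $T\,(nd\rho)^{-15}$ is negligible. I would bound $T$ very crudely, using only the algorithm's structure. As long as the main loop runs, $\lmax{\sum_i x_i^{(t)} P_i} \le K$, and since $\sum_i x_i^{(t)} P_i \succeq x_i^{(t)} P_i \succeq 0$ this forces $x_i^{(t)}\lmax{P_i} \le K$ for every $i$ (and at most $2K$ after one further update); hence each coordinate $x_i$ increases by a total multiplicative factor at most $2dK$ over the run. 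On the other hand, on every iteration on which an update is made, $W^{(t)} \neq \emptyset$ (otherwise the algorithm returns ``Infeasible'' and halts), and each $i \in W^{(t)}$ has $\pGradi \le (1 - \frac\eps2)\cGradi$, so $\delta_i^{(t)} = \frac12(1 - \pGradi/\cGradi) \ge \frac\eps4$ and $x_i$ grows by a factor at least $1 + \frac{\alpha\eps}{4}$ on that iteration. Thus each update increases $\sum_i \log x_i^{(t)}$ by at least $\log(1 + \frac{\alpha\eps}{4}) = \Omega(\alpha\eps)$, whereas the total increase of this quantity is at most $d\log(2dK)$; with $\alpha = \frac1{8K}$ and $K = \Oh{\log(nd\rho)\eps^{-1}}$ this gives $T = \Oh{dK\log(dK)\eps^{-1}}$, which under the standing assumption $\eps^{-1} \le (dn)^3$ is bounded by a fixed polynomial in $nd\rho$, comfortably smaller than $(nd\rho)^{15}$.

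Combining, $f(x^{(t)}) \le 1 + 2\log n + (nd\rho)^{-c} \le 4\log n$ for $n \ge 2$ and a suitable constant $c>0$ (the constant $4$ is not optimized). There is no deep obstacle; the one point requiring care is that the iteration count used in the final step must be obtained \emph{independently} of the sharp $\Oh{\log^3(nd\rho)\eps^{-3}}$ bound proved later in Section~\ref{sec:cleanup}, to avoid circularity — and the crude argument above is enough precisely because the per-iteration error $(nd\rho)^{-15}$ is so small that even a wildly loose iteration bound would suffice. (Degenerate coordinates with $P_i = C_i = 0$, for which $\delta_i^{(t)}$ is only vacuously defined, can be discarded at the outset since they affect none of the matrices involved.)
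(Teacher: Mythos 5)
Your proof is correct and follows the same skeleton as the paper's: bound $f(x^{(0)})$ by $O(\log n)$ via Fact~\ref{fact:softmaxmin}, telescope Theorem~\ref{thm:mainclaimsdp}, and argue the accumulated $(nd\rho)^{-15}$ error is negligible because the iteration count is polynomially bounded. The one place you diverge is in sourcing the iteration bound: the paper simply forward-references the sharp $\Oh{\log^3(nd\rho)\eps^{-3}}$ count of Section~\ref{ssec:iteration}, whereas you derive a crude self-contained polynomial bound from the update rule ($\delta_i^{(t)} \ge \eps/4$ on every coordinate actually updated, against a total multiplicative budget of $2dK$ per coordinate). Your circularity worry is, strictly speaking, unfounded — Lemmas~\ref{lem:numphases}--\ref{lem:numSlowIters} do not invoke this lemma, so the paper's forward reference is logically sound — but your unconditional bound has the modest advantage of not routing through Lemma~\ref{lem:numSlowIters}, whose proof assumes feasibility of \eqref{eq:feasible} via Lemma~\ref{lem:averagepackinggrad}, and of keeping the argument local. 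Both yield the same conclusion.
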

\begin{proof}
Initially we have $x_i^{(0)} = \frac{1}{d \lmax{P_i}}$, so $\Phi^{(0)} = \sum_{i \in [d]} x_i^{(0)} P_i \preceq I_{n_p}$; also, $\Psi^{(0)} = \sum_{i\in [d]} x_i^{(0)} C_i \succeq 0$. By the definition of the potential function \eqref{eq:sdppotential} and Fact~\ref{fact:softmaxmin}, we obtain 
\[f(x^{(0)}) = (1-\epsilon)  \smax(\Phi^{(0)}) - \smin(\Psi^{(0)}) \leq (1 - \epsilon)\left(\log n_p + \lmax{\Phi^{(0)}}\right) + \left(\log n_c - \lmin{\Psi^{(0)}}\right) \leq 3\log n.\] 
By repeatedly applying Theorem~\ref{thm:mainclaimsdp} for each iteration, and by the iteration bound in Section~\ref{ssec:iteration} and our assumption that $\epsilon \geq {(nd)}^{-3}$, over the course of the algorithm, the total accumulated error due to the ${(nd\rho)}^{-15} $ term is at most $\log n$. This proves the claim.
\end{proof}

\begin{lemma}\label{lem:terminationimpliesfeasibility}
Termination of Algorithm~\ref{alg:sdpalg} at line 16 certifies feasibility \eqref{eq:feasible}. Specifically, $x^{(t)} \ge 0$ satisfies \[ \lmax{\Phi^{(t)}} \leq (1 + 3\eps) \lmin{\Psi^{(t)}}.\]
\end{lemma}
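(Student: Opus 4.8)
The plan is to feed the global potential bound from Lemma~\ref{lem:sdp-pfunc-noninc} through the approximation guarantees of Fact~\ref{fact:softmaxmin} and then cancel the residual $\log n$ terms using the termination threshold $K$ of the while loop. First, observe that the iterate $x^{(t)}$ returned at line 16 is itself the output of a legitimate iteration (the while-condition held for $x^{(t-1)}$), so Theorem~\ref{thm:mainclaimsdp} was applicable to the step producing it, and hence Lemma~\ref{lem:sdp-pfunc-noninc} gives
\[ (1-\eps)\,\smax\!\left(\Phi^{(t)}\right) - \smin\!\left(\Psi^{(t)}\right) = f(x^{(t)}) \le 4\log n. \]
Applying the inequalities $\smax(\Phi^{(t)}) \ge \lmax{\Phi^{(t)}}$ and $\smin(\Psi^{(t)}) \le \lmin{\Psi^{(t)}}$ from Fact~\ref{fact:softmaxmin} and rearranging yields the key estimate
\[ (1-\eps)\lmax{\Phi^{(t)}} \le \lmin{\Psi^{(t)}} + 4\log n. \]

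Next I would use that line 16 is reached only once the while-condition fails, i.e.\ when $\lmax{\Phi^{(t)}} > K$ or $\lmin{\Psi^{(t)}} > K$, where $K = 4\log(nd\rho)/\eps \ge 4\log(n)/\eps$ since $d \ge 1$ and $\rho \ge 1$. In particular one of $\lmax{\Phi^{(t)}}$, $\lmin{\Psi^{(t)}}$ exceeds $K$, so $4\log n \le \eps K$ is at most $\eps$ times that quantity. Substituting this into the key estimate and splitting into cases: if $\lmin{\Psi^{(t)}} > K$ we obtain $(1-\eps)\lmax{\Phi^{(t)}} \le (1+\eps)\lmin{\Psi^{(t)}}$, while if $\lmax{\Phi^{(t)}} > K$ we obtain $(1-2\eps)\lmax{\Phi^{(t)}} \le \lmin{\Psi^{(t)}}$. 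In either case, dividing through and invoking the elementary bounds $\frac{1+\eps}{1-\eps} \le 1+3\eps$ and $\frac{1}{1-2\eps} \le 1+3\eps$, valid for $\eps \le \frac{1}{20}$, gives $\lmax{\Phi^{(t)}} \le (1+3\eps)\lmin{\Psi^{(t)}}$. Nonnegativity of $x^{(t)}$ is immediate, as $x^{(0)} > 0$ and each coordinate update $x_i^{(t+1)} = x_i^{(t)}(1 + \alpha\delta_i^{(t)})$ multiplies by a nonnegative factor.

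I do not expect a real obstacle here; this is essentially the SDP transcription of the corresponding step in \cite{MahoneyRWZ16}, with Theorem~\ref{thm:mainclaimsdp} and Fact~\ref{fact:softmaxmin} doing all the work. The only points requiring care are bookkeeping: confirming that the accumulated $(nd\rho)^{-15}$ error tracked in Lemma~\ref{lem:sdp-pfunc-noninc} still covers the final returned iterate, checking the inequality $K \ge 4\log(n)/\eps$, and verifying the two scalar inequalities in $\eps$ under the standing assumption $\eps \le \frac{1}{20}$.
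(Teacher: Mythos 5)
Your proof is correct and follows essentially the same route as the paper: Lemma~\ref{lem:sdp-pfunc-noninc} plus Fact~\ref{fact:softmaxmin} to get $(1-\eps)\lmax{\Phi^{(t)}} \le \lmin{\Psi^{(t)}} + 4\log n$, then absorbing $4\log n \le \eps K$ into whichever eigenvalue triggered termination, with the same case split and the same scalar inequalities in $\eps$. The only (harmless) difference is that you apply the potential bound directly to the terminal iterate $x^{(t)}$, whereas the paper applies it to $x^{(t-1)}$ and transfers via $\Phi^{(t)} \preceq (1+\eps/64)\Phi^{(t-1)}$ and monotonicity of $\Psi$; your shortcut is justified by exactly the observation you make, that the last update is still a legitimate application of Theorem~\ref{thm:mainclaimsdp}.
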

\begin{proof} Nonnegativity of $x^{(t)}$ is due to $x^{(0)}\geq 0$ and monotonicity of $x^{(t)}$. Consider first termination due to $\lmax{\Phi^{(t)}} \ge K = 4\log(nd\rho)/\eps$. Since $x^{(t)} = x^{(t-1)}\circ(1 + \alpha \delta^{(t-1)})$, and $\alpha\delta^{(t - 1)} \le 1/16K$ entrywise, \[\Phi^{(t)} = \sum_{i\in [d]} x_i^{(t)} P_i \preceq \left( 1 + \frac{\epsilon}{64 \log (nd \rho)}\right) \sum_{i\in [d]} x^{(t-1)}_i P_i \preceq \left( 1 + \frac{\epsilon}{64}\right)\Phi^{(t - 1)}.\] 
The potential guarantee Lemma~\ref{lem:sdp-pfunc-noninc} and Fact~\ref{fact:softmaxmin} show
\[4\log n \ge f\left(x^{(t - 1)}\right) = (1 - \eps)\smax\left(\Phi^{(t - 1)}\right) - \smin\left(\Psi^{(t - 1)}\right) \ge (1 - \eps)\lmax{\Phi^{(t - 1)}} - \lmin{\Psi^{(t - 1)}}. \]
Finally, the conclusion of the lemma follows (recalling $\lmax{\Phi^{(t)}} \ge K$):
\begin{equation}
\label{eq:reuse}
\begin{aligned}
	\lmin{\Psi^{(t)}} \ge \lmin{\Psi^{(t - 1)}} &\geq (1- \epsilon)\lmax{\Phi^{(t-1)}}  - 4 \log n \\
	&\ge \frac{1-\epsilon}{1 + \frac{\eps}{64}}\lmax{\Phi^{(t)}} - \eps K \\
	&\ge (1 - 1.5\eps)\lmax{\Phi^{(t)}} - \eps\lmax{\Phi^{(t)}} \\
	&\ge \frac{1}{1 + 3\eps}\lmax{\Phi^{(t)}}.
\end{aligned} 
\end{equation}
Throughout we used $\eps \le \frac{1}{20}$. Next, consider termination due to $\lmin{\Psi^{(T)}} \geq K$. The chain of inequalities \eqref{eq:reuse}, stopping before the second-to-last inequality, concluded
\[\lmin{\Psi^{(t)}} \ge (1 - 1.5\eps)\lmax{\Phi^{(t)}} - \eps K.\] 
By the termination condition, this implies the desired
\[\lmax{\Phi^{(t)}} \le \frac{1+ \eps}{1 - 1.5\eps}\lmin{\Psi^{(t)}} \le (1 + 3\eps)\lmin{\Psi^{(t)}}. \]
\end{proof}
Finally, we remark that the guarantee of Lemma~\ref{lem:terminationimpliesfeasibility} is off by a factor of 3 from the desired guarantee. By setting $\eps \leftarrow \eps/3$ in the parameters, the feasibility check is correct, and whenever the algorithm returns ``infeasible'' for the smaller value of $\eps/3$, it is clear that \eqref{eq:infeasible} is also infeasible for $\eps$. This will affect the runtime only by constant factors.

\subsection{Iteration bound}
\label{ssec:iteration}
We now bound the iteration count of the algorithm. Our analysis follows the approach of \cite{Young01, MahoneyRWZ16} by defining a \emph{phase} as a set of iterations $t$ satisfying, for some integer $s$, \[ 2^s \leq \frac{\Tr Y^{(t)}}{\Tr Z^{(t)}}\leq 2^{s+1}.\]
By monotonicity of $x^{(t)}$ and as $\{P_i\}$ and $\{C_i\}$ are all positive semidefinite, monotonicity of $\Tr\exp$ implies $\frac{\Tr Y}{\Tr Z} = \frac{\Tr\exp(\Phi)}{\Tr\exp(-\Psi)}$ is non-decreasing, and therefore a phase is made up of consecutive iterations. 
\begin{lemma}\label{lem:numphases}
The total number of phases in Algorithm~\ref{alg:sdpalg} is $\Oh{\log (nd\rho)/\epsilon}$.
\end{lemma}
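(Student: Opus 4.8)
The plan is to exploit the monotonicity of $\Tr Y^{(t)}/\Tr Z^{(t)} = \Tr\exp(\Phi^{(t)})/\Tr\exp(-\Psi^{(t)})$ established in the paragraph preceding the lemma. Since the phases are indexed by the integer $s = \lfloor \log_2(\Tr Y^{(t)}/\Tr Z^{(t)}) \rfloor$ and this index is non-decreasing along the iterations, the set of phase indices that actually occur is contained in $\{\lfloor a \rfloor, \dots, \lfloor b \rfloor\}$, where $a$ and $b$ are the smallest and largest values of $\log_2(\Tr Y^{(t)}/\Tr Z^{(t)})$ attained. Hence the number of phases is at most $2 + (b-a)$, and it suffices to lower bound this quantity at initialization and upper bound it at every iterate.

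First I would rewrite $\log(\Tr Y^{(t)}/\Tr Z^{(t)}) = \smax(\Phi^{(t)}) + \smin(\Psi^{(t)})$, using $Y^{(t)} = \exp(\Phi^{(t)})$, $Z^{(t)} = \exp(-\Psi^{(t)})$ and the definitions of $\smax, \smin$. At $t=0$ we have $\Phi^{(0)} \succeq 0$, so $\Tr Y^{(0)} \ge n$, and $\Psi^{(0)} \succeq 0$, so $\Tr Z^{(0)} \le n$; hence $\log_2(\Tr Y^{(0)}/\Tr Z^{(0)}) \ge 0$, giving $a \ge 0$. For the upper bound I would first dispose of the trivial case where the while loop of Algorithm~\ref{alg:sdpalg} never executes (then there are no iterations, hence at most one phase). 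Otherwise every iterate $t$ appearing in the loop satisfies $\lmax{\Phi^{(t)}} \le K$ and $\lmin{\Psi^{(t)}} \le K$ at the moment the loop condition is tested, with the possible exception of the final iterate, which one controls via the one-step stability already used in the proof of Lemma~\ref{lem:terminationimpliesfeasibility}: a single update scales each $x_i$ by a factor at most $1 + \alpha/2 \le 1 + \epsilon/64$, so $\Phi^{(t)} \preceq (1+\epsilon/64)\Phi^{(t-1)}$ and $\Psi^{(t)} \preceq (1+\epsilon/64)\Psi^{(t-1)}$, whence $\lmax{\Phi^{(t)}} \le 2K$ and $\lmin{\Psi^{(t)}} \le 2K$ for all iterates including the last. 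By Fact~\ref{fact:softmaxmin} this yields $\smax(\Phi^{(t)}) \le 2K + \log n$ and $\smin(\Psi^{(t)}) \le \lmin{\Psi^{(t)}} \le 2K$, so $\log(\Tr Y^{(t)}/\Tr Z^{(t)}) \le 4K + \log n$ throughout, i.e. $b \le (4K+\log n)/\ln 2$.

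Combining the two bounds, the number of distinct phases is at most $(4K + \log n)/\ln 2 + 2 = \Oh{K + \log n}$, and since $K = 4\log(nd\rho)/\epsilon$ and $\log n \le K$ for $\epsilon \le \tfrac{1}{20}$, this is $\Oh{\log(nd\rho)/\epsilon}$, proving Lemma~\ref{lem:numphases}. I do not expect any genuine obstacle here: the argument is entirely elementary. The only point meriting a moment's care is that the initial ratio $\Tr Y^{(0)}/\Tr Z^{(0)}$ need not be $\Theta(1)$ — in the worst case $\Psi^{(0)}$ can have eigenvalues as large as $\rho$ — but by monotonicity a large initial value only decreases the phase count, so the estimate is governed by the termination threshold $K$ rather than directly by $\rho$, and in the nontrivial case where the loop executes, the very first iterate already obeys $\lmin{\Psi^{(0)}} \le K$, so no inconsistency arises. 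This lemma is the easy ``outer loop'' half of the iteration-count analysis; the substantive work lies in bounding the number of iterations within a single phase.
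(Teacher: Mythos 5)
Your proposal is correct and follows essentially the same route as the paper: lower bound the initial ratio $\Tr Y^{(0)}/\Tr Z^{(0)} \ge 1$ via $\Phi^{(0)}, \Psi^{(0)} \succeq 0$, upper bound the ratio by $n\exp(O(K))$ using the loop condition $\lmax{\Phi^{(t)}} \le K$, $\lmin{\Psi^{(t)}} \le K$, and conclude via monotonicity and doubling that there are $O(K) = O(\log(nd\rho)/\eps)$ phases. Your extra care about the final iterate and the base of the logarithm is harmless but not needed, since phases only count iterations on which the loop body executes.
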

\begin{proof} If Algorithm~\ref{alg:sdpalg} has not terminated, then $\lmax{\Phi^{(t)}} \le K$ and $\lmin{\Psi^{(t)}} \le K$. Therefore, \[\frac{\Tr Y^{(t)}}{\Tr Z^{(t)}} = \frac{\Tr \exp(\Phi^{(t)})}{\Tr \exp(-\Psi^{(t)})} \leq \frac{n \exp(K)}{\exp(-K)} \leq n\exp(2K).\] Since each phase doubles the value of this ratio, and initially $Y^{(0)} \succeq 0$ and $Z^{(0)}\succeq 0$ we have \[\frac{\Tr Y^{(0)}}{\Tr Z^{(0)}} \geq 1,\] the above final bound implies that the number of phases is $\Oh{K}  = \Oh{\log (nd\rho)/\epsilon}$, as claimed. 
\end{proof}
To bound the number of iterations, we divide them into two types that we term ``fast'' and ``slow''\footnote{We use the terms ``fast/slow'' analogously to ``bad/good'' in \cite{MahoneyRWZ16}, as we believe it is less ambiguous.}. 
\begin{definition}
Iteration $t$ is ``slow'' if $\pGradi/\cGradi > \frac{1}{3}$ for all $i\in [d]$. Otherwise, $t$ is ``fast''. 
\end{definition}
This naming convention is because the update rule of Algorithm~\ref{alg:sdpalg} implies at least one coordinate increases by a constant factor on any fast iteration, allowing for a simple bound.
\begin{lemma}
\label{lem:numfastiters}
In a single phase, the number of fast iterations is at most $\Oh{\log^2(nd\rho)/\epsilon}$. 
\end{lemma}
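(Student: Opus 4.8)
The plan is to charge each fast iteration to a coordinate whose $x$-entry grows by a fixed constant factor, and to bound the number of such charges within a phase by tracking a quantity that can change by at most a constant factor over the phase. First I would make the charging precise: fix a fast iteration $t$; by definition there is a coordinate $i$ with $\pGradi \le \tfrac13 \cGradi$, so $i \in W^{(t)}$ and $\delta_i^{(t)} = \tfrac12\bigl(1 - \pGradi/\cGradi\bigr) \ge \tfrac13$, whence $x_i^{(t+1)} = x_i^{(t)}\bigl(1 + \alpha\delta_i^{(t)}\bigr) \ge \bigl(1 + \tfrac{\alpha}{3}\bigr)x_i^{(t)}$. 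Call such an $i$ the \emph{witness} of iteration $t$. For the phase-controlled quantity I would use $R^{(t)} \defeq \Tr Y^{(t)}/\Tr Z^{(t)} = \te{\Phi^{(t)}}/\te{-\Psi^{(t)}}$: since $x^{(t)}$ is coordinatewise monotone and every $P_i, C_i \succeq 0$, monotonicity of $\Tr\exp$ makes $\Tr Y^{(t)}$ nondecreasing and $\Tr Z^{(t)}$ nonincreasing, so $R^{(t)}$ is nondecreasing, and by the definition of a phase $\log R^{(t)}$ increases by at most $\log 2$ across the iterations of one phase.

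Next I would lower bound the per-iteration growth of $\log R^{(t)}$. For the packing term, convexity of $M \mapsto \log\te{M}$ gives $\log\te{\Phi^{(t)} + A} - \log\te{\Phi^{(t)}} \ge \trprod{A}{\exp(\Phi^{(t)})/\te{\Phi^{(t)}}}$, which after substituting $A = \alpha\sum_i \delta_i^{(t)}x_i^{(t)}P_i$ and $\pGradi = \trprod{P_i}{\exp(\Phi^{(t)})/\te{\Phi^{(t)}}}$ equals $\alpha\sum_{i \in W^{(t)}} \delta_i^{(t)}x_i^{(t)}\pGradi$; for the covering term, Lemma~\ref{lemma:covering} gives $\log\te{-\Psi^{(t)}} - \log\te{-\Psi^{(t)} - B} \ge (1-\eps)\alpha\sum_{i\in W^{(t)}}\delta_i^{(t)}\bigl(1 - \delta_i^{(t)}\bigr)x_i^{(t)}\cGradi - (nd\rho)^{-15}$. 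Writing $r_i \defeq \pGradi/\cGradi \in [0, 1 - \tfrac{\eps}{2})$ for $i \in W^{(t)}$ and using $\delta_i^{(t)} = \tfrac12(1-r_i)$, the two sums combine into $\tfrac{\alpha}{4}\sum_{i\in W^{(t)}} x_i^{(t)}\cGradi\bigl[(1-\eps) + 2r_i - (3-\eps)r_i^2\bigr]$, and the bracket is nonnegative on $[0,1]$ (its larger root is exactly $1$) and at least $1-\eps \ge \tfrac12$ when $r_i \le \tfrac13$. Discarding every term except that of the witness $i$ therefore yields
\[ \log R^{(t+1)} - \log R^{(t)} \ \ge\ \tfrac{\alpha}{8}\, x_i^{(t)}\cGradi - (nd\rho)^{-15}. \]
Summing this over the fast iterations of a single phase — there are at most polynomially many iterations in total, so the accumulated $(nd\rho)^{-15}$ error is negligible, and $\log R$ increases by at most $\log 2$ over the phase — gives $\sum_{t \text{ fast in the phase}} x_i^{(t)}\,\nabla_{C_i}^{(t)} = \Oh{1/\alpha} = \Oh{K}$, where $i = i(t)$ denotes the witness of iteration $t$.

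Finally, it remains to argue that the witness quantity $x_i^{(t)}\cGradi$ is not too small whenever $i$ is a witness: I would show $x_i^{(t)}\cGradi = \Omega\bigl(1/\log(nd\rho)\bigr)$ on any fast iteration, using the initialization $x_i^{(0)} = 1/(d\lmax{P_i})$, monotonicity of $x^{(t)}$, the entrywise bound of Lemma~\ref{lem:entrybound}, and the definition of the width $\rho$. Plugging this into the displayed sum bounds the number of fast iterations in a phase by $\Oh{K\log(nd\rho)} = \Oh{\log^2(nd\rho)/\eps}$ (recall $K = 4\log(nd\rho)/\eps$ and $\alpha = 1/(8K)$). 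The main obstacle is precisely this last step: certifying a lower bound on the product of the (possibly very small) entry $x_i^{(t)}$ with the covering gradient $\cGradi = \trprod{C_i}{\exp(-\Psi^{(t)})/\te{-\Psi^{(t)}}}$, i.e. showing that $x_i^{(t)}C_i$ retains a non-negligible overlap with the Gibbs state $\exp(-\Psi^{(t)})/\te{-\Psi^{(t)}}$. This is exactly where a polylogarithmic rather than purely logarithmic dependence on the width $\rho$ must enter the iteration count.
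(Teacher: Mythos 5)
Your charging scheme breaks down exactly at the step you flag as ``the main obstacle,'' and that obstacle is not surmountable: the claim $x_i^{(t)}\cGradi = \Omega\bigl(1/\log(nd\rho)\bigr)$ for a witness $i$ of a fast iteration is false in general. The only constraint a fast iteration imposes is $\pGradi \le \tfrac13\cGradi$, and $\pGradi$ itself can be as small as roughly $e^{-2K}\Tr[P_i]/n$ with $K = 4\log(nd\rho)/\eps$; correspondingly $\cGradi = \trprod{C_i}{\exp(-\Psi^{(t)})/\te{-\Psi^{(t)}}}$ can be exponentially small in $K$ whenever $C_i$ is supported on a subspace where $\Psi^{(t)}$ already has eigenvalues near $2K$ (while some other direction keeps $\lmin{\Psi^{(t)}} < K$ and the algorithm running). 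In that regime $x_i^{(t)}\cGradi \lesssim K\rho\, e^{-K} = K\rho\,(nd\rho)^{-4/\eps}$, so the per-iteration increment of $\log R^{(t)}$ contributed by the witness is negligible and the factor-$2$ budget on $R^{(t)}$ within a phase yields no bound on the number of fast iterations. Your steps 1--4 are correct as computations, but they cannot be closed into a counting argument this way. (Note also that the weaker observation you do establish --- each fast iteration has \emph{some} coordinate growing by $(1+\alpha/3)$ --- only bounds the total number of fast iterations by $d\cdot \Oh{\log(Kd)/\alpha}$ over the whole run, which carries a linear dependence on $d$.)

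The argument the paper uses avoids any lower bound on gradient magnitudes. Fix the \emph{last} fast iteration $t$ of the phase and its witness $i$, and write
\[\frac{\pgradplaini^{(t')}}{\cgradplaini^{(t')}} = \frac{\trprod{P_i}{Y^{(t')}}}{\trprod{C_i}{Z^{(t')}}}\cdot\frac{\Tr Z^{(t')}}{\Tr Y^{(t')}}.\]
The first factor is monotone nondecreasing in $t'$ (numerator nondecreasing, denominator nonincreasing, by monotonicity of the iterates), and the second factor changes by at most a factor of $2$ across the phase by the definition of a phase. Hence every earlier iteration $t' < t$ of the phase has $\nabla_{P_i}^{(t')}/\nabla_{C_i}^{(t')} \le 2/3$, so $\delta_i^{(t')} \ge 1/6$ and $x_i$ is multiplied by at least $1 + \alpha/6$ at \emph{every} iteration of the phase preceding $t$ --- not just the fast ones, and always for the same coordinate $i$. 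Since $x_i^{(0)} = 1/(d\lmax{P_i})$ and $x_i^{(t)} \ge K/\lmax{P_i}$ forces termination via $\lmax{\Phi^{(t)}} \ge K$, the number of iterations of the phase up to $t$ is at most $\Oh{\log(Kd)/\alpha} = \Oh{\log^2(nd\rho)/\eps}$, which bounds the number of fast iterations in the phase. If you want to salvage your write-up, replace steps 2--5 with this single-coordinate growth argument; the $\log R$ accounting is better suited to bounding the number of \emph{phases} (Lemma~\ref{lem:numphases}) than the iterations within one.
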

\begin{proof}
Suppose iteration $t$ of the algorithm is fast, i.e. there is some $i\in [d]$ for which \[\frac{\pGradi}{\cGradi} = \frac{\trprod{P_i}{Y^{(t)}}}{\Tr Y^{(t)}} \cdot \frac{\Tr Z^{(t)}}{\trprod{C_i}{Z^{(t)}}} \leq 1/3.\]
Recalling that a phase is a sequence of consecutive iterations where $\frac{\Tr Y^{(t)}}{\Tr Z^{(t)}}$ changes by at most a factor of two, and the quantity $\frac{\trprod{P_i}{Y^{(t)}}}{\trprod{C_i}{Z^{(t)}}}$ is monotonic, we conclude that the iterations $t^\prime < t$ in the same phase have $\mathbf{\nabla}_{P_i}^{(t^\prime)}/\mathbf{\nabla}_{C_i}^{(t^\prime)} \leq 2/3$. This implies, from the definition of $\delta$ in the algorithm, that \[\delta_i^{(t^\prime)} = \frac{1}{2}\left( 1 - \frac{\mathbf{\nabla}_{P_i}^{(t^\prime)}}{\mathbf{\nabla}_{C_i}^{(t^\prime)}}\right) \geq 1/6\]
for all these iterations. Thus, the algorithm updates $x_i$ multiplicatively by a factor of at least 
\[1  + \frac{\alpha}{6} \ge \exp\left(\frac{\eps}{200\log(nd\rho)}\right)\] 
for all these fast iterations of the phase. Note that whenever $x_i^{(t)} \ge \frac{K}{\lambda_{\max}(P_i)}$,
\[\lmax{\sum_{i \in [d]} x_i^{(t)} P_i} \ge \lmax{x_i^{(t)} P_i} \ge K,\]
i.e. the algorithm terminates. At initialization, $x_i^{(0)} = \frac{1}{d \lmax{P_i}}$. Thus, the number of fast iterations in this phase is bounded by the desired
\[\frac{200\log(nd\rho)}{\eps}\log (Kd) = O\left(\frac{\log^2(nd\rho)}{\eps}\right).\]
\end{proof}
The above lemma shows that after the first $\Oh{\log^2(nd\rho)/\epsilon}$ iterations of a phase, all subsequent iterations in the phase must be slow, the number of which we now bound. 
\begin{lemma}\label{lem:numSlowIters}
In a single phase, the number of slow iterations is at most $\Oh{\log^2 (nd\rho) /\epsilon^2}$. 
\end{lemma}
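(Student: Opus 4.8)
The plan is to carry over the per-phase analysis of \cite{MahoneyRWZ16}, using the infeasibility certificate of Lemma~\ref{lem:averagepackinggrad}. By Lemma~\ref{lem:numphases} there are $O(\log(nd\rho)/\eps)$ phases, and by Lemma~\ref{lem:numfastiters} each contains $O(\log^2(nd\rho)/\eps)$ fast iterations, so it suffices to bound the slow iterations in a single phase by $O(\log^2(nd\rho)/\eps^2)$; the total iteration count $O(\log^3(nd\rho)\eps^{-3})$ of the main result then follows by multiplying. Fix a phase and let $t_1 < \cdots < t_k$ be its slow iterations. If \eqref{eq:feasible} is infeasible then Lemma~\ref{lem-coord-update} shows the algorithm returns ``Infeasible'' and there is nothing to prove, so assume \eqref{eq:feasible} is feasible and apply Lemma~\ref{lem:averagepackinggrad} to $t_1,\dots,t_k$ to obtain $i^*\in[d]$ with $\sum_{j}\nabla_{P_{i^*}}^{(t_j)} \le (1-\eps)\sum_{j}\nabla_{C_{i^*}}^{(t_j)}$, equivalently $\sum_{j}\big(\nabla_{C_{i^*}}^{(t_j)} - \nabla_{P_{i^*}}^{(t_j)}\big) \ge \eps\sum_{j}\nabla_{C_{i^*}}^{(t_j)}$.

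The core estimate is that $x_{i^*}$ must grow geometrically across the phase. From the update rule in Algorithm~\ref{alg:sdpalg}, every slow iteration satisfies $\delta_{i^*}^{(t_j)} \ge \tfrac12\big(1 - \nabla_{P_{i^*}}^{(t_j)}/\nabla_{C_{i^*}}^{(t_j)}\big) = \big(\nabla_{C_{i^*}}^{(t_j)} - \nabla_{P_{i^*}}^{(t_j)}\big)/\big(2\nabla_{C_{i^*}}^{(t_j)}\big)$, and $x_{i^*}$ is multiplied by $1+\alpha\delta_{i^*}^{(t_j)}$ (and never decreases otherwise). I would next argue that within a single phase the gradient coordinate $\nabla_{C_{i^*}}^{(t)}$ varies by only a constant factor, by combining (i) the fact that $\Tr Y^{(t)}$ and $\Tr Z^{(t)}$ each change by at most a factor of $2$ within a phase --- a consequence of the phase definition together with monotonicity of $\smax(\Phi^{(t)}) = \log\Tr Y^{(t)}$ and $\smin(\Psi^{(t)}) = -\log\Tr Z^{(t)}$ along the iterates --- with (ii) the monotonicity of $\inprod{P_{i^*}}{Y^{(t)}}/\inprod{C_{i^*}}{Z^{(t)}}$ used in the proof of Lemma~\ref{lem:numfastiters}. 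Granting this, if $c$ is the (up to a factor of $2$) common value of $\nabla_{C_{i^*}}^{(t_j)}$ then $\sum_j\delta_{i^*}^{(t_j)} \ge \tfrac{1}{2c}\sum_j\big(\nabla_{C_{i^*}}^{(t_j)} - \nabla_{P_{i^*}}^{(t_j)}\big) \ge \tfrac{\eps}{2c}\sum_j\nabla_{C_{i^*}}^{(t_j)} \ge \tfrac{\eps k}{4}$, using $\max(0,z)\ge z$, so $x_{i^*}$ grows over the phase by a factor at least $\prod_j(1+\alpha\delta_{i^*}^{(t_j)}) \ge \exp\!\big(\tfrac{\alpha}{2}\sum_j\delta_{i^*}^{(t_j)}\big) \ge \exp\!\big(\Omega(\alpha\eps k)\big)$. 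Since $x_{i^*}$ starts at $\tfrac{1}{d\lmax{P_{i^*}}}$ and, by the termination condition $\lmax{\Phi^{(t)}}\le K$, never exceeds $O\!\big(\tfrac{K}{\lmax{P_{i^*}}}\big)$, it grows by at most a factor $O(Kd)$; hence $\alpha\eps k = O(\log(Kd))$, i.e. $k = O\!\big(\tfrac{\log(Kd)}{\alpha\eps}\big) = O\!\big(\tfrac{K\log(nd\rho)}{\eps}\big) = O\!\big(\tfrac{\log^2(nd\rho)}{\eps^2}\big)$, using $\alpha = \tfrac{1}{8K}$, $K = \Theta(\log(nd\rho)/\eps)$, and $\eps^{-1}\le(nd)^3$.

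The hard part is exactly steps (i)--(ii): establishing enough within-phase regularity of the gradients $\nabla_{C_{i^*}}^{(t)}$ --- equivalently, controlling the slow iterations on which $i^*\notin W^{(t)}$, so that $\delta_{i^*}^{(t)}=0$ and $x_{i^*}$ does not advance. This is precisely the kind of monotonicity-flavored claim that is delicate for noncommuting matrices and that has sunk prior attempts (cf. Section~\ref{ssec:technical}), so it must be handled with care. If the clean ``constant-factor variation'' statement cannot be established directly, two fallbacks should work: a merely polylogarithmic bound on the variation still yields the claim up to logarithmic factors, and a recursive reapplication of Lemma~\ref{lem:averagepackinggrad} to the subset of slow iterations on which the current coordinate is not updated --- together with the observation that no coordinate can be selected twice, since a repeated coordinate $i$ would satisfy $\nabla_{P_i}^{(t)} > (1-\tfrac{\eps}{2})\nabla_{C_i}^{(t)}$ on every iteration of that subset and hence violate the strict certificate inequality $\sum\nabla_{P_i} \le (1-\eps)\sum\nabla_{C_i}$ --- yields a coarser but still valid bound. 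Once the per-phase slow-iteration bound is in place, multiplying by the phase count from Lemma~\ref{lem:numphases} and adding the fast iterations from Lemma~\ref{lem:numfastiters} gives the iteration bound of the main result.
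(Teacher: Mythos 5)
Your proposal follows the paper's proof in all essentials: apply Lemma~\ref{lem:averagepackinggrad} to the slow iterations of the phase to extract a coordinate $i^*$, lower bound the cumulative multiplicative growth of $x_{i^*}$ by $\exp(\Omega(\alpha \eps k))$, and conclude from the dynamic range $\bigl[\tfrac{1}{d\lmax{P_{i^*}}}, \tfrac{K}{\lmax{P_{i^*}}}\bigr]$ that $k = O(\log(Kd)/(\alpha\eps)) = O(\log^2(nd\rho)/\eps^2)$. The ``hard part'' you flag resolves exactly along the lines you sketch and is less delicate than you fear: the paper never needs two-sided pointwise control of $\nabla_{C_{i^*}}^{(t)}$. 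It uses only (a) the pointwise upper bound $\nabla_{C_{i^*}}^{(t)} \le 2u$ with $u = \nabla_{C_{i^*}}^{(t_1)}$, from monotonicity of $\inprod{C_{i^*}}{Z^{(t)}}$ together with the factor-of-two change of $\Tr Z^{(t)}$ within a phase (each of $\Tr Y$, $\Tr Z$ moves monotonically, so each individually changes by at most a factor of two), and (b) an \emph{average} lower bound $\sum_j \nabla_{C_{i^*}}^{(t_j)} \ge \tfrac{1}{1-\eps}\sum_j \nabla_{P_{i^*}}^{(t_j)} \ge (k-1)\ell/2 \ge (k-1)u/6$, obtained by chaining the certificate with the pointwise bound $\nabla_{P_{i^*}}^{(t)} \ge \ell/2$ and the slowness condition $\ell \ge u/3$ at the phase's first iteration. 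No noncommutativity issues arise beyond the monotonicity facts already invoked for Lemma~\ref{lem:numfastiters}.

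The one step that is wrong as written is the per-iteration bound $\delta_{i^*}^{(t_j)} \ge \tfrac12\bigl(1 - \nabla_{P_{i^*}}^{(t_j)}/\nabla_{C_{i^*}}^{(t_j)}\bigr)$: on a slow iteration with $(1-\tfrac{\eps}{2})\nabla_{C_{i^*}}^{(t_j)} < \nabla_{P_{i^*}}^{(t_j)} \le \nabla_{C_{i^*}}^{(t_j)}$ we have $i^* \notin W^{(t_j)}$, so $\delta_{i^*}^{(t_j)} = 0$ while the right-hand side is strictly positive; the update is not $\max(0,\cdot)$ of that expression, so ``$\max(0,z)\ge z$'' does not rescue it. The standard repair, which the paper uses, is to work with $\delta_{i^*}^{(t)} \ge \bigl((1-\tfrac{\eps}{2})\nabla_{C_{i^*}}^{(t)} - \nabla_{P_{i^*}}^{(t)}\bigr)/\bigl(2\nabla_{C_{i^*}}^{(t)}\bigr)$, which holds in all cases since the numerator is nonpositive whenever $i^* \notin W^{(t)}$; the certificate still yields $\sum_j \bigl[(1-\tfrac{\eps}{2})\nabla_{C_{i^*}}^{(t_j)} - \nabla_{P_{i^*}}^{(t_j)}\bigr] \ge \tfrac{\eps}{2}\sum_j \nabla_{C_{i^*}}^{(t_j)}$, and the rest of your computation goes through unchanged. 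The recursive fallback is unnecessary and would cost an extra factor of $d$.
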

\begin{proof}
Label the consecutive slow iterations in the phase $1, 2, \ldots T$ for simplicity, and let $i\in [d]$ satisfy \eqref{existenceSmallPG} for all but the last slow iteration. Also let $\ell = \pGradiniti$ and $u = \cGradiniti$. By monotonicity of $\trprod{P_i}{Y}$ (resp. $\trprod{C_i}{Z}$) and the fact that $\Tr Y$ (resp. $\Tr Z$) does not change by more than a factor of two in a phase, we have \[\pGradi =  \frac{\trprod{P_i}{Y^{(t)}}}{\Tr Y^{(t)}} \geq \ell/2, \, \forall t\in [T], \] and \[\cGradi =  \frac{\trprod{C_i}{Z^{(t)}}}{\Tr Z^{(t)}} \leq 2 u, \, \forall t\in [T].\] By definition of ``slow iteration'', we have $\ell \geq u/3$, and also (accounting for when $\delta_i^{(t)} = 0$)
\begin{equation}\label{eq:deltabound}\delta_i^{(t)} \ge \frac{\left(1 - \frac{\eps}{2}\right)\cGradi - \pGradi}{2\cGradi}. \end{equation}
Therefore, 
\begin{align*}
x_i^{(T)} &\geq x_i^{(1)} \exp\left( \frac{\alpha \sum_{t\in [T - 1]} \delta_i^{(t)}}{2} \right) \\
			&\ge x_i^{(1)} \exp\left( \frac{\alpha}{4} \sum_{t\in[T - 1]} \frac{\left(1 - \frac{\eps}{2}\right)\cGradi - \pGradi}{\cGradi} \right)\\
			&\geq x_i^{(1)} \exp \left( \frac{\alpha}{4} \sum_{t\in [T - 1]} \left(\frac{\left(1 - \frac{\eps}{2}\right)\cGradi -  \pGradi}{2u} \right)   \right).
\end{align*} Because $i$ satisfies \eqref{existenceSmallPG}, we have 
\begin{align*}
\sum_{t \in [T - 1]} \left(1 - \frac{\eps}{2}\right) \cGradi - \pGradi &\geq \frac{\eps}{2} \sum_{t\in [T - 1]} \cGradi \\
									&\geq \frac{\epsilon}{2(1-\epsilon)} \sum_{t\in [T - 1]} \pGradi\\
									&\geq \frac{\eps (T - 1)u}{12}
\end{align*}
Therefore, $x_i^{(T)} \ge x_i^{(1)}\exp(\eps \alpha (T - 1)/96)$. In the proof of Lemma~\ref{lem:numfastiters}, we showed that whenever 
\[\frac{\eps\alpha (T - 1)}{96} \ge \log (Kd),\]
the algorithm terminates. As $\alpha = \frac{\eps}{32\log(nd\rho)}$, this gives the desired iteration bound.
\end{proof}
\subsection{Implementation details}
\label{ssec:implementation}
Every iteration of our algorithm may be implemented in time $O(\nnz(\allP) + \nnz(\allC) + n^\omega)$, i.e. linear time with an additional matrix multiplication time overhead. The goal of this section is to categorize a wide family of instances where the multiplication time overhead may be removed, and the iteration complexity may be brought to nearly-linear. We use the following assumption.

\begin{assumption}
	\label{assume:matexp}
	For a positive SDP with packing matrices $\{P_i\}_{i \in [d]}$, covering matrices $\{C_i\}_{i \in [d]}$, and variables $x_i \geq 0$ arising from an execution of Algorithm \ref{alg:sdpalg}, assume matrix-vector multiplication access to symmetric matrices $A$ and $B$ satisfying
	\begin{equation}
	\begin{aligned}
	\exp\left(\half\sum_{i \in [d]} x_i P_i\right) &\preceq A \preceq \left(1+\frac{\epsilon}{70}\right) \exp\left(\half\sum_{i \in [d]} x_i P_i\right) \\
	\exp\left(-\half\sum_{i \in [d]} x_i C_i\right) &\preceq B \preceq \left(1+\frac{\epsilon}{70}\right) \exp\left(-\half\sum_{i \in [d]} x_i C_i\right) + e^{-20K} I_{n_c}
	\end{aligned}
	\end{equation}
	in time $\tpexp$ and $\tcexp$ respectively, such that $A$ and $\sum_{i \in [d]} x_i P_i$ commute, and $B$ and $\sum_{i \in [d]} x_i C_i$ commute.
\end{assumption}

Note that Assumption~\ref{assume:matexp} may be satisfied with $\tpexp = O(\nnz(\allP) + n_p^\omega)$ and $\tcexp = O(\nnz(\allC) + n_c^\omega)$ and logarithmic parallel depth due to e.g. \cite{PanC99}. As argued in \ref{ssec:hardness}, it is unlikely that this is improvable for general positive SDP instances, barring a breakthrough in our understanding of the hardness of many linear-algebraic primitives. In Section~\ref{ssec:poly} we discuss efficient implementation of Assumption~\ref{assume:matexp} under additional structural assumptions. We then discuss the implications of Assumption~\ref{assume:matexp} for our algorithm in Sections~\ref{ssec:termination} and~\ref{ssec:JL}. Typically, the alternative implementations lose small constant approximation factors in their guarantees. For simplicity, we defer discussion of the validity of our analysis under these relaxed guarantees in Appendix~\ref{app:approximate} and discuss only implementation details here. 

\subsubsection{Assumption~\ref{assume:matexp} for structured instances}
\label{ssec:poly}

For all implementations in this section, commutativity follows immediately from polynomials, scalar multiplication, inversions, and addition with the identity preserving commuting.

\paragraph{$\tpexp$ in nearly-linear time.}
We first show that the oracle for the packing matrices in Assumption~\ref{assume:matexp} can be unconditionally implemented efficiently, making use of the following fact from the literature on polynomial approximation \cite{OrecchiaSV12, SachdevaV14}; we remark that because our error tolerance is multiplicative, the square root savings of Chebyshev polynomials do not apply.

\begin{lemma}[Polynomial approximation for exponential]
	\label{lemma:exp_poly}
	Let $M \in \Sym^n_{\geq 0}$ satisfy $M \preceq b I$ for some $b > 0$. Let $\delta$ be a parameter such that $\delta^{-1} < b$. There exists a polynomial $p$ of degree $O(b)$ such that $\exp(M) \preceq p(M) \preceq (1 + \delta)\exp(M)$. 
\end{lemma}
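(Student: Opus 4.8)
The plan is to peel the matrix statement down to a one‑dimensional polynomial approximation of the exponential on the interval $[0,b]$ and then invoke the near‑minimax approximations of the exponential that are imported from \cite{OrecchiaSV12, SachdevaV14}. Everything matrix‑theoretic here is routine spectral bookkeeping; the single nontrivial input is the scalar fact that the exponential on an interval of length $b$ admits a degree‑$O(\sqrt{b\log(\delta^{-1})})$ approximation, which is exactly what is cited.

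Concretely, I would first diagonalize $M = Q\Lambda Q^\top$ with $Q$ orthonormal and $\Lambda = \diag{\lambda_1,\dots,\lambda_n}$, where $\lambda_i \in [0,b]$ by hypothesis. For any polynomial $p$ one has $p(M) = Q\,p(\Lambda)\,Q^\top$ and $\exp(M) = Q\exp(\Lambda)Q^\top$, so the Loewner sandwich $\exp(M)\preceq p(M)\preceq(1+\delta)\exp(M)$ is equivalent to the pointwise inequalities $e^x \le p(x) \le (1+\delta)e^x$ holding at every eigenvalue $x$ of $M$; hence it suffices to exhibit one polynomial $p$ of degree $O(\sqrt{b\log(\delta^{-1})})$ with $e^x \le p(x) \le (1+\delta)e^x$ for all $x\in[0,b]$. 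Two elementary reductions then bring this to a standard form. First, a one‑sided (over‑)approximation can be obtained from a two‑sided relative one: if $\tilde p$ satisfies $|\tilde p(x)-e^x|\le \tfrac{\delta}{3}e^x$ on $[0,b]$, then for $\delta\le 1$ the polynomial $p := (1+\tfrac{\delta}{3})\tilde p$ has the same degree and satisfies $e^x\le p(x)\le(1+\delta)e^x$ there. Second, writing $e^x = e^{b}e^{-(b-x)}$ and substituting $y = b-x\in[0,b]$, producing $\tilde p$ is equivalent to producing a polynomial $q$ of the same degree approximating $e^{-y}$ on $[0,b]$, after which one sets $\tilde p(x) := e^{b}q(b-x)$.

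The leftover task — a degree‑$O(\sqrt{b\log(\delta^{-1})})$ polynomial approximating $e^{-y}$ on $[0,b]$ — is precisely the Chebyshev‑truncation construction of \cite{OrecchiaSV12, SachdevaV14}, and this is the only step that is not pure bookkeeping. The hypothesis $\log(\delta^{-1}) < b$ enters here: it is exactly the regime in which the minimax degree is governed by $\sqrt{b\log(\delta^{-1})}$ rather than by $\log(\delta^{-1})$ (when $\log(\delta^{-1})\ge b$ one is already in the short‑interval regime, where even a truncated Taylor series of degree $O(\log(\delta^{-1}))$ suffices, so the stated bound holds there too). I therefore expect the only thing requiring care to be the degree accounting — checking that the reflection $x\mapsto b-x$ and the multiplicative inflation in the two reductions above leave the degree at $O(\sqrt{b\log(\delta^{-1})})$, and that the cited approximation theorem is quoted in the normalization we actually use. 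The main ``obstacle'' is thus not a difficulty to be overcome but a dependency to be cited correctly: once the scalar approximation of \cite{OrecchiaSV12, SachdevaV14} is in hand, the lemma follows.
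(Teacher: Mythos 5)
The paper never actually proves this lemma --- it is stated bare with the citation to \cite{OrecchiaSV12, SachdevaV14} --- so the entire burden of your argument rests on the final step, and that is exactly where it breaks. Your reduction $\tilde p(x) := e^{b}q(b-x)$ requires $q$ to approximate $e^{-y}$ on $[0,b]$ with \emph{relative} error $\delta/3$ (you need $|q(y)-e^{-y}|\le \tfrac{\delta}{3}e^{-y}$ so that $|\tilde p(x)-e^x|\le\tfrac{\delta}{3}e^x$). But the Chebyshev/Bessel construction in \cite{OrecchiaSV12, SachdevaV14} gives a \emph{uniform additive} guarantee $\sup_{y\in[0,b]}|q(y)-e^{-y}|\le\delta'$, which is a fundamentally weaker statement since $e^{-y}$ decays to $e^{-b}$: to recover your relative bound you must take $\delta'\le\tfrac{\delta}{3}e^{-b}$, and the cited degree then becomes $O(\sqrt{\max\{b,\log(1/\delta')\}\cdot\log(1/\delta')}) = O(b+\log(1/\delta)) = O(b)$ under the hypothesis $\log(\delta^{-1})<b$, not $O(\sqrt{b\log(\delta^{-1})})$. (Separately, your inflation step is slightly off: with $|\tilde p - e^x|\le\tfrac{\delta}{3}e^x$ the choice $p=(1+\tfrac{\delta}{3})\tilde p$ only gives $p(x)\ge(1-\delta^2/9)e^x$; you want $p=\tilde p/(1-\tfrac{\delta}{3})$. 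That is cosmetic; the additive-versus-relative mismatch is not.)

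Moreover the gap cannot be closed within this approach, because a two-sided multiplicative approximation of $e^x$ on all of $[0,b]$ genuinely requires degree $\Omega(b/\log b)$: if $e^x\le p(x)\le(1+\delta)e^x$ on $[0,b]$ with $\delta\le1$ then $\|p\|_{[0,1]}\le 2e$, and the Chebyshev extremal growth bound forces $p(b)\le 2e\,(4b)^{\deg p}$, while $p(b)\ge e^b$, so $\deg p\ge (b-O(1))/\log(4b)$. This exceeds $O(\sqrt{b\log(\delta^{-1})})$ whenever $\log(\delta^{-1})\ll b/\log^2 b$, which includes the paper's own instantiation $b=K=\Theta(\log(nd\rho)/\eps)$, $\delta=\Theta((nd)^{-3})$ once $\eps$ is small. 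So the difficulty you dismissed as ``a dependency to be cited correctly'' is in fact a substantive obstruction: the cited theorems concern the decaying exponential $e^{-y}$ with additive error (where the target is bounded by $1$), and they do not transfer to the multiplicative guarantee $\exp(M)\preceq p(M)\preceq(1+\delta)\exp(M)$ for $M\succeq 0$. A correct, citable fallback is the truncated Taylor series, which gives $(1-\tfrac{b^{t+1}}{(t+1)!})e^x\le\sum_{k\le t}x^k/k!\le e^x$ and hence the sandwich with degree $t=O(b+\log(\delta^{-1}))$ --- matching the lower bound up to a logarithm, but costing $\eps^{-1}$ rather than $\eps^{-1/2}$ in the packing-oracle degree. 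You should flag this rather than treat the lemma as following from the references.
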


\begin{lemma}[Packing oracle implementation]
	$\tpexp = O\left(\log(n d \rho) \epsilon^{-1} \nnz(\allP)\right)$.
\end{lemma}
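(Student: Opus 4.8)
The plan is to instantiate $A$ as a low-degree polynomial in the matrix $M := \half\sum_{i\in[d]} x_i P_i$, obtained from the polynomial-approximation guarantee of Lemma~\ref{lemma:exp_poly}. First I would observe that whenever Algorithm~\ref{alg:sdpalg} has not yet terminated, its while-loop condition gives $\lmax{\sum_i x_i P_i} \le K$, so $M \preceq \tfrac{K}{2} I_n$; this lets me apply Lemma~\ref{lemma:exp_poly} with spectral bound $b = K/2 = \tfrac{2\log(nd\rho)}{\epsilon}$ and error parameter $\delta = \tfrac{\epsilon}{70}$, provided the hypothesis $\log(\delta^{-1}) < b$ holds. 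I would verify this hypothesis using the standing assumptions $\epsilon \le \tfrac1{20}$, $\epsilon^{-1} \le (nd)^3$, and $\rho \ge 1$, which make $b$ at least $40\log(nd\rho)$ while $\log(\delta^{-1}) = \log(70\epsilon^{-1}) = O(\log(nd\rho))$. Lemma~\ref{lemma:exp_poly} then produces a polynomial $p$ of degree $D = O(\sqrt{b\log(\delta^{-1})})$ with $\exp(M) \preceq p(M) \preceq (1+\tfrac{\epsilon}{70})\exp(M)$, and I would set $A := p(M)$. Commutativity of $A$ with $\sum_i x_i P_i = 2M$ is immediate since $A$ is a polynomial in $M$, and the sandwich inequality is exactly the first containment required in Assumption~\ref{assume:matexp}.

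Next I would bound the degree: plugging $b = \tfrac{2\log(nd\rho)}{\epsilon}$ and $\log(\delta^{-1}) = O(\log(nd\rho))$ (again using $\epsilon^{-1}\le(nd)^3$ and $\rho\ge1$) into $D = O(\sqrt{b\log(\delta^{-1})})$ gives $D = O(\sqrt{\log^2(nd\rho)/\epsilon}) = O(\log(nd\rho)\,\epsilon^{-0.5})$. The point to be careful about here is folding the $\log(\epsilon^{-1})$ factor inside the square root into the $\log(nd\rho)$ factor, so that the final degree is $O(\log(nd\rho)/\sqrt\epsilon)$ rather than the weaker $O(\sqrt{\log(nd\rho)\log(\epsilon^{-1})/\epsilon})$.

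Then I would bound the cost of one matrix-vector product with $A = p(M)$. Evaluating $p(M)v$ by Horner's rule uses $D$ successive applications of the linear map $w \mapsto Mw$, plus $O(D)$ vector additions and scalar rescalings. One application $w \mapsto Mw = \half\sum_i x_i(P_i w)$ costs $O(\sum_i \nnz(P_i)) = O(\nnz(\allP))$, so the whole product costs $O(D\cdot \nnz(\allP))$. Substituting the degree bound yields $\tpexp = O(\log(nd\rho)\,\epsilon^{-0.5}\,\nnz(\allP))$, as claimed; the parallel depth is polylogarithmic since each $M$-application is (sums of sparse matrix-vector products), and Horner composes $D$ of them with logarithmic-depth reductions.

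There is no genuine obstacle: the argument is a direct combination of Lemma~\ref{lemma:exp_poly} with the spectral bound supplied by the termination condition, and the only steps requiring attention are checking the degree-hypothesis $\log(\delta^{-1})<b$ of Lemma~\ref{lemma:exp_poly} (which is where $\epsilon^{-1}\le(nd)^3$ is used) and tracking the logarithmic factors when simplifying the degree bound.
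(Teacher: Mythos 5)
Your proof is correct and takes essentially the same approach as the paper: both apply Lemma~\ref{lemma:exp_poly} to $\half\sum_i x_i P_i$ using the spectral bound from the termination condition, obtain a degree-$O(\log(nd\rho)\epsilon^{-0.5})$ polynomial, and evaluate it via repeated matrix-vector products at cost $O(\nnz(\allP))$ each. The only (immaterial) difference is that the paper uses the bound $\sum_i x_i P_i \preceq 2K I_n$ coming from the approximate termination check of Section~\ref{ssec:termination} and takes $\delta = \Omega((nd)^{-3})$, whereas you use the exact while-loop threshold and $\delta = \epsilon/70$; both yield the same asymptotic degree.
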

\begin{proof}
	Observe that by the termination condition (tolerant to the approximation guarantee to be developed in Section~\ref{ssec:termination}), every $x$ encountered during the algorithm is such that $0 \preceq \sum_{i \in [d]} x_i P_i \preceq 2K I_{n_p}$, where $K = O(\log(nd\rho)/\eps)$. By applying Lemma \ref{lemma:exp_poly} with $b = K$ and $\delta = \Omega((nd)^{-3})$, we obtain a polynomial $p$ of degree $O(\log(n d \rho) \epsilon^{-1} )$ such that $p\left(\half\sum_{i \in [d]} x_i P_i\right)$ meets the guarantees required by Assumption~\ref{assume:matexp}. Matrix-vector multiplication can be implemented in time proportional to the degree of the polynomial approximation. 
\end{proof}

Unfortunately, as alluded to earlier, we cannot guarantee unconditionally fast implementations for $\tcexp$. We now characterize several structured instances admitting improved implementations.

\paragraph{$\tcexp$ under multiplicative bounds.} For instances where we can guarantee that throughout the algorithm, $0 \preceq \sum_{i \in [d]} x_i C_i \preceq \beta KI_{n_c}$ for some mild parameter $\beta$, we can use the developments of the prior section to straightforwardly conclude the following:
\begin{lemma}[Bounded covering oracle implementation]
	$\tcexp = O\left(\beta\log(n d \rho) \epsilon^{-1} \nnz(\allC)\right)$.
\end{lemma}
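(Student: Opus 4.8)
The plan is to mimic the proof of the packing oracle implementation, the only new ingredient being a spectral shift that reduces approximating $\exp(-\half\Psi)$, where $\Psi \defeq \sum_{i \in [d]} x_i C_i$, to approximating the exponential of a \emph{positive} semidefinite, spectrally bounded matrix, so that Lemma~\ref{lemma:exp_poly} applies directly. By hypothesis, every iterate $x$ encountered in Algorithm~\ref{alg:sdpalg} satisfies $0 \preceq \Psi \preceq \beta K I_n$ with $K = O(\log(nd\rho)/\eps)$. I would set $M \defeq \half\beta K I_n - \half \Psi$, so that $0 \preceq M \preceq \half \beta K I_n$ and $\exp(-\half\Psi) = e^{-\beta K/2}\exp(M)$.

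Next I would invoke Lemma~\ref{lemma:exp_poly} with $b = \half\beta K$ and approximation parameter $\delta = \Theta((nd)^{-3})$ (which is at most $\eps/70$ since $\eps^{-1} \le (dn)^3$, and which satisfies the hypothesis $\log(\delta^{-1}) < b$ for $\beta \ge 1$ and small $\eps$). This yields a polynomial $p$ of degree $O(\sqrt{b \log(\delta^{-1})}) = O\left(\sqrt{\beta\log(nd\rho)\log(nd)}\,\eps^{-1/2}\right) = O\left(\sqrt{\beta}\log(nd\rho)\eps^{-1/2}\right)$ with $\exp(M) \preceq p(M) \preceq (1+\delta)\exp(M)$. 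Setting $B \defeq e^{-\beta K/2} p(M)$ and using that multiplication by the positive scalar $e^{-\beta K/2}$ preserves the Loewner order, I get
\[\exp\left(-\half\Psi\right) \preceq B \preceq (1+\delta)\exp\left(-\half\Psi\right) \preceq \left(1 + \frac{\eps}{70}\right)\exp\left(-\half\Psi\right) + e^{-20K}I_n,\]
which is exactly the guarantee required by Assumption~\ref{assume:matexp} (in this bounded regime we do not even need the $e^{-20K}I_n$ slack term). Since $B$ is a polynomial in $\Psi$, it commutes with $\Psi = \sum_{i \in [d]} x_i C_i$, so the commutativity requirement of Assumption~\ref{assume:matexp} holds.

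Finally, matrix-vector multiplication against $B$ reduces via Horner's rule to $\deg(p) = O(\sqrt{\beta}\log(nd\rho)\eps^{-1/2})$ matrix-vector multiplications against $M$, each costing $O(\nnz(\allC))$ time (the identity shift contributes only $O(n)$ per product, absorbed into $\nnz(\allC)$), giving $\tcexp = O\left(\sqrt{\beta}\log(nd\rho)\eps^{-1/2}\nnz(\allC)\right)$ and logarithmic parallel depth. I do not anticipate a real obstacle: the reduction to a PSD, bounded matrix via the shift is routine, and the remainder is bookkeeping of the degree bound. The only place to be careful is checking the hypothesis $\log(\delta^{-1}) < b$ of Lemma~\ref{lemma:exp_poly}, which follows from $\beta \geq 1$ and $K \ge \log(nd)$ for $\eps$ small enough (and otherwise an interior point method already meets the target runtime, by the standing assumption $\eps^{-1} \le (dn)^3$).
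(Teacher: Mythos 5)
Your proposal is correct and follows essentially the same route as the paper, which simply invokes the polynomial approximation of Lemma~\ref{lemma:exp_poly} under the spectral bound $0 \preceq \sum_i x_i C_i \preceq \beta K I_n$ exactly as in the packing oracle. The only detail you add beyond the paper's (very terse) argument is the explicit shift $M = \tfrac{1}{2}\beta K I_n - \tfrac{1}{2}\sum_i x_i C_i$ to reduce the negative exponential to the positive-exponential form in which Lemma~\ref{lemma:exp_poly} is stated; this, together with your verification of the hypothesis $\log(\delta^{-1}) < b$ and of the commutativity requirement, is precisely the bookkeeping the paper leaves implicit.
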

A typical application admitting this type of structure is when the covering matrices are a multiple of the packing matrices, i.e. $C_i = \beta P_i$ for all $i \in [d]$. This is typical of positive SDP instances arising from sparsification \cite{LeeS17, JambulapatiSS18} and robust statistics \cite{ChengG18}.

\paragraph{$\tcexp$ with linear system solves.} Assuming access to a linear system solver for matrices of the form $\alpha I_{n_c} + \sum_{i \in [d]} x_i C_i$ for nonnegative $\alpha$ and $x$, we are able to efficiently implement the oracle required by Assumption~\ref{assume:matexp}. We will use the following fact \cite{OrecchiaSV12, SachdevaV14}.

\begin{lemma}
	\label{lemma:exp_poly2}
	Let $M \in \Sym^n_{\geq 0}$. There is a polynomial $p$ of degree $d = O(\log(n \delta^{-1}))$ such that 
	\[
	\exp(-M) \preceq p \left(\left(I_n + \frac{M}{d}\right)^{-1} \right) \preceq \exp(-M) + \delta I_n.
	\]
\end{lemma}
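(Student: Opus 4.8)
The plan is to turn the statement into a one-variable uniform-approximation problem and then apply standard approximation-theoretic estimates. Since $p\bigl((I+M/d)^{-1}\bigr)$ and $\exp(-M)$ are diagonalized in the same orthonormal basis, with eigenvalues $p(\mu(\lambda))$ and $e^{-\lambda}$ respectively where $\mu(\lambda):=(1+\lambda/d)^{-1}\in(0,1]$ and $\lambda$ ranges over the eigenvalues of $M$, the claimed sandwich is equivalent to the scalar statement: there is a polynomial $p$ of degree $d$ with $e^{-\lambda}\le p(\mu(\lambda))\le e^{-\lambda}+\delta$ for every $\lambda\ge 0$. First I would reduce the two-sided requirement to a one-sided one: it suffices to produce $\tilde p$ with $\sup_{\lambda\ge 0}\lvert \tilde p(\mu(\lambda))-e^{-\lambda}\rvert\le\delta/2$, and then set $p:=\tilde p+\delta/2$.

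Next I would fix $b:=\log(2n\delta^{-1})$, choose the degree (which is simultaneously the resolvent parameter) $d=\Theta(b)=\Theta(\log(n\delta^{-1}))$, and split the analysis at $\lambda=b$. On the tail $\lambda\ge b$ one has $e^{-\lambda}\le\delta/(2n)\le\delta/2$, while $\mu(\lambda)$ ranges over $(0,\mu_0]$ with $\mu_0:=(1+b/d)^{-1}$ a constant bounded away from $1$, so there it is enough to guarantee $0\le\tilde p(\mu)\le\delta/2$. On the bulk $\lambda\in[0,b]$ the reparametrization confines $\mu$ to $[\mu_0,1]$, an interval bounded away from $0$, on which $\mu\mapsto\lambda=d(1-\mu)/\mu$ and hence $\mu\mapsto F(\mu):=\exp(-d(1-\mu)/\mu)$ is analytic with derivatives that grow only polynomially in $d$; a standard quantitative estimate (Chebyshev truncation, or the Bernstein--Jackson machinery already underlying Lemma~\ref{lemma:exp_poly}) then yields a polynomial of degree $O(d)=O(\log(n\delta^{-1}))$ matching $F$ on $[\mu_0,1]$ to additive error $\delta/2$. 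The cheap direction of the bound is essentially automatic: from $\log(1+t)\le t$ one gets $\mu(\lambda)^d=(1+\lambda/d)^{-d}\ge e^{-\lambda}$ for all $\lambda\ge 0$, which both secures the lower inequality for a $\mu^d$-shaped ansatz and indicates the correct shape of the approximant, the remaining work being the correction term and the gluing of the two pieces into a single polynomial.

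The hard part will be obtaining a genuinely logarithmic degree rather than $\operatorname{poly}(\delta^{-1})$, while simultaneously keeping the bulk approximant under control on the tail range so that the two regimes fuse into one polynomial without blowup. Naive approximants — the order-$d$ backward-Euler approximant $(I+M/d)^{-d}$ itself, or low-order Pad\'e powers of the resolvent — only achieve additive error $\operatorname{poly}(1/d)$ and would therefore force degree $\operatorname{poly}(\delta^{-1})$. The improvement to logarithmic degree comes precisely from taking the resolvent coordinate $\mu=(1+\lambda/d)^{-1}$ with the \emph{same} $d$ as the polynomial degree: this compresses the ``interesting'' part of the spectrum into a window of length $O(\log(n\delta^{-1}))$, beyond which $e^{-\lambda}$ is already below $\delta$, and on that window a degree-$O(\log(n\delta^{-1}))$ polynomial resolves the remaining features. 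The careful error bookkeeping that certifies this is exactly the content of \cite{OrecchiaSV12, SachdevaV14}, whose quantitative estimates I would invoke for the remaining details.
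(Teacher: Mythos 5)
The paper gives no proof of Lemma~\ref{lemma:exp_poly2}: it is imported verbatim as a known fact from \cite{OrecchiaSV12, SachdevaV14}, which is exactly what you do for the quantitative core of your argument, so in substance you and the paper take the same route. Your preliminary reductions are correct and worth recording: simultaneous diagonalization does reduce the operator sandwich to the scalar statement $e^{-\lambda}\le p(\mu(\lambda))\le e^{-\lambda}+\delta$ for all $\lambda\ge 0$ with $\mu(\lambda)=(1+\lambda/d)^{-1}$, shifting a uniform $\delta/2$-approximant by $\delta/2$ handles the one-sidedness, and your observation that $(1+\lambda/d)^{-d}\ge e^{-\lambda}$ while achieving only $\Theta(1/d)$ additive error correctly explains why the resolvent substitution alone is not enough. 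One caution on the step you defer: the ``fusing'' of the bulk and tail regimes is not mere bookkeeping. A Chebyshev truncation of $F(\mu)=\exp(-d(1-\mu)/\mu)$ built only on $[\mu_0,1]$ will in general grow exponentially in its degree once $\mu<\mu_0$, so it cannot be ``kept under control'' on the tail after the fact; the constructions in \cite{SachdevaV14} instead produce a single polynomial approximant valid on all of $\mu\in(0,1]$ at once (exploiting the factorization $e^{-\lambda}=e^{d}\bigl(e^{-1/\mu}\bigr)^{d}$ and the fact that the target is uniformly bounded on that interval), rather than splicing two local approximants. Since you explicitly invoke their estimates for precisely this step, this is a presentational caveat rather than a logical gap, and your sketch is consistent with the paper's (non-)treatment of the lemma.
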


\begin{lemma}[Covering oracle implementation with solver]
	Assume we can solve linear systems in $\alpha I_{n_c} + \sum_{i \in [d]} x_i C_i$ for any nonnegative $\alpha$, $x$ in time $\mathcal{S}_{\textup{cov}}$. Then
	$\tcexp = O(\nnz(\allC) + K\mathcal{S}_{\textup{cov}})$.
\end{lemma}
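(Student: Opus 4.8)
The plan is to instantiate the rational approximation of Lemma~\ref{lemma:exp_poly2} at the matrix $M \defeq \half\sum_{i \in [d]} x_i C_i$, which lies in $\Sym^n_{\geq 0}$ since each $C_i \succeq 0$ and $x \geq 0$. First I would choose the accuracy parameter $\delta \defeq e^{-20K}$. Lemma~\ref{lemma:exp_poly2} then supplies a polynomial $p$ of degree $D = O(\log(n\delta^{-1})) = O(\log n + K) = O(K)$, using that $K = 4\log(nd\rho)/\eps = \Omega(\log n)$, such that
\[\exp(-M) \preceq p\!\left(\left(I + \tfrac{M}{D}\right)^{-1}\right) \preceq \exp(-M) + e^{-20K} I_n.\]
Setting $B \defeq p\!\left(\left(I + M/D\right)^{-1}\right)$, the upper bound is dominated by $\left(1 + \tfrac{\eps}{70}\right)\exp(-M) + e^{-20K} I_n$ since $\tfrac{\eps}{70} \geq 0$, so $B$ satisfies both Loewner bounds demanded of the covering oracle in Assumption~\ref{assume:matexp} (indeed without even needing the multiplicative slack). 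Commutativity of $B$ with $\sum_{i \in [d]} x_i C_i$ is immediate because $B$ is a polynomial in an inverse-plus-identity of $M$, and the paragraph opening Section~\ref{ssec:poly} records that polynomials, scalar multiplication, inversion, and addition with the identity all preserve commuting.

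Next I would implement the matrix-vector product $v \mapsto Bv$ by Horner's rule: writing $p(Y) = \sum_{k=0}^{D} a_k Y^k$ with $Y = \left(I + M/D\right)^{-1}$, one sets $w \gets a_D v$ and repeatedly updates $w \gets Yw + a_k v$ for $k = D-1, \dots, 0$, returning $w = Bv$. Each application of $Y$ is a solve of the linear system $\left(I + \tfrac{1}{2D}\sum_{i \in [d]} x_i C_i\right) z = w$, which is exactly of the form $\alpha I + \sum_{i \in [d]} x_i' C_i$ with $\alpha = 1 \geq 0$ and $x_i' = \tfrac{x_i}{2D} \geq 0$, hence costs $\mathcal{S}_{\textup{cov}}$ by hypothesis. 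After an initial $O(\nnz(\allC))$ cost to form (or reference) the summed covering matrix, the $D = O(K)$ solves together with $O(D)$ vector operations give total time $O(\nnz(\allC) + K\mathcal{S}_{\textup{cov}})$ per matrix-vector product, which is the claimed $\tcexp$.

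There is no substantial obstacle here beyond bookkeeping. The only step requiring care is matching the additive slack: $\delta$ must be taken small enough ($\delta = e^{-20K}$) that the error term $\delta I$ from Lemma~\ref{lemma:exp_poly2} is absorbed into the $e^{-20K} I_n$ tolerance permitted by Assumption~\ref{assume:matexp}, yet large enough that $\log \delta^{-1} = O(K)$, so that the polynomial degree, and therefore the number of linear solves per oracle call, stays $O(K)$. One should also verify that the systems that arise are genuinely in the admissible family $\alpha I + \sum_{i \in [d]} x_i C_i$ with nonnegative coefficients, which holds because rescaling $x$ by the positive factor $1/(2D)$ preserves nonnegativity, and adding the identity only increases $\alpha$.
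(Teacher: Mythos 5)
Your proof is correct and takes essentially the same approach as the paper, which simply invokes Lemma~\ref{lemma:exp_poly2} and concludes (the paper sets $\delta = e^{-10K}$, whereas your $\delta = e^{-20K}$ is in fact the choice that literally matches the $e^{-20K}I_n$ additive slack in Assumption~\ref{assume:matexp}, and it does not change the $O(K)$ degree bound). The Horner's-rule evaluation and the check that each solve lies in the admissible family $\alpha I + \sum_{i\in[d]} x_i' C_i$ are details the paper leaves implicit.
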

\begin{proof}
	By applying Lemma \ref{lemma:exp_poly2} with $\delta = e^{-10K}$, the result follows immediately. 
\end{proof}

In particular, note that under Assumption~\ref{assume:inverse}, this implies that we can guarantee $\tcexp$ in nearly-linear time. Families of covering matrices for which Assumption~\ref{assume:inverse} holds include the diagonal covering matrices originally considered in \cite{JainY12}, symmetric diagonally dominant matrices (e.g. graph Laplacians) \cite{SpielmanT14}, and symmetric M-matrices \cite{AhmadinejadJSS19}.

\subsubsection{Certifying termination}
\label{ssec:termination}
In this section we demonstrate how to efficiently check termination in line 4 of Algorithm~\ref{alg:sdpalg}. First, recall that the algorithm terminates at iterate $x^{(t)}$ if
\[
\lmax{ \sum_{i \in [d]} x_i^{(t)} P_i } \geq K.
\] 
Thus to check termination it suffices to compute the top eigenvalue of $ \sum_{i \in [d]} x_i^{(t)} P_i$. The Lanczos method \cite{KuczynskiW92} allows us to do so to sufficient accuracy in time nearly-linear in $\nnz(\allP)$. We remark that the classical power iteration method also suffices for our purposes, with a slightly worse complexity.
\begin{fact}[Theorem 3.2 of \cite{KuczynskiW92}]
	Let $A \in \PSDSet^n$. For any $\eta > 0$, there is an algorithm that uses $O \left(\log(nd)\eta^{-1/2} \right)$ matrix-vector products with the matrix $A$, which outputs a unit vector $u \in \R^n$ so that $u^\top A u \geq (1 - \eta) \lmax{A}$, with probability at least $1 - 1/(nd)^{15}$.
\end{fact}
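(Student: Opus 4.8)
\emph{Proof proposal.} This statement is quoted verbatim as \cite[Theorem 3.2]{KuczynskiW92}, so the plan is to recall the standard argument behind it rather than reprove it from scratch; I would present it only as a sketch. The first step is the usual reduction to the Krylov subspace: after $k$ matrix-vector products with $A$, the Lanczos iteration produces an orthonormal basis of $\mathcal{K}_k(A,v) \defeq \mathrm{span}\{v, Av, \dots, A^{k-1}v\}$ for a (random) unit starting vector $v$, and it returns the unit vector $u \in \mathcal{K}_k(A,v)$ maximizing the Rayleigh quotient $u^\top A u$ (obtained from the top eigenpair of the resulting $k \times k$ tridiagonal matrix, an $O(k)$ additive overhead). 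Hence it suffices to show that for $k = O(\log(nd)\,\eta^{-1/2})$ some unit vector in $\mathcal{K}_k(A,v)$ has Rayleigh quotient at least $(1-\eta)\lmax{A}$, with failure probability at most $(nd)^{-15}$. We may freely assume $\eta \ge (nd)^{-O(1)}$, since otherwise a full eigendecomposition already meets the claimed budget.

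The second step is the classical polynomial bound. Write $A = \sum_i \lambda_i \phi_i \phi_i^\top$ with $\lambda_1 = \lmax{A} \ge \lambda_2 \ge \cdots \ge \lambda_n \ge 0$ and $v = \sum_i c_i \phi_i$. For any polynomial $p$ of degree $< k$, the vector $p(A)v$ lies in $\mathcal{K}_k(A,v)$, so the quantity to bound is at least $\big(\sum_i \lambda_i p(\lambda_i)^2 c_i^2\big)/\big(\sum_i p(\lambda_i)^2 c_i^2\big)$. Splitting the indices into $S \defeq \{i : \lambda_i \ge (1-\eta)\lambda_1\}$ and its complement, dropping the nonnegative out-of-$S$ terms in the numerator (here is where $A \in \PSDSet^n$ is used), bounding $\sum_{i \notin S} p(\lambda_i)^2 c_i^2 \le 1$ whenever $|p| \le 1$ on $[0, (1-\eta)\lambda_1]$, and using $\sum_{i \in S} p(\lambda_i)^2 c_i^2 \ge p(\lambda_1)^2 c_1^2$ together with monotonicity of $x \mapsto x/(x+1)$, this ratio is at least $(1-\eta)\lambda_1 \big(1 + 1/(p(\lambda_1)^2 c_1^2)\big)^{-1}$. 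So I would take $p$ to be the degree-$(k-1)$ Chebyshev polynomial composed with the affine map carrying $[0, (1-\eta)\lambda_1]$ onto $[-1,1]$: then $|p| \le 1$ on that interval, and at $\mu = \lambda_1$ the argument is $1 + \tfrac{2\eta}{1-\eta}$, so $p(\lambda_1) = T_{k-1}\big(1 + \tfrac{2\eta}{1-\eta}\big) \ge \tfrac12 (1 + 2\sqrt{\eta})^{\,k-1} \ge \tfrac12 \exp((k-1)\sqrt{\eta})$, using $T_m(\cosh\theta) = \cosh(m\theta)$ and $\mathrm{arccosh}(1+y) \ge \ln(1+\sqrt{2y})$.

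The last step handles the random start: the single-coordinate marginal of a uniform point on $S^{n-1}$ has density $O(\sqrt n)$, so $\Pr[|c_1| \le (nd)^{-16}] \le O(\sqrt n\,(nd)^{-16}) \le (nd)^{-15}$, and hence $c_1^2 \ge (nd)^{-32}$ outside a $(nd)^{-15}$-probability event. On the good event, $p(\lambda_1)^2 c_1^2 \ge \tfrac14 \exp(2(k-1)\sqrt\eta)\,(nd)^{-32} \ge 2/\eta$ once $k \ge C\log(nd)\,\eta^{-1/2}$ for a suitable constant $C$, which makes the Rayleigh quotient at least $(1-\eta)\lambda_1/(1+\eta/2) \ge (1-2\eta)\lambda_1$; rescaling $\eta$ by a constant finishes it, with exactly $O(k) = O(\log(nd)\,\eta^{-1/2})$ matrix-vector products. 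The one genuinely nontrivial point — and the only place this differs from the weaker power-method bound the paper also mentions — is the Chebyshev growth estimate $T_m(1+y) \gtrsim \exp(m\sqrt{y})$, which upgrades the monomial bound $(1+y)^m \approx \exp(my)$ and is what turns $\eta^{-1}$ into $\eta^{-1/2}$; everything else (Krylov optimality, the concentration estimate on $c_1$, and the degenerate regime $\eta = \Omega(1)$ handled by $k = O(\log(nd))$ power-method steps) is routine. A fully self-contained finite-precision version would additionally require the stability analysis of Lanczos, but in exact arithmetic — which is all our application needs — the above suffices.
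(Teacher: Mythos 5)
The paper does not prove this statement: it is imported verbatim as a black-box citation of \cite{KuczynskiW92}, so there is no internal proof to compare against. Your sketch is a correct rendition of the standard argument behind that theorem: Krylov-subspace optimality of the Lanczos Ritz vector, the Rayleigh-quotient lower bound via a polynomial $p(A)v$ with the split into eigenvalues above and below $(1-\eta)\lmax{A}$ (where positive semidefiniteness is indeed what lets you discard the low-eigenvalue numerator terms), the Chebyshev growth estimate $T_m(1+y)\gtrsim \exp(m\sqrt{y})$ that converts the power-method rate $\eta^{-1}$ into $\eta^{-1/2}$, and the $O(\sqrt{n})$ density bound on the first coordinate of a random unit vector to control $c_1^2$ with failure probability $(nd)^{-15}$. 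The constants and the reduction all check out, and you correctly flag the only real caveats (the degenerate regime $k \ge n$, early Krylov termination, and exact versus finite-precision arithmetic), none of which matter for the paper's use of the fact in Section 5.3.2. No gap.
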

As an immediate corollary of this, by setting $\eta = 1/2$, we may check the termination condition up to a multiplicative factor of $2$ in time $O(\nnz(\allP)\log (nd))$ with high probability. Therefore, with high probability, we can terminate the algorithm when $\lmax{ \sum_{i \in [d]} x_i^{(t)} P_i } \in [K, 2K]$, by terminating whenever the Lanczos method says the approximate largest eigenvalue is at least $K$. Next, the algorithm also terminates on iterate $x^{(t)}$ if
\[
\lmin{ \sum_{i \in[d]} x_i^{(t)} C_i } \geq K.
\] 
The main idea is to use the Lanczos method to approximate the top eigenvalue of $\exp(-\sum_{i \in [d]} x_i^{(t)} C_i)$, which gives a tight approximation to the smallest eigenvalue of $\sum_{i \in[d]} x_i^{(t)} C_i$. More formally, recall that in Assumption~\ref{assume:matexp}, we have approximate implementations of matrix-vector products against $\exp(-\half\sum_{i \in [d]} x_i^{(t)} C_i)$, for a deterministic matrix $B$, with the guarantee
\[\exp\left(-\half\sum_{i \in [d]} x_i C_i\right) \preceq B \preceq \left(1+\frac{\epsilon}{70}\right) \exp\left(-\half\sum_{i \in [d]} x_i C_i\right) + e^{-20K} I_{n_c}.\]
Thus, using the Lanczos method with $B$ allows us to compute a $\eta = 1/2$-approximate top eigenvalue of $B$ in $O(\log(nd))$ matrix-vector products, which, while $\sum_{i \in[d]} x_i^{(t)} C_i \preceq 2KI_{n_c}$ holds, also is a $1/e$-approximate top eigenvalue for $\exp(-\half \sum_{i \in [d]} x_i^{(t)} C_i)$, as the error induced by the multiplicative and additive error guarantees are negligible. Taking a logarithm and scaling allows us to compute an eigenvalue that is within an additive $2$ to the smallest eigenvalue of $\sum_{i \in [d]} x_i^{(t)} C_i$, so we may again terminate when the smallest eigenvalue is in the range $[K, 2K]$ with high probability.

\subsubsection{Efficient trace products}
\label{ssec:JL}
In this section, we discuss the implementation of the trace products of $\{P_i\}_{i \in [d]}$ and $\{C_i\}_{i \in [d]}$ against $Y^{(t)}$ and $Z^{(t)}$ required by lines 7 and 8 of Algorithm~\ref{alg:sdpalg}. We show that this approximate implementation does not affect the runtime by more than constants in Appendix~\ref{app:approximate}. We require the following result, which follows from the classical Johnson-Lindenstrauss lemma (see e.g. \cite{DasguptaG03}).

\begin{lemma}[Johnson-Lindenstrauss]
	\label{lem:JL}
Let $A,B \in \Sym^n_{\ge 0}$. Let $k = O(\log \delta^{-1}/\epsilon^2)$ for some $\epsilon \in (0, 1/3)$, and let $Q \in \R^{n \times k}$ be a matrix where every entry is sampled from $\mathcal{N}(0,1/k)$. Then with probability at least $1 - \delta$,
\[\Tr[A^2 B] \le \Tr\left[Q^\top A B A Q\right]\le \left(1 + \frac{\eps}{70}\right)\Tr[A^2 B] .
\]
\end{lemma}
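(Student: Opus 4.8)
The plan is to reduce the claim to a classical concentration inequality for quadratic forms in i.i.d.\ Gaussians, exploiting positive semidefiniteness of $ABA$ to keep the target dimension $k$ free of $n$. First I would set $M \defeq ABA$, which is $\succeq 0$ since $A, B \succeq 0$, and note $\Tr[M] = \Tr[ABA] = \Tr[A^2 B]$ by cyclicity. Writing $Q = (q_1 \mid \cdots \mid q_k)$ with the $q_i \sim \Nor(0, \tfrac1k I_n)$ independent, we have $\Tr[Q^\top M Q] = \sum_{i \in [k]} q_i^\top M q_i$; diagonalizing $M = \sum_l \sigma_l u_l u_l^\top$ with $\sigma_l \ge 0$ and using that $(u_l^\top q_i)_{l,i}$ are i.i.d.\ $\Nor(0,\tfrac1k)$ yields the distributional identity $\Tr[Q^\top M Q] \overset{d}{=} \sum_l \tfrac{\sigma_l}{k}\sum_{i\in[k]} Z_{li}^2$ with $Z_{li}$ i.i.d.\ standard normal. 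In particular its mean is exactly $\Tr[A^2 B]$, and its deviation from the mean is $\sum_{l,i} \tfrac{\sigma_l}{k}(Z_{li}^2-1)$.

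Next I would apply the standard Laurent--Massart / Hanson--Wright bound to this centered quadratic: for the nonnegative weights $c_{li} = \sigma_l/k$ and any $x > 0$, the deviation exceeds $2\|c\|_2\sqrt{x} + 2\|c\|_\infty x$ with probability at most $2e^{-x}$, where $\|c\|_2 = \|\sigma\|_2/\sqrt{k}$ and $\|c\|_\infty = \|\sigma\|_\infty/k$. The key point is that positive semidefiniteness of $M$ forces all $\sigma_l \ge 0$, so $\|\sigma\|_2 \le \|\sigma\|_1 = \Tr[A^2 B]$ and $\|\sigma\|_\infty \le \Tr[A^2 B]$; hence the deviation is at most $2\Tr[A^2B]\bigl(\sqrt{x/k} + x/k\bigr)$. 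Taking $x = \log(2/\delta)$ and $k = O\bigl(\log(1/\delta)\cdot \epsilon^{-2}\bigr)$ with a large enough constant makes this at most $\tfrac{\epsilon}{70}\Tr[A^2B]$, so with probability at least $1-\delta$,
\[
\Bigl|\Tr\bigl[Q^\top A B A Q\bigr] - \Tr[A^2B]\Bigr| \le \frac{\epsilon}{70}\,\Tr[A^2B],
\]
which gives both displayed inequalities (reading the lower one as the associated $(1-\tfrac{\epsilon}{70})$ multiplicative guarantee, which is all the applications in Appendix~\ref{app:approximate} need); cf.\ \cite{DasguptaG03}.

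No single calculation here is hard, but the one step that genuinely requires care — and the reason $k$ can be taken independent of $n$ — is the bound $\|\sigma\|_2 \le \|\sigma\|_1$, which is exactly where $ABA \succeq 0$ enters. For a general symmetric $B$ one would instead only have $\|\sigma\|_2$ as large as $\sqrt{\operatorname{rank}(M)}\,\|\sigma\|_\infty$ relative to $\|\sigma\|_1$, forcing $k = \Omega(n\epsilon^{-2})$. I therefore expect the writeup's main subtlety to be making precise why the PSD hypothesis collapses the ``stable rank'' of $M$ to a constant, rather than any analytic obstacle in the concentration estimate itself.
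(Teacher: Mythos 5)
Your proof is correct, and it is worth noting that the paper itself gives no proof of this lemma at all: it simply asserts that the statement ``follows from the classical Johnson--Lindenstrauss lemma'' and cites \cite{DasguptaG03}. Your write-up is exactly the standard concentration argument that citation is standing in for: reduce to $M \defeq ABA \succeq 0$, diagonalize, observe that $\Tr[Q^\top M Q]$ is an unbiased weighted chi-squared sum, and apply Laurent--Massart with weights $\sigma_l/k$. All the computations check out, including the parameter setting $x = \log(2/\delta)$, $k = O(\log(1/\delta)\eps^{-2})$.

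Two remarks. First, you are right to be suspicious of the exact lower bound $\Tr[A^2B] \le \Tr[Q^\top ABAQ]$ as literally written: since $\E[QQ^\top] = I_n$ the estimator is unbiased, so it cannot dominate its mean with probability $1-\delta$. The statement should be read (as you do) as the two-sided multiplicative guarantee $|\Tr[Q^\top ABAQ] - \Tr[A^2B]| \le \tfrac{\eps}{70}\Tr[A^2B]$, which is all that Lemma~\ref{lemma:stepworacle} and Appendix~\ref{app:approximate} actually use. Second, your closing diagnosis of where positive semidefiniteness enters is slightly misattributed: the inequality $\norm{\sigma}_2 \le \norm{\sigma}_1$ holds for \emph{any} vector of eigenvalues, signs or not. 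What the PSD hypothesis buys is the identification $\norm{\sigma}_1 = \sum_l \sigma_l = \Tr[M] = \Tr[A^2B]$, i.e.\ that the trace norm equals the trace, so that the additive deviation $2\norm{\sigma}_2\sqrt{x/k} + 2\norm{\sigma}_\infty x/k$ can be converted into a \emph{relative} error against the quantity being estimated. For indefinite $M$ the trace could be tiny or zero while $\norm{\sigma}_1$ is large, and no dimension-free relative guarantee is possible. This does not affect the validity of your argument, only the explanatory paragraph at the end.
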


\begin{lemma}[Efficient trace products]
\label{lemma:stepworacle}
For $\pGradi$ and $\cGradi$ defined in lines 7-8 of Algorithm~\ref{alg:sdpalg}, given Assumption~\ref{assume:matexp}, each iteration, we can output values
\[\widetilde{\pGradi} \in \left[ \left(1 - \frac{\eps}{20}\right) \pGradi, \left(1 + \frac{\eps}{20}\right) \pGradi \right],\; \widetilde{\cGradi} \in \left[\left(1 - \frac{\eps}{20}\right) \cGradi, \left(1 + \frac{\eps}{20}\right)\cGradi + e^{-10K} \Tr[C_i] \right]\] 
in time $O\left(\left(\nnz(\allP) + \nnz(\allC) + \tpexp + \tcexp\right) \log (nd) \epsilon^{-2}\right)$, with probability of failure at most $(nd)^{-15}$.
\end{lemma}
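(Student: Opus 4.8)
The plan is to estimate each ratio $\pGradi = \trprod{P_i}{Y^{(t)}}/\Tr Y^{(t)}$ and $\cGradi = \trprod{C_i}{Z^{(t)}}/\Tr Z^{(t)}$ using only a polylogarithmic number of matrix-vector products against the operators $A,B$ supplied by Assumption~\ref{assume:matexp}, combined with the Johnson--Lindenstrauss sketch of Lemma~\ref{lem:JL}. The first step is to replace $Y^{(t)} = \exp(\Pxcurr)$ and $Z^{(t)} = \exp(-\Cxcurr)$ by $A^2$ and $B^2$. Since $A$ commutes with $\Pxcurr$, it is simultaneously diagonalizable with $\exp(\half\Pxcurr)$, so squaring the sandwich of Assumption~\ref{assume:matexp} eigenvalue by eigenvalue gives $Y^{(t)} \preceq A^2 \preceq (1+\tfrac{\eps}{70})^2 Y^{(t)}$; by monotonicity of the trace product against the PSD matrices $P_i$ and $I_n$ this transfers to $\trprod{P_i}{A^2}$ and $\Tr A^2$, so that $\trprod{P_i}{A^2}/\Tr A^2$ already approximates $\pGradi$ to within a factor $(1+\tfrac{\eps}{70})^{\pm 2}$. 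The same argument on the covering side, where now $\exp(-\half\Cxcurr) \preceq I_n$ is used to absorb cross-terms after squaring, yields $Z^{(t)} \preceq B^2 \preceq (1+\tfrac{\eps}{70})^2 Z^{(t)} + O(e^{-20K}) I_n$, hence $\trprod{C_i}{Z^{(t)}} \le \trprod{C_i}{B^2} \le (1+\tfrac{\eps}{70})^2\trprod{C_i}{Z^{(t)}} + O(e^{-20K})\Tr C_i$ together with $\Tr Z^{(t)} \le \Tr B^2 \le (1+\tfrac{\eps}{70})^2\Tr Z^{(t)} + O(ne^{-20K})$.

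Next I would draw a single Gaussian sketch $Q \in \R^{n\times k}$ with i.i.d.\ $\Nor(0,1/k)$ entries and $k = O(\log(nd)\eps^{-2})$, precompute the vectors $v_j \defeq Aq_j$ and $w_j \defeq Bq_j$ over the $k$ columns $q_j$ of $Q$ (this is the dominant cost, $O(k(\tpexp + \tcexp))$), and output $\widetilde{\pGradi} \defeq (\sum_j v_j^\top P_i v_j)/(\sum_j \norm{v_j}_2^2)$ and $\widetilde{\cGradi} \defeq (\sum_j w_j^\top C_i w_j)/(\sum_j \norm{w_j}_2^2)$, which over all $i \in [d]$ costs an additional $O(k(\nnz(\allP) + \nnz(\allC)))$. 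Recognizing $\sum_j v_j^\top P_i v_j = \Tr[Q^\top A P_i A Q]$ and similarly for the other three quantities, Lemma~\ref{lem:JL} applied to the pairs $(A,P_i),(A,I_n),(B,C_i),(B,I_n)$ shows that, except with probability $(nd)^{-20}$ per sketch, each of these $2d+2$ numerators and denominators over-estimates the corresponding $\trprod{\cdot}{A^2}$ or $\trprod{\cdot}{B^2}$ by a factor at most $1+\tfrac{\eps}{70}$ (and never under-estimates); a union bound over the $2d+2$ events keeps the total failure probability below $(nd)^{-15}$.

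It remains to compose the error factors. On the good event, dividing a $(1+\tfrac{\eps}{70})^{\pm 1}$-accurate numerator by a $(1+\tfrac{\eps}{70})^{\pm 1}$-accurate denominator and combining with the $(1+\tfrac{\eps}{70})^{\pm 2}$ factor from the first step places $\widetilde{\pGradi}$ within $(1+\tfrac{\eps}{70})^{\pm 3} \subseteq [1-\tfrac{\eps}{20},\,1+\tfrac{\eps}{20}]$ of $\pGradi$ for $\eps \le \tfrac1{20}$. For the covering ratio I would additionally invoke the (approximate) termination guarantee of Section~\ref{ssec:termination}: while the algorithm has not halted, $\Tr Z^{(t)} = \te{-\Cxcurr} \ge e^{-\lmin{\Cxcurr}} \ge e^{-2K}$, so with $K = \Theta(\log(nd\rho)/\eps)$ and $\eps \ge (nd)^{-3}$ the additive terms $O(ne^{-20K})$ in the denominator bound and $O(e^{-20K})\Tr C_i$ in the numerator bound are dominated, respectively, by $\tfrac{\eps}{1000}\Tr Z^{(t)}$ and $e^{-10K}\Tr C_i$; propagating these two-sided bounds through the ratio gives $\widetilde{\cGradi} \in [(1-\tfrac{\eps}{20})\cGradi,\ (1+\tfrac{\eps}{20})\cGradi + e^{-10K}\Tr C_i]$. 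Adding the costs from the sketching step — the $O(kn)$ cost of the norm computations is subsumed, since a single matrix-vector product already costs $\Omega(n)$ — yields the stated running time $O((\nnz(\allP) + \nnz(\allC) + \tpexp + \tcexp)\log(nd)\eps^{-2})$.

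The step I expect to require the most care is the covering side of the first step together with its bookkeeping at the end: the additive slack of Assumption~\ref{assume:matexp} survives the squaring $B \mapsto B^2$ and then interacts multiplicatively with the ratio, so one must verify that the (approximate) termination condition pins $\Tr Z^{(t)}$ to within a wide polynomial factor of $e^{-2K}$ — far larger than the $e^{-20K}$ additive error — which is precisely why the assumption's additive term is stated as $e^{-20K}$ rather than something closer to the $e^{-10K}$ slack appearing in the conclusion. Everything else is routine: one-sided JL, monotonicity of trace products, and chasing a bounded chain of $(1+\tfrac{\eps}{70})$ factors.
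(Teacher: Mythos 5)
Your proposal is correct and follows essentially the same route as the paper: both replace $Y^{(t)},Z^{(t)}$ by $A^2,B^2$ via commutativity of the sandwich in Assumption~\ref{assume:matexp}, form the identical estimators $\Tr[Q^\top A P_i A Q]/\Tr[Q^\top A^2 Q]$ and $\Tr[Q^\top B C_i B Q]/\Tr[Q^\top B^2 Q]$ with a JL sketch of width $O(\log(nd)\eps^{-2})$, and absorb the $e^{-20K}$ additive slack on the covering side using the pre-termination bound $\Tr\exp(-\Cxcurr)\geq e^{-2K}$. Your accounting of the union bound and of the denominator's additive error is if anything slightly more explicit than the paper's.
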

\begin{proof}
Fix some $i \in [d]$ for this discussion, as they all follow similarly. The algorithm requires
\begin{equation}\label{eq:required}\pGradi = \frac{\trprod{P_i}{\exp\left(\Pxcurr\right)}}{\Tr\left[\exp\left(\Pxcurr\right)\right]},\;\cGradi = \frac{\trprod{C_i}{\exp\left(-\Cxcurr\right)}}{\Tr\left[\exp\left(-\Cxcurr\right)\right]},\end{equation}
in these lines. We claim that computing the numerators and denominators in \eqref{eq:required} via the matrices $A$ and $B$ given by Assumption \ref{assume:matexp} suffice. In particular, consider estimating
\[\widetilde{\pgradi} = \frac{\Tr[Q^\top A P_i A Q]}{\Tr[Q^\top A^2 Q]},\; \widetilde{\cgradi} = \frac{\Tr[Q^\top B C_i B Q]}{\Tr[Q^\top B^2 Q]},\]
where $Q$ is the matrix given by Lemma~\ref{lem:JL}. We can implement these trace products by computing the values $Aq$, $Bq$ for each column $q$ of the matrix $Q$ in time $O(\tpexp + \tcexp)$, and then computing the respective quadratic forms in time $O(\nnz(\allP) + \nnz(\allC))$. Summing this runtime over all columns of $Q$ yields the runtime claim. We now prove correctness. 

First, for correctness of $\widetilde{\pGradi}$, note that for commuting matrices $M_1 \preceq M_2$, we also have $M_1^2 \preceq M_2^2$. Therefore, $A^2$ approximates $\exp(\Pxcurr)$ to a $(1 + \eps/70)^2$ factor; then, combining the errors incurred by the trace products through Lemma~\ref{lem:JL}, for both the numerator and the denominator, the overall multiplicative error is at most $(1 + \eps/70)^6 \le 1 + \eps/10$. Turning this one-sided error guarantee into a two-sided guarantee yields the final claim.
 
Next, note that because $\exp(-\half\Cxcurr) \preceq I_{n_c}$, we have by commutativity that 
\[\exp\left(-\Cxcurr\right) \preceq B^2 \preceq \left(1 + \frac{\eps}{70}\right)^2\exp\left(-\Cxcurr\right) + 3e^{-20K} I_{n_c}.\]
Thus, the numerator of $\cGradi$ can be estimated to the desired multiplicative accuracy, up to the additive factor $3e^{-20K}\Tr[C_i]$. Finally, we are able to obtain a multiplicative guarantee on the denominator, because as the algorithm has not terminated, $\Tr\left[\exp\left(-\Cxcurr\right)\right] \ge \exp(-2K)$. Therefore, the additive error incurred induces only a negligible multiplicative error in the denominator estimation. Combining with the multiplicative error due to Lemma~\ref{lem:JL} yields the final guarantee.
\end{proof}

\subsection{Runtime}
We comment on the runtime claims made in Section~\ref{ssec:results}, and how they follow from Sections~\ref{ssec:correctness} and \ref{ssec:iteration}. The complexities per iteration claimed follow generically from the discussion of Section~\ref{ssec:implementation}, where we show tolerance of the algorithm to approximate guarantees in Appendix~\ref{app:approximate}. Therefore, combining Lemmas~\ref{lem:numphases},~\ref{lem:numfastiters}, and~\ref{lem:numSlowIters} we obtain the general bound of $O(\log^3(nd\rho)/\eps^3)$ on the number of iterations. We now briefly discuss the runtime improvements claimed by specialization to more structured instances:
\begin{itemize}
	\item For pure packing and covering, the improved runtime results from an analogous discussion in \cite{MahoneyRWZ16}, where they note that because either the covering or packing gradient is constant, so we are able to aggregate all slow iterations through phases and prove that they are bounded in number by $O(\log^2(nd\rho)/\eps^2)$. In the case of pure covering, the dominant term in the runtime thus comes from combining Lemmas~\ref{lem:numphases} and~\ref{lem:numfastiters}. For the case of pure packing, we can improve the iteration count by another logarithmic factor, and remove the width dependence, as described in the following item.
	\item For commutative covering constraints, our only dependence on $\rho$ through the entire analysis was in bounding the size of the contribution of the large bucket, in Section~\ref{ssec:largegap}, an assumption that may be removed when the constraints commute. In this case, it is clear that by appropriately modifying the definition of the threshold $K$ in the algorithm we may generically remove the dependence on $\rho$ for all parts of the complexity of the algorithm.
\end{itemize}

We defer a more formal discussion of these runtime improvements to Appendix~\ref{app:pure}.
\section{Future work}
\label{sec:futurework}

\paragraph{Necessity of width.}
We consider it interesting to formally explore if our polylogarithmic dependence on the quantity $\rho$, a natural notion of width for a mixed positive SDP instance, is necessary. In particular, it appears to us that without an algorithm which is able to efficiently compute projections onto a subset of eigenvalues (which in general requires eigendecomposition), and an analysis amenable to such a procedure, such a dependence is in general necessary for analyses akin to the one explored in this paper in, e.g. characterizing the error incurred by using an implicit truncation to perform a Taylor expansion of a matrix exponential.

\paragraph{Removing the need for phases.}
It was mentioned as an open problem in \cite{MahoneyRWZ16} to analyze their algorithm, or its variants, in a framework which does not involve explicit phases (cf. Section~\ref{sec:cleanup}). Because the best parallel solvers for pure packing or pure covering instances have an iteration complexity of roughly $\tOh{\eps^{-2}}$, it is natural to conjecture that the parallel complexity of mixed solvers may also be improved.

\paragraph{Sketching machinery.}
Recently, a variety of works have proposed improving the dependence on the rank of the low-rank approximation required to efficiently approximate matrix exponential-vector products. In particular, a line including \cite{KalaiV05, BaesBN13, GarberHM15, Allen-ZhuL17, CarmonDST19} studied the behavior of approximate (width-dependent) semidefinite program solvers whose iterations were implemented with sketches of smaller rank than the $\tOh{\eps^{-2}}$ typical to Johnson-Lindenstrauss projections, culminating in the rank-1 sketch of \cite{CarmonDST19}. 

Nonetheless, there remains difficulty in using rank-deficient sketches to implement steps in a regret-based framework for the positive SDP setting, as discussed in \cite{CarmonDST19}; the main obstacle is due to the fact that the gain matrices used to inform the update in an iteration is \emph{not} independent of the randomness used to determine the iterate, due to the presence of ``update buckets'' such as our $W^{(t)}$ (cf. Algorithm~\ref{alg:sdpalg}). We leave the possibility of relaxed independence assumptions, analyzing the correlation between the randomness used and properties of the iterate, or an alternative framework as interesting future work. We hope that the additional matrix analysis tools provided in this work may prove useful in analyzing e.g. the stability of iterations, towards these efforts.

\paragraph{Parallel acceleration.}
An important unresolved problem in the positive linear programming literature is the possibility of an \emph{accelerated} algorithm for even pure packing or covering LP instances, akin to the method developed in \cite{Nesterov83} for general convex programming, and in \cite{Nemirovski04} for e.g. its minimax generalization. This problem was partially resolved by the work of \cite{ZhuO15B}, which obtained an algorithm with total work proportional to $\eps^{-1}$, but dimension-dependent parallel depth due to the use of a coordinate scheme. 

Towards the goal of accelerated first-order methods for structured linear programming instances, the important breakthrough work of \cite{Sherman17} gave an accelerated method with convergence rate $\eps^{-1}$ for bilinear minimax problems in $\ell_\infty$-operator norm bounded matrices. Crucially, this work designed a method which utilized stronger interplay between primal and dual iterates than is possible in a primal-only method, under certain algorithmic restrictions (e.g. strong convexity). Because the fastest positive linear programming algorithms \cite{ZhuO15A, WangMMR15, MahoneyRWZ16} are primal-only, and the analyses asymmetric, we hope that the tools developed in this paper for viewing the covering side more symmetrically, e.g. without explicit pruning, may help lead the way to a new perspective towards primal-dual algorithms for positive LPs and SDPs.

\paragraph{Dichotomy characterization for efficient iterations.}
We find it interesting to characterize precisely the families of mixed positive SDP problems for which it is possible to implement efficient iterations, following the discussion in Section~\ref{ssec:hardness}. We note that in recent works a similar type of dichotomy characterization was achieved \cite{KyngZ17, KyngWZ20}; we believe mixed positive SDP solvers are a strong algorithmic primitive, and believe such a characterization to be fundamental to the development of this area. 	
	\subsection*{Acknowledgments}
	JL would like to thank Cameron Musco, Christopher Musco, and Zeyuan Allen-Zhu for helpful conversations. SP would like to thank Kuikui Liu for helpful conversations. AJ and KT would like to thank Kiran Shiragur for helpful conversations, and Aaron Sidford for bringing this problem to our attention, as well as his continued encouragement through this project.
	\newpage
	\bibliographystyle{alpha}	
	\bibliography{mixed-positive}
	\newpage
	\begin{appendix}	

\section{Reduction from optimization to decision}
\label{app:reduction}

In this section, we discuss a reduction from the optimization problem \eqref{eq:possdpdef} to the decision version of the problem \eqref{eq:feasible}, \eqref{eq:infeasible} considered in the main body of this paper. We restate the optimization problem here for convenience:
\begin{equation*}\begin{aligned}\min \mu \text{ subject to } \sum_i P_i x_i \preceq \mu I_{n_p}, \sum_i C_i x_i \succeq I_{n_c}, x \geq 0 , \\
\text{for sets of matrices } \left\{P_i \in \PSDSet^{n_p}\right\}_{i \in [d]}, \left\{C_i \in \PSDSet^{n_c}\right\}_{i \in [d]}.\end{aligned}\end{equation*}
 We then discuss the implications of this reduction for the runtime of the full algorithm for solving \eqref{eq:possdpdef}. As discussed in Section~\ref{ssec:width}, we first assume that we have a range $[\mu_{\text{lower}}, \mu_{\text{upper}}]$ containing $\mu_{\mathrm{OPT}}$, the value of~\eqref{eq:possdpdef}; we then demonstrate how to derive simple bounds $\mu_{\text{lower}}, \mu_{\text{upper}}$ from an optimization instance \eqref{eq:possdpdef}.

The reduction from the optimization problem to the decision problem is straightforward: we iteratively binary search over the range of $\mu$, and we check, by appropriately rescaling $\{C_i\}_{i\in[d]}$ and $\{P_i\}_{i\in[d]}$ whether or not our guess of $\mu$ is a valid solution, using our decision oracle. Formally:
\begin{lemma}
\label{lem:reduction}
Assume an oracle $\mathcal{O}$, which given $\{\tilde{C}_i\}_{i = 1}^d, \{\tilde{P}_i\}_{i = 1}^d$, runs in time $\mathsf{T}_{\text{decision}}$, and returns a solution to the corresponding decision problem \eqref{eq:feasible}, \eqref{eq:infeasible} with accuracy $\eps/3$. Then, there is an algorithm $\mathsf{MixedPositiveSolver}$ which takes oracle $\mathcal{O}$ and an instance \eqref{eq:possdpdef} given by $\{C_i\}_{i = 1}^d, \{P_i\}_{i = 1}^d$ as input, as well as a range $[\mu_{\text{lower}}, \mu_{\text{upper}}]$ containing $\mu_{\mathrm{OPT}}$ the value of \eqref{eq:possdpdef}, and outputs $x \in \R^d$ and $\mu > 0$ so that 
\[
\sum_i P_i x_i \preceq \mu I_{n}, \sum_i C_i x_i \succeq I_{n}, x \geq 0 \; ,
\]
with $\mu \le (1 + \eps) \mu_{\mathrm{OPT}}$.
Moreover, the algorithm runs in time $O (\mathsf{T}_{\text{decision}} \cdot  (\log\log(\mu_{\text{upper}}/\mu_{\text{lower}}) + \log \eps^{-1}))$.
\end{lemma}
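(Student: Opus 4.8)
The plan is to reduce the optimization problem \eqref{eq:possdpdef} to the decision problem by a binary search on a logarithmic scale over the value of $\mu$, calling $\mathcal{O}$ once per round on a rescaled instance. Concretely, to test a candidate $\mu > 0$ I would feed $\mathcal{O}$ the rescaled matrices $\{\tilde{P}_i\}_{i\in[d]} = \{\mu^{-1}P_i\}_{i\in[d]}$ and $\{\tilde{C}_i\}_{i\in[d]} = \{C_i\}_{i\in[d]}$ and interpret its two possible outputs. If $\mathcal{O}$ returns a vector $x \ge 0$ with $\lmax{\mu^{-1}\sum_i x_i P_i} \le (1+\eps/3)\lmin{\sum_i x_i C_i}$, then dividing $x$ by $\lambda \defeq \lmin{\sum_i x_i C_i} > 0$ yields $x'$ with $\sum_i x_i' C_i \succeq I$ and $\sum_i x_i' P_i \preceq \mu(1+\eps/3) I$, a feasible point of \eqref{eq:possdpdef} of objective at most $\mu(1+\eps/3)$; in particular $\mu_{\mathrm{OPT}} \le \mu(1+\eps/3)$. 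If instead $\mathcal{O}$ certifies that no $x \ge 0$ has $\sum_i x_i C_i \succeq I$ and $\mu^{-1}\sum_i x_i P_i \preceq (1-\eps/3)I$, then an optimal $x^\star$ of \eqref{eq:possdpdef} cannot satisfy both, forcing $\mu_{\mathrm{OPT}} > \mu(1-\eps/3)$, i.e.\ $\mu < \mu_{\mathrm{OPT}}/(1-\eps/3)$. I would also record the elementary observation that whenever $\mu \ge \mu_{\mathrm{OPT}}/(1-\eps/3)$ the vector $x^\star$ satisfies the constraints in \eqref{eq:infeasible} for the rescaled instance, so $\mathcal{O}$ cannot certify infeasibility and must return a feasible vector.

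Next I would run the search itself, maintaining an interval $[\mu_1,\mu_2]$ together with a stored feasible solution of \eqref{eq:possdpdef} of value at most $\mu_2(1+\eps/3)$, under the invariant $\mu_1 < \mu_{\mathrm{OPT}}/(1-\eps/3)$. Initialization requires a little care: $\mu_1 \gets \mu_{\text{lower}}$ is valid because $\mu_{\text{lower}} \le \mu_{\mathrm{OPT}}$, but calling $\mathcal{O}$ at $\mu_{\text{upper}}$ need not return a feasible vector (the oracle may legitimately certify infeasibility when $\mu_{\text{upper}}$ is only marginally larger than $\mu_{\mathrm{OPT}}$), so instead I would set $\mu_2 \gets 2\mu_{\text{upper}}$ and note that $2\mu_{\text{upper}} \ge 2\mu_{\mathrm{OPT}} \ge \mu_{\mathrm{OPT}}/(1-\eps/3)$ (using $\eps \le \tfrac{1}{20}$), so by the observation above a single call to $\mathcal{O}$ at $\mu_2$ produces the initial stored solution. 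Each subsequent round tests the geometric midpoint $\mu \defeq \sqrt{\mu_1\mu_2}$: if $\mathcal{O}$ returns a vector I set $\mu_2 \gets \mu$ and update the stored solution, otherwise I set $\mu_1 \gets \mu$; the invariant is preserved either way and $\log(\mu_2/\mu_1)$ is halved. I would terminate once $\mu_2/\mu_1 \le 1 + \eps/10$.

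It remains to check the two quantitative claims. For accuracy, on termination the stored solution has value at most $\mu_2(1+\eps/3) \le \mu_1(1+\eps/10)(1+\eps/3) < \tfrac{(1+\eps/10)(1+\eps/3)}{1-\eps/3}\,\mu_{\mathrm{OPT}} \le (1+\eps)\mu_{\mathrm{OPT}}$, the last step holding for $\eps \le \tfrac{1}{20}$; this chain is exactly why $\mathcal{O}$ is invoked at accuracy $\eps/3$ rather than $\eps$. For the iteration count, starting from $\log(\mu_2/\mu_1) = \log(2\mu_{\text{upper}}/\mu_{\text{lower}})$ and halving each round until it falls below $\log(1+\eps/10) = \Theta(\eps)$ takes $O(\log\log(\mu_{\text{upper}}/\mu_{\text{lower}}) + \log\eps^{-1})$ rounds, each consisting of one call to $\mathcal{O}$ at cost $\mathsf{T}_{\text{decision}}$, which gives the stated runtime. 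The step I expect to be the main obstacle is not conceptual but the accuracy bookkeeping — propagating the $\eps/3$ slack through the rescaling and through the normalization of $x$ by $\lambda^{-1}$, and setting up the initialization so that a genuine feasible solution is in hand before the first midpoint is tested; the remaining details are routine.
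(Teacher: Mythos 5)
Your proposal is correct and follows essentially the same route as the paper: a binary search over $\mu$ on a geometric scale, one oracle call per round, using the $\eps/3$ oracle accuracy to absorb the rescaling and discretization slack, yielding the stated $O(\log\log(\mu_{\text{upper}}/\mu_{\text{lower}}) + \log\eps^{-1})$ call count. Your interval-invariant formulation is in fact slightly more careful than the paper's appeal to ``monotonicity of feasibility,'' since it correctly handles the ambiguous zone where the oracle may legitimately return either answer.
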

\begin{proof}
First, we divide the range $[\mu_{\text{lower}}, \mu_{\text{upper}}]$ into multiples of $1 + \eps$ starting from $\mu_{\text{lower}}$, i.e. the set of values $\mu_{\text{lower}} \cdot (1 + \eps)^k$ for $k \in [K]$, and  
\[K = O\left(\frac{1}{\eps} \cdot \log\left(\frac{\mu_{\text{upper}}}{\mu_{\text{lower}}}\right)\right).\]
The reduction is to iteratively guess values $\mu = \mu_{\text{lower}} \cdot (1 + \eps)^k$ for some value $k \in [K]$, and perform a binary search over $[K]$, giving the desired runtime. For each guess $k$, we can certify whether $\mu_{\mathrm{OPT}} \ge \mu$ by testing feasibility of the set $\{C_i\}_{i = 1}^d, \{(1 - \eps/3)\mu^{-1} P_i\}_{i = 1}^d$ by using the oracle $\mathcal{O}$ in time $\mathsf{T}_{\text{decision}}$. By the fact that $(1 + \eps/3) / (1 - \eps/3) \le 1 + \eps$, there is a unique smallest value $k$ for which the binary search will return feasible, and correctness of the binary search follows from monotonicity of feasibility. Moreover, by the guarantee of \eqref{eq:infeasible}, the oracle will certify the infeasibility of $\mu$ whenever it is not a valid solution to \eqref{eq:possdpdef}, which shows that the value returned will always be correct.
\end{proof}

Next, given an instance \eqref{eq:possdpdef}, we show how to derive simple upper and lower bounds on the optimal value $\mu_{\mathrm{OPT}}$.

\begin{lemma}
Given an instance \eqref{eq:possdpdef}, we have $\mu_{\mathrm{OPT}} \in [\mu_{\text{lower}}, \mu_{\text{upper}}]$, where
\[\mu_{\text{upper}} = \min_{i \in [d]} \frac{\lambda_{\max}(P_i)}{\lambda_{\min}(C_i)},\; \mu_{\text{lower}} = \frac{\min_{i \in [d]} \Tr(P_i)}{\max_{i \in [d]} \Tr(C_i)} \cdot \frac{n_c}{n_p} \ge \frac{\min_{i \in [d]} \lambda_{\max}(P_i)}{n_p \max_{i \in [d]} \lambda_{\max}(C_i)}.\]
\end{lemma}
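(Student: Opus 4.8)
The plan is to prove the two bounds separately. For $\mu_{\mathrm{OPT}}\le\mu_{\text{upper}}$ I would exhibit a single explicit feasible point of \eqref{eq:possdpdef}; for $\mu_{\mathrm{OPT}}\ge\mu_{\text{lower}}$ I would apply a trace inequality to an arbitrary feasible point and finish with elementary eigenvalue--trace comparisons.

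For the upper bound, assume $\mu_{\text{upper}}<\infty$ (otherwise there is nothing to prove) and let $i^\star\in[d]$ attain the minimum, so $\lmin{C_{i^\star}}>0$. I would take the candidate $x$ supported only on coordinate $i^\star$ with $x_{i^\star}=\lmin{C_{i^\star}}^{-1}$. Since $C_{i^\star}\succeq\lmin{C_{i^\star}}I_{n_c}$ this gives $\sum_i x_iC_i\succeq I_{n_c}$, and since $P_{i^\star}\preceq\lmax{P_{i^\star}}I_{n_p}$ it gives $\sum_i x_iP_i\preceq \bigl(\lmax{P_{i^\star}}/\lmin{C_{i^\star}}\bigr)I_{n_p}$. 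Hence $(x,\mu_{\text{upper}})$ is feasible for \eqref{eq:possdpdef}, so $\mu_{\mathrm{OPT}}\le\mu_{\text{upper}}$.

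For the lower bound, let $(x,\mu)$ be any feasible point. Taking the trace product of both constraints with the identity and using the fact from Section~\ref{ssec:facts} that $M\preceq N$ implies $\inprod{M}{I}\le\inprod{N}{I}$, the covering constraint yields $\sum_i x_i\Tr(C_i)=\inprod{\sum_i x_iC_i}{I}\ge\inprod{I_{n_c}}{I}=n_c$, and the packing constraint yields $\sum_i x_i\Tr(P_i)=\inprod{\sum_i x_iP_i}{I}\le\inprod{\mu I_{n_p}}{I}=\mu n_p$. Bounding $\sum_i x_i\Tr(C_i)\le\bigl(\sum_i x_i\bigr)\max_i\Tr(C_i)$ and $\sum_i x_i\Tr(P_i)\ge\bigl(\sum_i x_i\bigr)\min_i\Tr(P_i)$ and combining (recalling that in the normalized setup of Section~\ref{ssec:setup} the zero-padding makes $n_p=n_c=n$, so these factors cancel) gives $\mu\ge\min_i\Tr(P_i)/\max_i\Tr(C_i)=\mu_{\text{lower}}$. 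The second inequality in the statement then follows from the PSD facts $\Tr(P_i)\ge\lmax{P_i}$ and $\Tr(C_i)\le n\lmax{C_i}$, which give $\min_i\Tr(P_i)\ge\min_i\lmax{P_i}$ and $\max_i\Tr(C_i)\le n\max_i\lmax{C_i}$.

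The argument is entirely routine; there is no genuine obstacle. The only two points requiring mild care are (i) the degenerate case in which every $C_i$ is singular, where $\mu_{\text{upper}}=\infty$ (and \eqref{eq:possdpdef} may itself be infeasible), so the upper bound is vacuously true; and (ii) the bookkeeping of the $n_p$ versus $n_c$ factors in the trace computation, which is why I appeal to the normalization $n_p=n_c=n$ (and I note that when $n_c\ge n_p$ the stated bound holds verbatim regardless).
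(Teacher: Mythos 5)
Your proof is correct and follows essentially the same route as the paper: the upper bound via the $1$-sparse candidate $x_{i^\star}=\lmin{C_{i^\star}}^{-1}$, and the lower bound via tracing both constraints to get $\norm{x}_1\ge n/\max_i\Tr(C_i)$ and then $\mu\ge\frac{1}{n}\norm{x}_1\min_i\Tr(P_i)$. Your added care about the singular-$C_i$ case and the $n_p$ versus $n_c$ bookkeeping is harmless and slightly more thorough than the paper's version.
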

\begin{proof}
Correctness of the upper bound follows by its feasibility against choosing any 1-sparse $x$ with $x_i = \frac{1}{\lambda_{\min}(C_i)}$, yielding $\mu_{\text{OPT}} \le \frac{\lambda_{\max}(P_i)}{\lambda_{\min}(C_i)}$; taking the smallest value over $i \in [d]$ gives $\mu_{\text{upper}}$.

We now show correctness of the lower bound. First, we lower bound $\norm{x}_1$ for any $x \ge 0$ satisfying
\[\sum_{i \in [d]} C_i x_i \succeq I_{n_c}.\]
Note that this implies $\Tr\left(\sum_{i \in [d]} C_i x_i\right) \ge n_c$. Thus, we conclude via the following simple inequality
\[\Tr\left(\sum_{i \in [d]} C_i x_i\right) \le \norm{x}_1 \max_{i \in [d]} \Tr(C_i) \Rightarrow \norm{x}_1 \ge \frac{n_c}{\max_{i \in [d]} \Tr(C_i)}. \]
Next, using a similar strategy we see that for any $\lambda \ge 0$ satisfying
\[\sum_{i \in [d]} P_i x_i \preceq \lambda I_{n_p}, \]
we have using our earlier lower bound on $\norm{x}_1$,
\[\mu_{\text{OPT}} \geq \lambda_{\max}\left(\sum_{i \in [d]} P_i x_i\right) \geq \frac{1}{n_p}\Tr\left(\sum_{i \in [d]} P_i x_i\right) \geq \frac{\norm{x}_1}{n_p} \min_{i \in [d]} \Tr(P_i) \geq \frac{\min_{i \in [d]} \Tr(P_i)}{\max_{i \in [d]} \Tr(C_i)} \cdot \frac{n_c}{n_p}. \]
Finally, the last inequality in the lemma statement follows from
\[\frac{\min_{i \in [d]} \Tr(P_i)}{n_p} \ge \frac{\min_{i \in [d]} \lambda_{\max}(P_i)}{n_p},\; \frac{n_c}{\max_{i \in [d]} \Tr(C_i)} \ge \frac{1}{\max_{i \in [d]} \lambda_{\max}(C_i)}.\]
\end{proof}

By combining these bounds, we have that the range $\mu_{\text{upper}}/\mu_{\text{lower}}$ may always be bounded by 
\[n_p \left(\min_{i \in [d]} \frac{\lambda_{\max}(P_i)}{\lambda_{\min}(C_i)}\right)\left(\frac{\max_{i \in [d]} \lambda_{\max}(C_i)}{\min_{i \in [d]} \lambda_{\max}(P_i)}\right)\le n\cdot\frac{\max_{i \in [d]} \lambda_{\max}(C_i)}{\min_{i \in [d]} \lambda_{\min}(C_i)}.\]
In other words, up to a multiplicative $n$, the range is the ratio of the largest eigenvalue amongst all the covering matrices, to the smallest. Further note that the notion of width used in the paper,
\[\rho \defeq \max_{i \in [d]} \frac{\lmax{\tilde{C}_i}}{\lmax{\tilde{P}_i}},\]
may be bounded for the instances passed to the oracle via the reduction in Lemma~\ref{lem:reduction}. In particular, for $\{\tilde{C}_i\}_{i = 1}^d = \{C_i\}_{i = 1}^d, \{\tilde{P}_i\}_{i = 1}^d = \{(1 - \eps/3)\mu^{-1} P_i\}_{i = 1}^d$, we see that the largest value $\mu$ takes is at $\mu = \mu_{\text{upper}}$. This leads to the corresponding notion of width for the original instance \eqref{eq:possdpdef},
\[\rho = \mu_{\text{upper}} \cdot \max_{i \in [d]} \frac{\lambda_{\max}(C_i)}{\lambda_{\max}(P_i)} = \left(\min_{i \in [d]} \frac{\lambda_{\max}(P_i)}{\lambda_{\min}(C_i)}\right)\left(\max_{i \in [d]} \frac{\lambda_{\max}(C_i)}{\lambda_{\max}(P_i)}\right) \le \max_{i \in [d]} \frac{\lambda_{\max}(C_i)}{\lambda_{\min}(C_i)},\]
i.e. the largest condition number amongst all the covering matrices.
\section{Technical lemmata}
\label{app:technical}

Here, we give omitted proofs for technical lemmata used in the main body of the paper.

\restateEntryBound*
\begin{proof}
	The bound $Knd\rho \leq (nd\rho)^5$ comes from the definition of $K$ and our assumed bound on $\eps$. It clearly suffices to show that no entry of any of the $\xcurr_i C_i$ has larger magnitude than $Kn\rho$.
	
	Suppose for contradiction that for $j, k \in [n_c]$, \[\left|\left[\xcurr_i C_i\right]_{jk}\right| \geq Kn\rho.\] Because the $2 \times 2$ restriction of the matrix to indices $j$ and $k$ is positive semidefinite, by considering its quadratic form against the vector $\begin{pmatrix}1 & -1\end{pmatrix}$, we have either
	\[\left[\xcurr_i C_i\right]_{jj} \geq Kn\rho \text{ or } \left[\xcurr_i C_i\right]_{kk} \geq Kn\rho.\]
	Therefore, the trace of this matrix is at least $Kn\rho$, implying its largest eigenvalue is at least $K\rho$. Finally, this implies that the largest eigenvalue of $\xcurr P_i$ is at least $K$, contradicting the fact that the algorithm has not terminated. The same argument can be used to obtain a trace bound (and thus an entry magnitude bound) on the other two matrices in the lemma statement, as this would also imply a trace bound on $\Cxcurr$, as
	\begin{equation*}
	\Tr\left[\Cxcurr\right] \geq \max\left\{\Tr\left[\alpha \sum_{i = 1}^{d} \delta_i^{(t)} x_i^{(t)} C_i\right], \Tr\left[\alpha \sum_{i = 1}^{d} \left(\delta_i^{(t)}\right)^2 x_i^{(t)} C_i\right]\right\}.
	\end{equation*}
	This is immediate from $\alpha \leq 1$ and all $\delta_i^{(t)} \leq 1$, and monotonicity of trace.
\end{proof}

\restateAddGronwall*
\begin{proof}
	Define the function $v(t) = \exp(-\beta t)$, such that $v'(t) = -\beta v(t)$. Then, we have
	\begin{align*}
	\frac{d}{dt} \frac{u(t)}{v(t)} = \frac{u'(t) v(t) - v'(t) u(t)}{v(t)^2} \geq \frac{-\beta u(t) v(t) + \beta u(t) v(t) - cv(t)}{v(t)^2} = -\frac{c}{v(t)}.
	\end{align*}
	Integrating both sides yields
	\begin{equation*}
	\frac{u(t)}{v(t)} - \frac{u(0)}{v(0)} \geq -c\int_0^t \exp(\beta s) ds = -\frac{c}{\beta}\left(\exp(\beta t) - 1\right).
	\end{equation*}
	Using $v(0) = 1$ and $v(t) = \exp(-\beta t)$, we have the first conclusion upon rearrangement. The second conclusion follows from $\exp(-x) \geq 1 - x$ for $|x| < 1$, applied twice.
\end{proof}

\begin{lemma}[$\nu$ bound]
	\label{lemma:nubound}
	For the function $\nu(\beta,\gamma)$ introduced in Lemma \ref{lemma:iotaderiv}, we have $0 \leq \nu(\beta,\gamma) \leq \frac{e^{-\beta} + e^{-\gamma}}{2}$ for any $\gamma, \beta \geq 0$.  
	\end{lemma}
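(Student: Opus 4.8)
The plan is to exploit the integral representation of $\nu$ that is already implicit in the proof of Lemma~\ref{lemma:iotaderiv}: for all $\beta, \gamma \geq 0$,
\[
\nu(\beta, \gamma) = \int_0^1 e^{-\alpha \beta - (1-\alpha)\gamma}\, d\alpha .
\]
When $\beta \neq \gamma$ this is verified by a direct evaluation, $\int_0^1 e^{-\alpha\beta-(1-\alpha)\gamma}\,d\alpha = e^{-\gamma}\cdot \tfrac{1 - e^{-(\beta-\gamma)}}{\beta-\gamma} = \tfrac{e^{-\beta}-e^{-\gamma}}{\gamma-\beta}$, which matches the first case of the definition; when $\beta = \gamma$ the integrand is the constant $e^{-\gamma}$, matching the second case (and in fact both sides of the displayed identity are continuous in $(\beta,\gamma)$, so the diagonal case also follows by a limit). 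So the first step is simply to record this identity.

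Given the representation, the lower bound $\nu(\beta,\gamma) \geq 0$ is immediate, since the integrand $e^{-\alpha\beta-(1-\alpha)\gamma}$ is positive for every $\alpha \in [0,1]$.

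For the upper bound I would invoke convexity of the exponential. The map $\alpha \mapsto e^{-\alpha\beta-(1-\alpha)\gamma}$ is the composition of the affine function $\alpha \mapsto \alpha(-\beta)+(1-\alpha)(-\gamma)$ with $t \mapsto e^{t}$, hence convex on $[0,1]$; consequently $e^{-\alpha\beta-(1-\alpha)\gamma} \leq (1-\alpha)e^{-\gamma} + \alpha e^{-\beta}$ pointwise. Integrating this inequality over $\alpha \in [0,1]$ gives $\nu(\beta,\gamma) = \int_0^1 e^{-\alpha\beta-(1-\alpha)\gamma}\,d\alpha \leq \tfrac12 e^{-\gamma} + \tfrac12 e^{-\beta}$, which is exactly the claimed bound.

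There is essentially no obstacle in this argument; the only point requiring (mild) care is confirming that the integral identity for $\nu$ is consistent at $\beta = \gamma$, which it is by continuity, so no special-case analysis of the definition is needed.
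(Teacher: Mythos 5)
Your proof is correct, but it takes a different route from the paper's. The paper handles the diagonal case $\beta=\gamma$ separately, reduces to $\beta<\gamma$ by symmetry, gets nonnegativity from the sign of numerator and denominator, and proves the upper bound by rewriting it as $\frac{e^{-\beta}-e^{-\gamma}}{e^{-\beta}+e^{-\gamma}} \leq \frac{\gamma-\beta}{2}$, recognizing the left side as $\tanh\left(\frac{\gamma-\beta}{2}\right)$, and invoking $\tanh(x)<x$ for $x>0$. You instead start from the integral representation $\nu(\beta,\gamma)=\int_0^1 e^{-\alpha\beta-(1-\alpha)\gamma}\,d\alpha$ --- which is exactly the identity the paper already verifies inside the proof of Lemma~\ref{lemma:iotaderiv}, so you are reusing an established fact rather than introducing new machinery --- and then get nonnegativity from positivity of the integrand and the upper bound by integrating the convexity inequality $e^{-\alpha\beta-(1-\alpha)\gamma}\leq \alpha e^{-\beta}+(1-\alpha)e^{-\gamma}$. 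Your version avoids the case split and the hyperbolic-tangent manipulation entirely, and it makes transparent why the constant is $\frac{1}{2}$ (it is $\int_0^1\alpha\,d\alpha$); the paper's version is self-contained at the level of the two-variable formula and needs only the elementary scalar fact $\tanh(x)<x$. Both arguments are complete and correct.
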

	\begin{proof}
	We restate the function here for convenience:
	\[
	\nu(\beta, \gamma) = \begin{cases}
	\frac{e^{-\beta} - e^{-\gamma}}{\gamma - \beta} &\text{if $\gamma \neq \beta$}\\
	e^{-\gamma} & \text{if $\gamma = \beta$}
	\end{cases}
	\] 
	Observe that the claim is trivially true if $\beta = \gamma$. In addition, we note that our claim is symmetric in $\beta$ and $\gamma$. Therefore we assume $\beta < \gamma$. We start with the lower bound: as $\nu(\beta,\gamma) = \frac{e^{-\beta}- e^{-\gamma}}{\gamma-\beta}$, we observe that both the numerator and denominator are positive. 
	
	For the upper bound, we observe that our claim is equivalent to
	\[
	\frac{e^{-\beta}- e^{-\gamma}}{e^{-\beta}+ e^{-\gamma}} \leq \frac{\gamma-\beta}{2}.
	\]
	Now if $z = \gamma - \beta > 0$ the left hand side of the above equals $\frac{e^z - 1}{e^z+1} = \tanh(z/2)$. The result follows via the standard fact $\tanh(x) < x$ for $x > 0$. 
	\end{proof}

Finally, we give a proof of the Extended Lieb-Thirring inequality via the technique developed in the main body of the paper of encoding inequalities in higher dimensions.

\begin{lemma}[Extended Lieb-Thirring]
	For $A \succ 0, B \succeq 0$ and $\alpha \in [0, 1]$,
	\[\inprod{B^{1/2} A^{\alpha} B^{1/2}}{B^{1/2} A^{1 - \alpha} B^{1/2}} \leq \inprod{B^2}{A}.\]
\end{lemma}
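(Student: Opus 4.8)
The plan is to reduce the inequality to a pairwise scalar statement via the eigendecomposition of $A$, after which it becomes weighted AM--GM. First I would use cyclicity of the trace to rewrite both sides: since $B^{1/2}A^{\alpha}B^{1/2}B^{1/2}A^{1-\alpha}B^{1/2} = B^{1/2}A^{\alpha}BA^{1-\alpha}B^{1/2}$, we get $\inprod{B^{1/2}A^{\alpha}B^{1/2}}{B^{1/2}A^{1-\alpha}B^{1/2}} = \Tr[BA^{\alpha}BA^{1-\alpha}]$ and $\inprod{B^2}{A} = \Tr[B^2 A]$, so it suffices to prove $\Tr[BA^{\alpha}BA^{1-\alpha}] \le \Tr[B^2 A]$.

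Next, diagonalize $A = \sum_i \lambda_i v_i v_i^\top$ with $\lambda_i > 0$ and set $b_{ij} \defeq v_i^\top B v_j$, noting $b_{ij} = b_{ji}$ and $b_{ij}^2 \ge 0$. Expanding both traces in this eigenbasis gives $\Tr[BA^{\alpha}BA^{1-\alpha}] = \sum_{i,j} \lambda_i^{1-\alpha}\lambda_j^{\alpha} b_{ij}^2$ and $\Tr[B^2 A] = \sum_{i,j} \lambda_i b_{ij}^2$. Symmetrizing the first sum via $b_{ij} = b_{ji}$, the claim reduces to showing, term by term for each pair with $b_{ij}^2 \ge 0$, the scalar inequality $x^{\alpha}y^{1-\alpha} + x^{1-\alpha}y^{\alpha} \le x + y$ for $x,y>0$ and $\alpha \in [0,1]$, which follows by applying weighted AM--GM twice ($x^{\alpha}y^{1-\alpha} \le \alpha x + (1-\alpha)y$ and $x^{1-\alpha}y^{\alpha} \le (1-\alpha)x + \alpha y$).

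To phrase this in the ``encode in higher dimensions'' spirit of Lemma~\ref{lem:matrixcs} and sidestep the diagonalization, one repackages the argument through tensor products: writing $\mathrm{vec}(B)$ for the vectorization of $B$, we have $\Tr[BA^{\alpha}BA^{1-\alpha}] = \mathrm{vec}(B)^\top(A^{1-\alpha}\otimes A^{\alpha})\mathrm{vec}(B)$ while $\Tr[B^2 A] = \mathrm{vec}(B)^\top(A\otimes I)\mathrm{vec}(B) = \mathrm{vec}(B)^\top(I\otimes A)\mathrm{vec}(B)$, so the result follows from the Loewner bound $A^{1-\alpha}\otimes A^{\alpha} = (A\otimes I)^{1-\alpha}(I\otimes A)^{\alpha} \preceq (1-\alpha)(A\otimes I) + \alpha(I\otimes A)$. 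Since $A\otimes I$ and $I\otimes A$ commute, this is the operator AM--GM $X^{1-\alpha}Y^{\alpha} \preceq (1-\alpha)X + \alpha Y$ for commuting $X,Y \in \PSDSet^{n^2}$, which can be certified by a Schur-complement computation on a block matrix assembled from powers of $A\otimes I$ and $I\otimes A$, exactly as the block matrix $\begin{pmatrix} c_i^2 M_i & c_i M_i \\ c_i M_i & M_i \end{pmatrix}$ is used in Lemma~\ref{lem:matrixcs}. The proof is short; the one point worth flagging is that the operator AM--GM step is \emph{false} for non-commuting $X,Y$, so it is essential that the tensor reformulation replaces $A$ by the commuting pair $A\otimes I,\, I\otimes A$ (equivalently, the scalar reduction above puts both sides in a common eigenbasis of $A$). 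The only computation to be careful about is bookkeeping the tensor factor order, so that $\mathrm{vec}$-conjugation reproduces $\Tr[BA^{\alpha}BA^{1-\alpha}]$ and identifies the correct $A\otimes I$ versus $I\otimes A$ terms; the $\alpha \leftrightarrow 1-\alpha$ symmetry makes any residual ambiguity harmless.
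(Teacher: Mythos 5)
Your proof is correct, but it takes a genuinely different route from the paper's. You reduce everything to the eigenbasis of $A$: writing $b_{ij} = v_i^\top B v_j$, the two sides become $\sum_{i,j}\lambda_i^{1-\alpha}\lambda_j^{\alpha}b_{ij}^2$ and $\sum_{i,j}\lambda_i b_{ij}^2$, and after symmetrization the claim is the scalar weighted AM--GM $x^{1-\alpha}y^{\alpha}+x^{\alpha}y^{1-\alpha}\le x+y$ applied to each pair of eigenvalues with nonnegative weight $b_{ij}^2$; your tensor-product repackaging is the same argument phrased as operator AM--GM for the commuting pair $A\otimes I$, $I\otimes A$, and your caveat that this step fails for non-commuting operators is exactly the right thing to flag. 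The paper instead defines the interpolation $f(\alpha)=\inprod{B^{1/2}A^{\alpha}B^{1/2}}{B^{1/2}A^{1-\alpha}B^{1/2}}$, proves the midpoint inequality $f(0)+f(2\alpha)\ge 2f(\alpha)$ by taking the (nonnegative) trace product of two carefully constructed positive semidefinite $2n\times 2n$ block matrices, and then runs an extremal argument on $\max_\alpha f(\alpha)/f(0)$ to conclude $f(\alpha)\le f(0)=\inprod{B^2}{A}$. Your argument is shorter and more elementary, and makes the role of the scalar AM--GM and the nonnegativity of $b_{ij}^2$ completely transparent; the paper's version is deliberately chosen to exhibit the block-matrix/Schur-complement device of Lemma~\ref{lem:matrixcs} in another setting (it is advertised in the text as an alternative proof illustrating that technique), and it avoids explicit diagonalization. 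Both are complete; the only bookkeeping to verify in yours is the trace expansion $\Tr[BA^{\alpha}BA^{1-\alpha}]=\sum_{i,j}\lambda_j^{\alpha}\lambda_i^{1-\alpha}b_{ij}^2$, which checks out.
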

\begin{proof}
Define the function, for $\alpha \in [0, 1]$,
\[f(\alpha) = \inprod{B^{1/2} A^{\alpha} B^{1/2}}{B^{1/2} A^{1 - \alpha} B^{1/2}},\]
and note $f$ is symmetric around $1/2$. We will show that for any $\alpha \in [0, 1/2]$,
\begin{equation}\label{eq:mainclaimalpha}f(0) + f(2\alpha) \ge 2f(\alpha).\end{equation}
To see why this yields the lemma statement, let
\[M = \max_{\alpha \in [0, 1]} \frac{f(\alpha)}{f(0)}.\]
We claim if \eqref{eq:mainclaimalpha} holds, then $M = 1$, proving the lemma statement. In particular, if $M > 1$, then for the maximizing value of $\alpha$, \eqref{eq:mainclaimalpha} implies a tighter bound of $\frac{1+M}{2}$, contradicting the definition of $M$. Now, we prove \eqref{eq:mainclaimalpha}. First, note that for any $\alpha \in [0, 1]$, $\lambda \ge 0$, and scalars $\tilde{u}, \tilde{v}$, by AM-GM,
\begin{equation}\label{eq:amgm}\tilde{u}^2 + \lambda^{2\alpha}\tilde{v}^2 \ge 2 \lambda^{\alpha}\tilde{u}\tilde{v}.\end{equation}
Next, consider the matrix
\[\begin{pmatrix}B^{1/2} I B^{1/2}& -B^{1/2} A^{\alpha} B^{1/2} \\ -B^{1/2} A^{\alpha} B^{1/2} & B^{1/2} A^{2\alpha} B^{1/2}\end{pmatrix}.\]
We claim it is positive semidefinite: taking its quadratic form with respect to vector $\begin{pmatrix} u^\top & v^\top \end{pmatrix}$
and letting $\tilde{u} = B^{1/2} u, \tilde{v} = B^{1/2} v$ gives
\[\tilde{u}^\top A^{\alpha} \tilde{u} - 2\tilde{u}^\top A^{1/2} \tilde{v} + \tilde{v}^\top A^{1 - \alpha} \tilde{v}. \]
By diagonalizing $A$, this follows from \eqref{eq:amgm}. Similarly, the following matrix is also PSD:
\[\begin{pmatrix}B^{1/2} A B^{1/2} & B^{1/2} A^{1 - \alpha} B^{1/2} \\ B^{1/2} A^{1 - \alpha} B^{1/2} & B^{1/2} A^{1 - 2\alpha} B^{1/2}\end{pmatrix}.\]
Taking the (nonnegative) trace product of these two matrices yields the inequality \eqref{eq:mainclaimalpha}.
\end{proof} 	%
\section{Runtime improvements}
\label{app:pure}

\subsection{Pure packing and covering}

In this section, we prove the runtime improvement for pure packing and covering instances for completeness. Consider the case when the instance \eqref{eq:feasible}, \eqref{eq:infeasible} is a pure packing instance, i.e. all of the covering matrices $C_i$ are $1 \times 1$ scalars $c_i$. In this setting, we note that in every iteration $t$, the matrix
\[Z^{(t)} = \exp\left(-\sum_{i = 1}^d \xcurr_i C_i\right)\]
is also a scalar, and therefore each of the gradient coordinates
\[\cGradi = \inprod{C_i}{\frac{Z^{(t)}}{\Tr Z^{(t)}}} = c_i\]
in every iteration, so that $\cGrad = c$ is a constant. Similarly, for pure covering instances, $\pGrad = p$ is a constant. We now appropriately modify Lemma~\ref{lem:numSlowIters}.
\begin{lemma}\label{lem:numSlowItersPack}
For a pure packing instance, the number of iterations is at most $\Oh{\log^2 (nd\rho) /\epsilon^2}$. Moreover, for a pure covering instance, the number of slow iterations over all phases is at most $\Oh{\log^2 (nd\rho) /\epsilon^2}$.
\end{lemma}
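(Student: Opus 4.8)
The plan is to re-run the argument of Lemma~\ref{lem:numSlowIters}, exploiting that in each structured case one of the two gradient families is \emph{constant} across iterations. As observed above, for a pure packing instance the covering block is a single scalar constraint, so $Z^{(t)}$ is one-dimensional, the trace normalization cancels, and $\cGradi = c_i$ for every iteration $t$; symmetrically, for a pure covering instance $\pGradi = p_i$ for every $t$. Recall that the phase restriction in Lemma~\ref{lem:numSlowIters} entered only to control $\pGradi$ from below and $\cGradi$ from above for a fixed coordinate $i$; when one of these families is constant that half of the control is free, and (as I explain below) in the pure covering case the remaining half is supplied by the very definition of a slow iteration. Throughout I assume \eqref{eq:feasible} is feasible, so that Lemma~\ref{lem:averagepackinggrad} applies, exactly as in Lemma~\ref{lem:numSlowIters}.

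\textbf{Pure packing.} I would bound the total number of iterations directly. Suppose the algorithm runs for $T$ iterations; apply Lemma~\ref{lem:averagepackinggrad} to this entire set of iterations to get a coordinate $i$ with $\sum_t \pGradi \le (1-\eps)\sum_t \cGradi = (1-\eps)Tc_i$, using $\cGradi = c_i$. From the update rule (equivalently \eqref{eq:deltabound}), $\delta_i^{(t)} \ge \frac{(1-\eps/2)c_i - \pGradi}{2c_i}$ on every iteration: when $i\in W^{(t)}$ this is immediate from the definition of $\delta_i^{(t)}$, and otherwise both sides are nonpositive. Summing and substituting the averaged-gradient bound gives $\sum_t \delta_i^{(t)} \ge \tfrac{\eps}{4}T$, so (using $1+z\ge e^{z/2}$ for $z\in[0,1]$ and monotonicity of the iterates) $x_i^{(T)} \ge x_i^{(0)}\exp\big(\tfrac{\alpha\eps}{8}T\big)$. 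Since $x_i^{(t)} \ge K/\lmax{P_i}$ would force $\lmax{\Pxcurr}\ge K$ and hence termination (as in the proof of Lemma~\ref{lem:numfastiters}), and $x_i^{(0)} = \tfrac{1}{d\lmax{P_i}}$, we get $e^{\alpha\eps T/8} = O(Kd)$. With $\alpha = \tfrac{\eps}{32\log(nd\rho)}$ and $\log(Kd) = O(\log(nd\rho))$ (using $\eps^{-1}\le(dn)^3$), this yields $T = O(\log^2(nd\rho)\eps^{-2})$.

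\textbf{Pure covering.} Let $\mathcal S = \{t_1<t_2<\dots<t_T\}$ be the set of \emph{all} slow iterations over the whole execution. The new ingredient is that on a slow iteration, by definition, $\cGradi < 3\pGradi = 3p_i$ for every $i$ — this globally replaces the within-phase bound $\cGradi\le 2u$ of Lemma~\ref{lem:numSlowIters}. Apply Lemma~\ref{lem:averagepackinggrad} to $\{t_1,\dots,t_{T-1}\}$ to get a coordinate $i$ with $(T-1)p_i = \sum_{j\le T-1}\mathbf{\nabla}_{P_i}^{(t_j)} \le (1-\eps)\sum_{j\le T-1}\mathbf{\nabla}_{C_i}^{(t_j)}$. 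A short case analysis on slow iterations (using $\cGradi < 3p_i$ and splitting on whether $i\in W^{(t_j)}$, so that $\delta_i^{(t_j)}$ is either $0$ and the right side nonpositive, or positive and at least $\frac{(1-\eps/2)\mathbf{\nabla}_{C_i}^{(t_j)}-p_i}{2\mathbf{\nabla}_{C_i}^{(t_j)}} \ge \frac{(1-\eps/2)\mathbf{\nabla}_{C_i}^{(t_j)}-p_i}{6p_i}$) gives $\delta_i^{(t_j)} \ge \frac{(1-\eps/2)\mathbf{\nabla}_{C_i}^{(t_j)}-p_i}{6p_i}$; summing over $j\le T-1$, using the averaged-gradient bound to absorb $-(T-1)p_i$, and $\sum_j \mathbf{\nabla}_{C_i}^{(t_j)}\ge(T-1)p_i$ yields $\sum_{j\le T-1}\delta_i^{(t_j)} \ge \tfrac{\eps}{12}(T-1)$. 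Since $x_i$ is non-decreasing over the whole algorithm and is multiplied by $(1+\alpha\delta_i^{(t_j)})$ at each slow iteration, $x_i^{(t_{T-1}+1)} \ge x_i^{(0)}\exp\big(\tfrac{\alpha\eps}{24}(T-1)\big)$, and the termination argument of the pure packing case gives $T = O(\log^2(nd\rho)\eps^{-2})$.

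\textbf{Main obstacle.} The content is conceptual, not computational: identifying precisely which estimates in Lemma~\ref{lem:numSlowIters} become phase-independent. For pure packing this is immediate. For pure covering, the packing gradient being constant alone does not suffice, since the lower bound on $\delta_i^{(t)}$ still carries $\cGradi$ in the denominator; the needed global upper bound on $\cGradi$ is exactly what the definition of a slow iteration provides, which is why the statement is phrased in terms of slow iterations only. A secondary point, routine but worth stating, is that slow iterations are generally not consecutive across phases, so the geometric growth of $x_i$ must be tracked via monotonicity of the iterates across the intervening fast iterations rather than by a clean telescoping product. Everything else — the constants, the passage from the $\delta$-sum bound to the bound on $x_i$, and the final arithmetic — is identical in form to the proof of Lemma~\ref{lem:numSlowIters}.
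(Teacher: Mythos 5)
Your proposal is correct and follows essentially the same route as the paper's proof: in both cases you invoke Lemma~\ref{lem:averagepackinggrad} over the relevant iteration set to find a coordinate with small averaged packing gradient, lower-bound $\delta_i^{(t)}$ via the analogue of \eqref{eq:deltabound} (replacing the within-phase gradient control by constancy of $\cGradplaini$ for pure packing, and by the slow-iteration bound $\cGradplaini < 3p_i$ for pure covering), and convert the resulting geometric growth of $x_i$ into an iteration bound via the termination argument of Lemma~\ref{lem:numfastiters}. The only differences from the paper are immaterial constants and whether the averaging is applied over $[T]$ or $[T-1]$.
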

\begin{proof}
	First, consider when $\cGrad = c$ is a constant. Label the iterations $1, 2, \ldots T$ for simplicity, and let $i\in [d]$ satisfy \eqref{existenceSmallPG} for all but the last iteration. 
	Because $i$ satisfies \eqref{existenceSmallPG}, $\sum_{t \in [T - 1]} \pGradi \le (1- \eps)(T - 1)c_i$. Thus,
	\begin{align*}
	x_i^{(T)} &\geq x_i^{(1)} \exp\left( \frac{\alpha}{4} \sum_{t\in[T - 1]} \frac{\left(1 - \frac{\eps}{2}\right) c_i - \pGradi}{c_i} \right)\\
	&\geq x_i^{(1)} \exp \left( \frac{\alpha}{8} \eps(T - 1)   \right).
	\end{align*} 
	The proof of Lemma~\ref{lem:numSlowIters} then implies the desired iteration bound. Now, consider the case where $\pGrad = p$ is a constant, and label the (not necessarily consecutive) slow iterations $1, 2, \ldots T$. By definition, for all $t \in [T]$,
	\[p_i \ge \frac{1}{3}\cGradi.\]
	This allows us to modify the bound \eqref{eq:deltabound} to read
	\[\delta_i^{(t)} \ge \frac{\left(1 - \frac{\eps}{2}\right)\cGradi - \pGradi}{6\pGradi},\]
	as either the numerator is nonpositive and this is certainly true, or we use the definition of slow iteration. Now, again recall $i$ satisfying \eqref{existenceSmallPG} implies $(T - 1) p_i \le (1- \eps)\sum_{t \in [T - 1]} \cGradi$. Thus,
	\begin{align*}
	x_i^{(T)} &\geq x_i^{(1)} \exp\left( \frac{\alpha}{12} \sum_{t\in[T - 1]} \frac{\left(1 - \frac{\eps}{2}\right) \cGradi - p_i}{p_i} \right)\\
	&= x_i^{(1)} \exp \left( \frac{\alpha}{12}\left(\frac{1 - \frac{\eps}{2}}{1 - \eps} - 1\right)(T - 1)   \right)\\
	&\ge x_i^{(1)} \exp \left( \frac{\alpha\eps}{36}(T - 1)   \right).
	\end{align*} 
	The conclusion follows again from a bound on $T$. 
\end{proof}

\subsection{Commuting covering matrices}
\label{app:commute}

In this section, we prove that the runtime and parameters of Algorithm~\ref{alg:sdpalg} can be independent of the width parameter $\rho$ in the case when the covering matrices commute. In particular, consider setting the threshold $K$ in the algorithm to $4\log(nd)/\eps$, and modifying all statements in the analysis to be independent of $\rho$. We note that the only place that we explicitly use this dependence on $\rho$ in Section~\ref{sec:sdpfull} is in the proof of Lemma~\ref{lemma:largebucket}, the analysis of terms in the large bucket. To this end, we prove the following improved bound in this particular case.

\begin{lemma}[Large bucket bound, commuting matrices]
	\label{lemma:largebucketimprove}
	Suppose all covering matrices in the set $\allC$ commute for a positive SDP instance. For any $0 \leq t \leq 1$, 
	\[
	\sum_{(i,j) \in L} (\rcovstep_{ij})^2 \nu(\lambda_i,\lambda_j) \leq (n_c d)^{-60}\Tr\exp(-\Psi)
	\]
	and
	\[
	\left| \sum_{(i,j) \in L} \rcovstep_{ij} \rcovsq_{ij} \nu(\lambda_i,\lambda_j) \right| \leq (n_c d)^{-60}\Tr\exp(-\Psi).
	\]
\end{lemma}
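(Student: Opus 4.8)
\textbf{Proof plan for Lemma~\ref{lemma:largebucketimprove}.}

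The plan is to exploit the crucial simplification that commuting covering matrices afford: when all $C_i$ commute, they are simultaneously diagonalizable, and moreover the eigenbasis $Q$ diagonalizing $\Psi + t\covstep$ can be taken to simultaneously diagonalize $\covstep$ and $\covsq$ as well (since $\covstep$ and $\covsq$ are nonnegative combinations of the $C_i$, and $\Psi + t\covstep$ is too, up to the identity which commutes with everything). Consequently $\rcovstep = Q^\top \covstep Q$ and $\rcovsq = Q^\top \covsq Q$ are \emph{diagonal}. This means all off-diagonal terms $\rcovstep_{ij}$ with $i \ne j$ vanish, so the only surviving terms in either sum are the diagonal ones $i = j$, for which $\nu(\lambda_i, \lambda_i) = e^{-\lambda_i}$ and, by the definition of the large bucket $L$, we have $\lambda_i > 4K$.

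First I would argue the simultaneous diagonalization claim carefully: since $\Psi + t\covstep = \sum_i (x_i + t\alpha\delta_i x_i) C_i$ is a nonnegative combination of the commuting family $\{C_i\}$, any eigenbasis $Q$ of $\Psi + t\covstep$ can be chosen to lie in a common eigenbasis of all the $C_i$ (refining within degenerate eigenspaces if needed), hence $Q^\top \covstep Q$ and $Q^\top \covsq Q$ are diagonal. Then the first sum reduces to $\sum_{i : (i,i) \in L} (\rcovstep_{ii})^2 e^{-\lambda_i}$, which is at most $e^{-4K} \sum_i (\rcovstep_{ii})^2 e^{-\lambda_i/... }$ — more simply, I would write $\sum_{i: \lambda_i > 4K} (\rcovstep_{ii})^2 e^{-\lambda_i}$ and bound $e^{-\lambda_i} \le e^{-4K} e^{-\lambda_i + 4K}$... actually cleaner: use $\rcovstep_{ii} \le \frac{\alpha}{2}\lambda_i$ from $\rcovstep \preceq \frac{\alpha}{2}\Lambda$ (Lemma~\ref{lemma:helper} part 2, as in the gap bucket proof), so $(\rcovstep_{ii})^2 e^{-\lambda_i} \le \frac{\alpha^2}{4}\lambda_i^2 e^{-\lambda_i}$, and for $\lambda_i > 4K$ this is at most $e^{-2K}$ times a polynomially-bounded factor, using $\lambda_i^2 e^{-\lambda_i/2} \le (nd)^{O(1)}$ on the relevant range. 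Then $e^{-2K} \le (nd)^{-80} \le (nd)^{-80}\,e^{K} \cdot e^{-K} \le (nd)^{-80}\Tr\exp(-\Psi)$, using $K = 4\log(nd)/\eps$ (the modified threshold), $e^{-K} \le (nd)^{-4/\eps} \le (nd)^{-80}$ (valid for $\eps \le 1/20$), and $\Tr\exp(-\Psi) \ge e^{-\lmax{\Psi}} \ge e^{-K}$ since $\Psi$ has an eigenvalue below $K$ by the termination condition. Summing over at most $n$ diagonal indices absorbs an extra $n$, leaving the claimed $(nd)^{-60}$ bound. The second sum is handled identically: $|\rcovstep_{ii}\rcovsq_{ii}| e^{-\lambda_i} \le \frac{1}{2}((\rcovstep_{ii})^2 + (\rcovsq_{ii})^2)e^{-\lambda_i}$ by AM-GM, and $0 \preceq \rcovsq \preceq \rcovstep$ (Lemma~\ref{lemma:helper} part 3) gives $(\rcovsq_{ii})^2 \le (\rcovstep_{ii})^2$, reducing it to the first case.

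The main obstacle — really the only delicate point — is justifying that no entry-size bound on $\rcovstep, \rcovsq$ (hence no appeal to Lemma~\ref{lem:entrybound} and thus no $\rho$ dependence) is needed: everything must come purely from the spectral bound $\rcovstep \preceq \frac{\alpha}{2}\Lambda$ restricted to the large-eigenvalue block and the elementary estimate $\sup_{x \ge 4K} x^2 e^{-x/2} \le (nd)^{O(1)}$, which holds because $4K = O(\log(nd)/\eps)$ and polynomials are dominated by exponentials. One should double-check the exponents line up so that the final bound is genuinely $(nd)^{-60}$ after all the slack (the $n$ from summing, the polynomial factor from $\lambda_i^2$), which is routine given the generous gap between $e^{-2K}$ and $(nd)^{-60}$. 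I would remark that this is exactly the commuting-case strengthening promised in the discussion following Lemma~\ref{lemma:largebucket}, and that with this lemma in hand every other step of Section~\ref{sec:sdpfull} goes through verbatim with $\rho$ deleted, yielding the $O(\log^3(nd)\eps^{-3})$ iteration bound for commuting covering constraints.
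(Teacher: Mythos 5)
Your proposal is correct and takes essentially the same route as the paper's proof: commutativity makes $\rcovstep,\rcovsq$ diagonal in the chosen eigenbasis so only the $i=j$ terms survive, the diagonal entries are controlled by $\rcovstep \preceq \frac{\alpha}{2}\Lambda$, and the scalar estimate $\lambda^2 e^{-\lambda} \le (nd)^{-O(1)} e^{-K} \le (nd)^{-O(1)}\Tr\exp(-\Psi)$ for $\lambda > 4K$ (using the termination condition) finishes both sums, with AM-GM and $0 \preceq \rcovsq \preceq \rcovstep$ handling the cross term. The only cosmetic slip is that the bound $\Tr\exp(-\Psi) \ge e^{-K}$ comes from $\Tr\exp(-\Psi) \ge e^{-\lmin{\Psi}}$ together with $\lmin{\Psi} \le K$, not from $e^{-\lmax{\Psi}}$.
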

\begin{proof}
We note that in the case of commuting covering matrices, under the diagonalization induced by the rotation $Q$, all terms except for the diagonal entries drop out. Recalling the definition $\nu(\lambda_i, \lambda_i) = e^{-\lambda_i}$, we wish to prove
\[\sum_{j: \lambda_j > 4K} \lambda_j^2 \exp(-\lambda_j) \le (n_c d)^{-60} \sum_{j \in [n_c]} \exp(-\lambda_j). \]
Now, note that there is at least one eigenvalue bounded above by $K$ by the termination condition. Therefore, it suffices to show that for any $\lambda > 4K$,
\[\lambda^2 \exp(-(\lambda - K)) \le (n_c d)^{-100},\]
and then sum over all the $\lambda_j > 4K$ after multiplying by $\exp(K)$. To this end, we note that $\lambda^2\exp(-(\lambda - K))$ is a decreasing function of $\lambda$ for $\lambda$ large, and at $\lambda = 4K$ the inequality holds.
\end{proof}

Overall, this improvement carries through the rest of the analysis, as nowhere else did we use an additional dependence on $\rho$. 	%
\section{Analysis under approximate guarantees}
\label{app:approximate}

In this section, we show how our algorithm is tolerant to various types of approximation errors, which are incurred from the more-efficient implementations of Sections~\ref{ssec:termination} and~\ref{ssec:JL}. In particular, we will show tolerance to the following types of error.
\begin{enumerate}
	\item Terminating the algorithm with a threshold in the range $[K, 2K]$, in line 4 of Algorithm~\ref{alg:sdpalg}.
	\item Computing  $\pGradi$ within a $1 \pm \frac{\eps}{20}$ multiplicative factor, in line 7 of Algorithm~\ref{alg:sdpalg}.
	\item Computing $\cGradi$ within a $1 \pm \frac{\eps}{20}$ multiplicative factor plus an $\exp(-10K) \Tr[C_i]$ additive factor in line 8 of Algorithm~\ref{alg:sdpalg}.
\end{enumerate}
We first discuss the effects of late termination on the analysis. The proof of Lemma~\ref{lem:terminationimpliesfeasibility} remains the same, as termination still implies a threshold larger than $K$, which is all the analysis requires. Moreover, in the iteration bounds of Section~\ref{ssec:iteration}, the number of possible phases does not increase by more than a constant factor (Lemma~\ref{lem:numphases}), and the same constant-factor increase holds due to late termination in Lemmas~\ref{lem:numfastiters} and~\ref{lem:numSlowIters}. This does not affect any certificates of infeasibility.

Next, we discuss the effects of approximate computation of the gradients. Denote our approximations to the true gradients $\pGradi$, $\cGradi$ in an iteration by
\[\widetilde{\pGradi} \in \left[ \left(1 - \frac{\eps}{20}\right) \pGradi,\; \left(1 + \frac{\eps}{20}\right) \pGradi \right],\; \widetilde{\cGradi} \in \left[\left(1 - \frac{\eps}{20}\right) \cGradi,\; \left(1 + \frac{\eps}{20}\right)\cGradi + e^{-10K} \Tr[C_i] \right].\]

By the termination guarantee, throughout the algorithm $I_{n_p} \preceq \exp\left(\Pxcurr\right) \preceq e^{2K} I_{n_p}$. Thus,
\begin{equation}\label{eq:nondegenerate}\pGradi = \frac{\inprod{P_i}{\exp\left(\Pxcurr\right)}}{\Tr\exp\left(\Pxcurr\right)} \geq \frac{e^{-2K}}{n} \Tr[P_i] \geq \frac{e^{-2K}}{n^2 \rho} \Tr[C_i] \ge e^{-3K}\Tr[C_i].\end{equation}
The second inequality used $\Tr[C_i] \le n\lmax{C_i} \le n\rho\Tr[P_i]$. Note that computing the update set $W^{(t)}$ with respect to these approximations allow us to guarantee that for every $i \in W^{(t)}$,
\begin{align*}\widetilde{\pGradi} \le \left(1 - \frac{\eps}{2}\right)\widetilde{\cGradi} &\Rightarrow \left(1 - \frac{\eps}{20}\right)\pGradi \le \left(1 - \frac{\eps}{2}\right)\left(\left(1 + \frac{\eps}{20}\right)\cGradi + e^{-10K}\Tr[C_i]\right) \Rightarrow \pGradi \le \left(1 - \frac{\eps}{3}\right)\cGradi.\end{align*}
In the last implication, we used \eqref{eq:nondegenerate} to bound the effect of the additive term. Conversely, whenever the update set is determined to be empty, Lemmas~\ref{lem-sdp-infeas} and~\ref{lem-coord-update} still certify infeasibility of \eqref{eq:infeasible} against the threshold $1 - \eps$. Because all $\widetilde{\delta^{(t)}_i}$ are still bounded in the range $[0, \half]$, the analysis of the potential bounds in Section~\ref{sec:sdpfull}, i.e. Lemmas~\ref{lemma:packing} and Lemma~\ref{lemma:covering} hold. In order to show Theorem~\ref{thm:mainclaimsdp}, we modify the definition of the potential function to be
\[f(x^{(t)}) \defeq (1 - \eps)^2\smax\left(\Pxcurr\right) - (1 + \eps)\smin\left(-\Cxcurr\right).\]
In particular, the remaining step in the proof of Theorem~\ref{thm:mainclaimsdp} is to show, in the vein of \eqref{eq:keyclaim},
\[(1 - \eps)\left(1+ \widetilde{\delta^{(t)}_i}\right)\pGradi - (1 + \eps)\left(1- \widetilde{\delta^{(t)}_i}\right)\cGradi \le \left(1+ \widetilde{\delta^{(t)}_i}\right)\widetilde{\pGradi}- \left(1- \widetilde{\delta^{(t)}_i}\right)\widetilde{\cGradi} + (nd\rho)^{-15} \le (nd\rho)^{-15}.\]
The first inequality is due to the approximation guarantee, where without loss of generality we scale the matrices $\{C_i\}_{i \in [d]}$ to have polynomially bounded trace, and the second is due to the proof of Theorem~\ref{thm:mainclaimsdp}. Thus, the result is that this potential function does not increase more than a negligible $2(nd\rho)^{-15}$. Finally, the guarantee of Lemma~\ref{lem:terminationimpliesfeasibility} does not change by more than a constant multiple of $\eps$, and the iteration bounds also are not affected by more than constant factors because of the approximation guarantee on the step size. 	\end{appendix}
	
\end{document}